\documentclass[12pt]{article}
\usepackage[utf8]{inputenc}
\usepackage{amsmath}
\usepackage{amsfonts}
\usepackage[english]{babel}
\usepackage{amsthm}
\usepackage{url}
\usepackage{hyperref}
\usepackage{float, graphicx}
\usepackage{subcaption}
\usepackage{natbib}
\usepackage{xcolor}
\usepackage{multirow}
\usepackage{amsfonts}
\usepackage{amssymb}
\usepackage{dsfont}

\newcommand{\indep}{\perp \!\!\! \perp}

\newtheorem{theorem}{Theorem}
\newtheorem{corollary}{Corollary}
\newtheorem{lemma}{Lemma}
\newtheorem{definition}{Definition}
\newtheorem{proposition}{Proposition}
\newtheorem{assumption}{Assumption}

\newtheorem{example}{Example}

\def\lo{_}
\def\hi{^}
\def\real{\mathbb{R}}
\def\ca#1{{\cal{#1}}}
\def\of{\circ}
\def\inv{^{-1}}
\def\ka{\kappa}
\def\tr{\mathrm{tr}}
\def\trans{^\mathsf{T}}

\allowdisplaybreaks

\title{A Reproducing-Kernel-Based Nonparametric Test for Conditional Independence of Functional Data}
\author{Yin Tang and Bing Li}

\begin{document}

\maketitle

\begin{abstract}
Conditional independence is a fundamental concept in many areas of
statistical research, including, for example, sufficient dimension
reduction, causal inference, and statistical graphical models. In many
modern applications, data arise in the form of random functions, making
it important to determine whether two random functions are conditionally
independent given a third. However, to the best of our knowledge,
existing conditional independence tests in the literature apply only to
multivariate data, and extensions to the functional setting are not
available. To fill this gap, we develop a reproducing-kernel-based test for conditional
independence of random functions based on the conjoined conditional covariance operator (CCCO). We rigorously derive the asymptotic
distribution of the CCCO estimator using a recently established
sharpened convergence rate for the regression operator
\citep{choi2026sharpened}. Based on this result, we construct a test
statistic using the spectral decomposition of the operator appearing in
the asymptotic distribution. The proposed method is illustrated through
applications to an activity and biometrics dataset and a macroeconomic
dataset.
    
\end{abstract}

\textbf{Keywords}: conditional independence, functional data, conjoined conditional covariance operator, reproducing kernel Hilbert space

\def\lo{_}
\def\hi{^}

\section{Introduction}
Conditional independence is a fundamental concept underlying a wide range of multivariate data analysis methods. For instance, in Sufficient Dimension Reduction, we need to find $\beta$ such that $Y\indep X|\beta^\mathsf{T}X$ \citep{li2018sufficient}. In graphical models, nodes $i$ and $j$ are disconnected if and only if 
\begin{align*}
    X \lo i \indep X \lo j | \{X \lo 1, \ldots, X \lo p \} \setminus \{X \lo i, X \lo j\}, 
\end{align*}
where $X \lo 1, \ldots, X \lo p$ represent the random variables on nodes $1, \ldots, p$. See \cite{lauritzen1996graphical} and \cite{koller2009probabilistic}. In causal inference, the common cause principle indicates that two dependent random variables $X$ and $Y$ may have a common cause $Z$ such that $X\indep Y|Z$ \citep{peters2017elements}. Conditional independence plays a central role in all three applications. 

It is thus of great importance to test  whether two random variables are conditionally independent given a third. \cite{li2020nonparametric} provides a comprehensive review of nonparametric conditional independence (CI) tests for continuous variables.
For example, under the Gaussian assumption, partial correlation and conditional correlation can be used to characterize the conditional dependency between two random variables. In fact, the Gaussian assumption can be relaxed as indicated in \cite{baba2004partial}. Representative works in this area include the discretization-based test \citep{huang2010testing}, metric-based tests \citep{su2008nonparametric, huang2016flexible}, the permutation-based two-sample test \citep{doran2014permutation}, and the mutual independence test under transformations \citep{cai2022distribution}.
Also, some regression-based methods can be employed for this purpose under some additivity or linearity assumptions (see, for example, \cite{zhang2017causal}).

To further relax modeling assumptions, {reproducing-kernel-based} methods have recently
gained prominence by embedding probability distributions into feature spaces
induced by kernels. 
Within the reproducing-kernel Hilbert space (RKHS) framework, conditional independence can be characterized via the
Conjoined Conditional Covariance Operator (CCCO); see, for example,
\cite{fukumizu2004dimensionality,fukumizu2007kernel,lee2016additive}.
Building on this characterization, \cite{zhang2012kernelbased} proposed a
kernel-based conditional independence test whose statistic is closely related
to the Hilbert--Schmidt norm of the CCCO introduced in
\cite{fukumizu2007kernel}.
Related developments include \cite{huang2022kernel}, which introduces a kernel
partial correlation coefficient as a measure of conditional independence, and
\cite{sheng2023distance}, which investigates the connection between
distance-based and {reproducing-kernel-based} measures of conditional independence.
As a sidenote, we emphasize that the reproducing kernels considered in this paper are fundamentally different from the smoothing kernels used in functional nonparametric regression and estimation \citep{ling2018nonparametric}.

{Our conditional independence test for functional data has wide applications in nonparametric Functional Data Analysis. For example, it could be used for variable selection in the function-on-function nonparametric regression setting (see, for example, \cite{sang2026nonlinear}), where we can drop a predictor $X_2$ if $Y \indep X_2 \mid X_1$. Here, $X_1$, $X_2$ and $Y$ can all be random functions or vectors of random functions.
Another example is the functional graphical model \citep{qiao2019functional,li2018nonparametric}, where we can determine the number of edges in a statistical graphical model, with the vertex variables being random functions.}

{Although a rich literature exists on conditional independence testing for multivariate data, analogous methods for random functions have not, to the best of our knowledge, been developed. This paper fills this gap by developing a reproducing-kernel-based conditional independence test for functional data, where the three random elements involved in the test are random functions or vectors of random functions.} Such functional observations are increasingly prevalent in modern data analysis {\citep{goia2016introduction,aneiros2019recent,aneiros2022functional}}, particularly with the rapid growth of applications involving smart wearable devices. See, for example, \cite{ramsay2007applied} and \cite{acar2025functional}. A flexible conditional independence testing procedure would constitute a
valuable addition to the toolbox of functional data analysis, enabling the
study of complex interconnections among functional variables. Such
interconnections arise in a wide range of problems, including
function-on-function regression \citep{sun2018optimal,sang2026nonlinear},
functional graphical models \citep{qiao2019functional,li2018nonparametric},
and functional sufficient dimension reduction
\citep{ferre2003functional,li2017nonlinear,li2022weak}.

\def\lo#1{_{#1}}
\def\inv{^{-1}}

In addition to the importance of conditional independence testing for
functional data, we would like to highlight a particularly important
theoretical contribution of our extension. As the pioneering work on
{reproducing-kernel-based} conditional independence testing,
\citet{zhang2012kernelbased} provided only a sketch of the proof for the
asymptotic null distribution, without a detailed list of conditions under
which the asymptotic distribution holds.

{
A key difficulty arises from the fact that the test statistic involves
the inverse of a covariance operator. In practice, the inverse must be
estimated using Tikhonov regularization, which slows down the
convergence rate of a related regression operator. Until recently, the
best known convergence rate for this operator was
\begin{align}\label{eq:epsilon_n_beta_wedge}
\epsilon_n^{\beta \wedge 1} + n^{-1/2}\epsilon_n^{-1},
\end{align}
where $n$ is the sample size, $\epsilon_n$ is the Tikhonov regularization
parameter, and $\beta>0$ characterizes the smoothness of the regression
relationship between the two random variables involved. See, for
example, \citet{li2017nonlinear}. The optimal rate implied by
\eqref{eq:epsilon_n_beta_wedge} is $n^{-1/4}$, which can be achieved when
$\beta\ge1$ with the optimal tuning $\epsilon_n \asymp n^{-1/4}$.
However, our calculations (see Section~\ref{sec-ran-func}) indicate that this rate is not
sufficiently fast to guarantee the asymptotic distribution claimed in
\citet{zhang2012kernelbased}. 
Recently, under a  slightly different setting, \cite{choi2026sharpened} sharpened the convergence rate of a similar regression operator to 
\begin{align}\label{eq:n-1-2-eps-beta}
n^{-1/2}\epsilon_n^{(\beta\wedge1)-1}
+\epsilon_n^{\beta\wedge1}
+n^{-1}\epsilon_n^{-(3\alpha+1)/(2\alpha)}
+n^{-1/2}\epsilon_n^{-(\alpha+1)/(2\alpha)},
\end{align}
where $\alpha>1$ characterizes the decay rate of the eigenvalues of the covariance operator.
It turns out that we can use that approach to sharpen the convergence rate of the regression operator in the present context, and with the sharpened rate, we can rigorously establish the asymptotic distribution of our test statistics parallel to that claimed in \cite{zhang2012kernelbased} in the multivariate setting. 
}

Although our analysis is carried out in the functional setting, the above
result also applies to the multivariate setting. Consequently,
Theorem~\ref{thm-clt-sigmahat} provides the first rigorous justification
of the asymptotic null distribution for the {reproducing-kernel-based} conditional
independence test. Given the widespread use of this test, establishing
its asymptotic validity is both important and significant.

The rest of the paper is organized as follows.
In Section~\ref{sec-rkhs}, we review  the background on RKHS and several linear operators needed for constructing the test statistic at the population level. In Section~\ref{sec-test-ci}, we introduce the test statistic
for conditional independence. In Section~\ref{sec-asymp-ccco-cpco}, we rigorously develop the asymptotic distribution of the proposed {reproducing-kernel-based} test statistic for conditional independence for functional data. We also carefully discuss the key conditions needed in the asymptotic development. 
In Section~\ref{sec-ran-func}, we introduce 
practical methods for approximating random functions involved in constructing the test statistic. Section~\ref{sec-implementation}
presents the sample-level implementation, including estimation of the
eigenvalues appearing in the asymptotic distribution using an
acceleration method, selection of tuning parameters, and the resulting
algorithms. Section~\ref{sec-simulation} reports results from simulation
studies, and Section~\ref{sec-application} illustrates the proposed
method through an application to the WISDM dataset. Due to the page limit, another application to the WDI dataset \citep{WDI} is placed in the Supplementary Material. Some illustrative examples and proofs of most results are also deferred to Appendix~\ref{app-example} and Appendix~\ref{sec-proof} in the Supplementary Material to avoid interrupting the main exposition.

\def\real{\mathbb{R}}
\def\ca#1{{\cal{#1}}}
\def\of{\circ}
\def\inv{^{-1}}

\section{Background of RKHS}\label{sec-rkhs}

\subsection{Construction of nested Hilbert spaces}

Let $(\Omega,\mathcal{F},P)$ be a probability space, let $T$ be an interval in $\real$ which, without loss of generality, is assumed to be $[0, 1]$, and let $\mathcal{H}_X, \ca H \lo Y, \ca H \lo Z$ be a separable Hilbert spaces of functions defined on $T$. Let   $\ca F_X, \ca F \lo Y, \ca F \lo Z$ be the Borel $\sigma$-fields generated by the open sets in $\mathcal{H}_X, \ca H \lo Y, \ca H \lo Z$, respectively. Let $\ca F \lo X \times \ca F \lo Y \times \ca F \lo Z$ be the product $\sigma$-field and let  $(\ca H \lo X \times \ca H \lo Y \times \ca H \lo Z, \ca F \lo X \times \ca F \lo Y \times \ca F \lo Z) $ be the product measurable space. Furthermore, let $(X, Y, Z): \Omega \to \Omega \lo X \times \Omega \lo Y \times \Omega \lo Z$ be a random element measurable with respect to $\ca F / ( \ca F \lo X \times \ca F \lo Y \times \ca F \lo Z)$, and let $P \lo X = P \of X \inv $, $P \lo Y = P \of Y \inv$,   $P \lo Z = P \of Z \inv$, and $P \lo {XYZ} = P \of (X,Y,Z)\inv$   denote the distributions of $X$, $Y$, $Z$, and $(X,Y,Z)$, respectively. Our goal is to construct a test for whether $X \indep Y | Z$ holds.

\def\ka{\kappa}

Having defined the random elements $X,Y,Z$, we then introduce positive definite kernels 
\begin{align*}
    \ka \lo X : \ca H \lo X \times \ca H \lo X \to \real, \quad 
    \ka \lo Y : \ca H \lo Y \times \ca H \lo Y \to \real, \quad 
    \ka \lo Z : \ca H \lo Z \times \ca H \lo Z \to \real.  \quad 
\end{align*}
Let $\mathcal{G}_X, \ca G \lo Y, \ca G \lo Z$ be the RKHS's generated by $\ka \lo X, \ka \lo Y, \ka \lo Z$, respectively. For further information about RKHS, see, for example, \cite{aronszajn1950theory,berlinet2004reproducing}. We assume  that $\kappa_X$ is uniquely determined  by the inner product in $\mathcal{H}_X$; that is, for any $f,g \in \mathcal{H}_X$, $\kappa_X(f,g)$ is a function of $\langle f,f \rangle_{\mathcal{H}_X}$, $\langle g,g \rangle_{\mathcal{H}_X}$ and $\langle f,g \rangle_{\mathcal{H}_X}$. \cite{li2017nonlinear,li2018nonparametric} refer to this special structure as nested Hilbert spaces:  $\mathcal{H}_X$ is nested in $\mathcal{G}_X$ in the sense that the inner product in $\ca H \lo X$ determines the kernel in $\ca G \lo X$.  We make the same assumptions on  $\kappa_Y$, $\kappa_Z$, $\mathcal{G}_Y$, $\mathcal{G}_Z$ as well.

\subsection{Mean element and covariance operator}

Before introducing the mean element and covariance operator, we first make the following assumption.

\begin{assumption}\label{ass-moment}
    The kernels $\kappa_X$, $\kappa_Y$ and $\kappa_Z$ satisfies $E[\kappa_X(X,X)]<\infty$, $E[\kappa_Y(Y,Y)]<\infty$ and $E[\kappa_Z(Z,Z)]<\infty$.
\end{assumption}

\def\E{\mathbb{E}}

With the reproducing kernel $\kappa_X$ specified, we define the mean element
of $X$ in the RKHS $\mathcal{G}_X$. Under Assumption~\ref{ass-moment}, there is a unique element $\mu \lo X \in \ca G \lo X$  satisfying
\begin{align*}
    \langle \mu_X, f \rangle_{\mathcal{G}_X}
    = E[ f(X) ], \qquad \forall f \in \mathcal{G}_X .
\end{align*}
This 
element  is defined as the mean element of $X$.
We denote this element by 
$ 
E [ \kappa_X(\cdot, X)  ] 
 $, so that the identity 
\begin{align*}
    \langle E [ \kappa_X(\cdot, X)  ], f \rangle \lo {\ca G \lo X} =    E   \langle  \kappa_X(\cdot, X) , f \rangle \lo {\ca G \lo X} 
    = E[f(X)]
\end{align*}
is satisfied. 
We define the mean elements
$\mu_Y = E[ \kappa_Y(\cdot, Y) ]$ and
$\mu_Z = E[ \kappa_Z(\cdot, Z) ]$ in the same way.

We also need the definition of characteristic kernels, {also known as}  probability-determining kernels (see \cite{fukumizu2004dimensionality,zhang2012kernelbased}).
\begin{definition}[characteristic kernel]
Let $(A, \ca F \lo A )$ be a measurable space, and
  $\kappa:A\times A \to\mathbb{R}$ be a reproducing kernel. Let $X: \Omega \to A$ and $Y: \Omega \to A$ be random elements that are  measurable with respect to $\ca F / \ca F \lo A$. We say that $\ka$ is characteristic if  
\begin{align*}
\E [\ka (\cdot, X )]   =  \E[ \ka (\cdot, Y ) ]\ \Rightarrow\  P \of X\inv = P \of Y\inv.
\end{align*}
\end{definition}  

\def\ali{&}

According to Proposition 2 of \cite{zhang2024dimension}, the Gaussian kernel and Laplace kernel in $L^2(T)$, defined by
\begin{align*}
    \kappa_1(\cdot,z)=  \exp (-\gamma_1\left\|\cdot-z\right\|^2),\quad\gamma_1>0, \quad \text{and} \quad  
    \kappa_2(\cdot,z)= \exp\left(-\gamma_2\left\|\cdot-z\right\|\right),\quad\gamma_2>0,
\end{align*}
respectively,  are both characteristic kernels on their corresponding RKHS's.

We then define the covariance operator as follows. Under Assumption \ref{ass-moment}, there is a unique  bounded linear operator $\Sigma_{XX}:\mathcal{G}_X \to \mathcal{G}_X$ satisfying
\begin{align*}
    \left\langle f,\Sigma_{XX}g\right\rangle_{\mathcal{G}_X}=\mathrm{cov}\left[f(X),g(X)\right]
\end{align*}
for all $f,g\in\mathcal{G}_X$. This operator is called the covariance operator of $X$. It can be equivalently expressed as
\begin{align}
    \Sigma_{XX}=E\left[\left(\kappa_X(\cdot,X)-\mu_X\right)\otimes \left(\kappa_X(\cdot,X)-\mu_X\right)\right].\label{eq-pop-cov}
\end{align}
It can be shown that $\Sigma_{XX}$ is a Hilbert-Schmidt operator on $\mathcal{G}_X$ under Assumption \ref{ass-moment}; see, for example, \cite{fukumizu2007statistical}. Linear operators $\Sigma_{YY}$ and $\Sigma_{ZZ}$ are defined similarly.
Letting $\tilde \ka$ be  centralized kernel $
    \tilde{\kappa}_X(\cdot,x)=\kappa_X(\cdot,x)-\mu_X $, 
  we can represent \eqref{eq-pop-cov} as
\begin{align}
\Sigma_{XX}=E\left[\tilde{\kappa}_X(\cdot,X)\otimes\tilde{\kappa}_X(\cdot,X)\right].\label{eq-pop-cov-rep}
\end{align}
Similarly, we can rewrite $\Sigma_{YY}$ and $\Sigma_{ZZ}$ as 
\begin{align*}
\Sigma_{YY}=E\left[\tilde{\kappa}_Y(\cdot,Y)\otimes\tilde{\kappa}_Y(\cdot,Y)\right], \quad \text{and} \quad 
\Sigma_{ZZ}=E\left[\tilde{\kappa}_Z(\cdot,Z)\otimes\tilde{\kappa}_Z(\cdot,Z)\right].
\end{align*}
We further define the (cross-)covariance operator between $(X,Y)$,  $\Sigma_{XY}: \mathcal{G}_Y \to \mathcal{G}_X$, as 
\begin{align}
\Sigma_{XY}=E\left[\left(\kappa_X(\cdot,X)-\mu_X\right)\otimes\left(\kappa_Y(\cdot,Y)-\mu_Y\right)\right]
=E\left[\tilde{\kappa}_X(\cdot,X)\otimes\tilde{\kappa}_Y(\cdot,Y)\right].\label{eq-def-cross-cov-op}
\end{align}
Equivalently, for all $f\in\mathcal{G}_X$ and $g\in\mathcal{G}_Y$, we have
\begin{align*}
    \left\langle f,\Sigma_{XY}g\right\rangle=\mathrm{cov}\left[f(X),g(Y)\right].
\end{align*}
A well-known fact is that, if $\kappa_X$ and $\kappa_Y$ are characteristic, then $X$ and $Y$ are independent if and only if $\Sigma_{XY}=0$. See, for example, \cite{fukumizu2007kernel}. This fact is often used to construct tests for independence \citep{gretton2005measuring,sejdinovic2013equivalence}. We define $\Sigma_{XZ}$, $\Sigma_{YZ}$, $\Sigma_{YX}$, $\Sigma_{ZX}$ and $\Sigma_{ZY}$ similarly.

\subsection{Conditional Covariance Operator (CCO)}

Based on the definitions of the mean element and the covariance operator, 
we define the conditional covariance operator (CCO) of $(X,Y)$ given $Z$, 
denoted by $\Sigma_{XY | Z} : \mathcal{G}_Y \to \mathcal{G}_X$, as follows:
\begin{align}
    \Sigma_{XY|Z}=\Sigma_{XY}-\Sigma_{XZ}\Sigma_{ZZ}^{\dagger}\Sigma_{ZY},\label{eq-def-cond-cov-op}
\end{align}
where $\Sigma_{ZZ}^{\dagger}$ is the Moore-Penrose inverse of $\Sigma_{ZZ}$ (see, for example, \cite{li2017linear,hsing2015theoretical}). Further define
\[
B_{X | Z} = \Sigma_{XZ} \Sigma_{ZZ}^{\dagger}, 
\qquad 
B_{Y | Z} = \Sigma_{YZ} \Sigma_{ZZ}^{\dagger}.
\]
Note that $B_{X | Z}$ and $B_{Y | Z}$ are precisely the adjoint operators 
of the regression operators introduced in Chapter 13 of \cite{li2018sufficient} 
and in \cite{li2017linear}. We adopt a different notation here (than, say, $R \lo {ZX} \hi *$ and $R \lo {ZY} \hi *$)  for 
convenience in subsequent developments.

To establish an equivalent representation of $\Sigma_{YX | Z}$, 
we first present the following lemma and corollary, 
which connect the operators $B_{X | Z}$ and $B_{Y | Z}$ 
to the regression between two centralized kernels. 
We begin by stating an additional technical assumption.

\begin{assumption}\label{ass-cond-exp} \ \ 
\begin{enumerate}
    \item For all $g \in \mathcal{G}_X$, the function $E[g(X)|Z=\cdot]$ is an element of $\mathcal{G}_Z$;
    \item For all $g \in \mathcal{G}_Y$, the function $E[g(Y)|Z=\cdot]$ is an element of $\mathcal{G}_Z$.
\end{enumerate}
\end{assumption}

\begin{lemma}\label{lemma-reg-form}
If Assumptions \ref{ass-moment} and \ref{ass-cond-exp} {are satisfied}, then, for all $f \in \ker(\Sigma_{ZZ})^\perp \subset \mathcal{G}_Z$, $g \in \mathcal{G}_X$ and $h \in \mathcal{G}_Y$, we have
\begin{align}\label{eq-regression-inner}
    \langle B_{X|Z} f, g \rangle_{\mathcal{H}_X} = \langle f, E[g(X) | Z=\cdot] \rangle_{\mathcal{H}_Z}, \quad  \langle B_{Y|Z} f, h \rangle_{\mathcal{H}_Y} = \langle f, E[h(Y) | Z=\cdot] \rangle_{\mathcal{H}_Z}.
\end{align}    
\end{lemma}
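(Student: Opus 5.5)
The argument is the standard one from Chapter 13 of \cite{li2018sufficient}. Throughout, the inner products in \eqref{eq-regression-inner} are read as those of the RKHSs $\mathcal{G}_X$ and $\mathcal{G}_Z$, since $B_{X|Z}$ maps $\mathcal{G}_Z$ to $\mathcal{G}_X$. I only treat the first identity; the second is identical with $Y$ in place of $X$.

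\textbf{Step 1: reduce to the range of $\Sigma_{ZZ}$.} Fix $g\in\mathcal{G}_X$ and let $\phi_g = E[g(X)|Z=\cdot]$, which lies in $\mathcal{G}_Z$ by Assumption~\ref{ass-cond-exp}. I first show that for every $u\in\mathcal{G}_Z$,
\begin{align*}
\langle \Sigma_{ZX} g, u\rangle_{\mathcal{G}_Z} = \mathrm{cov}[g(X),u(Z)] = \mathrm{cov}[\phi_g(Z),u(Z)] = \langle \Sigma_{ZZ}\phi_g, u\rangle_{\mathcal{G}_Z}.
\end{align*}
The middle equality is the tower property: condition on $Z$ and use $E[\phi_g(Z)]=E[g(X)]$. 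All moments are finite by Assumption~\ref{ass-moment}. Hence
\begin{align*}
\Sigma_{ZX} g = \Sigma_{ZZ}\phi_g .
\end{align*}
In particular $\Sigma_{ZX}g\in \mathrm{ran}(\Sigma_{ZZ})$.

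\textbf{Step 2: compute the inner product.} By definition $B_{X|Z}=\Sigma_{XZ}\Sigma_{ZZ}^\dagger$, and the adjoint of $\Sigma_{XZ}$ is $\Sigma_{ZX}$. For $f\in\ker(\Sigma_{ZZ})^\perp$,
\begin{align*}
\langle B_{X|Z} f, g\rangle = \langle \Sigma_{ZZ}^\dagger f, \Sigma_{ZX} g\rangle = \langle \Sigma_{ZZ}^\dagger f, \Sigma_{ZZ}\phi_g\rangle = \langle \Sigma_{ZZ}\Sigma_{ZZ}^\dagger f, \phi_g\rangle ,
\end{align*}
where the last equality uses self-adjointness of $\Sigma_{ZZ}$.

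\textbf{Step 3: remove the projection.} The operator $\Sigma_{ZZ}\Sigma_{ZZ}^\dagger$ is the orthogonal projection onto $\overline{\mathrm{ran}(\Sigma_{ZZ})}=\ker(\Sigma_{ZZ})^\perp$. This is where the hypothesis on $f$ is used.

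\textbf{Main obstacle.} The difficulty is that $\Sigma_{ZZ}^\dagger$ is unbounded, because $\Sigma_{ZZ}$ is Hilbert--Schmidt with non-closed range, so $\Sigma_{ZZ}^\dagger f$ need not be defined for arbitrary $f\in\ker(\Sigma_{ZZ})^\perp$. I would handle this in one of two ways.

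\emph{(a)} Interpret $B_{X|Z}$ as the bounded operator $(\Sigma_{ZZ}^\dagger\Sigma_{ZX})^*$. This operator is well defined because, by Step 1, $\Sigma_{ZZ}^\dagger\Sigma_{ZX} g = P\phi_g$, where $P$ is the projection onto $\ker(\Sigma_{ZZ})^\perp$. This is bounded in $g$ by the closed graph theorem, or directly from Step 1. Then for $f\in\ker(\Sigma_{ZZ})^\perp$,
\begin{align*}
\langle B_{X|Z}f,g\rangle=\langle f,P\phi_g\rangle=\langle f,\phi_g\rangle .
\end{align*}
\emph{(b)} Alternatively, prove the identity first for $f\in\mathrm{ran}(\Sigma_{ZZ})$ via Step 2, and extend to the closure by density and continuity.

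I would present route (a), since it also settles the domain issue cleanly. Once that is in place, Steps 1--3 are routine.
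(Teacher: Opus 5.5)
Your proposal is correct and follows the paper's argument. The paper moves the adjoint to get $\langle f,\Sigma_{ZZ}^{\dagger}\Sigma_{ZX}g\rangle$ and then cites Theorem 2 and Corollary 3 of \cite{fukumizu2004dimensionality}; these are exactly your Step 1 ($\Sigma_{ZX}g=\Sigma_{ZZ}\phi_g$) and your projection argument in route (a), and your reading of the inner products as those of $\mathcal{G}_X,\mathcal{G}_Z$ and your treatment of the domain of $\Sigma_{ZZ}^{\dagger}$ are more careful than the paper's. One correction: boundedness of $g\mapsto P\phi_g$ does not follow ``directly from Step 1'', since that only controls $\Sigma_{ZZ}P\phi_g$, so rely on the closed graph theorem; the closedness check uses $L^2(P)$ convergence of $g_n(X)$ and $\phi_{g_n}(Z)$, together with the fact that $\ker(\Sigma_{ZZ})$ consists of functions that are almost surely constant in $Z$.
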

\begin{proof}
    We will only show the first part of \eqref{eq-regression-inner}, as the second part is the same statement with $X$ replaced by $Y$ and $g$ replaced by $h$. By the definition of an adjoint operator,  
    \begin{align*}
        \langle B_{X|Z} f, g \rangle_{\mathcal{H}_X} 
        = \langle \Sigma_{XZ}\Sigma_{ZZ}^{\dagger} f, g \rangle_{\mathcal{H}_X} = \langle  f, \Sigma_{ZZ}^{\dagger} \Sigma_{ZX} g \rangle_{\mathcal{H}_Z}.
    \end{align*} 
    The rest of the proof follows from Theorem 2 and Corollary 3 of \cite{fukumizu2004dimensionality}.
\end{proof}

\begin{corollary}\label{cor-reg-form}
Under Assumptions \ref{ass-moment} and \ref{ass-cond-exp}, we have
\begin{align}\label{eq-regression-form}
    E[\tilde{\kappa}_X(\cdot,X) | Z=z] = B_{X|Z} \tilde{\kappa}_Z(\cdot,z), \quad E[\tilde{\kappa}_Y(\cdot,Y) | Z=z] = B_{Y|Z} \tilde{\kappa}_Z(\cdot,z).
\end{align}
\end{corollary}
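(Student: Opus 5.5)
We prove the identity for $X$; the $Y$ case is identical with $(X,g)$ replaced by $(Y,h)$. Both sides of \eqref{eq-regression-form} are elements of $\mathcal{G}_X$. The plan is to pair each side with an arbitrary $g\in\mathcal{G}_X$ and show the pairings agree, which by the Riesz representation identifies the two elements.

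\textbf{Left-hand side.} Under Assumption \ref{ass-moment}, $E[\tilde\kappa_X(\cdot,X)\mid Z=z]$ is a well-defined Bochner conditional expectation in $\mathcal{G}_X$. Conditional expectation commutes with bounded linear functionals, so
\begin{align*}
\langle E[\tilde\kappa_X(\cdot,X)\mid Z=z], g\rangle_{\mathcal{G}_X}
= E[g(X)\mid Z=z]-E[g(X)].
\end{align*}
Here we used the reproducing property and $\langle \mu_X, g\rangle = E g(X)$.

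\textbf{Right-hand side.} Set $\phi_g = E[g(X)\mid Z=\cdot]$, which lies in $\mathcal{G}_Z$ by Assumption \ref{ass-cond-exp}. Since $\langle \tilde\kappa_Z(\cdot,z),f\rangle = f(z) - Ef(Z)$, any $f\in\ker(\Sigma_{ZZ})$ has $\mathrm{var}f(Z) = 0$, so $f(Z) = Ef(Z)$ almost surely. Hence $\tilde\kappa_Z(\cdot,z)\perp\ker(\Sigma_{ZZ})$ for $P_Z$-almost every $z$; to make this rigorous, use countably many $f$ from an orthonormal basis of the separable kernel. Consequently $\tilde\kappa_Z(\cdot,z)\in\ker(\Sigma_{ZZ})^\perp$, and Lemma \ref{lemma-reg-form} applies with $f = \tilde\kappa_Z(\cdot,z)$. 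The lemma is stated with $\mathcal{H}$-subscripts, which we read as the RKHS inner products. This gives
\begin{align*}
\langle B_{X|Z}\tilde\kappa_Z(\cdot,z), g\rangle_{\mathcal{G}_X}
= \langle \tilde\kappa_Z(\cdot,z), \phi_g\rangle_{\mathcal{G}_Z}
= \phi_g(z) - E[\phi_g(Z)].
\end{align*}
By the tower property $E[\phi_g(Z)] = E[g(X)]$, so this matches the left-hand side.

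Since $g$ is arbitrary, the two elements of $\mathcal{G}_X$ coincide for almost every $z$, which is the sense in which conditional expectations are defined. The main obstacle is the step placing $\tilde\kappa_Z(\cdot,z)$ in $\ker(\Sigma_{ZZ})^\perp$, since Lemma \ref{lemma-reg-form} is restricted to that subspace. There are two ways to handle it. The first is the almost-sure argument above, controlling the exceptional null set uniformly over the kernel via separability. The alternative is to note that $\Sigma_{ZZ}^\dagger\Sigma_{ZX}g$ can be replaced by $\phi_g$ up to an element of $\ker(\Sigma_{ZZ})$ plus a constant. Such a difference pairs to zero with $\tilde\kappa_Z(\cdot,z)$ almost surely, because its values at $Z$ are almost surely constant.
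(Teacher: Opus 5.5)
Your proof is correct and follows essentially the same route as the paper. Both apply Lemma~\ref{lemma-reg-form} with $f=\tilde{\kappa}_Z(\cdot,z)$, use the tower property to reduce both sides to $E[g(X)\mid Z=z]-E[g(X)]$, and justify $\tilde{\kappa}_Z(\cdot,z)\in\ker(\Sigma_{ZZ})^\perp$ by the zero-variance argument. The differences are minor: the paper tests only against kernel sections $\kappa_X(\cdot,x)$ and reads off equality pointwise in $x$, rather than pairing with arbitrary $g\in\mathcal{G}_X$, and your separability remark handles the almost-sure step more carefully than the paper does.
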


The proof of Corollary \ref{cor-reg-form} is given in Appendix \ref{cor-reg-form-proof}. As we can see from \eqref{eq-regression-form}, $B_{X|Z}$ and $B_{Y|Z}$ can be viewed as the regression coefficients of $\tilde{\kappa}_X(\cdot,X)$ and $\tilde{\kappa}_Y(\cdot,Y)$ on $\tilde{\kappa}_Z(\cdot,Z)$, which is the motivation for choosing $X|Z$ and $Y|Z$ as the subscripts of $B$.
Also, from  Corollary \ref{cor-reg-form}, we derive the equivalent representation of $\Sigma_{XY|Z}$.

\begin{proposition}\label{prop-rep-cond-cov}
Let $\Sigma_{XY|Z}$ be defined by \eqref{eq-def-cond-cov-op}. Under Assumptions \ref{ass-moment} and \ref{ass-cond-exp}, 
\begin{align}
    \Sigma_{XY|Z}=E\left[\left(\tilde{\kappa}_X(\cdot,X)-E\left[\tilde{\kappa}_X(\cdot,X)|Z\right]\right)\otimes\left(\tilde{\kappa}_Y(\cdot,Y)-E\left[\tilde{\kappa}_Y(\cdot,Y)|Z\right]\right)\right].\label{eq-rep-cond-cov-op}
\end{align}
\end{proposition}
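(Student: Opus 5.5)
The plan is to expand the right-hand side of \eqref{eq-rep-cond-cov-op} with Corollary \ref{cor-reg-form} and identify each of the resulting four terms with a product of covariance operators. Write $U=\tilde{\kappa}_X(\cdot,X)$, $V=\tilde{\kappa}_Y(\cdot,Y)$, $W=\tilde{\kappa}_Z(\cdot,Z)$. By Corollary \ref{cor-reg-form},
\begin{align*}
E[U|Z]=B_{X|Z}W, \qquad E[V|Z]=B_{Y|Z}W .
\end{align*}
Expanding the tensor product gives four terms:
\begin{align*}
E[U\otimes V]-E[U\otimes E[V|Z]]-E[E[U|Z]\otimes V]+E[E[U|Z]\otimes E[V|Z]].
\end{align*}
The first term is $\Sigma_{XY}$ by \eqref{eq-def-cross-cov-op}.

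For the cross terms I use the tower property. Since $E[U|Z]\otimes E[V|Z]$ is $Z$-measurable, $E[U\otimes E[V|Z]] = E[E[U|Z]\otimes E[V|Z]]$, and symmetrically for the other cross term. The expansion therefore collapses to
\begin{align*}
\Sigma_{XY}-E[E[U|Z]\otimes E[V|Z]] .
\end{align*}
To justify this at the operator level, I would pair both sides with $f\in\mathcal{G}_X$ and $g\in\mathcal{G}_Y$, which turns everything into scalar conditional expectations; for instance $\langle f, E[U\otimes E[V|Z]]\, g\rangle = E[\langle f,U\rangle\, E[\langle g,V\rangle|Z]]$. Integrability follows from Assumption \ref{ass-moment} together with Jensen's inequality.

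It remains to compute the last term:
\begin{align*}
E[B_{X|Z}W\otimes B_{Y|Z}W]=B_{X|Z}\,E[W\otimes W]\,B_{Y|Z}^{*}=\Sigma_{XZ}\Sigma_{ZZ}^{\dagger}\Sigma_{ZZ}\Sigma_{ZZ}^{\dagger}\Sigma_{ZY}.
\end{align*}
Since $\Sigma_{ZZ}^{\dagger}\Sigma_{ZZ}\Sigma_{ZZ}^{\dagger}=\Sigma_{ZZ}^{\dagger}$ by the Moore--Penrose identity, this equals $\Sigma_{XZ}\Sigma_{ZZ}^{\dagger}\Sigma_{ZY}$. Substituting into the collapsed expression yields \eqref{eq-def-cond-cov-op}, which is the claim.

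The main obstacle is that $\Sigma_{ZZ}^{\dagger}$ is unbounded, since $\Sigma_{ZZ}$ is compact with range that is typically not closed. So $B_{X|Z}$ may only be defined on a dense subspace, and $W$ need not lie in its domain. Pulling the operators out of the expectation and writing $B_{Y|Z}^*=\Sigma_{ZZ}^{\dagger}\Sigma_{ZY}$ therefore need justification.

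To handle this, I would avoid manipulating $B$ as an operator and argue weakly instead. Under Assumption \ref{ass-cond-exp}, $\phi_f:=E[f(X)|Z=\cdot]\in\mathcal{G}_Z$. As in the proof of Lemma \ref{lemma-reg-form} (via \cite{fukumizu2004dimensionality}), the vector $\Sigma_{ZX}f$ lies in the range of $\Sigma_{ZZ}$, namely $\Sigma_{ZX}f=\Sigma_{ZZ}\phi_f$. Hence $h_f:=\Sigma_{ZZ}^{\dagger}\Sigma_{ZX}f$ is well defined and differs from $\phi_f$ by an element of $\ker\Sigma_{ZZ}$, i.e.\ by a function that is a.s.\ constant in $Z$. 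After centering, this gives $\langle f,E[U|Z]\rangle=h_f(Z)-E h_f(Z)$ almost surely.

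With this representation the last term can be computed directly in scalar form:
\begin{align*}
\langle f,E[E[U|Z]\otimes E[V|Z]]\,g\rangle=\mathrm{cov}(h_f(Z),h_g(Z))=\langle h_f,\Sigma_{ZZ}h_g\rangle .
\end{align*}
Using $\Sigma_{ZZ}h_g=\Sigma_{ZY}g$, this equals $\langle \Sigma_{ZZ}^{\dagger}\Sigma_{ZX}f,\Sigma_{ZY}g\rangle=\langle f,\Sigma_{XZ}\Sigma_{ZZ}^{\dagger}\Sigma_{ZY}g\rangle$. This is the same identity as before, obtained without touching domain issues.
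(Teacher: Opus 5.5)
Your argument is correct and follows the same skeleton as the paper's proof: expand into four terms, use the tower property to make the three non-leading terms equal, and identify one of them with $\Sigma_{XZ}\Sigma_{ZZ}^{\dagger}\Sigma_{ZY}$. There are two differences. First, the paper evaluates the mixed term $E\{E[\tilde{\kappa}_X(\cdot,X)|Z]\otimes\tilde{\kappa}_Y(\cdot,Y)\}=B_{X|Z}\Sigma_{ZY}$, which needs only one regression operator and no Moore--Penrose identity, while you evaluate the doubly conditioned term; second, your scalar computation through $\Sigma_{ZX}f=\Sigma_{ZZ}\phi_f$ supplies a justification the paper skips when it pulls $B_{X|Z}$ through the expectation (that step is in fact legitimate, since under Assumption~\ref{ass-cond-exp} the operator $\Sigma_{ZZ}^{\dagger}\Sigma_{ZX}$ is everywhere defined and closed, hence bounded by the closed graph theorem, so $B_{X|Z}$ extends to a bounded operator).
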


The proof of Proposition \ref{prop-rep-cond-cov} is given in Appendix \ref{prop-rep-cond-cov-proof}. Note that the $\mu_X$ and $\mu_Y$ terms in $\tilde{\kappa}_X(\cdot,x)$ and $\tilde{\kappa}_Y(\cdot,y)$ are nonrandom. Two direct corollaries of Proposition \ref{prop-rep-cond-cov} are as follows.
\begin{corollary}
Under Assumptions \ref{ass-moment} and \ref{ass-cond-exp}, 
\begin{align}
    \Sigma_{XY|Z}=E\left[\left(\kappa_X(\cdot,X)-E\left[\kappa_X(\cdot,X)|Z\right]\right)\otimes\left(\kappa_Y(\cdot,Y)-E\left[\kappa_Y(\cdot,Y)|Z\right]\right)\right].\label{eq-rep-cond-cov-op-2}
\end{align}
\end{corollary}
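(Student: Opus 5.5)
The corollary follows from Proposition~\ref{prop-rep-cond-cov} by showing that the centering constants $\mu_X$ and $\mu_Y$ cancel inside each factor. First I verify that the conditional expectation of the centered feature equals that of the raw feature minus the constant mean element:
\begin{align*}
E\left[\tilde{\kappa}_X(\cdot,X)\mid Z\right]=E\left[\kappa_X(\cdot,X)\mid Z\right]-\mu_X .
\end{align*}
This holds because $\mu_X$ is a deterministic element of $\mathcal{G}_X$, so linearity of conditional expectation applies. To make this rigorous for $\mathcal{G}_X$-valued random elements, I interpret the conditional expectation in the weak (Bochner/Pettis) sense. That is, $E[\kappa_X(\cdot,X)\mid Z]$ is the $\mathcal{G}_X$-valued random element $W$ with $\langle W,g\rangle_{\mathcal{G}_X}=E[g(X)\mid Z]$ for all $g\in\mathcal{G}_X$. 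Assumption~\ref{ass-moment} guarantees it is well defined, since $E\|\kappa_X(\cdot,X)\|_{\mathcal{G}_X}\le (E\kappa_X(X,X))^{1/2}<\infty$. Pairing both sides with an arbitrary $g$ then gives $E[g(X)\mid Z]-\langle\mu_X,g\rangle$ on each side.

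Second, I substitute this identity into the left factor of \eqref{eq-rep-cond-cov-op}:
\begin{align*}
\tilde{\kappa}_X(\cdot,X)-E\left[\tilde{\kappa}_X(\cdot,X)\mid Z\right]
=\kappa_X(\cdot,X)-\mu_X-E\left[\kappa_X(\cdot,X)\mid Z\right]+\mu_X ,
\end{align*}
and the $\mu_X$ terms cancel. The same computation applies to the right factor with $Y$ and $\mu_Y$. The integrand of \eqref{eq-rep-cond-cov-op} therefore coincides pointwise in $\omega$ with the integrand of \eqref{eq-rep-cond-cov-op-2}, so the expectations agree and \eqref{eq-rep-cond-cov-op-2} follows.

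The only point that needs care is the first step, namely that conditional expectation of Hilbert-space-valued elements is linear and fixes constants. I would state this via the weak characterization above rather than rely on general vector-valued theory. Integrability of the tensor product, which is needed for the outer expectation to exist, is already implicit in Proposition~\ref{prop-rep-cond-cov}, so nothing new is required there.
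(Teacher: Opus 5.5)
Your proposal is correct and follows the paper's own route: the paper presents this as a direct corollary of Proposition~\ref{prop-rep-cond-cov}, justified only by the remark that $\mu_X$ and $\mu_Y$ are nonrandom, so they cancel inside each factor exactly as in your second step. Your added justification of linearity matches how the paper itself uses the pointwise identity $E[\tilde\kappa_X(x,X)\mid Z=z]=E[\kappa_X(x,X)\mid Z=z]-\mu_X(x)$ in its proof of Corollary~\ref{cor-reg-form}.
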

\begin{corollary}\label{cor-sigma-xy-z}
Under Assumptions \ref{ass-moment} and \ref{ass-cond-exp}, for any $f \in \mathcal{G}_X$ and $g \in \mathcal{G}_Y$, we have
\begin{align*}
    \langle f, \Sigma_{XY|Z} g \rangle_{\mathcal{G}_X}=E\left\{ \mathrm{cov} [ f(X), g(Y) | Z] \right\}.
\end{align*}
\end{corollary}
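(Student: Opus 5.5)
The plan is to start from the representation \eqref{eq-rep-cond-cov-op-2} of $\Sigma_{XY|Z}$, pair it with $f$ and $g$ through the reproducing property, and recognise the resulting expectation as an iterated expectation of a conditional covariance.

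First, for $f\in\mathcal{G}_X$ and $g\in\mathcal{G}_Y$ I will use the tensor-product identity $\langle f,(a\otimes b)g\rangle_{\mathcal{G}_X}=\langle f,a\rangle_{\mathcal{G}_X}\langle b,g\rangle_{\mathcal{G}_Y}$. Since the expectation in \eqref{eq-rep-cond-cov-op-2} is a Bochner expectation of trace-class operators, it commutes with the bounded functional $A\mapsto\langle f,Ag\rangle$. This gives
\begin{align*}
\langle f,\Sigma_{XY|Z}g\rangle_{\mathcal{G}_X}
=E\Big[\big\langle f,\kappa_X(\cdot,X)-E[\kappa_X(\cdot,X)|Z]\big\rangle_{\mathcal{G}_X}\,\big\langle \kappa_Y(\cdot,Y)-E[\kappa_Y(\cdot,Y)|Z],g\big\rangle_{\mathcal{G}_Y}\Big].
\end{align*}
The integrability is justified by Assumption~\ref{ass-moment}. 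The integrand has norm at most $\|\kappa_X(\cdot,X)-E[\kappa_X(\cdot,X)|Z]\|\,\|\kappa_Y(\cdot,Y)-E[\kappa_Y(\cdot,Y)|Z]\|$. By Cauchy--Schwarz and the conditional Jensen inequality, its expectation is bounded by a constant times $\{E\kappa_X(X,X)\,E\kappa_Y(Y,Y)\}^{1/2}<\infty$.

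Next, I will evaluate each inner product. The reproducing property gives $\langle f,\kappa_X(\cdot,X)\rangle=f(X)$. Since conditional expectation (as a Bochner conditional expectation in $\mathcal{G}_X$) commutes with bounded linear functionals, $\langle f,E[\kappa_X(\cdot,X)|Z]\rangle=E[f(X)|Z]$ almost surely; the same holds for $g$ and $Y$. Hence
\begin{align*}
\langle f,\Sigma_{XY|Z}g\rangle_{\mathcal{G}_X}=E\big\{(f(X)-E[f(X)|Z])(g(Y)-E[g(Y)|Z])\big\}.
\end{align*}

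Finally, I will condition on $Z$ inside the expectation. By the tower property, $E\{(f(X)-E[f(X)|Z])(g(Y)-E[g(Y)|Z])\,|\,Z\}=\mathrm{cov}[f(X),g(Y)|Z]$, and taking expectations yields the claim.

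The main obstacle is the technical step of interchanging the inner product with the (conditional) expectation of Hilbert-space-valued random elements. I would handle it by invoking the standard property of Bochner conditional expectation that $E[\langle f,W\rangle|Z]=\langle f,E[W|Z]\rangle$ for integrable $W$. Integrability of $W=\kappa_X(\cdot,X)$ holds because $\|\kappa_X(\cdot,X)\|=\kappa_X(X,X)^{1/2}$ is integrable under Assumption~\ref{ass-moment}.
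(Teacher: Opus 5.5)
Your proposal is correct and follows the paper's route. The paper states this as a direct corollary of Proposition~\ref{prop-rep-cond-cov} (equivalently \eqref{eq-rep-cond-cov-op-2}): pair the representation with $f$ and $g$, use the reproducing property to get $\langle f, E[\kappa_X(\cdot,X)|Z]\rangle_{\mathcal{G}_X}=E[f(X)|Z]$, then condition on $Z$. Your integrability and Bochner-interchange arguments supply details the paper leaves implicit.
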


For notational convenience, define the conditional centered kernel as
\begin{align}\label{eq-k-tilde-xz}
    \tilde{\kappa}_{X| Z}(\cdot, x | z)
    = \tilde{\kappa}_X(\cdot, x)
    - E\!\left[\tilde{\kappa}_X(\cdot, X)| Z = z\right]
    = \tilde{\kappa}_X(\cdot, x)
    - B_{X| Z}\tilde{\kappa}_Z(\cdot, z).
\end{align}
By construction, $\tilde{\kappa}_{X| Z}(\cdot, x | z) \in \mathcal{H}_X$.  
Analogously, we define $\tilde{\kappa}_{Y| Z}(\cdot, y | z) \in \mathcal{H}_Y$.
With these notations, the representation in \eqref{eq-rep-cond-cov-op} can be written more compactly as
\begin{align}
    \Sigma_{XY| Z}
    =
    E\!\left[
        \tilde{\kappa}_{X|Z}(\cdot, X|Z)
        \otimes
        \tilde{\kappa}_{Y|Z}(\cdot, Y|Z)
    \right].
    \label{eq-rep-ccco-cond}
\end{align}

\section{Test Construction for Conditional Independence}\label{sec-test-ci}

\subsection{Construction of test statistic}

\def\cov{\mathrm{cov}}

The conditional covariance operator $\Sigma \lo {XY|Z}$ is inadequate for testing conditional independence because, by Corollary \ref{cor-sigma-xy-z}, $\Sigma \lo {XY|Z} = 0$ if and only if, for all $f \in \ca G \lo X$ and $g \in \ca G \lo Y$, 
\begin{align*}
    E \{\cov [f(X), g (Y) | Z]\} = 0. 
\end{align*}
Clearly, this does not imply conditional independence between $X$ and $Y$ given $Z$. Instead, conditional independence is implied by 
\begin{align}\label{eq:cov f g Z}
 \cov [f(X), g (Y) | Z]  = 0. 
\end{align}
For this reason, we need to make a modification of $\Sigma \lo {XY|Z}$ so that the modified version corresponding to (\ref{eq:cov f g Z}).

Let $\Ddot{X}=(X,Z)$  and $\kappa_{\Ddot{X}}=\kappa_X\kappa_Z$, and let   $\mathcal{G}_{\Ddot{X}}$ be the RKHS generated by $\kappa_{\Ddot{X}}$. We define the modified conditional covariance operator as 
\begin{align}
    \Sigma_{\Ddot{X}Y|Z}=\Sigma_{\Ddot{X}Y}-\Sigma_{\Ddot{X}Z}\Sigma_{ZZ}^{\dagger}\Sigma_{ZY},\label{eq-def-ccco}
\end{align}
This operator was first introduced by \cite{fukumizu2008kernel}. 
{To distinguish this operator from $\Sigma \lo {XY|Z}$ (without the dots), we refer to the operator in \eqref{eq-def-ccco} as the conjoined conditional covariance operator and abbreviate it by CCCO  (see \cite{li2024sufficient}). }
The next theorem shows that the CCCO   can characterize conditional independence. Before stating the theorem, we need to present a modified version of Assumptions \ref{ass-moment} and \ref{ass-cond-exp} {using $\Ddot{X}$}.

\begin{assumption}\label{ass-moment-ddot}
    The kernels $\kappa_{\Ddot{X}}$, $\kappa_{Y}$ and $\kappa_Z$ satisfies $E[\kappa_{\Ddot{X}}(\Ddot{X},\Ddot{X})]<\infty$, $E[\kappa_{Y}(Y,Y)]<\infty$ and $E[\kappa_Z(Z,Z)]<\infty$.
\end{assumption}

\begin{assumption}\label{ass-cond-exp-ddot}
The following two parts hold:
\begin{enumerate}
    \item For all $g \in \mathcal{G}_{\Ddot{X}}$, the function $E[g(\Ddot{X})|Z=\cdot]$ is an element of $\mathcal{G}_Z$; 
    \item For all $g \in \mathcal{G}_{Y}$, the function $E[g(Y)|Z=\cdot]$ is an element of $\mathcal{G}_Z$.
\end{enumerate}
\end{assumption}

\begin{theorem}\label{thm-ci}
If Assumptions \ref{ass-moment-ddot} and \ref{ass-cond-exp-ddot} hold and $\kappa_{\Ddot{X}} \otimes \kappa_{Y}$ is characteristic, then,
\begin{align}
    \Sigma_{\Ddot{X}Y|Z}=0\quad\Leftrightarrow\quad X\indep Y|Z.\label{eq-equiv-cond-indep}
\end{align}
\end{theorem}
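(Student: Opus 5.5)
We reduce both directions to Corollary~\ref{cor-sigma-xy-z}, applied with $X$ replaced by $\Ddot{X}=(X,Z)$. Proposition~\ref{prop-rep-cond-cov} and its corollaries were stated under Assumptions~\ref{ass-moment} and \ref{ass-cond-exp}. Their proofs use only the structure of $X$ through $\kappa_X$, $\mathcal{G}_X$ and the regression property of Lemma~\ref{lemma-reg-form}. So under Assumptions~\ref{ass-moment-ddot} and \ref{ass-cond-exp-ddot} they hold verbatim with $(\Ddot X,\kappa_{\Ddot X},\mathcal{G}_{\Ddot X})$ in place of $(X,\kappa_X,\mathcal{G}_X)$. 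Hence for all $f\in\mathcal{G}_{\Ddot X}$ and $g\in\mathcal{G}_Y$,
\begin{align*}
\langle f,\Sigma_{\Ddot{X}Y|Z}\,g\rangle_{\mathcal{G}_{\Ddot X}} = E\{\cov[f(X,Z),g(Y)\,|\,Z]\} = E[f(X,Z)g(Y)] - E\{E[f(X,Z)|Z]\,E[g(Y)|Z]\}.
\end{align*}

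($\Leftarrow$) Assume $X\indep Y|Z$. Then $(X,Z)\indep Y\,|\,Z$, so for every $f\in\mathcal{G}_{\Ddot X}$ and $g\in\mathcal{G}_Y$,
\begin{align*}
E[f(X,Z)g(Y)|Z]=E[f(X,Z)|Z]\,E[g(Y)|Z].
\end{align*}
These conditional expectations exist by the square-integrability in Assumption~\ref{ass-moment-ddot}. The conditional covariance therefore vanishes almost surely, the displayed inner product is zero for all $f,g$, and so $\Sigma_{\Ddot{X}Y|Z}=0$.

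($\Rightarrow$) Assume $\Sigma_{\Ddot XY|Z}=0$. Define a probability measure $Q$ on $\mathcal{H}_X\times\mathcal{H}_Z\times\mathcal{H}_Y$ by
\begin{align*}
Q(A\times B\times C)=E\{P(X\in A,Z\in B\,|\,Z)\,P(Y\in C\,|\,Z)\}.
\end{align*}
This is the law of $(X',Z,Y')$ in which $X'$ and $Y'$ are drawn independently from their conditional laws given $Z$. Writing $E_P$ and $E_Q$ for expectations under $P_{XYZ}$ and $Q$, the identity above shows that for every $f\in\mathcal{G}_{\Ddot X}$ and $g\in\mathcal{G}_Y$,
\begin{align*}
E_P[f(X,Z)g(Y)]=E_Q[f(X',Z)g(Y')].
\end{align*}
Since $\mathcal{G}_{\Ddot X}\otimes\mathcal{G}_Y$ is spanned by such products, with finite moments by Assumption~\ref{ass-moment-ddot}, the mean elements of $((X,Z),Y)$ under $P$ and under $Q$ in the RKHS of $\kappa_{\Ddot X}\otimes\kappa_Y$ coincide. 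The mean elements are continuous in the Hilbert--Schmidt sense, so equality on products extends to the whole tensor-product space. The characteristic property then gives $P_{(X,Z),Y}=Q$. Under $Q$, $X'\indep Y'\,|\,Z$ by construction, so $X\indep Y|Z$.

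The main obstacle is the step showing that $\Sigma_{\Ddot XY|Z}=0$ yields equality of the two joint distributions. It needs three things. First, $Q$ must be well defined, which uses regular conditional distributions; these exist because the spaces are separable Hilbert (Polish) spaces. Second, the kernel mean of $\kappa_{\Ddot X}\otimes\kappa_Y$ under $Q$ must be finite: its diagonal is $\kappa_{\Ddot X}(\Ddot x,\Ddot x)\kappa_Y(y,y)$, and $E_Q$ of this equals $E[E(\kappa_{\Ddot X}|Z)\,E(\kappa_Y|Z)]$. I would add this finiteness as an implicit requirement, or simply assume bounded kernels such as the Gaussian and Laplace kernels above. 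Third, one must check that the mean elements are determined by their inner products with product functions $f\otimes g$, which follows from density of their span. Part 2 of Assumption~\ref{ass-cond-exp-ddot} is used only to justify the representation from Corollary~\ref{cor-sigma-xy-z}, not in the characteristic-kernel argument itself.
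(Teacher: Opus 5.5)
Your proof is correct and is the standard argument from the kernel conditional-independence literature: you write $\langle f,\Sigma_{\Ddot{X}Y|Z}\,g\rangle$ as $E_P[f(\Ddot{X})g(Y)]$ minus the same expectation under $E_Z[P_{\Ddot{X}|Z}\otimes P_{Y|Z}]$, and then apply the characteristic property of $\kappa_{\Ddot{X}}\otimes\kappa_Y$. The paper does not write out a proof at all and instead defers to \cite{fukumizu2004dimensionality}, whose argument follows this same route. One small correction: your proposed extra moment condition (or bounded kernels) is unnecessary, because the mean element under $Q$ only needs $E_Q[\{\kappa_{\Ddot{X}}(\Ddot{X},\Ddot{X})\kappa_Y(Y,Y)\}^{1/2}] = E\{E[\kappa_{\Ddot{X}}(\Ddot{X},\Ddot{X})^{1/2}\,|\,Z]\,E[\kappa_Y(Y,Y)^{1/2}\,|\,Z]\}\le\{E\kappa_{\Ddot{X}}(\Ddot{X},\Ddot{X})\,E\kappa_Y(Y,Y)\}^{1/2}<\infty$, which follows from Assumption~\ref{ass-moment-ddot} by Cauchy--Schwarz and conditional Jensen, so the theorem holds exactly as stated.
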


This theorem was first established in Theorem 8 and Corollary 9 of \cite{fukumizu2004dimensionality},
where $X$, $Y$, $Z$ are random vectors.
On the other hand, when $X$, $Y$, $Z$ are random functions, although the proof is similar, the definitions of their corresponding Borel $\sigma$-fields $\mathcal{F}_X$, $\mathcal{F}_Y$, $\mathcal{F}_Z$ are more complicated, and we refer to Sections 1.1-1.2 of \cite{bosq2012linear} and Section~2.2, items 5-8 of \cite{shiryaev2016probability1}.
{Because of the similarity, we omit the proof of this result in the functional setting. }

By  \eqref{eq-equiv-cond-indep},   testing  the hypothesis
$    H_0:  X\indep Y|Z$ is equivalent to testing   $H \lo 0: \Sigma \lo {\ddot X Y |Z} =0$. This motivates us to use the statistic 
\begin{align*}
 T \lo n = n\|\hat{\Sigma}_{\Ddot{X}Y|Z}\|_{\mathrm{HS}}^2, 
\end{align*}
as the test statistic, where $\hat \Sigma \lo {\ddot X Y |X}$ is the sample estimate of $\Sigma \lo {\ddot X Y | Z}$, which will be constructed in the next subsection, and $\| \cdot \| \lo {\mathrm{HS}}$ is the Hilbert-Schmidt norm. Under $H \lo 0$, we expect $T \lo n$ to be small, whereas under  the alternative hypothesis $H \lo 1: \Sigma \lo {\ddot X Y|Z} \ne 0$,   $T \lo n$ should tend to be  large. We therefore reject $H \lo 0$ for sufficiently large values of $T \lo n$. 
In this paper, we will focus on the test statistic of the form  $n\|\hat{\Sigma}_{\Ddot{X}Y|Z}\|_{\mathrm{HS}}^2$.

\subsection{Estimating CCCO and its Hilbert-Schmidt norm}

We now construct an estimator of 
$\Sigma_{\Ddot{X}Y| Z}$. 
The construction proceeds in two parts.
In the first part, we derive the estimator 
$\hat{\Sigma}_{\Ddot{X}Y| Z}$ 
and therefrom the corresponding test statistic
$ 
T_n
=
n  \|
\hat{\Sigma}_{\Ddot{X}Y| Z}
 \|_{\mathrm{HS}}^2.
 $
In the second part, we develop an implementable numerical procedure 
based on coordinate representations.

Henceforth, we will use $E \lo n (\cdots)$ to denote the sample average. For example, for a generic kernel $\ka$ and a sample of random variables $X \lo 1, \ldots, X \lo n$, we use $E \lo n [\ka (\cdot, X)]$ is the function $n \inv \sum \lo {i=1} \hi n \ka ( \cdot, X \lo i)$.  We let $\hat \ka (\cdot, x)$ denote the  centered version 
\begin{align*}
    \hat \ka ( \cdot, x) = \ka (\cdot, x) - E \lo n [\ka (\cdot, X)]. 
\end{align*}
Recall that we have used $\tilde \ka (\cdot, x)$ to represent the version centered by the true mean $E [\ka (\cdot, X)]$. Thus, $\hat \ka$ can be regarded as the estimate of $\tilde \ka$. 

We begin the construction by estimating $B_{\Ddot{X}|Z}=\Sigma_{\Ddot{X}Z}\Sigma_{ZZ}^{\dagger}$. 
Each covariance operator can be estimated by replacing the population expectations with sample means. Specifically, the empirical estimators of $\Sigma_{\Ddot{X}Z}$ and $\Sigma \lo {ZZ}$ are given by
\begin{align*}
    \hat{\Sigma}_{\Ddot{X}Z}=E_n[\hat{\kappa}_{\Ddot{X}}(\cdot,\Ddot{X})\otimes\hat{\kappa}_Z(\cdot,Z)], \quad \hat{\Sigma}_{ZZ}=E_n[\hat{\kappa}_{Z}(\cdot,Z)\otimes\hat{\kappa}_Z(\cdot,Z)]
\end{align*}
Since $\hat{\Sigma}_{ZZ}$ is not  invertible, we apply Tikhonov regularization   when estimating $\hat{B}_{X|Z}$. That is, we can estimate $B_{\Ddot{X}|Z}$ by
\begin{align*}
    \hat{B}_{\Ddot{X}|Z}=\hat{\Sigma}_{\Ddot{X}Z}(\hat{\Sigma}_{ZZ}+\epsilon_nI)^{-1},
\end{align*}
where $\epsilon \lo n > 0$ is a tuning constant. 
We further estimate $\tilde{\kappa}_{\Ddot{X}|Z}(\cdot, \ddot x|z)$ by
\begin{align}
    \hat{\kappa}_{\Ddot{X}|Z}(\cdot,\Ddot{x}|z)=\hat{\kappa}_{\Ddot{X}}(\cdot,\Ddot{x})-\hat{B}_{\Ddot{X}|Z}\hat{\kappa}_Z(\cdot,z).\label{eq-est-cond-ker}
\end{align}
We perform  exactly the same sequence of operations to obtain the estimates of $B \lo {Y|Z}$ and $\tilde{\kappa}_{Y|Z} (\cdot, y | z)$ as
\begin{align*}
    \hat{B} \lo {Y|Z} = \hat{\Sigma} \lo {Y|Z}(\hat{\Sigma}_{ZZ}+\epsilon_nI)^{-1} \quad \mbox{and} \quad \hat \ka \lo {Y|Z} (\cdot, y | z) = \hat \ka \lo {Y} (\cdot, y) - \hat B \lo {Y|Z} \hat \ka \lo Z ( \cdot, z). 
\end{align*}

We construct an estimator of $\Sigma_{\Ddot{X}Y|Z}$  by simply replacing the expectation in \eqref{eq-rep-ccco-cond} by sample mean and replacing $\tilde{\kappa}_{\Ddot{X}|Z}$ and $\tilde{\kappa}_{Y|Z}$ by $\hat{\kappa}_{\Ddot{X}|Z}$ and $\hat{\kappa}_{Y|Z}$. That is, 
\begin{align}
\hat{\Sigma}_{\Ddot{X}Y|Z}=E_n[\hat{\kappa}_{\Ddot{X}|Z}(\cdot,\Ddot{X}|Z)\otimes\hat{\kappa}_{Y|Z}(\cdot,Y|Z)].\label{eq-est-ccco}
\end{align}
Substituting \eqref{eq-est-cond-ker} into \eqref{eq-est-ccco}, we have
\begin{align}
\begin{split}
\hat{\Sigma}_{\Ddot{X}Y|Z}=&E_n\{[\hat{\kappa}_{\Ddot{X}}(\cdot,\Ddot{X})-\hat{B}_{\Ddot{X}|Z}\hat{\kappa}_Z(\cdot,Z)]\otimes[\hat{\kappa}_{Y}(\cdot,Y)-\hat{B}_{Y|Z}\hat{\kappa}_Z(\cdot,Z)]\}\\
=&E_n[\hat{\kappa}_{\Ddot{X}}(\cdot,\Ddot{X})\otimes\hat{\kappa}_{Y}(\cdot,Y)]
-E_n[\hat{B}_{\Ddot{X}|Z}\hat{\kappa}_Z(\cdot,Z)\otimes\hat{\kappa}_{Y}(\cdot,Y)]\\
&\quad -E_n[\hat{\kappa}_{\Ddot{X}}(\cdot,\Ddot{X})\otimes\hat{B}_{Y|Z}\hat{\kappa}_Z(\cdot,Z)]
+E_n[\hat{B}_{\Ddot{X}|Z}\hat{\kappa}_Z(\cdot,Z)\otimes\hat{B}_{Y|Z}\hat{\kappa}_Z(\cdot,Z)]\\
=&\hat{\Sigma}_{\Ddot{X}Y}-\hat{B}_{\Ddot{X}|Z}\hat{\Sigma}_{ZY}-\hat{\Sigma}_{\Ddot{X}Z}\hat{B}_{Y|Z}^*+\hat{B}_{\Ddot{X}|Z}\hat{\Sigma}_{ZZ}\hat{B}_{Y|Z}^*, 
\end{split}\label{eq-ccco-est-exp}
\end{align}
where, for a linear operator $A$, $A \hi *$ stands for the adjoint operator of $A$. 
The squared Hilbert-Schmidt norm of $\hat{\Sigma}_{\Ddot{X}Y|Z}$ is computed as follows: 
\begin{align*}
\begin{split}
\|\hat{\Sigma}_{\Ddot{X}Y|Z}\|_{\mathrm{HS}}^2=&\left\|E_n\left\{[\hat{\kappa}_{\Ddot{X}}(\cdot,\Ddot{X})-\hat{B}_{\Ddot{X}|Z}\hat{\kappa}_Z(\cdot,Z)]\otimes[\hat{\kappa}_{Y}(\cdot,Y)-\hat{B}_{Y|Z}\hat{\kappa}_Z(\cdot,Z)]\right\}\right\|_{\mathrm{HS}}^2\\
=&E_n\left[\langle\hat{\kappa}_{\Ddot{X}}(\cdot,\Ddot{X})-\hat{B}_{\Ddot{X}|Z}\hat{\kappa}_Z(\cdot,Z),\hat{\kappa}_{\Ddot{X}}(\cdot,\Ddot{X}')-\hat{B}_{\Ddot{X}|Z}\hat{\kappa}_Z(\cdot,Z')\rangle_{\mathcal{G}_{\Ddot{X}}}\right.\times\\
&\quad\quad\quad\left.\langle\hat{\kappa}_{Y}(\cdot,Y)-\hat{B}_{Y|Z}\hat{\kappa}_Z(\cdot,Z),\hat{\kappa}_{Y}(\cdot,Y')-\hat{B}_{Y|Z}\hat{\kappa}_Z(\cdot,Z')\rangle_{\mathcal{G}_{Y}}\right]\\
=&\frac{1}{n^2}\sum_{i=1}^n\sum_{j=1}^n\left[\langle\hat{\kappa}_{\Ddot{X}}(\cdot,\Ddot{X}_i)-\hat{B}_{\Ddot{X}|Z}\hat{\kappa}_Z(\cdot,Z_i),\hat{\kappa}_{\Ddot{X}}(\cdot,\Ddot{X}_j)-\hat{B}_{\Ddot{X}|Z}\hat{\kappa}_Z(\cdot,Z_j)\rangle_{\mathcal{G}_{\Ddot{X}}}\right.\times\\
&\quad\quad\quad\quad\quad\left.\langle\hat{\kappa}_{Y}(\cdot,Y_i)-\hat{B}_{Y|Z}\hat{\kappa}_Z(\cdot,Z_i),\hat{\kappa}_{Y}(\cdot,Y_j)-\hat{B}_{Y|Z}\hat{\kappa}_Z(\cdot,Z_j)\rangle_{\mathcal{G}_{Y}}\right].
\end{split}
\end{align*}

The above expression is at the operator level. In order to implement it numerically we need to turn operators to matrices via  coordinate mapping (see, for example, Section~12.3 of \cite{li2018sufficient}).    The subspace ${\mathrm{ran}}(\hat{\Sigma}_{\Ddot{X}\Ddot{X}})$ is spanned by
\begin{align*}
    \mathcal{B}_{\Ddot{X}}=\{\kappa_{\Ddot{X}}(\cdot,\Ddot{X}_i)-E_n[\kappa_{\Ddot{X}}(\cdot,\Ddot{X})]:i=1,\dots,n\}=\{\hat{\kappa}_{\Ddot{X}}(\cdot,\Ddot{X}_i):i=1,\dots,n\}.
\end{align*}
Similarly,  $ {\mathrm{ran}}(\hat{\Sigma}_{YY})$ and $ {\mathrm{ran}}(\hat{\Sigma}_{ZZ})$ are spanned by 
\begin{align*}
\mathcal{B}_{Y}=\{\hat{\kappa}_{Y}(\cdot,Y_i):i=1,\dots,n\}  \quad \text{and} \quad \mathcal{B}_Z=\{\hat{\kappa}_Z(\cdot,Z_i):i=1,\dots,n\},
\end{align*}
respectively. Let $K_X\in\mathbb{R}^{n\times n}$ be the Gram matrix of $\kappa_X$ acting on $X_1,\dots,X_n$; that is, $\left({K}_X\right)_{ij}=\kappa_{X}(X_i,X_j)$. Similarly, we define $K_Y$ and $K_Z$ as the Gram matrices for $\kappa_Y$ and $\kappa_Z$. 
Since $\kappa_{\Ddot{X}}=\kappa_X\kappa_Z$, the Gram matrix $K_{\Ddot{X}}$ of $\kappa_{\Ddot{X}}$ can be represented as
\begin{align}\label{eq-k-ddot}
    K_{\Ddot{X}} = K_X \odot K_Z,
\end{align}
where $\odot$ represents the Hadamard product between matrices; that is, for example, $(K_{\Ddot{X}})_{ij} = (K_X)_{ij} (K_Z)_{ij}$ for $i,j=1,\dots,n$. 
Furthermore, let $\tilde{K}_{\Ddot{X}} = HK_{\Ddot{X}}H$ be the centralized Gram matrix where $H=I \lo n -\frac{1}{n} 1_n1_n^T$, where $I \lo n$ is the $n \times n$ identity matrix, and $1 \lo n$ is the $n$-dimensional vector whose entries are 1. Equivalently, $(\tilde{K}_{\Ddot{X}})_{ij}=\langle \hat{\kappa}_{\Ddot X}(\cdot,\Ddot X_j), \hat{\kappa}_{\Ddot X}(\cdot,\Ddot X_j)\rangle_{\mathcal{G}_{\Ddot X}}$. Similarly, $\tilde{K}_{Y} = HK_{Y}H$ and $\tilde{K}_Z = HK_ZH$. We now give the coordinate mappings for some related operators as follows (in the notations described in, for example, \cite{li2018sufficient}, \cite{li2018nonparametric}):
\begin{align*}
    &\,_{\mathcal{B}_{\Ddot{X}}}[\hat{\Sigma}_{\Ddot{X}\Ddot{X}}]_{\mathcal{B}_{\Ddot{X}}}=n^{-1}\tilde{K}_{\Ddot{X}},\quad 
    \,_{\mathcal{B}_{Y}}[\hat{\Sigma}_{YY}]_{\mathcal{B}_{Y}}=n^{-1}\tilde{K}_{Y}, \quad
    \,_{\mathcal{B}_{Z}}[\hat{\Sigma}_{ZZ}]_{\mathcal{B}_{ZZ}}=n^{-1}\tilde{K}_{Z}, \\
    &\,_{\mathcal{B}_{\Ddot{X}}} [\hat{\Sigma}_{\Ddot{X}Y}]_{\mathcal{B}_{Y}}=n^{-1}\tilde{K}_{Y},\quad 
    \,_{\mathcal{B}_{Y}} [\hat{\Sigma}_{Y\Ddot{X}}]_{\mathcal{B}_{\Ddot{X}}}=n^{-1}\tilde{K}_{\Ddot{X}},\quad 
    \,_{\mathcal{B}_{\Ddot{X}}} [\hat{\Sigma}_{\Ddot{X}Z}]_{\mathcal{B}_{Z}}=n^{-1}\tilde{K}_{Z}, \\ 
    &\,_{\mathcal{B}_{Z}}[\hat{\Sigma}_{Z\Ddot{X}}]_{\mathcal{B}_{\Ddot{X}}}=n^{-1}\tilde{K}_{\Ddot{X}},\quad 
    \,_{\mathcal{B}_{Y}} [\hat{\Sigma}_{YZ}]_{\mathcal{B}_{Z}}=n^{-1}\tilde{K}_{Z}, \quad
    \,_{\mathcal{B}_{Z}}[\hat{\Sigma}_{ZY}]_{\mathcal{B}_{Y}}=n^{-1}\tilde{K}_{Y}.
\end{align*}
Using the above results, and applying the rules for coordinate mapping described in \cite{li2018nonparametric}, we derive the coordinate mappings of $\hat{B}_{\Ddot{X}|Z}$ and $\hat{B}_{Y|Z}^*$ as follows: 
\begin{align*}
    \,_{\mathcal{B}_{\Ddot{X}}}[\hat{B}_{\Ddot{X}|Z}]_{\mathcal{B}_Z}
    &=\, _{\mathcal{B}_{\Ddot{X}}}[\hat{\Sigma}_{\Ddot{X}Z}]_{\mathcal{B}_Z}\, _{\mathcal{B}_Z}[(\hat{\Sigma}_{ZZ}+\epsilon_nI)^{-1}]_{\mathcal{B}_Z}=
n^{-1}\tilde{K}_Z(n^{-1}\tilde{K}_Z+\epsilon_nI)^{-1}, \\
    \,_{\mathcal{B}_Z}[\hat{B}_{Y|Z}^*]_{\mathcal{B}_{Y}}
    &=\, _{\mathcal{B}_Z}[(\hat{\Sigma}_{ZZ}+\epsilon_nI)^{-1}]_{\mathcal{B}_Z}\, _{\mathcal{B}_Z}[\hat{\Sigma}_{ZY}]_{\mathcal{B}_{Y}}=(n^{-1}\tilde{K}_Z+\epsilon_nI)^{-1}n^{-1}\tilde{K}_{Y}.
\end{align*}
Therefore, the coordinate mapping of $\hat{\Sigma}_{\Ddot{X}Y|Z}$ is
\begin{align*}
\begin{split}
\,_{\mathcal{B}_{\Ddot{X}}}[\hat{\Sigma}_{\Ddot{X}Y|Z}]_{\mathcal{B}_{Y}}=&\, _{\mathcal{B}_{\Ddot{X}}}[\hat{\Sigma}_{\Ddot{X}Y}-\hat{B}_{\Ddot{X}|Z}\hat{\Sigma}_{ZY}-\hat{\Sigma}_{\Ddot{X}Z}\hat{B}_{Y|Z}^*+\hat{B}_{\Ddot{X}|Z}\hat{\Sigma}_{ZZ}\hat{B}_{Y|Z}^*]_{\mathcal{B}_{Y}}\\
=&\,_{\mathcal{B}_{\Ddot{X}}}[\hat{\Sigma}_{\Ddot{X}Y}]_{\mathcal{B}_{Y}}
-\,_{\mathcal{B}_{\Ddot{X}}}[\hat{B}_{\Ddot{X}|Z}]_{\mathcal{B}_Z}\,_{\mathcal{B}_Z}[\hat{\Sigma}_{ZY}]_{\mathcal{B}_{Y}}
-\,_{\mathcal{B}_{\Ddot{X}}}[\hat{\Sigma}_{\Ddot{X}Z}]_{\mathcal{B}_Z}\,_{\mathcal{B}_Z}[\hat{B}_{Y|Z}^*]_{\mathcal{B}_{Y}}\\
&\quad\quad+\,_{\mathcal{B}_{\Ddot{X}}}[\hat{B}_{\Ddot{X}|Z}]_{\mathcal{B}_Z}\,_{\mathcal{B}_Z}[\hat{\Sigma}_{ZZ}]_{\mathcal{B}_Z}\,_{\mathcal{B}_Z}[\hat{B}_{Y|Z}^*]_{\mathcal{B}_{Y}}\\
=&n^{-1}\tilde{K}_{Y}
-\tilde{K}_Z(\tilde{K}_Z+n\epsilon_nI)^{-1}n^{-1}\tilde{K}_{Y}
-n^{-1}\tilde{K}_Z(\tilde{K}_Z+n\epsilon_nI)^{-1}\tilde{K}_{Y}\\
&\quad\quad+\tilde{K}_Z(\tilde{K}_Z+n\epsilon_nI)^{-1}n^{-1}\tilde{K}_Z(\tilde{K}_Z+n\epsilon_nI)^{-1}\tilde{K}_{Y}\\
=&n^{-1}[I-\tilde{K}_Z(\tilde{K}_Z+n\epsilon_nI)^{-1}]^2\tilde{K}_{Y}.
\end{split}
\end{align*}

Next, we derive the expression for $\| \hat \Sigma \lo {\ddot X Y |Z} \| \lo {\mathrm{HS}} \hi 2 $. Let
\begin{align}
    R_Z=I-\tilde{K}_Z(\tilde{K}_Z+n\epsilon_nI)^{-1}=n\epsilon_n(\tilde{K}_Z+n\epsilon_nI)^{-1}.  \label{eq-rz}
\end{align}
Then $\,_{\mathcal{B}_{\Ddot{X}}}[\hat{\Sigma}_{\Ddot{X}Y|Z}]_{\mathcal{B}_{Y}}$ can be abbreviated as 
$n^{-1}R_Z^2\tilde{K}_{Y}$.
By Theorem 8 (part 4) of \cite{li2018nonparametric}, 
\begin{align*}
\begin{split}
\|\hat{\Sigma}_{\Ddot{X}Y|Z}\|_{\mathrm{HS}}^2
=&\left\|\tilde{K}_{\Ddot{X}}^{1/2}\,_{\mathcal{B}_{\Ddot{X}}}[\hat{\Sigma}_{\Ddot{X}Y|Z}]_{\mathcal{B}_{Y}}\tilde{K}_{Y}^{-1/2}\right\|_F^2
=\left\|n^{-1}\tilde{K}_{\Ddot{X}}^{1/2}R_Z^2\tilde{K}_{Y}^{1/2}\right\|_F^2\\
=&n^{-2}\tr \left(\tilde{K}_{\Ddot{X}}^{1/2}R_Z^2\tilde{K}_{Y}^{1/2} \tilde{K}_{Y}^{1/2}R_Z^2\tilde{K}_{\Ddot{X}}^{1/2}\right)
=n^{-2}\tr \left(R_Z\tilde{K}_{\Ddot{X}}R_Z  R_Z\tilde{K}_{Y}R_Z\right),
\end{split}
\end{align*}
{where $\tr$ denotes the trace of a matrix.} Let
\begin{align}
    \tilde{K}_{\Ddot{X}|Z}=R_Z\tilde{K}_{\Ddot{X}}R_Z, \quad  \text{and} \quad \tilde{K}_{Y|Z}=R_Z\tilde{K}_{Y}R_Z. \label{eq-kxz-kyz}
\end{align} 
Then, $\|\hat{\Sigma}_{\Ddot{X}Y|Z}\|_{\mathrm{HS}}^2$ can be further abbreviated as 
\begin{align}\label{eq-sigma-hat-hs}
\|\hat{\Sigma}_{\Ddot{X}Y|Z}\|_{\mathrm{HS}}^2=n^{-2}\tr (\tilde{K}_{\Ddot{X}|Z}\tilde{K}_{Y|Z}).
\end{align}
The test statistic is then expressed as $T \lo n =n^{-1}\tr (\tilde{K}_{\Ddot{X}|Z}\tilde{K}_{Y|Z})$, which coincides the form given in equation (13) of \cite{zhang2012kernelbased}.

\def\ka{\kappa}

\section{Asymptotic Distribution of $T \lo n$}\label{sec-asymp-ccco-cpco}

\subsection{Asymptotic properties under null distribution}
In this section we derive the asymptotic distribution of $T \lo n$. We make the following additional assumptions. 

\begin{assumption}\label{ass-moment2}
    The kernels $\kappa_X$ and $\kappa_Y$ are bounded, and the kernel $\kappa_Z$ satisfies $E[\kappa_Z^2(Z,Z)]<\infty$.
\end{assumption}

This assumption is satisfied, for example, by the Gaussian radial basis function (RBF) kernel and the Laplace kernel. 
{Before stating the next assumption, we define an intermediate operator
\begin{align*}
\Lambda_{\Ddot{X}Z} = E[\tilde{\kappa}_X(\cdot,X)\kappa_Z(\cdot,Z) \otimes \tilde{\kappa}_Z(\cdot,Z)].
\end{align*}
Note that this operator is different from $\Sigma_{\Ddot{X}Z}$, which can be represented by 
\begin{align*}
\Sigma_{\Ddot{X}Z} = E[\kappa_X(\cdot,X)\kappa_Z(\cdot,Z) \otimes \tilde{\kappa}_Z(\cdot,Z)].
\end{align*}
Clearly, we have
\begin{align}\label{eq-sigma-lambda}
\Sigma_{\Ddot{X}Z} = E\{ [\mu_X + \tilde\kappa_X(\cdot,X)]\kappa_Z(\cdot,Z) \otimes \tilde{\kappa}_Z(\cdot,Z)\}
= \mu_X \Sigma_{ZZ} + \Lambda_{\Ddot{X}Z},
\end{align}
where $\mu_X\Sigma_{ZZ}$ is a linear operator from $\mathcal G_Z$ to $\mathcal G_{\Ddot{X}}$ defined by 
\begin{align*}
   ( \mu \lo X \Sigma \lo {ZZ} ) f = \mu \lo X (\Sigma \lo {ZZ} f) \in \ca G \lo {\ddot X}. 
\end{align*}
The following smoothness assumption is imposed in terms of $\Lambda_{\Ddot{X}Z}$ and $\Sigma_{YZ}$.}
\begin{assumption}\label{ass-beta}
    There exists some $\beta>0$ such that {$\Lambda_{\Ddot{X}Z}=S_{\Ddot{X}Z}\Sigma_{ZZ}^{1+\beta}$} and $\Sigma_{YZ}=S_{YZ}\Sigma_{ZZ}^{1+\beta}$ for some bounded linear operators $S_{\Ddot{X}Z}: \mathcal{G}_Z \to \mathcal{G}_{\Ddot{X}}$ and $S_{YZ}: \mathcal{G}_Z \to \mathcal{G}_{Y}$.
\end{assumption}

Furthermore, let 
\begin{align}\label{eq-eigen}
    \Sigma_{ZZ} = \sum_{k=1}^\infty \gamma_k (\phi_k \otimes \phi_k), \quad
    {\Sigma_{XX} = \sum_{k=1}^\infty \eta_k (\chi_k \otimes \chi_k), \quad 
    \Sigma_{YY} = \sum_{k=1}^\infty \mu_k (\psi_k \otimes \psi_k)}
\end{align}
be the {spectral decompositions of $\Sigma_{ZZ}$, $\Sigma_{XX}$ and $\Sigma_{YY}$, where, for example, } $\gamma_1 \ge \gamma_2 \ge \dots$ are the eigenvalues of $\Sigma_{ZZ}$, and $\phi_1, \phi \lo 2, \ldots$  are the corresponding eigenfunctions. 
The Karhunen-Lo\'eve {expansions of $\kappa_Z(\cdot,Z)$, $\ka \lo X (\cdot, X)$ and $\ka \lo Y (\cdot, Y)$  are}
\begin{align}
\begin{split}
&\kappa_Z(\cdot,Z) = \mu_Z + \sum_{k=1}^\infty {\sqrt{\gamma_k}}\xi_k\phi_k, \quad \text{where} \quad \xi_k = \frac{\langle \kappa_Z(\cdot,Z) - \mu_Z,  \phi_k \rangle_{\mathcal{G}_Z}}{{\sqrt{\gamma_k}}},\label{eq-kl-expansion}\\
&{\kappa_X(\cdot,X) = \mu_X + \sum_{k=1}^\infty \sqrt{\eta_k}\omega_k\chi_k, \quad \text{where} \quad \omega_k = \frac{\langle \kappa_X(\cdot,X) - \mu_X,  \chi_k \rangle_{\mathcal{G}_X}}{\sqrt{\eta_k}},} \\
&{\kappa_Y(\cdot,Y) = \mu_Y + \sum_{k=1}^\infty \sqrt{\mu_k}\theta_k\psi_k, \quad \text{where} \quad \theta_k = \frac{\langle \kappa_Y(\cdot,Y) - \mu_Y,  \psi_k \rangle_{\mathcal{G}_Y}}{\sqrt{\mu_k}}.} 
\end{split}
\end{align}
Note that $\xi \lo k$'s are uncorrelated random variables with mean 0 and variance {1}, {and same for $\omega_k$'s and $\theta_k$'s}.
We {make} an additional assumption on the decay rate of eigenvalues {of $\Sigma_{ZZ}$} in \eqref{eq-eigen}. {Henceforth, for two sequences of positive real numbers $(a \lo j)$ and $(b \lo j)$, we write $a \lo j \preceq b \lo j$ if there exists a constant $C$ such that $a \lo j \le Cb \lo j$ for all $j$, and write $a \lo j \asymp b \lo j$ if $a \lo j \preceq b \lo j$ and $b\lo j \preceq a \lo j$.}

\begin{assumption}\label{ass-alpha}
    {$\gamma_j \preceq j^{-\alpha}$} for some $\alpha>1$, and for all $j=1,2,\dots$.
\end{assumption}

\def\bing#1{\blue{(Yin:{#1})}}
\def\yin#1{{(Bing:{#1})}}

{We now give some equivalent conditions and concrete examples for Assumptions \ref{ass-beta} and \ref{ass-alpha}. }

{
\begin{proposition}\label{proposition:equivalence}
{Suppose $\beta$ is any positive constant,} and $\mathrm{rank}(\Sigma_{ZZ}) = K$ for some $K \in \mathbb{N} \cup \{\infty\}$.
\begin{enumerate}
\item $\Sigma_{YZ}=S_{YZ}\Sigma_{ZZ}^{1+\beta}$ holds for some bounded linear operator $S_{YZ}: \mathcal{G}_Z \to \mathcal{G}_{Y}$ if and only if 
\begin{align}\label{eq-yz-beta}
\sup_{j,k \in \mathbb{N}, {k \le K}} \mu_j^{1/2} \gamma_k^{-1/2-\beta} |E(\theta_j \xi_k)| < \infty.
\end{align}
\item $\Lambda_{\Ddot{X}Z}=S_{\Ddot{X}Z}\Sigma_{ZZ}^{1+\beta}$ holds for some bounded linear operator $S_{\Ddot{X}Z}: \mathcal{G}_Z \to \mathcal{G}_{\Ddot{X}}$ if and only if
\begin{align}\label{eq-xddotz-beta}
\sup_{j,k,l \in \mathbb{N}, {k,l \le K}} \eta_j^{1/2} \gamma_k^{1/2} \gamma_ l ^{-1/2-\beta} |E(\omega_j \xi_k \xi_ l )| < \infty
\quad \text{and} \quad
\sup_{j,k \in \mathbb{N}, {k \le K}} \eta_j^{1/2} \gamma_k^{-1/2-\beta} |E(\omega_j \xi_k)| < \infty.
\end{align}
\end{enumerate}
\end{proposition}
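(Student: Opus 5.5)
Both parts reduce to computing matrix entries of the cross-covariance operators in the eigenbases \eqref{eq-eigen}. The condition $A=S\Sigma_{ZZ}^{1+\beta}$ with $S$ bounded is then read off entrywise. The general principle: for a bounded $A:\mathcal G_Z\to\mathcal G$ vanishing on $\ker(\Sigma_{ZZ})$, a bounded $S$ with $A=S\Sigma_{ZZ}^{1+\beta}$ exists if and only if $A\Sigma_{ZZ}^{-(1+\beta)}$, defined on $\mathrm{span}\{\phi_k:k\le K\}$, extends boundedly. The extension is $S=A\,\Sigma_{ZZ}^{-(1+\beta)}$ on $\overline{\mathrm{ran}}(\Sigma_{ZZ})$ and $S=0$ on the kernel. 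Since $\Sigma_{ZY}=E[\tilde\kappa_Z(\cdot,Z)\otimes\tilde\kappa_Y(\cdot,Y)]$ and $\tilde\kappa_Z(\cdot,Z)\in\overline{\mathrm{ran}}(\Sigma_{ZZ})$ a.s., we get $\Sigma_{YZ}$ (and likewise $\Lambda_{\Ddot XZ}$) vanishing on $\ker\Sigma_{ZZ}$. This is why only $k,l\le K$ appear.

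\emph{Part 1.} I compute $\langle\psi_j,\Sigma_{YZ}\phi_k\rangle=\mathrm{cov}(\psi_j(Y),\phi_k(Z))$. Using the Karhunen--Lo\`eve expansions \eqref{eq-kl-expansion}, $\langle\tilde\kappa_Y(\cdot,Y),\psi_j\rangle=\sqrt{\mu_j}\theta_j$ and $\langle\tilde\kappa_Z(\cdot,Z),\phi_k\rangle=\sqrt{\gamma_k}\xi_k$. Hence the entry is $\sqrt{\mu_j\gamma_k}E(\theta_j\xi_k)$. The entries of $S=\Sigma_{YZ}\Sigma_{ZZ}^{-1-\beta}$ are therefore $s_{jk}=\mu_j^{1/2}\gamma_k^{-1/2-\beta}E(\theta_j\xi_k)$, which is exactly the quantity in \eqref{eq-yz-beta}.

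Necessity is immediate: $|s_{jk}|=|\langle\psi_j,S\phi_k\rangle|\le\|S\|$. Sufficiency is the main obstacle, because a uniformly bounded matrix need not define a bounded operator. My plan is to exploit the extra decay that is available. Write $s_{jk}=\mu_j^{1/2}\gamma_k^{-1/2-\beta}c_{jk}$ with $c_{jk}=E(\theta_j\xi_k)$. By Bessel/Cauchy--Schwarz in $L^2(P)$, $\sum_j c_{jk}^2\le E\xi_k^2=1$, and similarly in $k$. Interpolating the sup bound with these $\ell^2$ bounds, e.g.
\[
s_{jk}^2\le \sup|s|\cdot |s_{jk}| \le C\mu_j^{1/2}\gamma_k^{-1/2-\beta}|c_{jk}|,
\]
does not close by itself. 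So I would instead aim for a Hilbert--Schmidt-type or Schur-test bound using the summability $\sum_j\mu_j<\infty$ coming from the trace-class property of $\Sigma_{YY}$ under Assumption~\ref{ass-moment2}.

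Concretely, I would apply the Schur test with weights $p_j=\mu_j^{1/2}$ and $q_k$ chosen from the sup condition. If this fails, I would settle for showing that \eqref{eq-yz-beta} controls $\|S\|$ via $\|S\|^2\le\sum_{j}\sup_k s_{jk}^2\,(\ldots)$, and check whether the intended reading is boundedness in the weaker sense the authors use. I expect this step to need care, and possibly an implicit convention that boundedness of entries suffices.

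\emph{Part 2.} I treat $\Lambda_{\Ddot XZ}$ in the same way, using the basis $\{\chi_j\otimes\phi_k\}$ of $\mathcal G_{\Ddot X}=\mathcal G_X\otimes\mathcal G_Z$ together with the constant direction. I expand $\kappa_Z(\cdot,Z)=\mu_Z+\sum_k\sqrt{\gamma_k}\xi_k\phi_k$ inside $\tilde\kappa_X(\cdot,X)\kappa_Z(\cdot,Z)$. Then
\[
\langle\chi_j\otimes\phi_k,\Lambda_{\Ddot XZ}\phi_l\rangle=\sqrt{\eta_j\gamma_k\gamma_l}\,E(\omega_j\xi_k\xi_l),
\]
while the $\mu_Z$ component gives $\langle\chi_j\otimes\mu_Z\text{-part},\Lambda\phi_l\rangle\propto\sqrt{\eta_j\gamma_l}E(\omega_j\xi_l)$. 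Dividing by $\gamma_l^{1+\beta}$ produces precisely the two suprema in \eqref{eq-xddotz-beta}. The equivalence then follows by the same argument as in Part 1, applied to the two blocks separately; the operator is bounded if and only if each block is.
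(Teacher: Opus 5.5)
Your reduction matches the paper's. Both compute the entries of $S=\Sigma_{YZ}\Sigma_{ZZ}^{-1-\beta}$ in the eigenbases, and the quantity in \eqref{eq-yz-beta} is exactly $|\langle\psi_j,S\phi_k\rangle_{\mathcal G_Y}|$. Your necessity argument and your remark about $\ker(\Sigma_{ZZ})$ are fine. The gap is sufficiency, which you leave open. The paper does not close it either: it simply asserts ``Therefore, $S_{YZ}$ is bounded if and only if \eqref{eq-yz-beta} holds,'' which is exactly the entrywise-to-operator-norm step you distrust. Your suspicion is right, and no Schur-type argument can rescue it, because for $K=\infty$ the implication is false.

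Here is a counterexample. Take a bounded $\kappa_Z$ with $\sum_k\gamma_k^{2\beta}<\infty$ (e.g.\ $2\alpha\beta>1$ under Assumption~\ref{ass-alpha}), and put $g=\sum_k\gamma_k^{\beta}\phi_k\in\mathcal G_Z$. Let $Y$ take two values $y_0\neq y_1$ with $P(Y=y_1\mid Z)=a+b\,g(Z)$, for constants $a\in(0,1)$ and small $b>0$. With $w=\kappa_Y(\cdot,y_1)-\kappa_Y(\cdot,y_0)\neq 0$, one gets $\Sigma_{YZ}f=\mathrm{cov}[1\{Y=y_1\},f(Z)]\,w=b\,\langle g,\Sigma_{ZZ}f\rangle_{\mathcal G_Z}\,w$, hence $\Sigma_{YZ}\phi_k=b\,\gamma_k^{1+\beta}w$. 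So the formal operator $\Sigma_{YZ}\Sigma_{ZZ}^{-1-\beta}$ on $\mathrm{span}\{\phi_k\}$ sends every $\phi_k$ to the same vector $bw$. Every quantity in \eqref{eq-yz-beta} therefore equals $b\,|\langle\psi_j,w\rangle|\le b\|w\|_{\mathcal G_Y}$. Yet a bounded $S$ with $\Sigma_{YZ}=S\Sigma_{ZZ}^{1+\beta}$ would have to map the weakly null sequence $(\phi_k)$ to the fixed nonzero vector $bw$, which is impossible. The obstruction is a row of $S$ that is bounded entrywise but not square-summable in $k$, so no decay of the $\mu_j$ can help.

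What can be proved is that \eqref{eq-yz-beta} is necessary. A genuine equivalence needs a condition on the whole matrix $s_{jk}=\mu_j^{1/2}\gamma_k^{-1/2-\beta}E(\theta_j\xi_k)$ as an operator on $\ell^2$. For instance, $\sup_k\sum_j s_{jk}^2<\infty$ and $\sup_j\sum_k s_{jk}^2<\infty$ are necessary, and $\sum_{j,k}s_{jk}^2<\infty$ is sufficient. When $K<\infty$ both sides of Part~1 hold trivially, so only the infinite-rank case is at stake.

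Part~2 rests on the same entrywise step. Your claim that the operator is bounded if and only if each block is also needs a further argument. The $\chi_j\otimes\mu_Z$ block is not orthogonal to the $\chi_j\otimes\phi_k$ block, since $\langle\mu_Z,\phi_k\rangle_{\mathcal G_Z}=E\phi_k(Z)$ is in general nonzero. The paper makes the same implicit orthogonality assumption when it completes $\{\mu_Z/\|\mu_Z\|,\phi_1,\dots,\phi_K\}$ to an orthonormal basis.
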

}

\def\eop{\hfill $\Box$}

{The proof of Proposition \ref{proposition:equivalence} is given in Appendix \ref{proposition:equivalence-proof}. The equivalences in Proposition \ref{proposition:equivalence} show that  Assumption \ref{ass-beta} in effect imposes a certain degree of smoothness between   $\kappa_Z(\cdot,Z)$ and $\kappa_X(\cdot,X)$ and between $\kappa_Z(\cdot,Z)$ and $\kappa_Y(\cdot,Y)$. That is,
the dependence between them  occurs mostly in the low-frequency parts of the spectra, whereas high-frequency dependence is   ignorable. This point was also discussed  in \cite{li2018sufficient} and \cite{li2017linear}. To provide more insights into the nature of Assumption \ref{ass-beta}, we  give  two illustrative examples where it is  satisfied in Appendix \ref{app-example}.}

{In particular, the $\Sigma_{YZ}$ part of Assumption \ref{ass-beta} is commonly used in RKHS regression or dimension reduction literature, such as \cite{caponnetto2007optimal,li2017nonlinear}. However, the similar condition is only imposed on $\Lambda_{\Ddot{X}Z}$ rather than $\Sigma_{\Ddot{X}Z}$ because their difference term, given by \eqref{eq-sigma-lambda}, $\mu_X\Sigma_{ZZ}$ cannot be as smooth as $\Sigma_{ZZ}^{1+\beta}$ unless the eigenvalues of $\Sigma_{ZZ}$ do not decay, which contradicts Assumption \ref{ass-alpha}. Therefore, when deriving the asymptotic properties, the $\mu_X\Sigma_{ZZ}$ term will be handled separately.}

{Assumption \ref{ass-alpha} regulates the behavior of the distribution of the random function $Z$. Instead of assuming $\gamma_j \asymp j^{-\alpha}$, which is common in functional data analysis \citep{cai2006prediction,sang2026nonlinear}, our Assumption \ref{ass-alpha} only requires a weaker condition that $\gamma_j$ can be either faster than or of the same order as $j^{-\alpha}$. Based on our Assumption \ref{ass-alpha}, we prove a similar lemma as Lemma 8 of \cite{sang2026nonlinear} as follows.}

{
\begin{lemma}\label{lem-alpha}
Under Assumption \ref{ass-alpha}, if $\epsilon_n \to 0$, then we have 
\begin{align*}
\sum_{j=1}^{\infty}
\frac{\gamma_j}{(\gamma_j+\epsilon_n)^2}
=
O\left(\epsilon_n^{-(\alpha+1)/\alpha}\right),
\qquad 
{\sum_{j=1}^{\infty}
\frac{\gamma_j^2}{(\gamma_j+\epsilon_n)^2}
=
O\left(\epsilon_n^{-1/\alpha}\right)}.
\end{align*}
\end{lemma}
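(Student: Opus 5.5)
Both sums will be bounded by the same device: split the index set at a cutoff $J_n \asymp \epsilon_n^{-1/\alpha}$. For $j \le J_n$, bound each summand crudely using $\gamma_j+\epsilon_n \ge \gamma_j$ or $\gamma_j+\epsilon_n\ge \epsilon_n$. For $j > J_n$, use $\gamma_j+\epsilon_n \ge \epsilon_n$ and Assumption~\ref{ass-alpha}, which gives $\gamma_j \le C j^{-\alpha}$, together with a tail-sum estimate. Only the upper bound $\gamma_j \preceq j^{-\alpha}$ enters, and every step is monotone in $\gamma_j$, so the weaker assumption (rather than $\gamma_j \asymp j^{-\alpha}$) suffices.

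\textbf{First sum.} The function $t\mapsto t/(t+\epsilon_n)^2$ is not monotone in $t$, so I will not substitute $Cj^{-\alpha}$ blindly. Instead I will use two elementary bounds. Since $(t+\epsilon_n)^2 \ge 4t\epsilon_n$, we have $t/(t+\epsilon_n)^2 \le 1/(4\epsilon_n)$. Since $(t+\epsilon_n)^2\ge \epsilon_n^2$, we also have $t/(t+\epsilon_n)^2 \le t/\epsilon_n^2$.
\begin{itemize}
\item For $j\le J_n$, the first bound gives a contribution at most $J_n/(4\epsilon_n) \preceq \epsilon_n^{-1-1/\alpha}$.
\item For $j>J_n$, the second bound and Assumption~\ref{ass-alpha} give
\[
\epsilon_n^{-2}\sum_{j>J_n} C j^{-\alpha} \preceq \epsilon_n^{-2} J_n^{1-\alpha} \asymp \epsilon_n^{-2}\,\epsilon_n^{(\alpha-1)/\alpha} = \epsilon_n^{-(\alpha+1)/\alpha}.
\]
\end{itemize}
Both pieces are of order $\epsilon_n^{-(\alpha+1)/\alpha}$, as claimed.

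\textbf{Second sum.} Here $t\mapsto t^2/(t+\epsilon_n)^2$ is increasing and bounded by $1$.
\begin{itemize}
\item For $j\le J_n$, each term is at most $1$, so the contribution is at most $J_n \asymp \epsilon_n^{-1/\alpha}$.
\item For $j>J_n$, the sum is at most
\[
\epsilon_n^{-2}\sum_{j>J_n} C^2 j^{-2\alpha} \preceq \epsilon_n^{-2}J_n^{1-2\alpha} \asymp \epsilon_n^{-2}\,\epsilon_n^{(2\alpha-1)/\alpha} = \epsilon_n^{-1/\alpha}.
\]
\end{itemize}

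\textbf{Bookkeeping.} The tail estimate $\sum_{j>J} j^{-p} \le J^{1-p}/(p-1) + J^{-p} \preceq J^{1-p}$ for $p>1$ and $J\ge 1$ uses $\alpha>1$. Since $\epsilon_n\to0$, $J_n=\lceil \epsilon_n^{-1/\alpha}\rceil\ge1$, and $J_n \asymp \epsilon_n^{-1/\alpha}$ for small $\epsilon_n$. Finite rank of $\Sigma_{ZZ}$ only removes terms, so it needs no separate treatment.

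The main obstacle is the non-monotonicity in the first sum, and choosing the correct balancing cutoff $J_n$ so that the head and tail bounds match. The AM-GM bound $(t+\epsilon_n)^2 \ge 4t\epsilon_n$ resolves the former.
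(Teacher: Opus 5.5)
Your proof is correct and is essentially the paper's argument. The paper also splits at $J_\epsilon=\lceil (C_\alpha/\epsilon_n)^{1/\alpha}\rceil$, bounds the head terms by $1/\epsilon_n$ (respectively $1$) and the tail terms by $\gamma_j/\epsilon_n^2$ (respectively $\gamma_j^2/\epsilon_n^2$), and uses the integral estimate $\sum_{j>J} j^{-p}\le J^{1-p}/(p-1)$; the only differences are cosmetic (your sharper AM--GM constant $1/4$, and where $C_\alpha$ enters the cutoff).
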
}
{The proof of Lemma \ref{lem-alpha} is presented in Appendix \ref{lem-alpha-proof}. The next Proposition gives a concrete example to illustrate Assumption~\ref{ass-alpha}, and its proof is given in Appendix \ref{prop-brownian-proof}.}
{
\begin{proposition}[Brownian bridge with Gaussian kernel]\label{prop-brownian}
Let $\mathcal H_Z = L^2[0,1]$, $Z$ be a Brownian bridge on $[0,1]$, and $\kappa_Z$ be the Gaussian kernel
\begin{align*}
\kappa_Z(z,z')
=
\exp\{-\eta\|z-z'\|_{L^2}^2\},
\qquad \text{for all}\ z,z' \in \mathcal H_Z.
\end{align*}
If \ $0<\eta\le \pi^2/8$, then Assumption \ref{ass-alpha}  is satisfied for  every $\alpha \in (1,2)$. {Furthermore, if $Z$ is replaced by any Gaussian process whose eigenvalues of covariance operator, $\rho_1\ge\rho_2\ge\dots$ decays at the rate $\rho_m \preceq m^{-r}$ for some $r>1$, then Assumption \ref{ass-alpha} is satisfied for every $\alpha \in (1,r)$.}
\end{proposition}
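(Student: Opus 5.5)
We work with the eigenvalues of the kernel covariance operator $\Sigma_{ZZ}$, viewed as the integral operator with kernel $\kappa_Z$ centered under the Gaussian law $P_Z$ on $L^2[0,1]$. Its nonzero eigenvalues $\gamma_j$ coincide with those of the centered integral operator $f\mapsto \int \tilde\kappa_Z(\cdot,z)f(z)\,dP_Z(z)$ on $L^2(P_Z)$. Centering is a rank-one perturbation, so by Weyl-type interlacing $\gamma_{j+1}\le \lambda_j$, where $\lambda_j$ are the eigenvalues of the uncentered operator $T f=\int \kappa_Z(\cdot,z)f(z)\,dP_Z(z)$. It therefore suffices to show that $\lambda_j\preceq j^{-\alpha}$.

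The plan is to diagonalize $T$ explicitly using the product structure of Gaussian measures. Write $Z=\sum_m \sqrt{\rho_m}\zeta_m e_m$ in its Karhunen–Lo\`eve expansion, with $\zeta_m$ i.i.d.\ $N(0,1)$; for the Brownian bridge, $\rho_m=(m\pi)^{-2}$. Since $\|z-z'\|^2=\sum_m \rho_m(\zeta_m-\zeta_m')^2$, the Gaussian kernel factorizes as
\begin{align*}
\kappa_Z(z,z')=\prod_m \exp\{-\eta\rho_m(\zeta_m-\zeta_m')^2\}.
\end{align*}
Likewise $L^2(P_Z)$ is the tensor product of the spaces $L^2(N(0,1))$. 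Hence $T=\bigotimes_m T_m$, where $T_m$ is the one-dimensional Gaussian-kernel operator with bandwidth parameter $c_m=\eta\rho_m$ acting on $L^2(N(0,1))$. The classical Mehler formula computation gives the spectrum of $T_m$ exactly: it is geometric, $\{a_m b_m^k\}_{k\ge0}$, with
\begin{align*}
a_m=\Big(\tfrac{1}{1+2c_m+\sqrt{1+4c_m}}\Big)^{1/2}\cdot\sqrt{2},\qquad b_m=\frac{2c_m}{1+2c_m+\sqrt{1+4c_m}}\asymp c_m .
\end{align*}
Thus the eigenvalues of $T$ are the products $\prod_m a_m b_m^{k_m}$ over finitely supported multi-indices $(k_m)$. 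Note $\prod_m a_m$ converges because $\sum_m c_m<\infty$. The condition $\eta\le\pi^2/8$ should ensure $b_m$ is small enough, e.g.\ $b_1\le 1/4$ or similar, which is needed in the counting step below.

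The main step, and the main obstacle, is converting this multiplicative spectrum into the polynomial bound $\lambda_j\preceq j^{-\alpha}$ for all $\alpha<r$. Two equivalent routes are available. The first is a counting argument: bound $N(t)=\#\{(k_m):\prod_m b_m^{k_m}\ge t\}$ by $C t^{-1/\alpha}$. The second, which is cleaner, bounds a Schatten-type sum: show that
\begin{align*}
\sum_j\lambda_j^{s}=\Big(\prod_m a_m^s\Big)\prod_m\frac{1}{1-b_m^s}<\infty \quad\text{for } s=1/\alpha .
\end{align*}
This product is finite iff $\sum_m b_m^{s}<\infty$, i.e.\ $\sum_m m^{-rs}<\infty$, i.e.\ $s>1/r$, equivalently $\alpha<r$. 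Since $\lambda_j$ is nonincreasing, $j\lambda_j^{s}\le\sum_{i\le j}\lambda_i^s\le C$, which gives $\lambda_j\preceq j^{-1/s}=j^{-\alpha}$. For the Brownian bridge, $r=2$, which yields every $\alpha\in(1,2)$.

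For a general Gaussian process the same argument applies verbatim, with $b_m\preceq \eta\rho_m\preceq m^{-r}$. The constraint on $\eta$ only enters through $b_m<1$, which holds automatically since $b_m<1$ always; the stated bound on $\eta$ may therefore just be a convenience for uniform constants. I would double-check that the relevant sums are uniform, and that the interlacing step from $T$ to the centered operator costs only an index shift, which does not affect $\preceq$.
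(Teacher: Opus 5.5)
Your proposal is correct and is essentially the paper's proof: factor $\kappa_Z$ along the Karhunen--Lo\`eve coordinates, use the one-dimensional Mehler spectrum $a_m b_m^k$, and bound $\sum_j\lambda_j^s\le(\prod_m a_m)^s\prod_m(1-b_m^s)^{-1}<\infty$ for $s>1/r$. From there you get $\lambda_j\preceq j^{-1/s}$ via $j\lambda_j^s\le\sum_i\lambda_i^s$ and pass to the centered operator; the paper does this last step with $\Sigma_{ZZ}\preceq M_{ZZ}$, which gives $\gamma_j\le\lambda_j$ without your index shift. The paper uses $\eta\le\pi^2/8$ only to get the clean bound $b_m\le 2\eta\rho_m\le(m+1)^{-2}$, so your observation that $b_m<1$ already suffices is right; your explicit constants correspond to $c_m/2$ in place of $c_m$ (the correct values are $a_m=2/(1+\sqrt{1+8c_m})$ and $b_m=(\sqrt{1+8c_m}-1)/(\sqrt{1+8c_m}+1)$), which is harmless because only $a_m\le 1$, $b_m<1$ and $b_m\preceq c_m$ are used.
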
}

{
We introduce the following operator, which will contribute to the bias term in $\hat{B}_{\Ddot{X}|Z} - B_{\Ddot{X}|Z}$:
\begin{align}\label{eq-r-def}
R 
= \mu_X\hat{\Sigma}_{ZZ}(\hat{\Sigma}_{ZZ}+\epsilon_nI)^{-1} - \mu_X.
\end{align}
Similar to the notation used in  (\ref{eq-sigma-lambda}), here,   $\mu_X\hat{\Sigma}_{ZZ}(\hat{\Sigma}_{ZZ}+\epsilon_nI)^{-1} $ stands for the operator from $\ca G \lo Z$ to $\ca G \lo {\ddot X}$ defined by 
\begin{align*}
 [ \mu_X\hat{\Sigma}_{ZZ}(\hat{\Sigma}_{ZZ}+\epsilon_nI)^{-1}  ] f =  \mu_X [\hat{\Sigma}_{ZZ}(\hat{\Sigma}_{ZZ}+\epsilon_nI)^{-1}  f ]. 
\end{align*}
Henceforth, we use $\|\cdot\|_{\mathrm{OP}}$ to denote the operator norm.
The next lemma develops some properties of   $R$ which will be used later, and its proof is given in Appendix \ref{lem-R-rate-proof}. 
}

{
\begin{lemma}\label{lem-R-rate}
Under Assumption \ref{ass-moment2}, we have
\begin{enumerate}
\item $\|R\|_{\mathrm{OP}} \le \|\mu_X\|_{\mathcal G_X} < \infty$, i.e., $R$ is bounded;
\item for any $\theta>0$, we have
\begin{align}\label{eq-r-sigma-theta}
\|R \Sigma_{ZZ}^\theta\|_{\mathrm{OP}}
= O_P(n^{-1/2}\epsilon_n^{\theta\wedge 1 -1} + \epsilon_n^{\theta\wedge 1});
\end{align}
{\item in particular,
\begin{align}\label{eq-r-sigma-hs}
\|R \Sigma_{ZZ}^\theta\|_{\mathrm{HS}}= O_P(n^{-1/2} + \epsilon_n^{1-1/(2\alpha)}).
\end{align}}
\end{enumerate}
\end{lemma}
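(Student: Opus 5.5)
The plan is to reduce everything to a scalar spectral calculation in $\mathcal G_Z$. First I rewrite $R$ algebraically. Since $\hat\Sigma_{ZZ}(\hat\Sigma_{ZZ}+\epsilon_nI)^{-1}-I=-\epsilon_n(\hat\Sigma_{ZZ}+\epsilon_nI)^{-1}$, we have
\begin{align*}
R f=-\mu_X\,[\epsilon_n(\hat\Sigma_{ZZ}+\epsilon_nI)^{-1}f],\qquad f\in\mathcal G_Z .
\end{align*}
The map $g\mapsto \mu_X g$ from $\mathcal G_Z$ into $\mathcal G_{\Ddot X}$ is the embedding $g\mapsto \mu_X\otimes g$ into the tensor-product RKHS generated by $\kappa_X\kappa_Z$. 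Hence $\|\mu_X g\|_{\mathcal G_{\Ddot X}}=\|\mu_X\|_{\mathcal G_X}\|g\|_{\mathcal G_Z}$, and for any operator $A$ on $\mathcal G_Z$,
\begin{align*}
\|\mu_X A\|_{\mathrm{OP}}=\|\mu_X\|_{\mathcal G_X}\|A\|_{\mathrm{OP}},\qquad \|\mu_X A\|_{\mathrm{HS}}=\|\mu_X\|_{\mathcal G_X}\|A\|_{\mathrm{HS}}.
\end{align*}
Part 1 then follows at once. The operator $\epsilon_n(\hat\Sigma_{ZZ}+\epsilon_nI)^{-1}$ is self-adjoint with spectrum in $(0,1]$, so its norm is at most $1$. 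Moreover, $\|\mu_X\|^2_{\mathcal G_X}=E\kappa_X(X,X')\le\sup\kappa_X<\infty$ by Assumption \ref{ass-moment2}.

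For part 2, I would use the resolvent identity
\begin{align*}
\epsilon_n(\hat\Sigma_{ZZ}+\epsilon_nI)^{-1}\Sigma_{ZZ}^\theta
=\epsilon_n(\Sigma_{ZZ}+\epsilon_nI)^{-1}\Sigma_{ZZ}^\theta
+\epsilon_n(\hat\Sigma_{ZZ}+\epsilon_nI)^{-1}(\Sigma_{ZZ}-\hat\Sigma_{ZZ})(\Sigma_{ZZ}+\epsilon_nI)^{-1}\Sigma_{ZZ}^\theta .
\end{align*}
The first term is a function of $\Sigma_{ZZ}$, so its norm is $\sup_k \epsilon_n\gamma_k^\theta/(\gamma_k+\epsilon_n)$. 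For $\theta\le1$ this is at most $\epsilon_n^\theta$, by the elementary bound $t^\theta\epsilon^{1-\theta}\le t+\epsilon$. For $\theta>1$ it is at most $\epsilon_n\|\Sigma_{ZZ}\|_{\mathrm{OP}}^{\theta-1}$. Either way it is $O(\epsilon_n^{\theta\wedge1})$.

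For the second term I bound the three factors separately:
\begin{itemize}
\item $\|\epsilon_n(\hat\Sigma_{ZZ}+\epsilon_nI)^{-1}\|_{\mathrm{OP}}\le1$.
\item $\|\hat\Sigma_{ZZ}-\Sigma_{ZZ}\|_{\mathrm{HS}}=O_P(n^{-1/2})$. This is a Hilbert-space CLT/Chebyshev bound, valid because $E\kappa_Z^2(Z,Z)<\infty$; the centering by $E_n$ contributes only an extra $O_P(n^{-1})$ term.
\item $\|(\Sigma_{ZZ}+\epsilon_nI)^{-1}\Sigma_{ZZ}^\theta\|_{\mathrm{OP}}=\sup_k\gamma_k^\theta/(\gamma_k+\epsilon_n)=O(\epsilon_n^{\theta\wedge1-1})$, by the same scalar argument as above.
\end{itemize}
Multiplying by $\|\mu_X\|_{\mathcal G_X}$ gives \eqref{eq-r-sigma-theta}.

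For part 3, I would apply the same decomposition but take the Hilbert--Schmidt norm. The second term is bounded by $\|\epsilon_n(\hat\Sigma_{ZZ}+\epsilon_nI)^{-1}\|_{\mathrm{OP}}\,\|\hat\Sigma_{ZZ}-\Sigma_{ZZ}\|_{\mathrm{HS}}\,\|(\Sigma_{ZZ}+\epsilon_nI)^{-1}\Sigma_{ZZ}^\theta\|_{\mathrm{OP}}$. The first term has squared HS norm $\epsilon_n^2\sum_k\gamma_k^{2\theta}/(\gamma_k+\epsilon_n)^2$.

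The main obstacle is that the stated rate $n^{-1/2}+\epsilon_n^{1-1/(2\alpha)}$ is only attainable in the regime $\theta\ge1$. I would therefore prove part 3 for $\theta\ge1$, which I take to be the case used later, for instance $\theta=1+\beta$ as in Assumption \ref{ass-beta}. In that regime the two terms behave as follows:
\begin{itemize}
\item The operator factor $\|(\Sigma_{ZZ}+\epsilon_nI)^{-1}\Sigma_{ZZ}^\theta\|_{\mathrm{OP}}$ is bounded, so the second term is $O_P(n^{-1/2})$.
\item Since $\gamma_k^{2\theta}\le\|\Sigma_{ZZ}\|^{2\theta-2}_{\mathrm{OP}}\gamma_k^2$, the second sum in Lemma \ref{lem-alpha} gives a squared norm of $\epsilon_n^2\,O(\epsilon_n^{-1/\alpha})$. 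The first term is therefore $O(\epsilon_n^{1-1/(2\alpha)})$.
\end{itemize}
Combining the two terms and multiplying by $\|\mu_X\|_{\mathcal G_X}$ yields \eqref{eq-r-sigma-hs}. If the statement is meant for $\theta\in(0,1)$ as well, I would check whether Lemma \ref{lem-alpha} can be generalized to $\sum_k\gamma_k^{2\theta}/(\gamma_k+\epsilon_n)^2$. I expect it cannot give the same rate there, so I would state the restriction $\theta\ge1$ explicitly.
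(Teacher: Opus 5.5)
Your proposal is correct and follows the paper's proof essentially step for step: the same resolvent split $R=R_1+R_2$, the bound $\epsilon_n\|(\hat\Sigma_{ZZ}+\epsilon_nI)^{-1}\|_{\mathrm{OP}}\le1$, the $O_P(n^{-1/2})$ rate for $\hat\Sigma_{ZZ}-\Sigma_{ZZ}$, and the second sum in Lemma \ref{lem-alpha} to control the Hilbert--Schmidt norm of the $R_2$ part. Restricting part 3 to $\theta\ge1$ is consistent with the paper, whose argument for part 3 is written only for $\theta=1$; that is the case actually used in the proof of Theorem \ref{thm-clt-sigmahat}, not $\theta=1+\beta$ as you guessed, but your $\theta\ge1$ argument covers it.
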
}

We {next} state a lemma regarding the convergence {rates  of} $\hat{B}_{\Ddot{X}|Z}-B_{\Ddot{X}|Z}{-R}$ and $\hat{B}_{Y|Z}-B_{Y|Z}$, which {were} established in a recent work by \cite{choi2026sharpened}.

\begin{lemma}\label{lemma-regul-conv}
{Suppose} Assumptions \ref{ass-cond-exp-ddot}--\ref{ass-alpha} {are satisfied} for some $\beta>0$ and $\alpha>1$. If $\epsilon_n\to 0$ as $n \to \infty$, then
\begin{align*}
    \|\hat{B}_{\Ddot{X}|Z}-B_{\Ddot{X}|Z}{-R}\|_{\mathrm{OP}}=  O_P(\epsilon_n^{\beta\wedge 1} + n^{-1/2} \epsilon_n^{\beta\wedge 1 -1} + n^{-1} \epsilon_n^{-(3\alpha+1)/(2\alpha)} + n^{-1/2} \epsilon_n^{-(\alpha+1)/(2\alpha)})
\end{align*}
and
\begin{align*}
    \|\hat{B}_{Y|Z}-B_{Y|Z}\|_{\mathrm{OP}}=O_P(\epsilon_n^{\beta\wedge 1} + n^{-1/2} \epsilon_n^{\beta\wedge 1 -1} + n^{-1} \epsilon_n^{-(3\alpha+1)/(2\alpha)} + n^{-1/2} \epsilon_n^{-(\alpha+1)/(2\alpha)}).
\end{align*}
\end{lemma}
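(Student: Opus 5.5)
We plan to decompose $\hat{B}_{Y|Z}-B_{Y|Z}$ (and, in parallel, $\hat{B}_{\Ddot{X}|Z}-B_{\Ddot{X}|Z}-R$) in the style of \cite{choi2026sharpened}. The starting point is to write $B_{Y|Z}=\Sigma_{YZ}\Sigma_{ZZ}^{\dagger}$. We introduce the population regularized version $B_{Y|Z}^{\epsilon}=\Sigma_{YZ}(\Sigma_{ZZ}+\epsilon_nI)^{-1}$ and split
\begin{align*}
\hat{B}_{Y|Z}-B_{Y|Z}=(B_{Y|Z}^{\epsilon}-B_{Y|Z})+(\hat{B}_{Y|Z}-B_{Y|Z}^{\epsilon}).
\end{align*}
For the bias term we use Assumption~\ref{ass-beta}, which gives
\begin{align*}
B_{Y|Z}^{\epsilon}-B_{Y|Z}=-\epsilon_nS_{YZ}\Sigma_{ZZ}^{\beta}(\Sigma_{ZZ}+\epsilon_nI)^{-1}.
\end{align*}
A spectral calculus bound, $\sup_\gamma \epsilon_n\gamma^{\beta}/(\gamma+\epsilon_n)\le \epsilon_n^{\beta\wedge1}$ up to constants (with $\gamma\le\|\Sigma_{ZZ}\|$ when $\beta>1$), yields the $\epsilon_n^{\beta\wedge1}$ term.

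For the variance term we use the identity
\begin{align*}
\hat{B}_{Y|Z}-B^{\epsilon}_{Y|Z}=[(\hat\Sigma_{YZ}-\Sigma_{YZ})-B^{\epsilon}_{Y|Z}(\hat\Sigma_{ZZ}-\Sigma_{ZZ})](\hat\Sigma_{ZZ}+\epsilon_nI)^{-1}.
\end{align*}
Writing $B^{\epsilon}_{Y|Z}=B_{Y|Z}+(B^\epsilon_{Y|Z}-B_{Y|Z})$, the leading piece is $D_n(\hat\Sigma_{ZZ}+\epsilon_nI)^{-1}$, where $D_n=E_n[\tilde\kappa_{Y|Z}(\cdot,Y|Z)\otimes\tilde\kappa_Z(\cdot,Z)]$ plus centering corrections of order $n^{-1}$. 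By Corollary~\ref{cor-reg-form}, $D_n$ is a sample mean of mean-zero terms.

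Next we replace $(\hat\Sigma_{ZZ}+\epsilon_nI)^{-1}$ by $(\Sigma_{ZZ}+\epsilon_nI)^{-1}$ via the second resolvent identity, which gives
\begin{align*}
D_n(\Sigma_{ZZ}+\epsilon_nI)^{-1}-D_n(\Sigma_{ZZ}+\epsilon_nI)^{-1}(\hat\Sigma_{ZZ}-\Sigma_{ZZ})(\hat\Sigma_{ZZ}+\epsilon_nI)^{-1}.
\end{align*}
\begin{itemize}
\item For the first term we bound the operator norm by the Hilbert--Schmidt norm and compute second moments in the eigenbasis of \eqref{eq-eigen}. Its squared norm has expectation $n^{-1}E\|\tilde\kappa_{Y|Z}\|^2\sum_j\gamma_j\xi_j^2/(\gamma_j+\epsilon_n)^2$. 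Boundedness of $\kappa_Y$ (Assumption~\ref{ass-moment2}) lets us pull $\|\tilde\kappa_{Y|Z}\|$ out, and Lemma~\ref{lem-alpha} then gives $O(n^{-1}\epsilon_n^{-(\alpha+1)/\alpha})$, i.e.\ the term $n^{-1/2}\epsilon_n^{-(\alpha+1)/(2\alpha)}$.
\item For the second term we factor out the same quantity $\|D_n(\Sigma_{ZZ}+\epsilon_nI)^{-1/2}\|$-type pieces and bound $\|(\Sigma_{ZZ}+\epsilon_nI)^{-1/2}(\hat\Sigma_{ZZ}-\Sigma_{ZZ})\|_{\mathrm{HS}}=O_P(n^{-1/2}\epsilon_n^{-1/(2\alpha)})$, again via Lemma~\ref{lem-alpha} using $E\kappa_Z^2<\infty$. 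We also use $\|(\hat\Sigma_{ZZ}+\epsilon_nI)^{-1}\|\le\epsilon_n^{-1}$.
\item Balancing the powers of $\epsilon_n$ produces the $n^{-1}\epsilon_n^{-(3\alpha+1)/(2\alpha)}$ term.
\end{itemize}
The remaining cross piece, $(B^\epsilon-B)(\hat\Sigma_{ZZ}-\Sigma_{ZZ})(\hat\Sigma_{ZZ}+\epsilon_nI)^{-1}$, uses the smoothness factor $\Sigma_{ZZ}^{\beta}$ to obtain the $n^{-1/2}\epsilon_n^{\beta\wedge1-1}$ term.

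For $\ddot X$ the same scheme applies, but Assumption~\ref{ass-beta} is stated only for $\Lambda_{\Ddot{X}Z}$. We therefore use \eqref{eq-sigma-lambda} to split $\Sigma_{\Ddot{X}Z}=\mu_X\Sigma_{ZZ}+\Lambda_{\Ddot{X}Z}$ and, at the sample level, $\hat\Sigma_{\Ddot{X}Z}=\mu_X\hat\Sigma_{ZZ}+\hat\Lambda_{\Ddot{X}Z}$ plus errors that we must check are of the same order.

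The $\mu_X$ part contributes the following to $\hat{B}_{\Ddot{X}|Z}-B_{\Ddot{X}|Z}$:
\begin{align*}
\mu_X\hat\Sigma_{ZZ}(\hat\Sigma_{ZZ}+\epsilon_nI)^{-1}-\mu_X\Sigma_{ZZ}\Sigma_{ZZ}^\dagger,
\end{align*}
which on $\ker(\Sigma_{ZZ})^\perp$ is exactly $R$. This is precisely why $R$ is subtracted in the statement. The $\Lambda$ part is then handled verbatim as for $Y$. Boundedness of $\kappa_X$ ensures the $\ddot X$ residual $\tilde\kappa_X\kappa_Z$ has enough moments given $E\kappa_Z^2<\infty$.

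The main obstacle is the variance step. We must avoid the crude bound $\|\hat\Sigma_{ZZ}-\Sigma_{ZZ}\|\,\epsilon_n^{-1}$, which only gives the old rate \eqref{eq:epsilon_n_beta_wedge}. The intended route instead exploits the Hilbert--Schmidt sum with Lemma~\ref{lem-alpha} and the conditional mean-zero structure of $\tilde\kappa_{Y|Z}$. Making the moment computations rigorous with only $E\kappa_Z^2<\infty$ (not boundedness of $\kappa_Z$) is where we will need the most care; for the $\ddot X$ case this means checking that the centering by $\hat\mu_X$ versus $\mu_X$ does not spoil the decomposition.
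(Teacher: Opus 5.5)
Your plan is sound and is essentially the paper's proof. Your bias term, your cross term $-(B^{\epsilon}_{Y|Z}-B_{Y|Z})(\hat\Sigma_{ZZ}-\Sigma_{ZZ})(\hat\Sigma_{ZZ}+\epsilon_nI)^{-1}$, the $O_P(n^{-1}\epsilon_n^{-1})$ centering correction, and the two resolvent pieces of $E_n[\tilde\kappa_{Y|Z}(\cdot,Y|Z)\otimes\tilde\kappa_Z(\cdot,Z)](\hat\Sigma_{ZZ}+\epsilon_nI)^{-1}$ are exactly the five terms of \eqref{eq-decomp-bhat-b}; there they are written with $\Lambda_{\Ddot{X}Z}$ in place of $\Sigma_{YZ}$, after the $\mu_X\Sigma_{ZZ}$ part has been absorbed into $R$ as you describe.

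Two refinements: the fourth term does not need your half-power Hilbert--Schmidt bound on $\hat\Sigma_{ZZ}-\Sigma_{ZZ}$, whose second-moment proof would require uniform control of $E[\xi_j^2\|\tilde\kappa_Z(\cdot,Z)\|_{\mathcal{G}_Z}^2]$ that $E\kappa_Z^2(Z,Z)<\infty$ does not provide, because the paper simply multiplies the fifth-term rate $O_P(n^{-1/2}\epsilon_n^{-(\alpha+1)/(2\alpha)})$ by $\|\hat\Sigma_{ZZ}-\Sigma_{ZZ}\|_{\mathrm{OP}}=O_P(n^{-1/2})$ and $\epsilon_n^{-1}$, which is already $n^{-1}\epsilon_n^{-(3\alpha+1)/(2\alpha)}$; and your worry about the $\Ddot{X}$ moments is well founded, since the paper's fifth-term computation rests on a uniform bound for $\|\tilde\kappa_{\Ddot{X}|Z}(\cdot,\Ddot{X}|Z)\|_{\mathcal{G}_{\Ddot{X}}}=\|\kappa_X(\cdot,X)-E[\kappa_X(\cdot,X)|Z]\|_{\mathcal{G}_X}\,\kappa_Z(Z,Z)^{1/2}$, which in general needs $\kappa_Z$ bounded (as the Gaussian kernel is), not merely bounded $\kappa_X$ together with $E\kappa_Z^2(Z,Z)<\infty$.
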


The proof of Lemma \ref{lemma-regul-conv} mimics that of Theorem 3.4 of \cite{choi2026sharpened} and Theorem 9 of \cite{sang2026nonlinear}, and we place the full proof in our setting in Appendix~\ref{sec-proof} for completeness. 

Furthermore, Theorem 3.7 of \cite{choi2026sharpened} provides the optimal choice of $\epsilon_n$, which yields the convergence rates for
$\|\hat{B} \lo {\Ddot{X}| Z} - B \lo {\Ddot{X}| Z}{-R}\|\lo {\mathrm{OP}}$ and
$\|\hat{B} \lo {Y| Z} - B_{Y| Z}\|_{\mathrm{OP}}$ {under the optimal choice of the tuning parameter}.
We restate that result below in the form of a corollary tailored to our setting.

\begin{corollary}\label{cor-opt-eps}
Suppose the conditions in Lemma \ref{lemma-regul-conv} hold and $\epsilon_n\asymp n^{-\delta}$ for some $\delta>0$. 
\begin{enumerate}
    \item If $\beta > \frac{\alpha-1}{2\alpha}$, then the optimal choice of $\delta$ is $\frac{\alpha}{2\alpha(\beta\wedge 1)+\alpha+1}$, and the convergence rates of $\|\hat{B}_{\Ddot{X}|Z}-B_{\Ddot{X}|Z}{-R}\|_{\mathrm{OP}}$ and $\|\hat{B}_{Y|Z}-B_{Y|Z}\|_{\mathrm{OP}}$ {under the optimal choice of the tuning parameter} are $O_P(n^{-\frac{\alpha(\beta\wedge 1)}{2\alpha(\beta\wedge 1)+\alpha+1}})$;
    \item If $\beta \le \frac{\alpha-1}{2\alpha}$, then the optimal choice of $\delta$ is $\frac{1}{2}$, and the convergence rates of $\|\hat{B}_{\Ddot{X}|Z}-B_{\Ddot{X}|Z}{-R}\|_{\mathrm{OP}}$ and $\|\hat{B}_{Y|Z}-B_{Y|Z}\|_{\mathrm{OP}}$ {under the optimal choice of the tuning parameter} are $O_P(n^{-\frac{\beta}{2}})$.
\end{enumerate}    
\end{corollary}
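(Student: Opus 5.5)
The plan is to reduce Corollary~\ref{cor-opt-eps} to a one-dimensional optimization over $\delta$. Write $b=\beta\wedge 1$. Substituting $\epsilon_n\asymp n^{-\delta}$ into the bound of Lemma~\ref{lemma-regul-conv}, the four terms become $n^{-e_i(\delta)}$ with exponents
\begin{align*}
e_1(\delta)=\delta b,\quad e_2(\delta)=\tfrac12-\delta(1-b),\quad e_3(\delta)=1-\delta\tfrac{3\alpha+1}{2\alpha},\quad e_4(\delta)=\tfrac12-\delta\tfrac{\alpha+1}{2\alpha}.
\end{align*}
The rate is $n^{-f(\delta)}$ with $f(\delta)=\min_i e_i(\delta)$. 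The optimal $\delta$ maximizes this concave, piecewise linear function. Since $b\le 1$, every exponent except $e_1$ is nonincreasing in $\delta$, and $e_1$ is increasing. So the maximizer is where $e_1$ meets the smallest of $e_2,e_3,e_4$.

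The first step is to order the three variance exponents. We have $e_3(\delta)-e_4(\delta)=\tfrac12-\delta$, so on $\delta\le 1/2$ the term $e_3$ is dominated by $e_4$. Next, $e_2(\delta)-e_4(\delta)=\delta\bigl[\tfrac{\alpha+1}{2\alpha}-(1-b)\bigr]$, which is nonnegative exactly when $b\ge\tfrac{\alpha-1}{2\alpha}$. I also note that $\beta\le\tfrac{\alpha-1}{2\alpha}<\tfrac12$ forces $b=\beta$, so the case split on $b$ agrees with the statement's case split on $\beta$.

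\emph{Case 1: $b>\tfrac{\alpha-1}{2\alpha}$.} Solving $e_1=e_4$ gives $\delta^*=\alpha/(2\alpha b+\alpha+1)$. A direct check shows $\delta^*<1/2$ if and only if $b>\tfrac{\alpha-1}{2\alpha}$. Hence at $\delta^*$ we have $e_3\ge e_4$ and $e_2\ge e_4$, so $f(\delta^*)=\delta^* b=\alpha b/(2\alpha b+\alpha+1)$. For optimality I use the bound $f\le\min(e_1,e_4)$. The function $\min(e_1,e_4)$ is maximized exactly at the crossing $\delta^*$, so no other $\delta$ does better.

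\emph{Case 2: $b\le\tfrac{\alpha-1}{2\alpha}$.} Now I use $f\le\min(e_1,e_2)$. The crossing $\delta b=\tfrac12-\delta(1-b)$ gives $\delta=1/2$ and value $b/2=\beta/2$. It remains to check that at $\delta=1/2$ the other two terms do not bind. We get $e_3(1/2)=e_4(1/2)=\tfrac{\alpha-1}{4\alpha}\ge b/2$, which is exactly the case assumption. Therefore $f(1/2)=\beta/2$, and this value is an upper bound for $f$ over all $\delta$, which also covers $\delta>1/2$.

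The main point needing care is this optimality claim: the chosen $\delta$ must beat every $\delta>0$, including $\delta>1/2$, where the ordering of $e_3$ and $e_4$ flips. The plan handles this by bounding $f$ by the minimum of one increasing and one decreasing exponent. That pair is $(e_1,e_4)$ in Case 1 and $(e_1,e_2)$ in Case 2. This gives a global upper bound that is attained at the crossing point, with no separate analysis of the region $\delta>1/2$. Everything else is elementary algebra on linear functions of $\delta$.
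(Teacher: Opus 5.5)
Your argument is correct and complete. The four exponents, the identities $e_3-e_4=\tfrac12-\delta$ and $e_2-e_4=\delta\bigl[\tfrac{\alpha+1}{2\alpha}-(1-b)\bigr]$, the crossings $\delta^*=\alpha/(2\alpha b+\alpha+1)$ and $\delta=\tfrac12$, and the check $e_3(\tfrac12)=e_4(\tfrac12)=\tfrac{\alpha-1}{4\alpha}\ge b/2$ all hold. Bounding $f$ by the minimum of one strictly increasing and one strictly decreasing exponent then gives optimality over all $\delta>0$, including $\delta>\tfrac12$.

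The paper, by contrast, gives no derivation of its own. It simply restates Theorem~3.7 of \cite{choi2026sharpened} ``tailored to our setting,'' relying on the fact that Lemma~\ref{lemma-regul-conv} reproduces the same four-term bound. The citation is shorter, but it leaves the reader to match the external theorem's hypotheses and rate expression to the present setting.

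Your route buys self-containment, and it makes several points explicit:
\begin{enumerate}
\item The corollary is pure exponent bookkeeping on that bound. The bound is identical for $\hat B_{\ddot X|Z}-B_{\ddot X|Z}-R$ and $\hat B_{Y|Z}-B_{Y|Z}$, so one optimization covers both.
\item ``Optimal'' means maximizing the exponent of this upper bound over polynomial tunings $\epsilon_n\asymp n^{-\delta}$, not any minimax claim.
\item Splitting on $\beta$ versus on $\beta\wedge 1$ is the same, because $\tfrac{\alpha-1}{2\alpha}<\tfrac12$.
\item The tuning exponent $\delta^*$ stays visibly distinct from the rate exponent $b\delta^*$. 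The statement of Theorem~\ref{thm-clt-sigmahat} blurs this distinction, although its proof uses $\delta^*$.
\end{enumerate}
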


We now state the asymptotic normality of $\hat{\Sigma}_{\Ddot{X}Y|Z}$ based on the {convergence rate} of $\hat{B}_{\Ddot{X}|Z}$ and $\hat{B}_{Y|Z}$ {under the optimal choice of the tuning parameter} in Corollary \ref{cor-opt-eps}.

\begin{theorem}\label{thm-clt-sigmahat}
Suppose that Assumptions \ref{ass-cond-exp}--\ref{ass-alpha} hold for some $\beta>0$ and {$\alpha>2$}. Further suppose that $\beta > \frac{\alpha-1}{2\alpha}$ and $\frac{\alpha(\beta\wedge 1)}{2\alpha(\beta\wedge 1)+\alpha+1} > \frac{1}{4}$. Let $\epsilon_n \asymp n^{-\frac{\alpha(\beta\wedge 1)}{2\alpha(\beta\wedge 1)+\alpha+1}}$. Then, we have
\begin{align}
    \sqrt{n}(\hat{\Sigma}_{\Ddot{X}Y|Z}-\Sigma_{\Ddot{X}Y|Z})\xrightarrow{\mathcal{D}}N(0,\Gamma_{\Ddot{X}Y|Z}),\label{eq-asymp-dist-ccco}
\end{align}
where $\Gamma_{\Ddot{X}Y|Z} : \mathcal{G}_{\Ddot{X}}\otimes\mathcal{G}_{Y} \to \mathcal{G}_{\Ddot{X}}\otimes\mathcal{G}_{Y}$ is the linear operator defined by
\begin{align}
\begin{split}
    \Gamma_{\Ddot{X}Y|Z}=&E\left\{[\tilde{\kappa}_{\Ddot{X}|Z}(\cdot,\Ddot{X}|Z)\otimes\tilde{\kappa}_{Y|Z}(\cdot,Y|Z)-\Sigma_{\Ddot{X}Y|Z}]\otimes
    [\tilde{\kappa}_{\Ddot{X}|Z}(\cdot,\Ddot{X}|Z)\otimes\tilde{\kappa}_{Y|Z}(\cdot,Y|Z)-\Sigma_{\Ddot{X}Y|Z}]\right\}.\label{eq-asymp-var-ccco}
\end{split}
\end{align}
\end{theorem}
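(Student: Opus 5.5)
We will prove \eqref{eq-asymp-dist-ccco} by writing $\hat{\Sigma}_{\Ddot{X}Y|Z}$ as an oracle i.i.d.\ average plus remainders, and showing each remainder is $o_P(n^{-1/2})$ in Hilbert--Schmidt norm. Put $U_i=\tilde{\kappa}_{\Ddot{X}|Z}(\cdot,\Ddot{X}_i|Z_i)$ and $V_i=\tilde{\kappa}_{Y|Z}(\cdot,Y_i|Z_i)$. These are i.i.d.\ elements of $\mathcal{G}_{\Ddot{X}}$ and $\mathcal{G}_Y$. Define the oracle estimator $\Sigma^{o}_n=E_n[U\otimes V]$. By \eqref{eq-rep-ccco-cond}, $E[U\otimes V]=\Sigma_{\Ddot{X}Y|Z}$. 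Assumption~\ref{ass-moment2} (bounded $\kappa_X,\kappa_Y$ and $E\kappa_Z^2(Z,Z)<\infty$) gives $E\|U\otimes V\|_{\mathrm{HS}}^2<\infty$. The Hilbert-space CLT in the separable space $\mathcal{G}_{\Ddot{X}}\otimes\mathcal{G}_Y$ then gives $\sqrt n(\Sigma^o_n-\Sigma_{\Ddot{X}Y|Z})\to N(0,\Gamma_{\Ddot{X}Y|Z})$, with $\Gamma_{\Ddot{X}Y|Z}$ as in \eqref{eq-asymp-var-ccco}. What remains is $\sqrt n\|\hat{\Sigma}_{\Ddot{X}Y|Z}-\Sigma^o_n\|_{\mathrm{HS}}=o_P(1)$; Slutsky then finishes.

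For the decomposition, use $\hat\kappa=\tilde\kappa-(E_n-E)\kappa$ to write
\begin{align*}
\hat{\kappa}_{\Ddot{X}|Z}(\cdot,\Ddot{X}_i|Z_i)=U_i-\Delta_X\tilde{\kappa}_Z(\cdot,Z_i)+a_n,
\end{align*}
where $\Delta_X=\hat{B}_{\Ddot{X}|Z}-B_{\Ddot{X}|Z}$ and $a_n$ is a constant element of size $O_P(n^{-1/2})$ coming from the empirical centering. The same holds for $Y$, with $\Delta_Y$ and $b_n$. Split $\Delta_X=(\Delta_X-R)+R$. Expanding the tensor product $E_n[\hat\kappa_{\Ddot X|Z}\otimes\hat\kappa_{Y|Z}]$ produces $\Sigma^o_n$ plus three kinds of remainder terms.
\begin{enumerate}
\item \emph{Cross terms linear in one error.} A typical one is $\Delta_X E_n[\tilde\kappa_Z(\cdot,Z)\otimes V]$. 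Since $E[V\mid Z]=0$, we have $E[\tilde\kappa_Z(\cdot,Z)\otimes V]=0$, so $E_n[\tilde\kappa_Z\otimes V]=O_P(n^{-1/2})$ in HS norm. The term is therefore bounded by $\|\Delta_X\|_{\mathrm{OP}}\cdot O_P(n^{-1/2})=o_P(n^{-1/2})$, using Lemma~\ref{lemma-regul-conv} together with part 1 of Lemma~\ref{lem-R-rate}. The centering terms $a_n\otimes E_nV$ are $O_P(n^{-1})$.
\item \emph{The quadratic term.} This is $\Delta_X\hat{\Sigma}_{ZZ}'\Delta_Y^*$, where $\hat{\Sigma}_{ZZ}'=E_n[\tilde\kappa_Z\otimes\tilde\kappa_Z]$. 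I will bound it by splitting $\hat{\Sigma}_{ZZ}'=\Sigma_{ZZ}+O_P(n^{-1/2})$. The fluctuation part is bounded by $\|\Delta_X\|_{\mathrm{OP}}\|\Delta_Y\|_{\mathrm{OP}}\,O_P(n^{-1/2})$. For the main part, bound $\|\Delta_X\Sigma_{ZZ}^{1/2}\|_{\mathrm{HS}}\,\|\Sigma_{ZZ}^{1/2}\Delta_Y^*\|_{\mathrm{OP}}$, or more crudely $\|\Delta_X\|_{\mathrm{OP}}\|\Delta_Y\|_{\mathrm{OP}}\|\Sigma_{ZZ}\|_{\mathrm{HS}}$. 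Under the stated tuning, Corollary~\ref{cor-opt-eps} gives both operator-norm rates as $n^{-\rho}$ with $\rho=\frac{\alpha(\beta\wedge1)}{2\alpha(\beta\wedge1)+\alpha+1}>1/4$. Hence the product is $O_P(n^{-2\rho})=o_P(n^{-1/2})$; this is exactly where the hypothesis $\rho>1/4$ enters.
\item \emph{Terms involving $R$.} Here $R$ is not small in operator norm, so I use that it hits $\Sigma_{ZZ}$. For example, $R\,\Sigma_{ZZ}\Delta_Y^*$ is bounded via part 3 of Lemma~\ref{lem-R-rate}:
\begin{align*}
\|R\Sigma_{ZZ}^{\theta}\|_{\mathrm{HS}}=O_P\bigl(n^{-1/2}+\epsilon_n^{1-1/(2\alpha)}\bigr).
\end{align*}
Multiplying by $\|\Delta_Y\|_{\mathrm{OP}}=O_P(n^{-\rho})$ gives $o_P(n^{-1/2})$ provided $\delta(1-1/(2\alpha))+\rho>1/2$. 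Since $\delta=\rho>1/4$ and $\alpha>2$, we get $1-1/(2\alpha)>3/4$, so $\delta(1-1/(2\alpha))+\rho>7\rho/4>7/16$. This is not quite enough by itself, so I would also use part 2 of Lemma~\ref{lem-R-rate} with a suitable $\theta$, or sharpen the argument via item 4 below.
\end{enumerate}

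The main obstacle is the $R$ term $R\,E_n[\tilde\kappa_Z(\cdot,Z)\otimes(\cdot)]$ and the interaction $R\hat\Sigma_{ZZ}R^*$. These cannot be handled with operator norms alone, because $\mu_X\Sigma_{ZZ}$ is not smooth; that is precisely why $R$ was isolated in Lemma~\ref{lemma-regul-conv}. I would proceed in three steps.
\begin{enumerate}
\item Observe that $R=-\epsilon_n\mu_X(\hat\Sigma_{ZZ}+\epsilon_nI)^{-1}$ has rank-one structure in its range: it maps into $\mathrm{span}$ of $\mu_X$ times functions of $\mathcal{G}_Z$.
\item Deduce that $R\tilde\kappa_Z(\cdot,z)=-\mu_X\cdot\epsilon_n[(\hat\Sigma_{ZZ}+\epsilon_nI)^{-1}\tilde\kappa_Z(\cdot,z)]$. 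Its contribution to the sum is then the function $\epsilon_n(\hat\Sigma_{ZZ}+\epsilon_nI)^{-1}$ applied to empirical averages of $\tilde\kappa_Z\otimes\cdots$.
\item Bound the squared HS norms of these contributions through traces such as $\epsilon_n^2\sum_j\gamma_j/(\gamma_j+\epsilon_n)^2$, which Lemma~\ref{lem-alpha} controls as $O(\epsilon_n^{1-1/\alpha})\to0$.
\end{enumerate}
Here $\alpha>2$ is what should make the exponents close after replacing $\hat\Sigma_{ZZ}$ by $\Sigma_{ZZ}$, with the replacement error of order $n^{-1/2}\epsilon_n^{-1}$ in relative terms. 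I expect this careful, term-by-term bookkeeping of the $R$-contributions, checking that every exponent exceeds $1/2$ under $\epsilon_n\asymp n^{-\rho}$, to be the bulk of the work.
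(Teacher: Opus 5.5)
There is a genuine gap, and it is exactly where the hypothesis $\alpha>2$ has to be used. The term $\sqrt n\,R\,\Sigma_{ZZ}(\hat B_{Y|Z}-B_{Y|Z})^*$ in your item~3 is left open, and the remedies you suggest do not close it. Pairing part~2 of Lemma~\ref{lem-R-rate} with $\|\Sigma_{ZZ}^{1-\theta}\|_{\mathrm{HS}}\|\hat B_{Y|Z}-B_{Y|Z}\|_{\mathrm{OP}}$ forces $\theta<1-1/(2\alpha)$ and gives back essentially the same exponent. Likewise, the trace bound for $\epsilon_n\mu_X(\Sigma_{ZZ}+\epsilon_nI)^{-1}\Sigma_{ZZ}$ is just part~3 again.

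What is missing is the right tuning together with an explicit exponent computation, in place of the crude bound $\rho>1/4$. The bound of Lemma~\ref{lemma-regul-conv} yields the rate $O_P(n^{-\rho})$, which you import from Corollary~\ref{cor-opt-eps}, only when $\epsilon_n\asymp n^{-\delta}$ with $\delta=\alpha/\{2\alpha(\beta\wedge1)+\alpha+1\}$. Under your literal reading $\delta=\rho$ with $\beta<1$, the bias term $\epsilon_n^{\beta\wedge1}$ alone is $n^{-\rho(\beta\wedge1)}$, so your items~2 and~3 are internally inconsistent. The paper's proof in fact works with $\delta=\alpha/\{2\alpha(\beta\wedge1)+\alpha+1\}$. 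With that choice,
\[
\frac12-\rho-\delta\Bigl(1-\frac1{2\alpha}\Bigr)=\frac{2-\alpha}{2\{2\alpha(\beta\wedge1)+\alpha+1\}}<0,
\]
so $\sqrt n\,\|R\Sigma_{ZZ}\|_{\mathrm{HS}}\|\hat B_{Y|Z}-B_{Y|Z}\|_{\mathrm{OP}}=o_P(1)$ precisely because $\alpha>2$.

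A second problem is that items~1 and~2, as written, treat $\|\hat B_{\Ddot X|Z}-B_{\Ddot X|Z}\|_{\mathrm{OP}}$ as $o_P(1)$, respectively $O_P(n^{-\rho})$. That is false. Lemma~\ref{lemma-regul-conv} controls only $\hat B_{\Ddot X|Z}-B_{\Ddot X|Z}-R$, and part~1 of Lemma~\ref{lem-R-rate} says only that $R$ is bounded. It does not vanish, since $Rf=-\mu_Xf$ for $f\in\ker\hat\Sigma_{ZZ}$. Those items are therefore valid only for the $\Delta_X-R$ pieces.

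You do flag $R\,E_n[\tilde\kappa_Z(\cdot,Z)\otimes V]$ as the obstacle, and your trace sketch for it works. By independence and boundedness of $V$,
\[
n\,E\bigl\|\epsilon_n(\Sigma_{ZZ}+\epsilon_nI)^{-1}E_n[\tilde\kappa_Z(\cdot,Z)\otimes V]\bigr\|_{\mathrm{HS}}^2\le C\,\epsilon_n^2\sum_j\frac{\gamma_j}{(\gamma_j+\epsilon_n)^2}=O(\epsilon_n^{1-1/\alpha}).
\]
Replacing $\hat\Sigma_{ZZ}$ by $\Sigma_{ZZ}$ then costs $O_P(n^{-1/2}\epsilon_n^{-(\alpha+1)/(2\alpha)})=o_P(1)$ after scaling by $\sqrt n$.

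This is a legitimate alternative to the paper's device. The paper writes the term as $R\Sigma_{ZZ}^{\eta}\cdot\sqrt n\,E_n[\Sigma_{ZZ}^{-\eta}\tilde\kappa_Z(\cdot,Z)\otimes\epsilon_Y]$, where $\epsilon_Y$ is your $V$, with $\max\{0,\frac{\alpha-1}{2\alpha}-(\beta\wedge1)\}<\eta<\frac{\alpha-1}{2\alpha}$ so that $\Sigma_{ZZ}^{-\eta}\tilde\kappa_Z$ keeps a finite second moment, and then invokes part~2 of Lemma~\ref{lem-R-rate}. Your route is arguably more direct, and neither version needs $\alpha>2$.

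Finally, no $R\hat\Sigma_{ZZ}R^*$ interaction term arises. The difference $\hat B_{Y|Z}-B_{Y|Z}$ has no $R$-component, because $\Sigma_{YZ}$ carries no $\mu\,\Sigma_{ZZ}$ part.
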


{A key step in the proof of Theorem \ref{thm-clt-sigmahat} is to show that the three terms in 
\begin{align*} 
&\sqrt{n}(B_{\Ddot{X}|Z}-\hat{B}_{\Ddot{X}|Z})E_n[\tilde{\kappa}_Z(\cdot,Z)\otimes\epsilon_{Y}]+\sqrt{n}E_n[\epsilon_{\Ddot{X}}\otimes\tilde{\kappa}_Z(\cdot,Z)](B_{Y|Z}-\hat{B}_{Y|Z})^*\\
&\qquad +\sqrt{n}(B_{\Ddot{X}|Z}-\hat{B}_{\Ddot{X}|Z})E_n[\tilde{\kappa}_Z(\cdot,Z)\otimes\tilde{\kappa}_Z(\cdot,Z)](B_{Y|Z}-\hat{B}_{Y|Z})^*
\end{align*} 
are all $o_P(1)$, which, as mentioned in the Introduction, is ensured by applying Corollary~\ref{cor-opt-eps} and Lemma~\ref{lem-R-rate}. Without these conditions --- that
is, if we only use the optimal rate implied by \eqref{eq:epsilon_n_beta_wedge} --- the above quantity
would have the same order of magnitude as the leading term
\begin{align*}
\sqrt{n} \left( E_n\left\{[\tilde{\kappa}_{\Ddot{X}}(\cdot,\Ddot{X})-B_{\Ddot{X}|Z}\tilde{\kappa}_Z(\cdot,Z)]\otimes[\tilde{\kappa}_{Y}(\cdot,Y)-B_{Y|Z}\tilde{\kappa}_Z(\cdot,Z)]\right\} - \Sigma_{\Ddot{X}Y|Z} \right),
\end{align*}
and consequently the asymptotic distribution in \eqref{eq-asymp-dist-ccco} and \eqref{eq-asymp-var-ccco} would fail to hold. The full proof of Theorem \ref{thm-clt-sigmahat} is placed in Appendix \ref{thm-clt-sigmahat-proof}.
}

By Theorem \ref{thm-clt-sigmahat} and continuous mapping theorem, we have the following asymptotic distribution of our test statistic.

\begin{corollary}\label{corollary:weighted chisquare}
Suppose that $\Gamma_{\Ddot{X}Y|Z}$ defined in \eqref{eq-asymp-var-ccco} is trace class. Under all assumptions in Theorem \ref{thm-clt-sigmahat}, when $\Sigma_{\Ddot{X}Y|Z}=0$, we have
\begin{align}
n\|\hat{\Sigma}_{\Ddot{X}Y|Z}\|_{\mathrm{HS}}^2\xrightarrow{\mathcal{D}}\sum_{k=1}^\infty\lambda_kZ_k^2,\label{eq-asymp-ccco-chisq}
\end{align}
where $\lambda_1\geq\lambda_2\geq\dots$ are eigenvalues of $\Gamma_{\Ddot{X}Y|Z}$ in \eqref{eq-asymp-var-ccco} without the $\Sigma_{\Ddot{X}Y|Z}$ term, and $Z_k$'s are independent standard normal random variables.
\end{corollary}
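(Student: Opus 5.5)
Under $H_0$ we have $\Sigma_{\Ddot{X}Y|Z}=0$, so Theorem~\ref{thm-clt-sigmahat} gives $\sqrt{n}\,\hat{\Sigma}_{\Ddot{X}Y|Z}\xrightarrow{\mathcal{D}} W$. Here $W$ is a centered Gaussian random element of the separable Hilbert space $\mathcal{G}_{\Ddot{X}}\otimes\mathcal{G}_Y$ (the Hilbert--Schmidt operators from $\mathcal{G}_Y$ to $\mathcal{G}_{\Ddot{X}}$), with covariance operator $\Gamma_{\Ddot{X}Y|Z}$. Since $\Sigma_{\Ddot{X}Y|Z}=0$, this covariance is exactly $E[U\otimes U]$ with $U=\tilde{\kappa}_{\Ddot{X}|Z}(\cdot,\Ddot{X}|Z)\otimes\tilde{\kappa}_{Y|Z}(\cdot,Y|Z)$, i.e.\ \eqref{eq-asymp-var-ccco} without the $\Sigma_{\Ddot{X}Y|Z}$ term.

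The map $A\mapsto\|A\|_{\mathrm{HS}}^2$ is continuous on this Hilbert space. By the continuous mapping theorem,
\[
n\|\hat{\Sigma}_{\Ddot{X}Y|Z}\|_{\mathrm{HS}}^2=\|\sqrt{n}\,\hat{\Sigma}_{\Ddot{X}Y|Z}\|_{\mathrm{HS}}^2\xrightarrow{\mathcal{D}}\|W\|_{\mathrm{HS}}^2 .
\]
It remains to identify the law of $\|W\|^2$.

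By the trace-class hypothesis, $\Gamma_{\Ddot{X}Y|Z}$ is a positive, self-adjoint, compact operator. It therefore has a spectral decomposition $\Gamma_{\Ddot{X}Y|Z}=\sum_k\lambda_k\,e_k\otimes e_k$, where $\{e_k\}$ is an orthonormal basis of $\overline{\mathrm{ran}}(\Gamma_{\Ddot{X}Y|Z})$, $\lambda_1\ge\lambda_2\ge\cdots\ge0$, and $\sum_k\lambda_k<\infty$.

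Expand $W$ along this basis via $\langle W,e_k\rangle$. These coordinates are jointly Gaussian, centered, and satisfy $\mathrm{cov}(\langle W,e_k\rangle,\langle W,e_l\rangle)=\langle e_k,\Gamma_{\Ddot{X}Y|Z}e_l\rangle=\lambda_k\delta_{kl}$. Hence they are independent, and we may write $\langle W,e_k\rangle=\sqrt{\lambda_k}Z_k$ with $Z_k$ i.i.d.\ $N(0,1)$.

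The component of $W$ orthogonal to $\overline{\mathrm{ran}}(\Gamma_{\Ddot{X}Y|Z})$ has zero variance, so it vanishes almost surely. Parseval then gives $\|W\|^2=\sum_k\lambda_kZ_k^2$. This series converges almost surely and in $L^1$ because $E\sum_k\lambda_kZ_k^2=\sum_k\lambda_k<\infty$.

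The only delicate point is that $W$ must be a genuine Gaussian element of the Hilbert space with a trace-class covariance, which guarantees $E\|W\|^2<\infty$. This is exactly what the trace-class assumption on $\Gamma_{\Ddot{X}Y|Z}$ supplies. It is also implicit in Theorem~\ref{thm-clt-sigmahat}, since $\kappa_X$ and $\kappa_Y$ are bounded and $E\kappa_Z^2(Z,Z)<\infty$. Everything else is routine: the continuous mapping theorem followed by a Karhunen--Loève expansion of $W$.
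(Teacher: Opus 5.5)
Your proposal is correct and follows essentially the same route as the paper: apply Theorem~\ref{thm-clt-sigmahat} with $\Sigma_{\Ddot{X}Y|Z}=0$, use the continuous mapping theorem for the squared Hilbert--Schmidt norm, and identify $\|W\|_{\mathrm{HS}}^2$ through the Karhunen--Lo\`eve expansion $W=\sum_j\sqrt{\lambda_j}Z_jv_j$ of the Gaussian limit. The paper writes that expansion out explicitly in its proof of Theorem~\ref{thm:local power}, where the null case corresponds to $\Sigma_1=0$. Your extra care about the null space of $\Gamma_{\Ddot{X}Y|Z}$, and your remark that the trace-class hypothesis already follows from the Hilbert-space CLT, are refinements rather than a different argument.
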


\subsection{Welch-Satterthwaite approximation}\label{sec-ws-approx}

{Following the idea of \cite{kokoszka2017inference} and \cite{zhang2014analysis}, we use the Welch-Satterwaite approximation to the asymptotic distribution in \eqref{eq-asymp-ccco-chisq}. 
Let $V = \sum_{i=1}^\infty \lambda_i Z_i^2$, where $\lambda_i$'s and $Z_i$'s are as defined in Corollary \ref{corollary:weighted chisquare}.
Our goal is to approximate the distribution of $V$ using 
\begin{align}\label{eq-sw-approx}
\tilde{V} \sim \zeta \chi^2(\nu),
\end{align} 
where $\zeta$ and $\nu$ are calibrated to capture the first two moments of $V$. 
Since $Z_1,Z_2,\dots$ are i.i.d. $N(0,1)$, we have
\begin{align*}
\mu_V = E(V) = \sum_{i=1}^\infty \lambda_i E(Z_i^2) = \sum_{i=1}^\infty \lambda_i = \tr (\Gamma_{\Ddot{X}Y|Z})
\end{align*}
and
\begin{align*}
\sigma_V^2 = \mathrm{var}(V) = \sum_{i=1}^\infty \lambda_i^2 \mathrm{var}(Z_i^2) 
= 2 \sum_{i=1}^\infty \lambda_i^2 
= \tr (\Gamma_{\Ddot{X}Y|Z}^2).
\end{align*}
By matching the first two moments of $\zeta \chi (\nu)$ and $V$, as was done in  equation (14) of \cite{kokoszka2017inference}, we have 
\begin{align}\label{eq-zeta-nu-pop}
\zeta = \frac{\sigma_V^2}{2\mu_V} 
= \frac{\tr (\Gamma_{\Ddot{X}Y|Z}^2)}{2\,\tr (\Gamma_{\Ddot{X}Y|Z})}, \qquad
\nu = \frac{2\mu_V^2}{\sigma_V^2}
= \frac{2\, [\tr (\Gamma_{\Ddot{X}Y|Z})]^2}{\tr (\Gamma_{\Ddot{X}Y|Z}^2)}.
\end{align}}

\subsection{Local power analysis}\label{sec-local-power}

{In this subsection, we derive the asymptotic distribution under the local alternative distribution using the technique in Theorem 3 of \cite{tang2024nonparametric} and Theorem 13 of \cite{gretton2012kernel}, with the detailed proof placed in Appendix \ref{thm:local power-proof}.}

{\begin{theorem} \label{thm:local power}
Suppose all assumptions in Theorem \ref{thm-clt-sigmahat} are satisfied and suppose 
\begin{enumerate}
\item $\Gamma \lo {\ddot X Y |Z}$ has spectral decomposition $\sum \lo {j=1} \hi \infty \lambda \lo j (v \lo j \otimes v \lo j)$, where $\lambda \lo 1 \ge \lambda \lo 2 \ge \dots$ are eigenvalues of $\Gamma \lo {\ddot X Y |Z}$, and $v \lo 1, v \lo 2, \ldots$ are eigenfunctions of $\Gamma \lo {\ddot X Y |Z}$, which form an orthonormal basis in $\mathcal G \lo {\Ddot X} \otimes \mathcal G \lo {Y}$;
\item $\Sigma \lo 1$ is a fixed linear operator in $\mathcal G \lo {\Ddot X} \otimes \mathcal G \lo {Y}$ with expansion $\sum \lo {j=1} \hi \infty \sigma \lo j  v \lo j$ and $\| \Sigma \lo 1 \| \lo {\mathrm{HS}} = c > 0$.
\end{enumerate}
Then, under the local alternative hypothesis
$
H \lo 1 \hi {(n)}: \Sigma \lo {\Ddot X Y | Z} = n \hi {-1/2} \Sigma \lo 1,
$
the asymptotic distribution of $\|\hat{\Sigma} \lo {\Ddot X Y | Z}  \| \lo {\mathrm{HS}}$ is
\begin{align*}
n\|\hat{\Sigma} \lo {\Ddot X Y | Z}  \| \lo {\mathrm{HS}}  \hi 2  \xrightarrow{\mathcal{D}} \sum \lo {j=1} \hi {\infty} \lambda \lo j \tilde{Z} \lo j \hi 2
\end{align*}
where $\tilde{Z} \lo j$ are independent $N(\sigma \lo j / \sqrt{\lambda \lo j}, 1) $ random variables. 
Thus, the local power of the test is
\begin{align*}
P \left( n\|\hat{\Sigma} \lo {\Ddot X Y | Z} \| \lo {\mathrm{HS}}  \hi 2 > s \right) \to P \left( \sum \lo {j=1} \hi {\infty} \lambda \lo j \tilde{Z} \lo j \hi 2 > s  \right).
\end{align*}
\end{theorem}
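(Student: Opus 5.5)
The proof reuses the decomposition behind Theorem \ref{thm-clt-sigmahat}, now under the triangular-array sequence of local alternatives $H_1^{(n)}$. First, write
\begin{align*}
\sqrt{n}\,\hat{\Sigma}_{\Ddot{X}Y|Z} = \sqrt{n}\,(\hat{\Sigma}_{\Ddot{X}Y|Z}-\Sigma_{\Ddot{X}Y|Z}) + \sqrt{n}\,\Sigma_{\Ddot{X}Y|Z} = \sqrt{n}\,(\hat{\Sigma}_{\Ddot{X}Y|Z}-\Sigma_{\Ddot{X}Y|Z}) + \Sigma_1 .
\end{align*}
Second, apply the proof of Theorem \ref{thm-clt-sigmahat} to each $P^{(n)}$. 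It shows that the first term equals the centered empirical average $\sqrt{n}\,\{E_n[\tilde{\kappa}_{\Ddot{X}|Z}\otimes\tilde{\kappa}_{Y|Z}]-\Sigma_{\Ddot{X}Y|Z}\}$ plus three remainder terms. These remainders are controlled by Lemma \ref{lemma-regul-conv}, Lemma \ref{lem-R-rate} and Corollary \ref{cor-opt-eps}, and are $o_P(1)$.

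For the local version I will assume, as is implicit in the statement and in \cite{gretton2012kernel}, two things. First, the constants in Assumptions \ref{ass-moment2}--\ref{ass-alpha} hold uniformly along the sequence $P^{(n)}$. Second, the covariance operator of $\tilde{\kappa}_{\Ddot{X}|Z}\otimes\tilde{\kappa}_{Y|Z}$ under $P^{(n)}$ converges in trace norm to $\Gamma_{\Ddot{X}Y|Z}$. Since $\Sigma_{\Ddot{X}Y|Z}=n^{-1/2}\Sigma_1\to0$, centering by it does not change the limit.

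Third, I will establish the Hilbert-space CLT for this triangular array, with limit $N(0,\Gamma_{\Ddot{X}Y|Z})$ in $\mathcal G_{\Ddot{X}}\otimes\mathcal G_Y$. The summands are bounded in norm because $\kappa_X,\kappa_Y$ are bounded and $E\kappa_Z^2<\infty$. The CLT then follows from the Lindeberg--Feller CLT applied to finite-dimensional projections onto $v_1,\dots,v_m$, together with a tightness argument: the tail $\sum_{j>m}E\langle\cdot,v_j\rangle^2$ is bounded uniformly in $n$ using the trace-norm convergence of the covariances. By Slutsky,
\begin{align*}
\sqrt{n}\,\hat{\Sigma}_{\Ddot{X}Y|Z}\xrightarrow{\mathcal D} W+\Sigma_1, \qquad W\sim N(0,\Gamma_{\Ddot{X}Y|Z}).
\end{align*}

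Fourth, apply the continuous mapping theorem with $\|\cdot\|_{\mathrm{HS}}^2$, which gives $n\|\hat{\Sigma}_{\Ddot{X}Y|Z}\|_{\mathrm{HS}}^2\to\|W+\Sigma_1\|^2$. Expanding in the eigenbasis $\{v_j\}$,
\begin{align*}
\|W+\Sigma_1\|^2=\sum_j\big(\langle W,v_j\rangle+\sigma_j\big)^2 .
\end{align*}
The coordinates $\langle W,v_j\rangle$ are independent $N(0,\lambda_j)$, so each summand is $\lambda_j\tilde Z_j^2$ with $\tilde Z_j\sim N(\sigma_j/\sqrt{\lambda_j},1)$. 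This requires $\lambda_j>0$; if $\lambda_j=0$ then $\langle W,v_j\rangle=0$ a.s., and the summand is just the constant $\sigma_j^2$, which I interpret as the limiting case of the formula. The series converges a.s. because $\sum\lambda_j<\infty$ by the trace-class assumption of Corollary \ref{corollary:weighted chisquare}, and $\sum\sigma_j^2=c^2<\infty$.

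Finally, the local power statement follows from the Portmanteau theorem. The limit distribution is continuous at $s>0$: either some $\lambda_j>0$, in which case it has a density, or the limit is degenerate.

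The main obstacle is the second step: checking that the remainder bounds from Lemma \ref{lemma-regul-conv} and Lemma \ref{lem-R-rate} remain valid, with uniform constants, when the distribution changes with $n$. The key observation I will use is that the null hypothesis was never used in the remainder analysis of Theorem \ref{thm-clt-sigmahat}. The remainders depend only on the smoothness constants in Assumption \ref{ass-beta} and the eigen-decay in Assumption \ref{ass-alpha}, together with the residuals $\epsilon_{\Ddot X},\epsilon_Y$ having zero conditional mean given $Z$, which holds under any distribution. So uniformity of these constants along $P^{(n)}$ is all that is needed.
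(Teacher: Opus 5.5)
Your proposal is correct and follows essentially the paper's route. The paper writes $n\|\hat{\Sigma}_{\Ddot{X}Y|Z}\|_{\mathrm{HS}}^2 = n\|\hat{\Sigma}_{\Ddot{X}Y|Z}-\Sigma_{\Ddot{X}Y|Z}\|_{\mathrm{HS}}^2 + 2n\langle \hat{\Sigma}_{\Ddot{X}Y|Z}-\Sigma_{\Ddot{X}Y|Z},\Sigma_{\Ddot{X}Y|Z}\rangle_{\mathrm{HS}} + n\|\Sigma_{\Ddot{X}Y|Z}\|_{\mathrm{HS}}^2$, which is equivalent to your shift $\sqrt{n}\hat{\Sigma}_{\Ddot{X}Y|Z}=\sqrt{n}(\hat{\Sigma}_{\Ddot{X}Y|Z}-\Sigma_{\Ddot{X}Y|Z})+\Sigma_1$; it then invokes Theorem~\ref{thm-clt-sigmahat} and the continuous mapping theorem, expands the Gaussian limit in the eigenbasis $\{v_j\}$, and completes the square exactly as you do.

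Your explicit triangular-array step (uniform constants along $P^{(n)}$ and convergence of the covariance operators) makes rigorous what the paper takes for granted by applying Theorem~\ref{thm-clt-sigmahat} directly under $H_1^{(n)}$. The only slip is your claim that the summands are bounded in norm: $\kappa_{\Ddot{X}}=\kappa_X\kappa_Z$, and $\kappa_Z$ is only assumed to satisfy $E[\kappa_Z^2(Z,Z)]<\infty$. That moment condition still gives $E\|\tilde{\kappa}_{\Ddot{X}|Z}\otimes\tilde{\kappa}_{Y|Z}\|_{\mathrm{HS}}^4\le C\,E[\kappa_Z^2(Z,Z)]$, so Lyapunov's condition holds and the CLT step goes through.
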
}

\section{Approximation for Random Functions}\label{sec-ran-func}

\subsection{Approximating trajectories by reproducing-kernel-based smoother}

In practice, the processes $X(t)$, $Y(t)$, and $Z(t)$ are not observed 
continuously over $T$; instead, they are observed only at a finite 
collection of discrete time points. The underlying curves must 
therefore be reconstructed from these discrete observations. 
In the following, we describe the procedure for estimating these 
curves, using $X(t)$ as an illustration.  Let $X_1,\dots,X_n$ be i.i.d. samples from   $X$. Assume, for sample $i$,   we can only observe $X_i(t)$ at time points in $T_i=\{t_{i1},\dots,t_{im_i}\}\subseteq T$.  We use an RKHS smoother to approximate the functions (see, for example, Section~6.2 of \cite{li2017nonlinear} or Section~7.2 of \cite{kokoszka2017introduction}). Let $\kappa_T:T\times T\to\mathbb{R}$ be a positive definite kernel, and let $\mathcal{H}$ be a Hilbert space spanned by $\mathcal{L}=\{\kappa_T(\cdot,t):t \in T\}$.

For subject $i$, we use the subset 
$\mathcal{L}_i=\{\kappa_T(\cdot,t):t\in T_i\}$ to construct an approximation 
of the function $X_i(t)$, and denote the resulting approximation by 
$\hat{X}_i(t)$. Let $K_T^{(i)}$ be the kernel matrix generated by 
$\kappa_T$ evaluated on $T_i$, that is,
\begin{align}\label{eq-kt-i}
(K_T^{(i)})_{rs}=\kappa_T(t_{ir},t_{is}), \quad r,s=1,\dots,m_i .
\end{align}
Furthermore, let 
$
X_i(T_i)=\{X_i(t_{ir}): r=1,\dots,m_i\}
$
denote the observed values of $X_i(t)$ at the time points in $T_i$.

We now represent $X_i(t)$ in the coordinate system induced by 
$\mathcal{L}_i$, and denote the corresponding coordinate vector 
by $[X_i]_{\mathcal{L}_i}$. By the definition of the coordinate mapping, 
for $s=1,\dots,m_i$,
\[
X_i(t_{is})=\sum_{r=1}^{m_i}([X_i]_{\mathcal{L}_i})_r\,
\kappa_T(t_{is},t_{ir}).
\]
Stacking the observations $X_i(t_{is})$ together yields
\begin{align*}
X_i(T_i)=K_T^{(i)}[X_i]_{\mathcal{L}_i}.
\end{align*}
To solve for $[X_i]_{\mathcal{L}_i}$, we account for the possible 
singularity of $K_T^{(i)}$ by applying Tikhonov regularization. 
The regularized coordinate mapping is given by
\begin{align}\label{eq-xhat-coord}
[\hat{X}_i]_{\mathcal{L}_i}=(K_T^{(i)}+\delta_n I)^{-1}X_i(T_i),
\end{align}
where $\delta_n>0$ is a regularization tuning parameter. 
We use the notation $[\hat{X}_i]_{\mathcal{L}_i}$ instead of 
$[X_i]_{\mathcal{L}_i}$ to emphasize that this is the coordinate of the estimated curve $\hat X \lo i$. 
Let 
\[
\kappa_T(\cdot,T_i)=\{\kappa_T(\cdot,t_{ir}):r=1,\dots,m_i\}.
\]
The coordinate vector $[\hat{X}_i]_{\mathcal{L}_i}$ then yields the 
following representation of the reconstructed curve:
\begin{align*}
\hat{X}_i(\cdot)=[\hat{X}_i]_{\mathcal{L}_i}^{\mathsf T}\kappa_T(\cdot,T_i).
\end{align*}

\subsection{Approximating  inner products and distances}\label{sec-approx-inner-prod}

To construct the second layer kernel  $\ka \lo X$ (again, using $X$ as an illustration), we need to approximate the inner products $\langle X \lo i, X \lo j \rangle$ and the distances $\| X \lo i - X \lo j \|$, both taken in the space $L \hi 2 (T)$. 
To estimate the inner product
$ 
\langle X_i, X_j \rangle = \int_T X_i(t) X_j(t)\,dt,  
 $
we approximate $X_i$ and $X_j$
by their reconstructions $\hat{X}_i$ and $\hat{X}_j$, respectively, and
evaluate the integral numerically.
Specifically, we evaluate $\hat{X}_i$ and $\hat{X}_j$ at $l+1$
equally-spaced points in $T$, denoted by
$
U=\{u_0,u_1,\dots,u_l\}.
$
For simplicity, we assume that $l$ is even and that $u_0$ and $u_l$
are the left and right endpoints of $T$.
Applying Simpson's rule to approximate the integral, we obtain
\begin{align}\label{eq-simpson}
\int_{u_0}^{u_l}\hat{X}_i(t)\hat{X}_j(t)\,dt
\approx
\hat{X}_i(U)^{\mathsf T}D\hat{X}_j(U),
\end{align}
where $\hat{X}_i(U)$ is the vector $\{\hat{X}_i(u_r):r=0,1,\dots,l\}$ and same for $\hat{X}_j(U)$, 
\begin{align*}
    D=(h/3)\mathrm{diag}(1,4,2,4,\dots,2,4,1),
\end{align*} and $h=(u_l-u_0)/l$. Using the coordinate mappings, we have
\begin{align}\label{eq-unbalanced}
    \langle X_i,X_j \rangle \approx  \langle \hat{X}_i,\hat{X}_j \rangle =\int_{u_0}^{u_l}\hat{X}_i(t)\hat{X}_j(t)dt\approx[\hat{X}_i]_{\mathcal{L}_i}^{\mathsf{T}}\kappa_T(T_i,U) \, D \, \kappa_T(U,T_j)[\hat{X}_j]_{\mathcal{L}_j}
\end{align}
where $\kappa_T(T_i,U)$ is the $m \lo i$ by $l$  matrix $\{\kappa_T(t_{ir},u_s):r=1,\dots,m_i,s=0,\dots,l\}$, and $\kappa_T(U,T_j)$ is the $l$ by $m \lo j$ matrix $\{\kappa_T(u_r,t_{js}):r=0,\dots,l,s=1,\dots,m_j\}$. 
The distance $\| X \lo i - X \lo j \| $ can then be estimated by 
\begin{align}\label{eq:norm hat X i}
    \| \hat X \lo i - \hat X \lo j \|  = 
    \sqrt{\langle \hat X \lo i , \hat X \lo i \rangle - 2 \langle \hat X \lo i, \hat X \lo j \rangle + \langle \hat X\lo j , \hat X \lo j \rangle}, 
\end{align}
where the inner products on the right-hand side are given the right-hand side of (\ref{eq-unbalanced}).

\subsection{Simplification under balanced observation schedule}\label{sec-simplification}

An important special case of the observation schedule is where all subjects are observed according to the same set of equally-spaced time points, which is known as the  balanced observation scheme. In symbols,   $T_1=\dots=T_n=U$, where $U$ is the common set of time points. In this case we can simplify the estimation procedure by directly applying Simpson's rule \eqref{eq-simpson} to $X_i(U)$ and $X_j(U)$ as an approximation of $\langle X_i, X_j \rangle$. That is,
\begin{align}\label{eq-balanced}
    \langle X_i,X_j \rangle  = \int_{u_0}^{u_l} X_i(t) X_j(t) dt \approx X_i(U)^{\mathsf{T}} D X_j(U).
\end{align}
Based on the inner product, we can calculate the inner product matrix and the pairwise distance matrix needed for constructing the second layer kernel $\ka \lo X$ (as well as $\ka \lo Y$ and $\ka \lo Z$),   Note that we do not need to involve $\kappa_T$ and the tuning parameter, and the calculation is significantly simplified.

\section{Sample-Level Implementation}\label{sec-implementation}

\subsection{Estimation of eigenvalues}

The linear operator $\Gamma_{\Ddot{X}Y|Z}$ defined by \eqref{eq-asymp-var-ccco} depends on the unknown population distribution of $(X,Y,Z)$. So, to apply Theorem \ref{thm-clt-sigmahat} and Corollary \ref{corollary:weighted chisquare} in hypothesis testing of conditional independence we must first approximate this linear operator. 

Since $\Gamma_{\Ddot{X}Y|Z}$ is a self-adjoint linear operator  on $\ca G_{\Ddot{X}\otimes Y}=\mathcal{G}_{\Ddot{X}}\otimes \mathcal{G}_{Y}$, we begin by characterizing the tensor product space $\ca G \lo {\ddot X \otimes Y}$ at the sample level.  By Theorem 2.7.13 of \cite{hsing2015theoretical}, $\kappa_{\Ddot{X}}\kappa_{Y}$ 
is a reproducing kernel of $\mathcal{G}_{\Ddot{X}\otimes Y}$. 

To estimate the eigenvalues of $\Gamma_{\Ddot{X}Y|Z}$, we perform coordinate mapping on its empirical version $\hat{\Gamma}_{\Ddot{X}Y|Z}$ under the following basis: 
\begin{align*}
    \mathcal{B}_{\Ddot{X}Y}=\left\{ {\hat{\kappa}_{\Ddot{X}Y}( \cdot ,  \Ddot{X}_i, \star, Y_j) }:i=1,\dots,n;j=1,\dots,n\right\},
\end{align*}
where 
$\hat{\kappa}_{\Ddot{X}Y}( \cdot,   \Ddot{x},\star,y)=\hat{\kappa}_{\Ddot{X}}(\cdot,\Ddot{x})\hat{\kappa}_{Y}(\star,y)$.

In the operator level, $\hat{\Gamma}_{\Ddot{X}Y|Z}$ is estimated by replacing all expectations in \eqref{eq-asymp-var-ccco} by the corresponding sample means, as follows:
\begin{align*}
    \hat{\Gamma}_{\Ddot{X}Y|Z}=\frac{1}{n}\sum_{k=1}^n[\hat{\kappa}_{\Ddot{X}|Z}(\cdot,\Ddot{X}_k|Z_k)\otimes\hat{\kappa}_{Y|Z}(\cdot,Y_k|Z_k)]\otimes[\hat{\kappa}_{\Ddot{X}|Z}(\cdot,\Ddot{X}_k|Z_k)\otimes\hat{\kappa}_{Y|Z}(\cdot,Y_k|Z_k)].
\end{align*}
We then turn the linear operators in the right-hand above into $n \times n$ matrices by coordinating mapping. Applying  the rules of coordinate mapping described in Lemma 12.3 of \cite{li2018sufficient}, the coordinate of the linear operator $\hat{\Gamma}_{\Ddot{X}Y|Z}$ with respect to the spanning system  $\mathcal{B}_{\Ddot{X}Y}$ is
\begin{align}\label{eq:ca B ddot}
\begin{split}
\,_{\mathcal{B}_{\Ddot{X}Y}}[\hat{\Gamma}_{\Ddot{X}Y|Z}]_{\mathcal{B}_{\Ddot{X}Y}}
=&\frac{1}{n}\sum_{k=1}^n\,_{\mathcal{B}_{\Ddot{X}Y}}[[\hat{\kappa}_{\Ddot{X}|Z}(\cdot,\Ddot{X}_k|Z_k)\otimes\hat{\kappa}_{Y|Z}(\cdot,Y_k|Z_k)]\otimes[\hat{\kappa}_{\Ddot{X}|Z}(\cdot,\Ddot{X}_k|Z_k)\otimes\hat{\kappa}_{Y|Z}(\cdot,Y_k|Z_k)]]_{\mathcal{B}_{\Ddot{X}Y}}\\
=&\frac{1}{n}\sum_{k=1}^n[\hat{\kappa}_{\Ddot{X}|Z}(\cdot,\Ddot{X}_k|Z_k)\otimes\hat{\kappa}_{Y|Z}(\cdot,Y_k|Z_k)]_{\mathcal{B}_{\Ddot{X}Y}}[\hat{\kappa}_{\Ddot{X}|Z}(\cdot,\Ddot{X}_k|Z_k)\otimes\hat{\kappa}_{Y|Z}(\cdot,Y_k|Z_k)]_{\mathcal{B}_{\Ddot{X}Y}}^\mathsf{T}G_{\mathcal{B}_{\Ddot{X}Y}},
\end{split}
\end{align}
where $G_{\mathcal{B}_{\Ddot{X}Y}}$ is the Gram matrix of the set $\mathcal{B}_{\Ddot{X}Y}$; that is, the $((i,j),(i',j'))$-th entry of $G_{\mathcal{B}_{\Ddot{X}Y}}$ is
\begin{align*}
\begin{split}
(G_{\mathcal{B}_{\Ddot{X}Y}})_{(ij)(i'j')}=&\langle\hat{\kappa}_{\Ddot{X}}(\cdot,\Ddot{X}_i)\otimes\hat{\kappa}_{Y}(\cdot,Y_j),\hat{\kappa}_{\Ddot{X}}(\cdot,\Ddot{X}_{i'})\otimes\hat{\kappa}_{Y}(\cdot,Y_{j'})\rangle_{\mathcal{H}_{\Ddot{X}\otimes Y}}\\
=&\langle\hat{\kappa}_{\Ddot{X}}(\cdot,\Ddot{X}_i),\hat{\kappa}_{\Ddot{X}}(\cdot,\Ddot{X}_{i'})\rangle_{\mathcal{H}_{\Ddot{X}}}\langle\hat{\kappa}_{Y}(\cdot,Y_j),\hat{\kappa}_{Y}(\cdot,Y_{j'})\rangle_{\mathcal{H}_{Y}}\\
=&\hat{\kappa}_{\Ddot{X}}(\Ddot{X}_i,\Ddot{X}_{i'})\hat{\kappa}_{Y}(Y_j,Y_{j'})\\
=&(\tilde{K}_{\Ddot{X}})_{ii'}(\tilde{K}_{Y})_{jj'}.
\end{split}
\end{align*}
Hence, we have
\begin{align*}
    G_{\mathcal{B}_{\Ddot{X}Y}}=\tilde{K}_{\Ddot{X}}\otimes\tilde{K}_{Y},
\end{align*}
where $\otimes$ denotes the Kronecker product between two matrices. Also, on the right-hand side of (\ref{eq:ca B ddot}), 
\begin{align*}
   [\hat{\kappa}_{\Ddot{X}|Z}(\cdot,\Ddot{X}_k|Z_k)\otimes\hat{\kappa}_{Y|Z}(\cdot,Y_k|Z_k)]_{\mathcal{B}_{\Ddot{X}Y}}=[\hat{\kappa}_{\Ddot{X}|Z}(\cdot,\Ddot{X}_k|Z_k)]_{\mathcal{B}_{\Ddot{X}}}\otimes[\hat{\kappa}_{Y|Z}(\cdot,Y_k|Z_k)]_{\mathcal{B}_{Y}},
\end{align*}
where the first term is 
\begin{align*}
[\hat{\kappa}_{\Ddot{X}|Z}(\cdot,\Ddot{X}_k|Z_k)]_{\mathcal{B}_{\Ddot{X}}}
=[\hat{\kappa}_{\Ddot{X}}(\cdot,\Ddot{X}_k)-\hat{B}_{\Ddot{X}|Z}\hat{\kappa}_Z(\cdot,Z_k)]_{\mathcal{B}_{\Ddot{X}}}
=[\hat{\kappa}_{\Ddot{X}}(\cdot,\Ddot{X}_k)]_{\mathcal{B}_{\Ddot{X}}}-\,_{\mathcal{B}_{\Ddot{X}}}[\hat{B}_{\Ddot{X}|Z}]_{\mathcal{B}_{Z}}[\hat{\kappa}_Z(\cdot,Z_k)]_{\mathcal{B}_{Z}}\\
=e_k-\tilde{K}_Z(\tilde{K}_Z+n\epsilon_nI)^{-1}e_k
=[I-\tilde{K}_Z(\tilde{K}_Z+n\epsilon_nI)^{-1}]e_k
=n\epsilon_n(\tilde{K}_Z+n\epsilon_nI)^{-1}e_k
=R_Ze_k,
\end{align*}
and same arguments apply to the second term, which gives
\begin{align*}
    [\hat{\kappa}_{Y|Z}(\cdot,Y_k|Z_k)]_{\mathcal{B}_{Y}}=R_Ze_k,
\end{align*}
where $R_Z$ is defined by \eqref{eq-rz}.
  Therefore, we have
\begin{align*}
   [\hat{\kappa}_{\Ddot{X}|Z}(\cdot,\Ddot{X}_k|Z_k)\otimes\hat{\kappa}_{Y|Z}(\cdot,Y_k|Z_k)]_{\mathcal{B}_{\Ddot{X}Y}}=\left(R_Ze_k\right)\otimes\left(R_Ze_k\right).
\end{align*}
We plug this result into the coordinate mapping of $\hat{\Gamma}_{\Ddot{X}Y|Z}$, and we have
\begin{align}\label{eq-gammahat-raw}
    \,_{\mathcal{B}_{\Ddot{X}Y}}[\hat{\Gamma}_{\Ddot{X}Y|Z}]_{\mathcal{B}_{\Ddot{X}Y}}=\frac{1}{n}\sum_{k=1}^n[(R_Ze_k)\otimes(R_Ze_k)][(R_Ze_k)\otimes(R_Ze_k)]^{\mathsf{T}}(\tilde{K}_{\Ddot{X}}\otimes\tilde{K}_{Y}).
\end{align}
Since the rank of the matrix $\,_{\mathcal{B}_{\Ddot{X}Y}}[\hat{\Gamma}_{\Ddot{X}Y|Z}]_{\mathcal{B}_{\Ddot{X}Y}}$ is at most $n$, we compute its first $n$ eigenvalues $\hat{\lambda}_1\geq\dots\geq\hat{\lambda}_n$ and use them as estimates of the eigenvalues $\lambda_k$'s in \eqref{eq-asymp-ccco-chisq}.

\subsection{Acceleration of computation}\label{sec-acceleration}

Since the coordinate mapping $\,_{\mathcal{B}_{\Ddot{X}Y}}[\hat{\Gamma}_{\Ddot{X}Y|Z}]_{\mathcal{B}_{\Ddot{X}Y}}$  in \eqref{eq-gammahat-raw},  is an $n^2 \times n^2$ matrix, the direct  computational cost for its eigenvalues is very high even for moderate sample sizes. However, since the rank of this matrix is only $n$, the cost can be substantially reduced by a carefully designed algorithm. In the following we adapt the idea in Section~6.3 of \cite{tang2024nonparametric} to accelerate the computation.

Specifically, since $\tilde{K}_{\Ddot{X}}$ and $\tilde{K}_{Y}$ are symmetric, \eqref{eq-gammahat-raw} can be rewritten as
\begin{align*}
\begin{split}
\,_{\mathcal{B}_{\Ddot{X}Y}}[\hat{\Gamma}_{\Ddot{X}Y|Z}]_{\mathcal{B}_{\Ddot{X}Y}}
=&\frac{1}{n}\sum_{k=1}^n(\tilde{K}_{\Ddot{X}}^{1/2}\otimes\tilde{K}_{Y}^{1/2})[(R_Ze_k)\otimes(R_Ze_k)][(R_Ze_k)\otimes(R_Ze_k)]^{\mathsf{T}}(\tilde{K}_{\Ddot{X}}^{1/2}\otimes\tilde{K}_{Y}^{1/2})\\
=&\sum_{k=1}^n\left[(\tilde{K}_{\Ddot{X}}^{1/2}R_Ze_k)\otimes(\tilde{K}_{Y}^{1/2}R_Ze_k)/\sqrt{n}\right]\left[(\tilde{K}_{\Ddot{X}}^{1/2}R_Ze_k)\otimes(\tilde{K}_{Y}^{1/2}R_Ze_k)/\sqrt{n}\right]^{\mathsf{T}}\\
=&\sum_{k=1}^n\left[(L_{\Ddot{X}}e_k)\otimes(L_{Y}e_k)/\sqrt{n}\right]\left[(L_{\Ddot{X}}e_k)\otimes(L_{Y}e_k)/\sqrt{n}\right]^{\mathsf{T}},
\end{split}
\end{align*}
where $L_{\Ddot{X}}=\tilde{K}_{\Ddot{X}}^{1/2}R_Z$ and $L_{Y}=\tilde{K}_{Y}^{1/2}R_Z$. 
Let $L = (L_1,\dots,L_n)$ be the $n^2 \times n$ matrix, whose $k$th column  is $L_k = (L_{\Ddot{X}}e_k)\otimes(L_{Y}e_k)/\sqrt{n}$. Then, \begin{align*}
\,_{\mathcal{B}_{\Ddot{X}Y}}[\hat{\Gamma}_{\Ddot{X}Y|Z}]_{\mathcal{B}_{\Ddot{X}Y}} = \sum_{k=1}^n L_k L_k^{\mathsf{T}} = L L^{\mathsf{T}}.
\end{align*}
Since the nonzero eigenvalues of $L L^{\mathsf{T}}$ are same as those of $L^{\mathsf{T}} L$, we only need to  calculate the eigenvalues of the $n \times n$ matrix  $L^{\mathsf{T}} L$, substantially reducing the computing time.  The  computer memory needed is also   substantially reduced,  because we never need to save the $n \hi2 \times n \hi 2$ matrix $\,_{\mathcal{B}_{\Ddot{X}Y}}[\hat{\Gamma}_{\Ddot{X}Y|Z}]_{\mathcal{B}_{\Ddot{X}Y}}$, and   the largest matrix we need to save is the $n \hi 2 \times n$ matrix $L$.

\subsection{{Implementation of Welch-Satterthwaite approximation}}\label{sec-sw}
{
To implement the  Welch-Satterthwaite approximation  in subsection \ref{sec-ws-approx} at the sample level, we simply substitute    $\hat\Gamma_{\Ddot{X}Y|Z}$ for $\Gamma \lo {\ddot X Y | Z}$  to estimate $\hat\mu_V$ and $\hat\sigma_V^2$, as follows:
\begin{align}\label{eq:muy sigmay}
\hat\mu_V = \tr (\hat\Gamma_{\Ddot{X}Y|Z}) = \tr (\,_{\mathcal{B}_{\Ddot{X}Y}}[\hat{\Gamma}_{\Ddot{X}Y|Z}]_{\mathcal{B}_{\Ddot{X}Y}}), \qquad
\hat\sigma_V^2 = \tr (\hat\Gamma_{\Ddot{X}Y|Z}^2) = \tr (\,_{\mathcal{B}_{\Ddot{X}Y}}[\hat{\Gamma}_{\Ddot{X}Y|Z}]_{\mathcal{B}_{\Ddot{X}Y}}^2).
\end{align}
By \eqref{eq-kxz-kyz}, we have the following identity: for any $k, l = 1, \ldots, n$, 
\begin{align}\label{eq:K odot K}
\begin{split}
&[(R_Ze_k)\otimes(R_Ze_k)]^{\mathsf{T}}(\tilde{K}_{\Ddot{X}}\otimes\tilde{K}_{Y})[(R_Ze_ l )\otimes(R_Ze_ l )]= (e_k\trans R_Z \tilde{K}_{\Ddot{X}} R_Z e_ l ) (e_k\trans R_Z \tilde{K}_{Y} R_Z e_ l )\\
&
\qquad\qquad = (e_k\trans  \tilde{K}_{\Ddot{X}|Z}  e_ l ) (e_k\trans  \tilde{K}_{Y|Z}  e_ l )
= (\tilde{K}_{\Ddot{X}|Z})_{k l } (\tilde{K}_{Y|Z})_{k l }
= (\tilde{K}_{\Ddot{X}|Z} \odot \tilde{K}_{Y|Z})_{k l }.
\end{split}
\end{align}
Substituting \eqref{eq-gammahat-raw} into the first relation in \eqref{eq:muy sigmay} and then invoking the above identity, we have 
\begin{align*}
\hat\mu_V 
&= \tr 
\left\{ \frac{1}{n}\sum_{k=1}^n[(R_Ze_k)\otimes(R_Ze_k)][(R_Ze_k)\otimes(R_Ze_k)]^{\mathsf{T}}(\tilde{K}_{\Ddot{X}}\otimes\tilde{K}_{Y}) \right\} \\
&=  \frac{1}{n}\sum_{k=1}^n \tr 
\left\{[(R_Ze_k)\otimes(R_Ze_k)][(R_Ze_k)\otimes(R_Ze_k)]^{\mathsf{T}}(\tilde{K}_{\Ddot{X}}\otimes\tilde{K}_{Y}) \right\} \\
&=  \frac{1}{n}\sum_{k=1}^n [(R_Ze_k)\otimes(R_Ze_k)]^{\mathsf{T}}(\tilde{K}_{\Ddot{X}}\otimes\tilde{K}_{Y})[(R_Ze_k)\otimes(R_Ze_k)]  \\
&= \frac{1}{n}\sum_{k=1}^n (\tilde{K}_{\Ddot{X}|Z} \odot \tilde{K}_{Y|Z})_{kk}
=\frac{1}{n} \tr  (\tilde{K}_{\Ddot{X}|Z} \odot \tilde{K}_{Y|Z}). 
\end{align*}
Similarly,  substituting \eqref{eq-gammahat-raw} into the second relation in \eqref{eq:muy sigmay} and then invoking identity \eqref{eq:K odot K}, we have 
\begin{align*}
\hat\sigma_V^2  
&= \tr 
\left(\left\{ \frac{1}{n}\sum_{k=1}^n[(R_Ze_k)\otimes(R_Ze_k)][(R_Ze_k)\otimes(R_Ze_k)]^{\mathsf{T}}(\tilde{K}_{\Ddot{X}}\otimes\tilde{K}_{Y}) \right\}^2\right) \\
&=  \frac{1}{n^2}\sum_{k=1}^n\sum_{ l =1}^n \tr 
\left\{[(R_Ze_k)\otimes(R_Ze_k)][(R_Ze_k)\otimes(R_Ze_k)]^{\mathsf{T}}(\tilde{K}_{\Ddot{X}}\otimes\tilde{K}_{Y}) \right. \\
& \qquad\qquad\qquad\qquad\quad \left. [(R_Ze_ l )\otimes(R_Ze_ l )][(R_Ze_ l )\otimes(R_Ze_ l )]^{\mathsf{T}}(\tilde{K}_{\Ddot{X}}\otimes\tilde{K}_{Y})\right\} \\
&=  \frac{1}{n^2}\sum_{k=1}^n\sum_{ l =1}^n [(R_Ze_k)\otimes(R_Ze_k)]^{\mathsf{T}}(\tilde{K}_{\Ddot{X}}\otimes\tilde{K}_{Y})[(R_Ze_ l )\otimes(R_Ze_ l )] \\
& \qquad\qquad\qquad [(R_Ze_ l )\otimes(R_Ze_ l )]^{\mathsf{T}}(\tilde{K}_{\Ddot{X}}\otimes\tilde{K}_{Y}) [(R_Ze_k)\otimes(R_Ze_k)]\\
&=  \frac{1}{n^2}\sum_{k=1}^n\sum_{ l =1}^n (\tilde{K}_{\Ddot{X}|Z} \odot \tilde{K}_{Y|Z})_{k  l } (\tilde{K}_{\Ddot{X}|Z} \odot \tilde{K}_{Y|Z})_{ l  k} \\
&=  \frac{1}{n^2}\sum_{k=1}^n\sum_{ l =1}^n (\tilde{K}_{\Ddot{X}|Z} \odot \tilde{K}_{Y|Z})_{k  l }^2
= \frac{1}{n^2} \|\tilde{K}_{\Ddot{X}|Z} \odot \tilde{K}_{Y|Z}\|_\mathrm{F}^2,
\end{align*}
where $\|\cdot\|_\mathrm{F}$ denotes the Frobenius norm of a matrix. Then, we can estimate $\zeta$ and $\nu$ in \eqref{eq-zeta-nu-pop} by
\begin{align}\label{eq-zeta-nu}
\hat\zeta = \frac{\hat\sigma_V^2}{2\hat\mu_V}
= \frac{\|\tilde{K}_{\Ddot{X}|Z} \odot \tilde{K}_{Y|Z}\|_\mathrm{F}^2}{2n \, \tr  (\tilde{K}_{\Ddot{X}|Z} \odot \tilde{K}_{Y|Z})}, \qquad
\hat\nu = \frac{2\hat\mu_V^2}{\hat\sigma_V^2}
= \frac{2 [\tr  (\tilde{K}_{\Ddot{X}|Z} \odot \tilde{K}_{Y|Z})]^2}{\|\tilde{K}_{\Ddot{X}|Z} \odot \tilde{K}_{Y|Z}\|_\mathrm{F}^2}.
\end{align}
}

\subsection{Tuning}

In this section, we describe the procedure  for selecting tuning parameters.
Three types of tuning parameters are involved: the kernel bandwidths,
the regularization parameter $\delta_n$ associated with $K_T$, and the
regularization parameter $\epsilon_n$ associated with $K_Z$.
We propose a step-by-step procedure for selecting these parameters sequentially.

\subsubsection{Kernel bandwidths}\label{subsubsection:kernel bandwidths}

For all kernels, we use Gaussian radial basis functions (RBFs) as
reproducing kernels. Each kernel involves a tuning parameter controlling
the bandwidth. Specifically, the kernel on $T$ is defined as
\begin{align}
\kappa_T : T\times T \to \mathbb{R},\qquad
(t_1,t_2) \mapsto \exp\{-\gamma_T (t_1-t_2)^2\},
\label{eq-gaussian}
\end{align}
where $\gamma_T>0$ is a bandwidth parameter.
Similarly, the kernel on $\mathcal{G}_X$ is
\begin{align}
\kappa_X : \mathcal{G}_X \times \mathcal{G}_X \to \mathbb{R},\qquad
(x_1,x_2) \mapsto \exp\{-\gamma_X \|x_1-x_2\|^2\},
\label{eq-gaussian-x}
\end{align}
where $\gamma_X>0$ is the corresponding bandwidth parameter.
The kernels $\kappa_Y$ and $\kappa_Z$ are defined analogously with tuning
parameters $\gamma_Y$ and $\gamma_Z$.

Following the criterion in Section~6.4 of \cite{li2018nonparametric},
we choose the bandwidth parameters according to the average pairwise
distances. Specifically,
\begin{align}
\frac{1}{\sqrt{\gamma_T}}
=
{l+1 \choose 2}^{-1}
\sum_{k=0}^{l-1}\sum_{r=k+1}^{l}
|u_k-u_r|,
\label{eq-gaussian-tuning}
\end{align}
where $U=\{u_0,u_1,\dots,u_l\}$ are the $l+1$ equally-spaced points on $T$ as defined in Section~\ref{sec-approx-inner-prod}, and
\begin{align}
\frac{1}{\sqrt{\gamma_X}}
=
{n\choose 2}^{-1}
\sum_{a=1}^{n-1}\sum_{b=a+1}^{n}
\|X_a-X_b\|.
\label{eq-gaussian-tuning-x}
\end{align}
The parameters $\gamma_Y$ and $\gamma_Z$ are chosen in the same way.

\subsubsection{Tikhonov  regularization constant $\delta_n$}
To achieve appropriate scaling, we reset $\delta_n$ to $\delta_n\lambda_{\max}(K_T^{(i)})$, where $\lambda_{\max}(K_T^{(i)})$ is the largest eigenvalue of $K_T^{(i)}$. Inspired by the approach in Section~6.4 of \cite{li2018nonparametric}, we employ the generalized cross-validation (GCV) criterion
\begin{align}
\begin{split}
    \mathrm{GCV}_T(\delta_n)=&\sum_{i=1}^n\frac{\|X_i(T_i)-K_T^{(i)}(K_T^{(i)}+\delta_n\lambda_{max}(K_T^{(i)})I_{m_i})^{-1}X_i(T_i)\|_F^2}{\{1-\tr [K_T^{(i)}(K_T^{(i)}+\delta_n\lambda_{max}(K_T^{(i)})I_{m_i})^{-1}]/m_i\}^2}\\
    &+\sum_{i=1}^n\frac{\|Y_i(T_i)-K_T^{(i)}(K_T^{(i)}+\delta_n\lambda_{max}(K_T^{(i)})I_{m_i})^{-1}Y_i(T_i)\|_F^2}{\{1-\tr [K_T^{(i)}(K_T^{(i)}+\delta_n\lambda_{max}(K_T^{(i)})I_{m_i})^{-1}]/m_i\}^2}\\
    &+\sum_{i=1}^n\frac{\|Z_i(T_i)-K_T^{(i)}(K_T^{(i)}+\delta_n\lambda_{max}(K_T^{(i)})I_{m_i})^{-1}Z_i(T_i)\|_F^2}{\{1-\tr [K_T^{(i)}(K_T^{(i)}+\delta_n\lambda_{max}(K_T^{(i)})I_{m_i})^{-1}]/m_i\}^2}.
\end{split}\label{eq-tuning-delta}
\end{align}
In practice, we minimize $GCV_T(\delta_n)$ over a prespecified range $I_T$
to determine the optimal tuning parameter $\delta_n^*$:
\begin{align}
\delta_n^*={\arg\min}_{\delta_n\in I_T} \mathrm{GCV}_T(\delta_n).
\label{eq-tuning-delta-star}
\end{align}

\subsubsection{Tikhonov  regularization constant $\epsilon_n$}\label{subsec-tuning-z}

Similarly, let $\kappa_{XY}$ be a reproducing kernel on the tensor-product space
$\mathcal{G}_{X\otimes Y}=\mathcal{G}_X\otimes\mathcal{G}_Y$, defined by
$
\kappa_{XY}(\cdot,x;\star,y)
=
\kappa_X(\cdot,x)\,\kappa_Y(\star,y).
$
The corresponding Gram matrix for $\mathcal{H}_{X\otimes Y}$ is
$
K_{XY}=K_X\odot K_Y,$
where $\odot$ denotes the Hadamard product. The centered Gram matrix is therefore
$
\tilde K_{XY}=H(K_X\odot K_Y)H.
$
Let $\mathcal{R}_Z=\mathrm{ran}(\hat{\Sigma}_{ZZ})=\mathrm{span}(\mathcal{B}_Z)$ and
$\mathcal{R}_{XY}=\mathrm{span}(\mathcal{B}_{XY})$, where
\begin{align*}
\mathcal{B}_{XY}
&=
\left\{
\hat{\kappa}_{XY}(\cdot,X_i;\star,Y_i): i=1,\dots,n
\right\} \\
&=
\left\{
\kappa_{XY}(\cdot,X_i;\star,Y_i) - E_n[\kappa_{XY}(\cdot,X;\star,Y)]: i=1,\dots,n
\right\}.
\end{align*}

We predict functions in $\mathcal{R}_{XY}$ using functions in
$\mathcal{R}_Z$, which amounts to projecting
$\hat{\kappa}_{XY}(\cdot,X_i,\star,Y_i)$ onto $\mathcal{R}_Z$.
The empirical projection is
$
\hat{\Sigma}_{ZZ}^{-1}\hat{\Sigma}_{Z(XY)}
\hat{\kappa}_{XY}(\cdot,X_i,\star,Y_i),
$
which minimizes the empirical variance
\begin{align}
\mathrm{var}_n\!\left\{
\hat{\kappa}_{XY}(X,X_i,Y,Y_i)
-
[A
\hat{\kappa}_{XY}(\cdot,X_i,\star,Y_i)](Z)
\right\}.
\label{eq-stoc-diff-z}
\end{align}
among all bounded linear operators $A: \ca G \lo {X \otimes Y} \to \ca G \lo Z$. 
Under Tikhonov regularization, the coordinate representation of
$\hat{\Sigma}_{ZZ}^{-1}\hat{\Sigma}_{Z(XY)}$ with respect to the bases
$\mathcal{B}_Z$ and $\mathcal{B}_{XY}$ is
\[
\,_{\mathcal{B}_Z}[\hat{\Sigma}_{ZZ}^{-1}\hat{\Sigma}_{Z(XY)}]_{\mathcal{B}_{XY}}
=
(\tilde K_Z+n\epsilon_n I)^{-1}\tilde K_{XY}.
\]
Hence the empirical error in \eqref{eq-stoc-diff-z} becomes
\[
\|\tilde K_{XY}-\tilde K_Z(\tilde K_Z+n\epsilon_n I)^{-1}\tilde K_{XY}\|_{\mathrm{F}}^2, 
\]
where $\| \cdot \| \lo {\mathrm{F}}$ stands for the Frobenius matrix norm. 

As with $\delta \lo n$,   we replace $\epsilon_n$ by
$\epsilon_n\lambda_{\max}(\tilde K_Z)$ to ensure appropriate scaling.
The generalized cross-validation (GCV) criterion is
\begin{align}
\mathrm{GCV}_Z(\epsilon_n)=
\frac{
\|\tilde K_{XY}-\tilde K_Z(\tilde K_Z+n\epsilon_n
\lambda_{\max}(\tilde K_Z)I)^{-1}\tilde K_{XY}\|\lo {\mathrm{F}} \hi 2
}{
\{
1-\tr  [
\tilde K_Z(\tilde K_Z+n\epsilon_n
\lambda_{\max}(\tilde K_Z)I)^{-1}
]/n
\}^2
}.
\label{eq-tuning-eps-z}
\end{align}
The optimal parameter is chosen as
\begin{align}
\epsilon_n^*={\arg\min}_{\epsilon_n\in I_Z}\mathrm{GCV}_Z(\epsilon_n).
\label{eq-tuning-eps-z-star}
\end{align}
for some prespecified range $I \lo Z$.

\subsection{Algorithm}
We summarize the procedures developed before as the following algorithm.
\begin{enumerate}
    \item Choose the kernel $\kappa_T$ for $\mathcal{H}$, along with its corresponding bandwidth. For example, if $\kappa_T$ is the Gaussian kernel \eqref{eq-gaussian}, then set the bandwidth by \eqref{eq-gaussian-tuning}.
    \item Compute the matrices $K_T^{(i)}$ based on $\kappa_T$, for $i=1,\dots,n$, by \eqref{eq-kt-i}.
    \item Choose the regularization parameter $\delta_n$ by minimizing $GCV_T(\delta_n)$ in \eqref{eq-tuning-delta} over a prespecified range.
    \item Based on the observations $X_i(T_i)$, $Y_i(T_i)$, $Z_i(T_i)$ for $i=1,\dots,n$, compute the coordinate mapping $[\hat{X}_i]_{\mathcal{L}_i}$, $[\hat{Y}_i]_{\mathcal{L}_i}$, $[\hat{Z}_i]_{\mathcal{L}_i}$ by \eqref{eq-xhat-coord} for $i=1,\dots,n$.
    \item Set the grid points $U$ and compute the matrix $\kappa_T(T_i,U)$ for $i=1,\dots,n$. Compute the inner products 
    $ \langle \hat X \lo i, \hat X \lo j \rangle$, $\langle \hat Y \lo i , \hat Y \lo j \rangle$ and $\langle Z \lo i, Z \lo j \rangle $
    by (\ref{eq-unbalanced}), 
    and further compute 
$ \| \hat X \lo i -  \hat X \lo j \|$, $\| \hat Y \lo i -  \hat Y \lo j \|$ and $ \| \hat Z \lo i -  \hat Z \lo j \|  $
    by (\ref{eq:norm hat X i}). 
    \item Choose the bandwidths for the Gaussian kernels  $\kappa_X$, $\kappa_Y$, $\kappa_Z$ by \eqref{eq-gaussian-tuning-x}.
    \item Compute the Gram matrices $K_X$, $K_Y$, $K_Z$ based on the kernels $\kappa_X$, $\kappa_Y$, $\kappa_Z$, and further compute $K_{\Ddot{X}}$ based on \eqref{eq-k-ddot}.
    \item Choose the regularization parameters $\delta \lo n$ and  $\epsilon_n$ by minimizing $\mathrm{GCV} \lo T (\delta \lo n)$ in (\ref{eq-tuning-delta}) and $\mathrm{GCV}_Z(\epsilon_n)$ in \eqref{eq-tuning-eps-z}, respectively,  over their prespecified ranges. 
    \item Compute $R_Z$ using \eqref{eq-rz}, and further compute $\tilde{K}_{\Ddot{X}|Z}$ and $\tilde{K}_{Y|Z}$ using \eqref{eq-kxz-kyz}.
    \item Compute the test statistic $n\|\hat{\Sigma}_{\Ddot{X}Y|Z}\|_{\mathrm{HS}}^2$ via \eqref{eq-sigma-hat-hs}. 
    \item Compute the eigenvalues $\hat{\lambda}_1$, \dots, $\hat{\lambda}_n$ of $L^{\mathsf{T}}L$ as defined in Section~\ref{sec-acceleration}. 
    \item Perform the test using the asymptotic distribution in \eqref{eq-asymp-ccco-chisq}.
\end{enumerate}

One side note  is that the first five steps can be simplified under some special cases, where all observations are on the same set of equally-spaced grid points. The simplified steps are to be carried out according to Section~\ref{sec-simplification}. {Moreover, if Welch-Satterthwaite approximation is applied, the last two steps are modified as follows:
\begin{enumerate}
\item[11'.] Compute $\hat\zeta$ and $\hat\nu$ as given in \eqref{eq-zeta-nu}.
\item[12'.] Perform the test using \eqref{eq-sw-approx} as an approximation of the asymptotic distribution in \eqref{eq-asymp-ccco-chisq}.
\end{enumerate}}

\section{Simulation}\label{sec-simulation}
We first generate the random functions $\epsilon_i(t)$ for $i=1,2,3$ by 
\begin{align*}
    \epsilon_i(t)=\sum_{k=1}^r\xi_k\kappa_T(t,t_k),
\end{align*}
where $\ka \lo T (t,s) = \min (t,s)$ is the Brownian motion kernel,  $\xi_1,\dots,\xi_r$ are i.i.d.  $N(0,1)$, $t_1,\dots,t_r$ are i.i.d.  $U(0,1)$. We generate $\epsilon_1,\epsilon_2,\epsilon_3$ independently according to the above model, where we choose $r=50$.
Note that, even though we have used the Brownian motion kernel to generate the functional random errors, we use the Gaussian kernel in our estimation process, as mentioned in Section~\ref{subsubsection:kernel bandwidths}. 

For all models, the random functions $X$ and $Z$ are generated from
\[
Z(t)=\epsilon_1(t), \qquad X(t)=2Z(t)+\epsilon_2(t).
\]
The dependence structures are introduced through different
representations of $Y(t)$ in the five models described below.
The models are summarized in Table \ref{tab:models}.

\begin{table}[!htb]
    \centering
    \begin{tabular}{cl}
        \hline
         Model &  Representation of $Y(t)$ \\
         \hline
         1 & $Y(t)=Z(t)+\epsilon_3(t)$ \\
         2 & $Y(t)=Z(t)+X^2(t)+\epsilon_3(t)$\\
         3 & $Y(t)=Z(t)+X(t)+\epsilon_3(t)$\\
         4 & $Y(t)=Z(t)+4\log(|X(t)|+1)+\epsilon_3(t)$\\
         5 & $Y(t)=Z(t)X(t)+\epsilon_3(t)$\\
         \hline
    \end{tabular}
    \caption{Models for simulations.}
    \label{tab:models}
\end{table}

In Table~\ref{tab:models}, only Model~1 corresponds to the case where
$X(t)$ and $Y(t)$ are independent given $Z(t)$.
In Models~2--4, $Y(t)$ depends on both $Z(t)$ and $X(t)$ through
additive structures with different transformations applied to $X(t)$.
In Model~5, we consider a multiplicative interaction between
$Z(t)$ and $X(t)$.

In the simulation study, the sample sizes are set to be  $n=50,100,200,500$, and the number of simulation replications is set to
$n_{\mathrm{sim}}=100$. We examine both balanced and unbalanced
observation schemes with $m=200$.
In the balanced case, the observation times $t_{i0},\dots,t_{im}$ are
fixed as
$t_{ij}=j/200$, for $j=0,1,\dots,200$.
In the unbalanced case, for the $i$th subject, the observation times
$t_{i1},\dots,t_{im}$ are obtained by randomly sampling $200$ points
without replacement from
$ 
U=\{j/2000 : j=1,\dots,2000\},
 $
and we set $t_{i0}=0$.
We then use \eqref{eq-gaussian-tuning} with grid points $\{0\} \cup U$ to calculate $\gamma_T$, and use the GCV criterion \eqref{eq-tuning-delta-star} to determine
the tuning parameter $\delta_n$, where the prespecified search range
$I_T$ is given by
\begin{align*}
    I_T = \{ 0.001, 0.002, \dots, 0.009, 0.01, 0.02, \dots, 0.09, 0.1 \}.
\end{align*}
Furthermore, for selecting the regularization parameter $\epsilon_n$,
we adopt the GCV criterion \eqref{eq-tuning-eps-z-star} with the
prespecified search range $I_Z$ given by
\[
I_Z = \{0.001/n, 0.002/n, \dots, 0.009/n, 0.01/n, 0.02/n, \dots, 0.09/n, 0.1/n\}.
\]
The resulting $p$-values for all models under the balanced and
unbalanced settings are displayed in
Figures~\ref{fig-res-1}--\ref{fig-res-5}, {where the test is performed either using the original weighted sum of $\chi^2_{(1)}$ distributions in \eqref{eq-asymp-ccco-chisq} or using the Welch-Satterthwaite approximation \eqref{eq-sw-approx}}.

\begin{figure}[!htb] 
\centering 
\begin{subfigure}{0.48\textwidth}
\includegraphics[width=\linewidth]{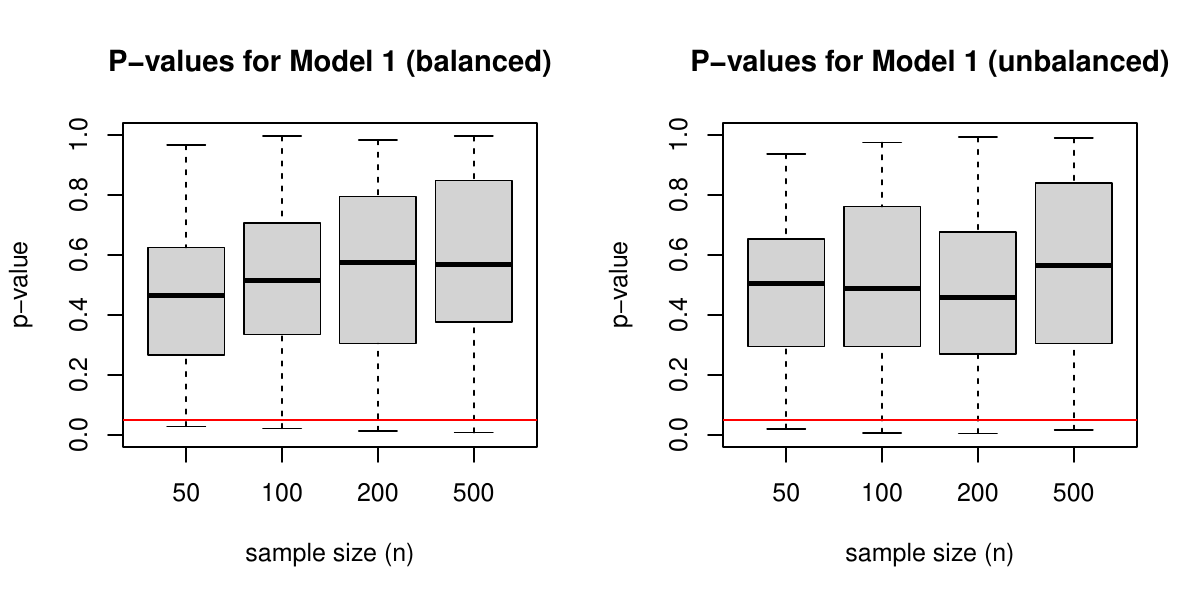}
\caption{Results via weighted sum of $\chi^2_{(1)}$ distributions.}
\end{subfigure}
\begin{subfigure}{0.48\textwidth}
\includegraphics[width=\linewidth]{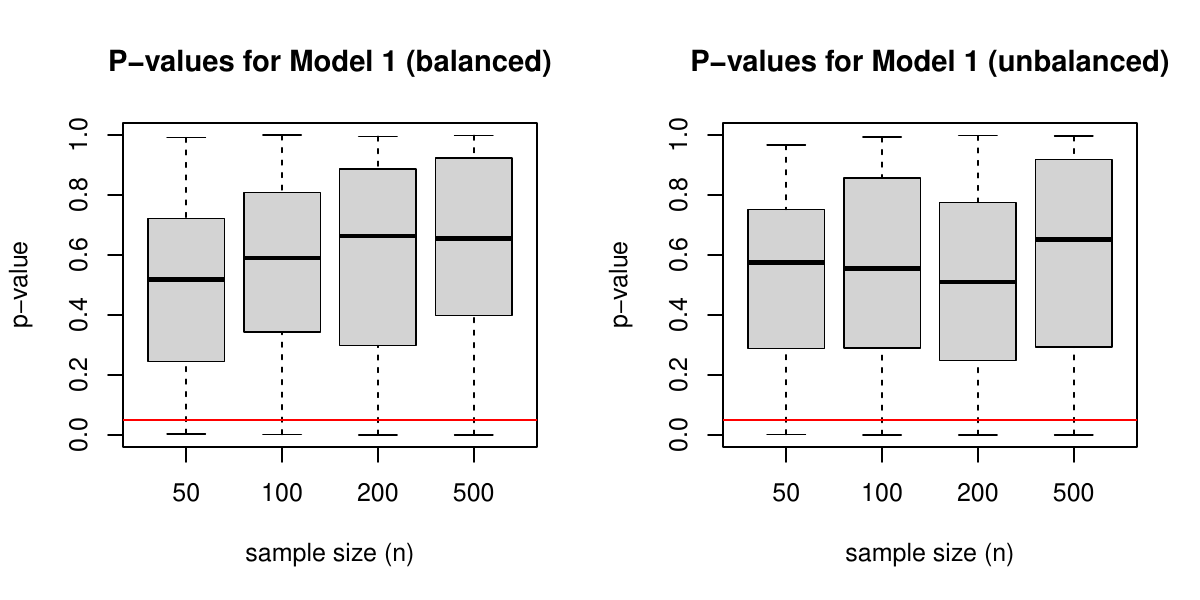}
\caption{Results with Welch-Satterthwaite approximation.}
\end{subfigure}
\caption{P-values for simulations under balanced and unbalanced cases for Model 1. The red line represents $p=0.05$.}
\label{fig-res-1} 
\end{figure}

\begin{figure}[!htb] 
\centering 
\begin{subfigure}{0.48\textwidth}
\includegraphics[width=\linewidth]{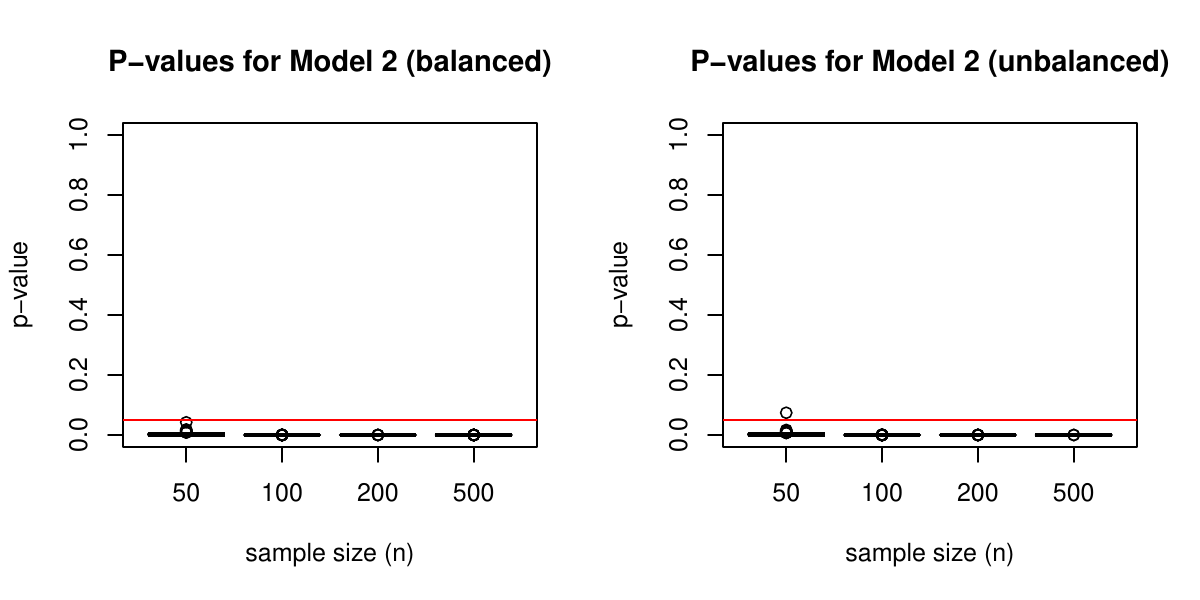}
\caption{Results via weighted sum of $\chi^2_{(1)}$ distributions.}
\end{subfigure}
\begin{subfigure}{0.48\textwidth}
\includegraphics[width=\linewidth]{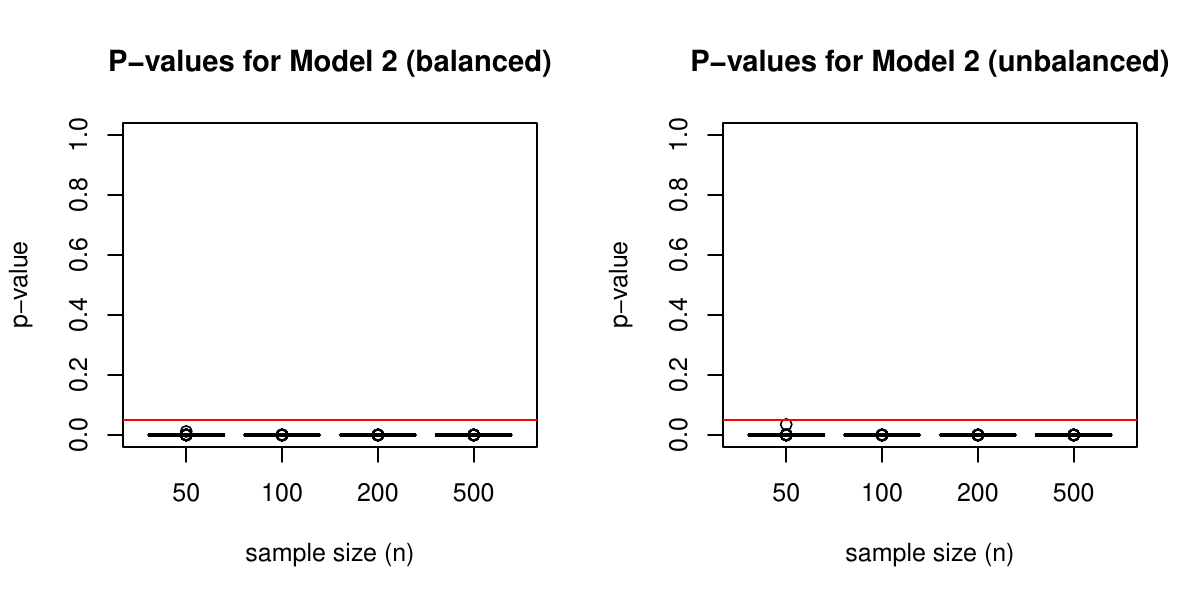}
\caption{Results with Welch-Satterthwaite approximation.}
\end{subfigure}
\caption{P-values for simulations under balanced and unbalanced cases for Model 2. The red line represents $p=0.05$.}
\label{fig-res-2} 
\end{figure}

\begin{figure}[!htb] 
\centering 
\begin{subfigure}{0.48\textwidth}
\includegraphics[width=\linewidth]{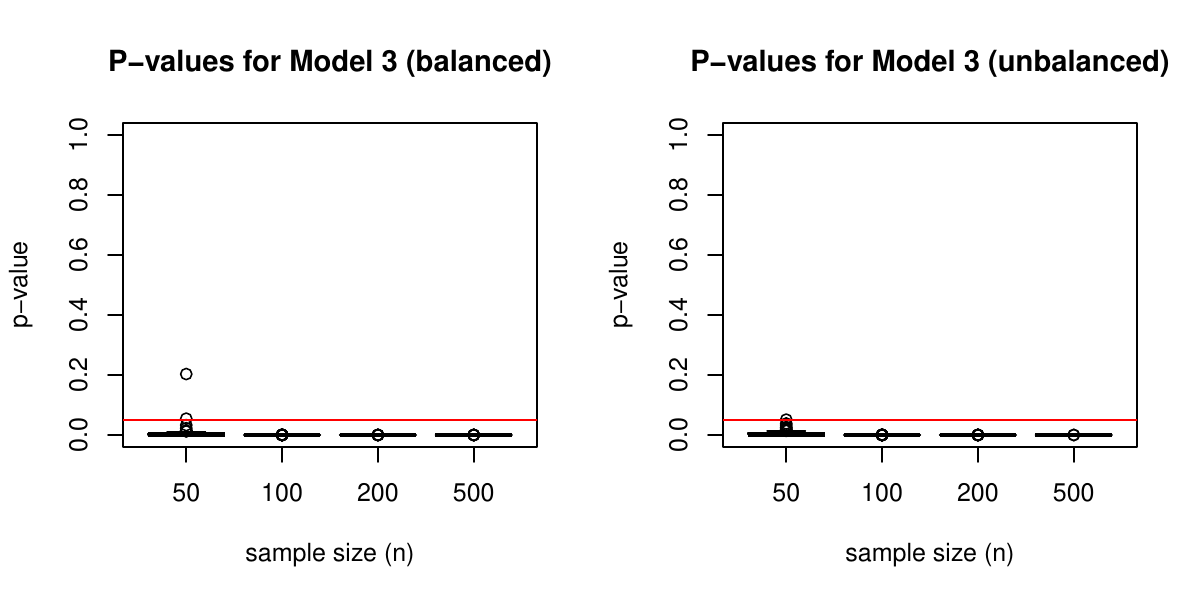}
\caption{Results via weighted sum of $\chi^2_{(1)}$ distributions.}
\end{subfigure}
\begin{subfigure}{0.48\textwidth}
\includegraphics[width=\linewidth]{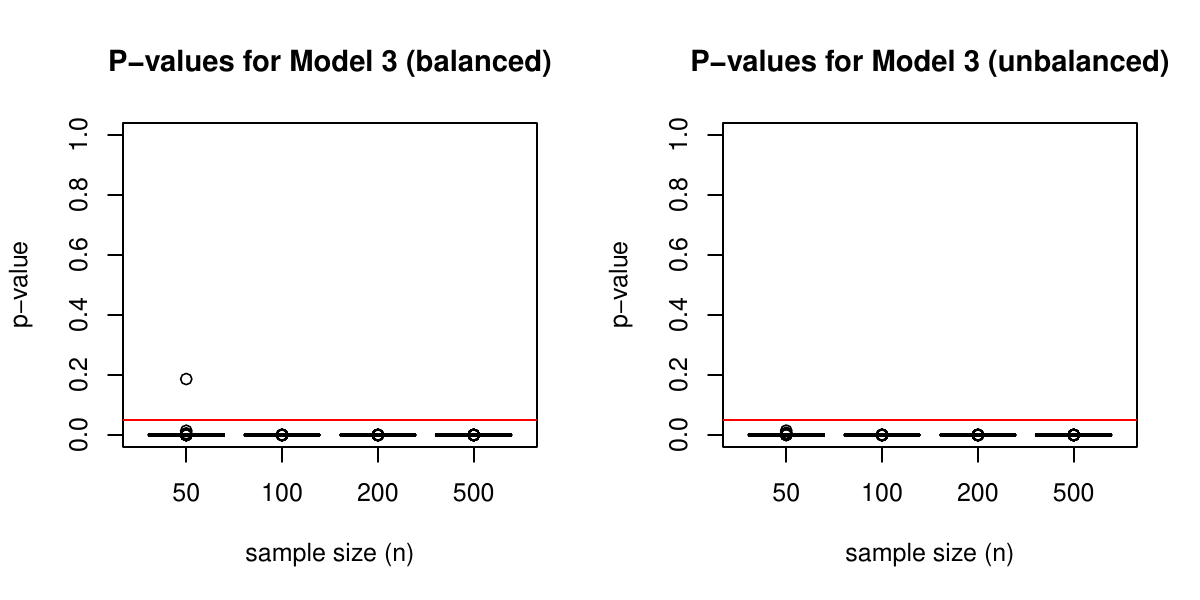}
\caption{Results with Welch-Satterthwaite approximation.}
\end{subfigure}
\caption{P-values for simulations under balanced and unbalanced cases for Model 3. The red line represents $p=0.05$.}
\label{fig-res-3} 
\end{figure}

\begin{figure}[!htb] 
\centering 
\begin{subfigure}{0.48\textwidth}
\includegraphics[width=\linewidth]{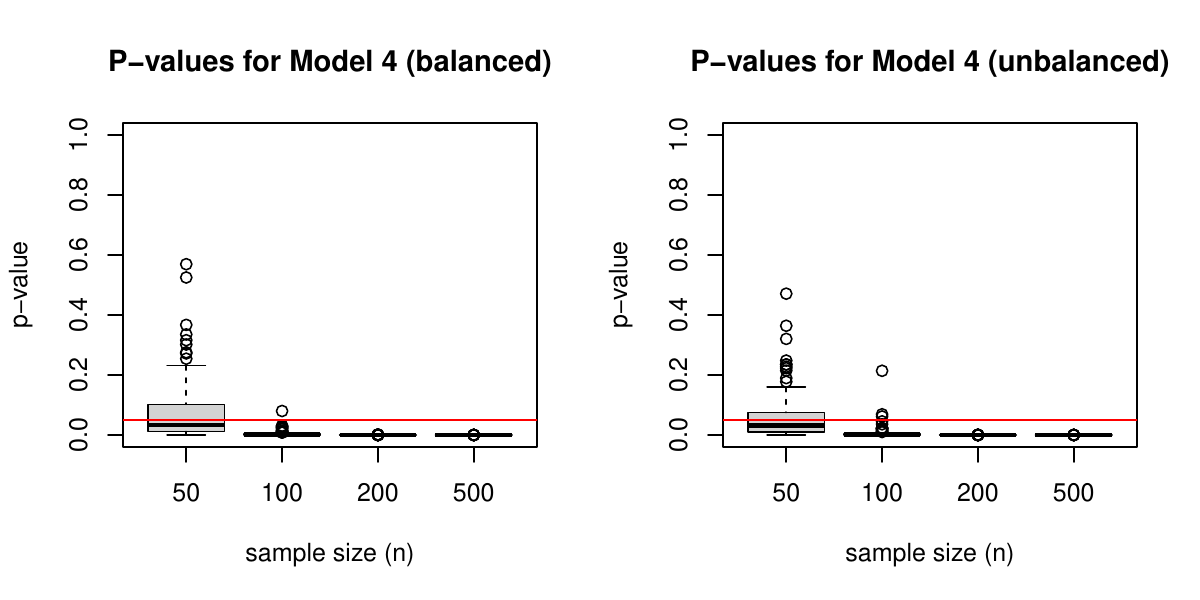}
\caption{Results via weighted sum of $\chi^2_{(1)}$ distributions.}
\end{subfigure}
\begin{subfigure}{0.48\textwidth}
\includegraphics[width=\linewidth]{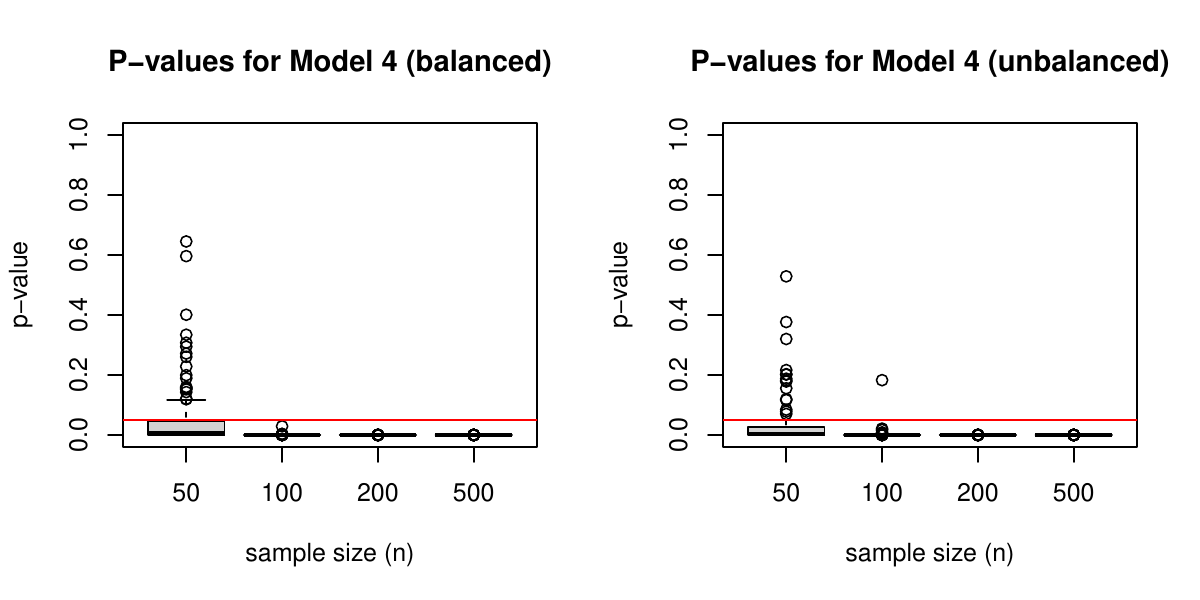}
\caption{Results with Welch-Satterthwaite approximation.}
\end{subfigure}
\caption{P-values for simulations under balanced and unbalanced cases for Model 4. The red line represents $p=0.05$.}
\label{fig-res-4} 
\end{figure}

\begin{figure}[!htb] 
\centering 
\begin{subfigure}{0.48\textwidth}
\includegraphics[width=\linewidth]{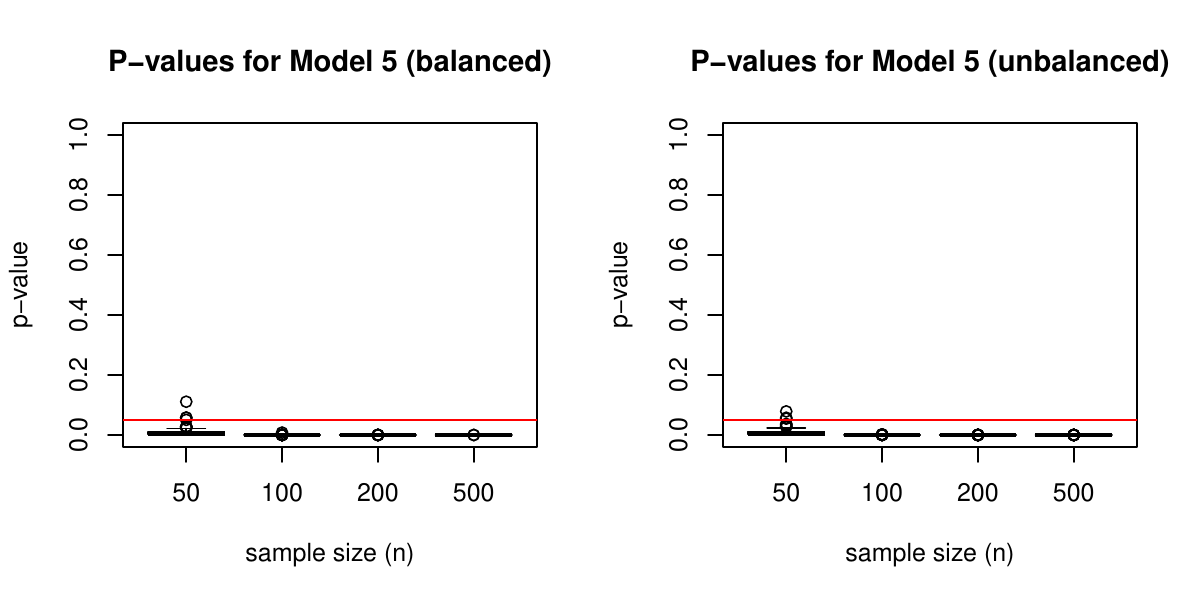}
\caption{Results via weighted sum of $\chi^2_{(1)}$ distributions.}
\end{subfigure}
\begin{subfigure}{0.48\textwidth}
\includegraphics[width=\linewidth]{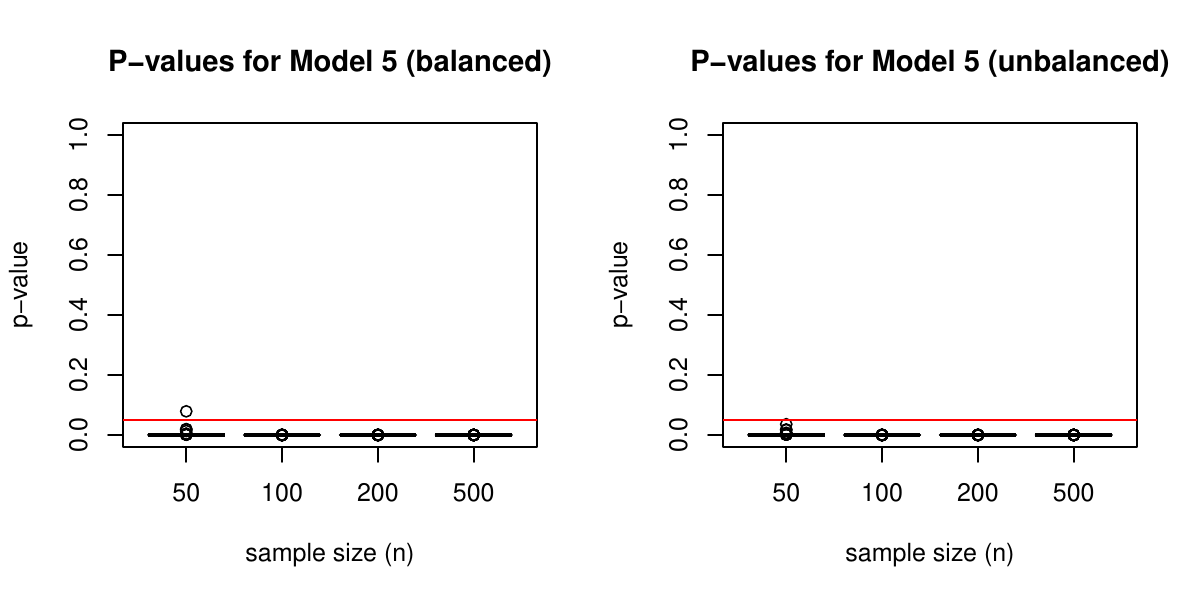}
\caption{Results with Welch-Satterthwaite approximation.}
\end{subfigure}
\caption{P-values for simulations under balanced and unbalanced cases for Model 5. The red line represents $p=0.05$.}
\label{fig-res-5} 
\end{figure}

Figures~\ref{fig-res-1}--\ref{fig-res-5} show that the performances
under the balanced and unbalanced settings are similar for all models. 
{Also, the results of tests using Welch-Satterthwaite approximation is similar to those via weighted sum of $\chi^2_{(1)}$ distributions, indicating that the Welch-Satterthwaite method approximates the asymptotic distribution \eqref{eq-asymp-ccco-chisq} well.}
In particular, Figure~\ref{fig-res-1} indicates that most of the
$p$-values are much larger than $0.05$, suggesting that $X$ and $Y$ are independent
given $Z$, which agrees with the conditional-independence null hypothesis.
For the remaining models, nearly all $p$-values fall below $0.05$, which demonstrate the strong power of
our CCCO-based test. 
The only exception occurs for the small sample size  ($n=50$) in Model~4, as shown in
Figure~\ref{fig-res-4}. This behavior is reasonable because the model
involves the transformation $\log(|X(t)|+1)$, which weakens the
influence of $X(t)$ on $Y(t)$ and makes   conditional dependence
 more difficult to detect when the sample size is small.

{We also examined the simulation running times, as reported in Table \ref{tab-time}. In the table, Stage 1 refers to the computation of the test statistic $T_n$, which is required regardless of whether the Welch-Satterthwaite approximation is used. Stage 2.1 refers to performing the test using the Welch-Satterthwaite approximation, whereas Stage 2.2 refers to performing the test using the original weighted sum of $\chi^2_{(1)}$ distributions.
From Table \ref{tab-time}  we see that Stage 2.1 is significantly faster than Stage 2.2, especially when the sample size is large. In fact, Stage 2.2 requires calculating the eigenvalues of an $n \times n$ matrix $L\trans L $  where $L$ is an $n^2 \times n$ matrix, which is very time consuming for large $n$. In contrast, the Welch-Satterthwaite approximation only requires the trace and Frobenius norm of the $n \times n$ matrix $\tilde{K}_{\Ddot{X}|Z} \odot \tilde{K}_{Y|Z}$, which had already been calculated in Stage 1. Consequently,  very little   additional calculation is needed.   We also note that the   time for computing the test statistic is substantially shorter under the balanced observation schedule, since the inner products of the random functions can be approximated directly by \eqref{eq-balanced}, without the smoothing step required to estimate the entire functions.}

\begin{table}[!htb]
\centering
\begin{subtable}[t]{0.48\textwidth}
\centering
\begin{tabular}{rrrr}
\hline
$n$ & Stage 1 & Stage 2.1 & Stage 2.2\\
\hline
50 & 0.3763 & 0.0373 & 0.1121\\
100 & 0.6971 & 0.0393 & 0.3808\\
200 & 1.9652 & 0.0273 & 2.2478\\
500 & 11.8508 & 0.0101 & 27.5814\\
\hline
\end{tabular}
\caption{Elapsed time (seconds) for balanced cases.}
\end{subtable}
\hfill
\begin{subtable}[t]{0.48\textwidth}
\centering
\begin{tabular}{rrrr}
\hline
$n$ & Stage 1 & Stage 2.1 & Stage 2.2\\
\hline
50 & 5.1011 & 0.0074 & 0.1049\\
100 & 12.9999 & 0.0051 & 0.3918\\
200 & 38.8418 & 0.0050 & 2.3504\\
500 & 201.1896 & 0.0062 & 26.7144\\
\hline
\end{tabular}
\caption{Elapsed time (seconds) for unbalanced cases.}
\end{subtable}
\caption{Summary of running times for the simulations under balanced and unbalanced cases. In the tables, Stage 1 refers to calculating the test statistic, Stage 2.1 refers to performing the test with Welch-Satterthwaite approximation, and Stage 2.2 refers to performing the test via the original weighted sum of $\chi^2_{(1)}$ distributions.}
\label{tab-time}
\end{table}

\section{Application on WISDM Dataset}\label{sec-application}
We apply our method to the Wireless Sensor Data Mining (WISDM) dataset
\citep{kwapisz2010activity}.\footnote{The dataset is available at 
\url{https://www.cis.fordham.edu/wisdm/dataset.php}. 
We use the file \texttt{WISDM\_ar\_v1.1\_raw.txt} from the Activity Prediction Dataset (v1.1).}
This dataset contains labeled accelerometer measurements collected from multiple users
under controlled laboratory conditions by the WISDM Lab.
For each subject, triaxial accelerometer signals corresponding to three spatial directions are recorded during various physical activities.

Thirty-six users participate in the experiment, during which
accelerometer measurements are collected while they perform various
activities, including Walking, Jogging, Upstairs, Downstairs, Sitting,
and Standing. The predictor consists of multivariate
functional data, with time series representing accelerations along the
$x$-, $y$-, and $z$-directions. For each user-activity pair, the
accelerations are recorded at a sampling rate of 20 Hz (i.e., one sample
every 50 ms). {For each user performing a given
activity, we use the observations from the first 10, 15, and 20 seconds,
which correspond to $m=200$, $300$, and $400$ time points, respectively. Taking $m=200$,   for example,  the curves of each activity are given by Figures \ref{fig-curve-1}--\ref{fig-curve-6} in Appendix \ref{sec-app-plot}, where $X(t)$, $Y(t)$, and  $Z(t)$ represent the acceleration curves along the $x$-, $y$-, and $z$-directions, respectively.}

{We plot the first 15 eigenvalues of $\hat{\Sigma}_{ZZ}$ for the six activities  in Figure \ref{fig-ev-zz}. As shown in the figure, the decaying patterns of the eigenvalues for all six activities suggest  that  Assumption \ref{ass-alpha} is justifiable}.

\begin{figure}[!htb] 
\centering 
\includegraphics[width=\linewidth]{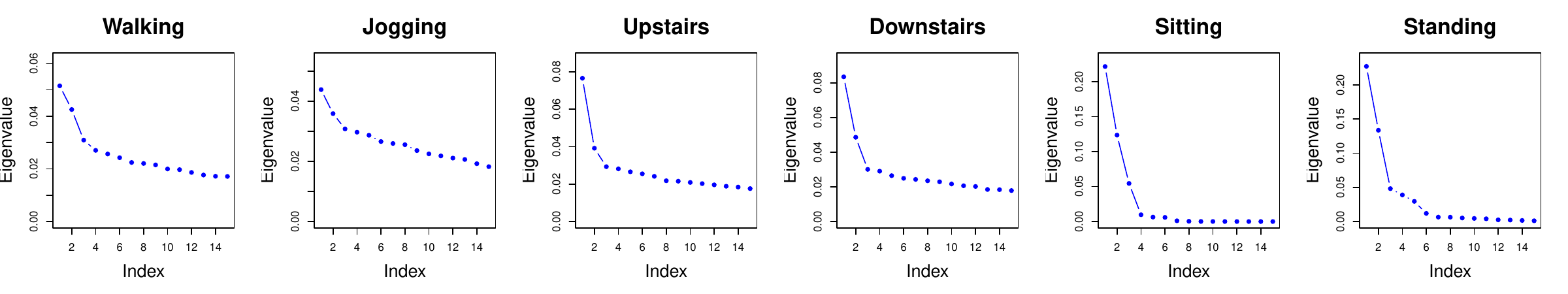}
\caption{Plots of first 15 eigenvalues of $\hat{\Sigma}_{ZZ}$ for the six activities.}
\label{fig-ev-zz} 
\end{figure}

{To justify Assumption \ref{ass-beta}, by Proposition \ref{proposition:equivalence}, it suffices to justify the boundedness of the three quantities in \eqref{eq-yz-beta} and \eqref{eq-xddotz-beta}. Specifically, we consider the quantities 
\begin{align*}
    &s_{jk}^{(1)} = \hat\mu_j^{1/2} \hat\gamma_k^{-1/2-\beta} |\hat E(\hat\theta_j \hat\xi_k)|, \qquad 
    s_{jk}^{(2)} = \hat\eta_j^{1/2} \hat\gamma_k^{-1/2-\beta} |\hat E(\hat\omega_j \hat\xi_k)|, \qquad 
    j,k = 1,2,\dots,15,\\
    &s_{jkl}^{(3)} = \hat\eta_j^{1/2} \hat\gamma_k^{1/2} \hat\gamma_ l ^{-1/2-\beta} |\hat E(\hat\omega_j \hat\xi_k \hat\xi_ l )|, \qquad j,k,l = 1,2,\dots,15,
\end{align*}
where all the hats refer to the empirical estimators of the corresponding quantities. We pick $\beta=0.25$.
The heatmaps for $s_{jk}^{(1)}$ and $s_{jk}^{(2)}$ are presented in Figures \ref{fig-corr-y} and \ref{fig-corr-x}, respectively. For $s_{jkl}^{(3)}$, we fix $k$ in each heatmap and present the plots {for $k=1,\dots,5$} in Figure \ref{fig-corr-xz-1} and {Figure \ref{fig-corr-xz-2}} in the Supplementary Material. {We also report the maximum values of $s_{jkl}^{(3)}$, for $k=1,\dots,15$ and for all six activities in Table \ref{tab:maximum-corr-xz} in the Supplementary Material. The heatmaps for $k=6,\dots,15$ are omitted because all values are less than 0.01 according to Table \ref{tab:maximum-corr-xz}.} Note that the ranges of Figures \ref{fig-corr-y} and \ref{fig-corr-x} are set to be $[0,0.8]$ while the ranges of Figure \ref{fig-corr-xz-1} and Figure \ref{fig-corr-xz-2} are set to be $[0,0.3]$.
We also give the maximum values of $s_{jk}^{(1)}$, $s_{jk}^{(2)}$ and $s_{jkl}^{(3)}$ over their corresponding index ranges as Table \ref{tab-max}. As we can see from the figures, most values of the quantities are very small. Relatively large values only appear in Sitting and Standing with small indices $(j,k)$ or $(j,k,l)$, while with the increase of the indices, all the quantities have a decaying trend. This justifies the boundedness of the three quantities in \eqref{eq-yz-beta} and \eqref{eq-xddotz-beta}, which further justifies that Assumption \ref{ass-beta} is plausible for this dataset.
}

\begin{figure}[!htb] 
\centering 
\includegraphics[width=\linewidth]{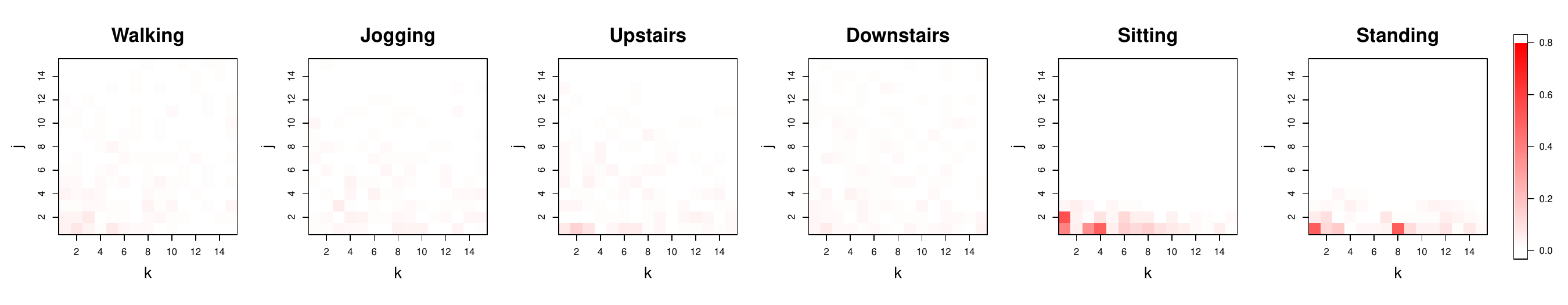}
\caption{Heatmaps of the quantities $s_{jk}^{(1)}$ for the six activities, $j,k=1,2,\dots,15$.}
\label{fig-corr-y} 
\end{figure}

\begin{figure}[!htb] 
\centering 
\includegraphics[width=\linewidth]{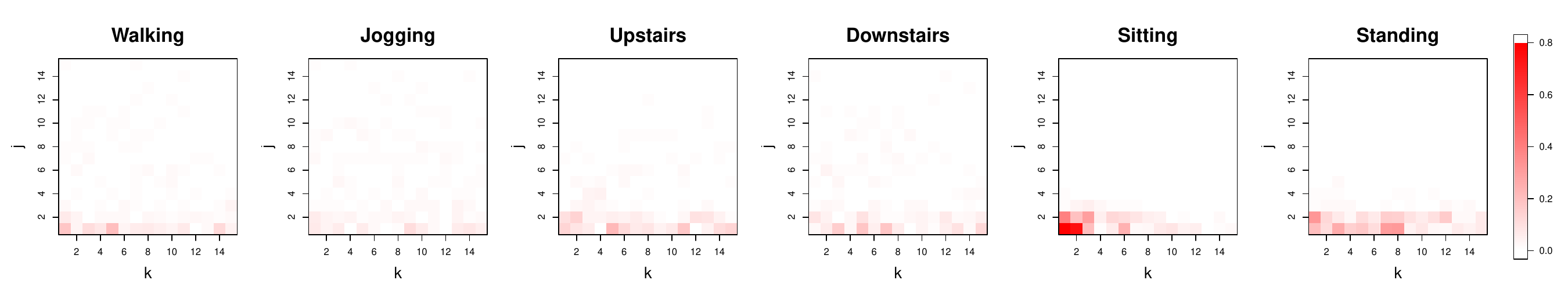}
\caption{Heatmaps of the quantities $s_{jk}^{(2)}$ for the six activities, $j,k=1,2,\dots,15$.}
\label{fig-corr-x} 
\end{figure}

\begin{figure}[!htb] 
\centering 
\includegraphics[width=\linewidth]{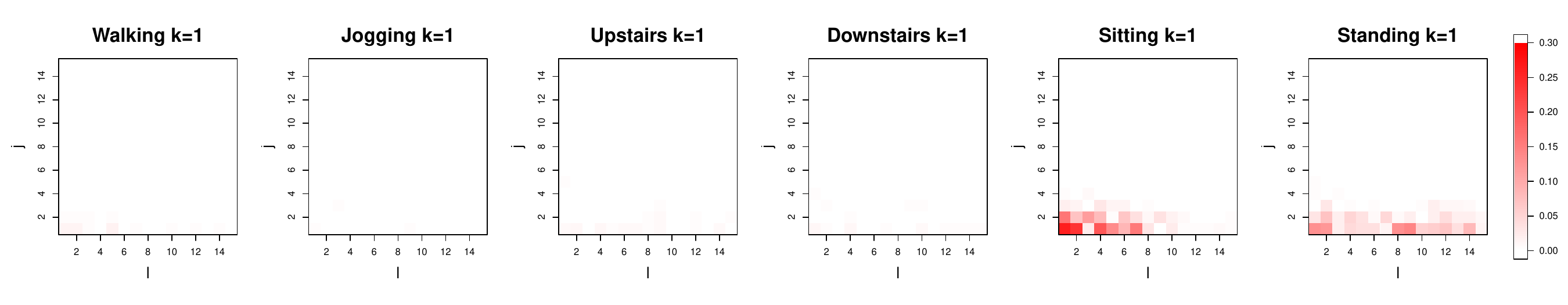}
\includegraphics[width=\linewidth]{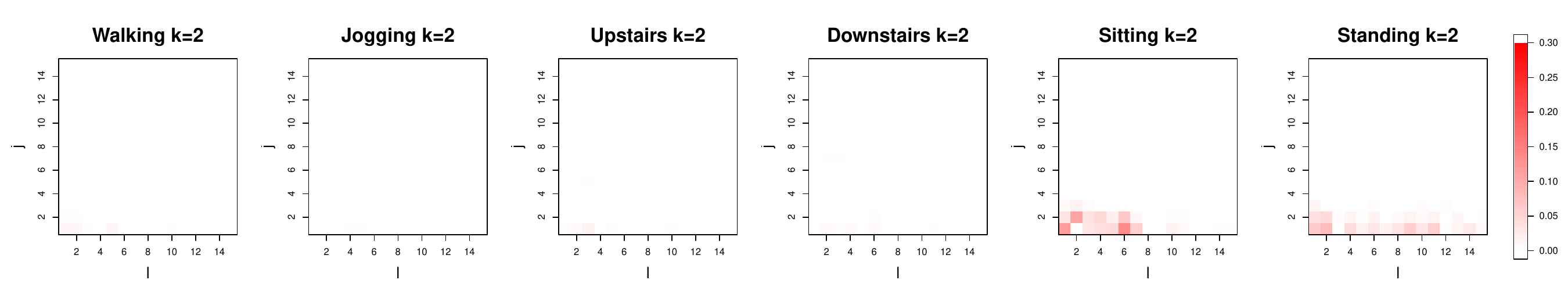}
\caption{Heatmaps of the quantities $s_{jkl}^{(3)}$ for the six activities, $j,l=1,2,\dots,15$, {$k=1,2$}.}
\label{fig-corr-xz-1} 
\end{figure}

\begin{table}[!htb]
    \centering
\begin{tabular}{lcccccc}
\hline
  & Walking & Jogging & Upstairs & Downstairs & Sitting & Standing\\
\hline
$\max s_{jk}^{(1)}$ & 0.0759 & 0.0589 & 0.1370 & 0.0565 & 0.5495 & 0.5054\\
$\max s_{jk}^{(2)}$ & 0.2024 & 0.1046 & 0.2231 & 0.1788 & 0.7947 & 0.3230\\
$\max s_{jkl}^{(3)}$ & 0.0154 & 0.0042 & 0.0172 & 0.0096 & 0.2636 & 0.1413\\
\hline
\end{tabular}
    \caption{Maximum values of $s_{jk}^{(1)}$, $s_{jk}^{(2)}$ for $j,k=1,2,\dots,15$ and $s_{jkl}^{(3)}$ for $j,k,l=1,2,\dots,15$ for the six activities.}
    \label{tab-max}
\end{table}

Our goal is  to test, for each activity, whether the accelerations in the
$x$- and $y$-directions are conditionally independent given the
acceleration in the $z$-direction. 
Note that the sample size $n$ varies across activities, because some
users do not perform all activities in the dataset.
For each activity, we perform the CCCO-based test under each sample size
and time window. The resulting $p$-values are summarized in
Table~\ref{tab:tab-pval-app}, {where panel (a) is obtained from the weighted sum of $\chi \lo {(1)} \hi 2$ distributions, and panel (b) is obtained from the Welch-Satterthwaite approximation. }

\begin{table}[htbp]
    \centering
    \footnotesize
\begin{subtable}{0.48\textwidth}
\centering
\begin{tabular}{lcccc}
\hline
Activity & $n$ & $m=200$ & $m=300$ & $m=400$ \\
\hline
Walking & 36 & 0.1610 & 0.2016 & 0.2221\\
Jogging & 32 & 0.2121 & 0.2435 & 0.2519\\
Upstairs & 32 & 0.1393 & 0.1731 & 0.1974\\
Downstairs & 32 & 0.1951 & 0.2072 & 0.2053\\
Sitting & 23 & 0.0094 & 0.0097 & 0.0102\\
Standing & 24 & 0.0789 & 0.0836 & 0.0796\\
\hline
\end{tabular}
\caption{Results via weighted sum of $\chi^2_{(1)}$ distributions.}
\label{tab:tab-pval-app-raw}
\end{subtable}
\hfill
\begin{subtable}{0.48\textwidth}
\centering
\begin{tabular}{lcccc}
\hline
Activity & $n$ & $m=200$ & $m=300$ & $m=400$ \\
\hline
Walking & 36 & 0.1407 & 0.1977 & 0.2376\\
Jogging & 32 & 0.1702 & 0.2090 & 0.2197\\
Upstairs & 32 & 0.0870 & 0.1285 & 0.1576\\
Downstairs & 32 & 0.1650 & 0.1816 & 0.1740\\
Sitting & 23 & 0.0003 & 0.0003 & 0.0004\\
Standing & 24 & 0.0340 & 0.0395 & 0.0359\\
\hline
\end{tabular}
\caption{Results with Welch-Satterthwaite approximation.}
\label{tab:tab-pval-app-sw}
\end{subtable}
    \caption{P-values for different activities under different time windows using CCCO.}
    \label{tab:tab-pval-app}
\end{table}

{From  Table~\ref{tab:tab-pval-app-raw} we see that, at the significance level $\alpha = 0.05$,}   for all activities except sitting, the accelerations in the $x$- and $y$-directions are conditionally independent given the acceleration in the $z$-direction. For the sitting activity, however, the accelerations in the $x$- and $y$-directions are conditionally dependent given the acceleration in the $z$-direction, regardless of the time window used. {The results from the Welch-Satterthwaite approximation in Table~\ref{tab:tab-pval-app-sw} are slightly different as  conditional dependence is significant for the standing activity as well. This is reasonable since,  with only $n=24$ samples, the Welch-Satterthwaite approximation might not be very accurate, and the p-values are close to $\alpha=0.05$.}

Our statistical results are consistent with the physical analysis (Figure 2) of \cite{kwapisz2010activity}. For the first four activities (walking, jogging, upstairs, and downstairs), there are clear periodic patterns in all three directions, and the oscillations across directions exhibit noticeable synchronization. As a result, the accelerations in the $z$-direction largely capture the relationship between those in the $x$- and $y$-directions, making the conditional dependence between accelerations in the $x$- and $y$-directions given the $z$-direction very weak.

In contrast, for sitting and standing, the accelerations in the $z$-direction are nearly constant, as shown in Figure 2(e)-(f) of \cite{kwapisz2010activity}, and therefore provide little information about the $x$- and $y$-directions. However, Figure 2(e) indicates that during sitting the oscillations in the $x$- and $y$-directions remain highly synchronized, suggesting a strong dependency between them. By comparison, the pattern for standing in Figure 2(f) appears much weaker, indicating a substantially weaker dependence between the two directions.

\clearpage

\appendix

\renewcommand{\thesection}{\Alph{section}}
\renewcommand{\thesubsection}{\thesection.\arabic{subsection}}
\renewcommand{\thesubsubsection}{\thesection.\arabic{subsection}.\arabic{subsubsection}}

\renewcommand{\theequation}{\thesection.\arabic{equation}}
\renewcommand{\thetable}{\thesection.\arabic{table}}
\renewcommand{\thefigure}{\thesection.\arabic{figure}}

\setcounter{equation}{0}
\setcounter{subsection}{0}
\setcounter{subsubsection}{0}
\setcounter{table}{0}
\setcounter{figure}{0}

\section{Illustrative Examples for Assumption \ref{ass-beta}}\label{app-example}

\begin{example}{\em 
In this example we show that if $X \indep Z$ and $Y \indep Z$, then Assumption \ref{ass-beta} is satisfied for any $\beta > 0$.
 Since $Y \indep Z$, we have  $\theta_j \indep \xi_k$ for all $j,k$. Hence $E(\theta_j \xi_k) = E(\theta_j) E(\xi_k) = 0$ for all $j,k$.  So (\ref{eq-yz-beta}) is satisfied.  
 Since $X \indep Z$, we have 
  $\omega_j \indep \xi_k$ for all $j,k$, which implies that $E(\omega_j \xi_k) = E(\omega_j) E(\xi_k) = 0$ and $E(\omega_j \xi_k \xi_ l ) = E(\omega_j) E(\xi_k \xi_ l ) = 0$ for all $j,k, l $. So   \eqref{eq-xddotz-beta} is satisfied. } \eop
\end{example}

\begin{example}\em 
{In this example we show that, if 
 there exists an $n_0 \in \mathbb{N}$ such that 
\begin{align}\label{eq:tail independence}
\{\omega_j : j>n_0\} \indep Z, \quad 
X \indep \{\xi_k : k>n_0\}, \quad \mbox{and} \quad 
Y \indep \{\xi_k : k>n_0\},
\end{align}
and $\sup_{j \in \mathbb{N}}E|\omega_j|^3< \infty$, $\sup_{j \in \mathbb{N}}E|\xi_j|^3 < \infty$, then Assumption \ref{ass-beta} is satisfied for any $\beta > 0$. The first two independence conditions in \eqref{eq:tail independence} mean that the dependence between $X$ and $Z$ only happens in both low-frequency components, and the last condition in \eqref{eq:tail independence} means that $Y$ only depends on the low-frequency components in $Z$. }

{
Since $Y \indep \{\xi_k : k>n_0\}$, we have $E(\theta_j \xi_k) = E(\theta_j) E(\xi_k) = 0$ for all $k>n_0$ and all $j$. Also note that for all $k \le n_0$ and all $j$, we have
$|E(\theta_j \xi_k)| \le \sqrt{E(\theta_j^2)E(\xi_k^2)} = 1$, so the supremum in \eqref{eq-yz-beta} is  
\begin{align*}
\sup_{j,k \in \mathbb{N}} \mu_j^{1/2} \gamma_k^{-1/2-\beta} |E(\theta_j \xi_k)|
=\sup_{j,k \in \mathbb{N},k \le n_0} \mu_j^{1/2} \gamma_k^{-1/2-\beta} |E(\theta_j \xi_k)|
\le \mu_1^{1/2} \gamma_{n_0}^{-1/2-\beta} < \infty.
\end{align*}}

{
Since  $X \indep \{\xi_k : k>n_0\}$, we have $E(\omega_j \xi_k) = E(\omega_j) E(\xi_k) = 0$ for all $k>n_0$ and all $j$. Also note that for all $k \le n_0$ and all $j$, we have
$|E(\omega_j \xi_k)| \le \sqrt{E(\omega_j^2)E(\xi_k^2)} = 1$, so the second supremum in \eqref{eq-xddotz-beta} is
\begin{align*}
\sup_{j,k \in \mathbb{N}} \eta_j^{1/2} \gamma_k^{-1/2-\beta} |E(\omega_j \xi_k)| 
= \sup_{j,k \in \mathbb{N}, k \le n_0} \eta_j^{1/2} \gamma_k^{-1/2-\beta} |E(\omega_j \xi_k)| 
\le \eta_1^{1/2} \gamma_{n_0}^{-1/2-\beta} < \infty.
\end{align*}
Furthermore, since  $\{\omega_j : j>n_0\} \indep Z$,  we have $E(\omega_j \xi_k \xi_ l ) = E(\omega_j) E(\xi_k \xi_ l ) =0$ for all $j>n_0$ and all $k,l$. Also note that for all $j \le n_0$ and all $k,l$, we have
$|E(\omega_j\xi_k\xi_l)| \le \sqrt[3]{E|\omega_j|^3E|\xi_k|^3E|\xi_l|^3} < \infty$, so the first supremum in \eqref{eq-xddotz-beta} is
\begin{align*}
\sup_{j,k,l \in \mathbb{N}} \eta_j^{1/2} \gamma_k^{1/2} \gamma_ l ^{-1/2-\beta} |E(\omega_j \xi_k \xi_ l )|
&= \sup_{j,k,l \in \mathbb{N}, j \le n_0} \eta_j^{1/2} \gamma_k^{1/2} \gamma_ l ^{-1/2-\beta} |E(\omega_j \xi_k \xi_ l )|\\
&\preceq \eta_1^{1/2} \gamma_1^{1/2} \gamma_{n_0}^{-1/2-\beta}  
<\infty, 
\end{align*} 
as desired. \eop }
\end{example}

\section{Technical Proofs}\label{sec-proof}

\subsection{Proof of Corollary \ref{cor-reg-form}}\label{cor-reg-form-proof}

\begin{proof}
    Again, we  will only show the first part of \eqref{eq-regression-form}. Take $f=\tilde{\kappa}_Z(\cdot,z)$ and $g=\kappa_X(x,\cdot)$ in the first equation in \eqref{eq-regression-inner}, and we have
\begin{align}\label{eq-regression-inter}
    \langle B_{X|Z} \tilde{\kappa}_Z(\cdot,z), \kappa_X(x,\cdot) \rangle_{\mathcal{H}_X} = \langle \tilde{\kappa}_Z(\cdot,z), E[\kappa_X(x,X) | Z=\cdot] \rangle_{\mathcal{H}_Z}.
\end{align}
The right-hand side of \eqref{eq-regression-inter} is 
\begin{align*}
\begin{split}
    \langle \tilde{\kappa}_Z(\cdot,z), E[\kappa_X(x,X) | Z=\cdot] \rangle_{\mathcal{H}_Z} 
    & = E[\kappa_X(x,X) | Z=z] - E \{ E[\kappa_X(x,X) | Z] \} \\
    & = E[\kappa_X(x,X) | Z=z] - E[\kappa_X(x,X)] \\
    & = E[\kappa_X(x,X) - \mu_X(x) | Z=z] \\
    & = E[\tilde{\kappa}_X(x,X) | Z=z].
\end{split}
\end{align*}
The left-hand side of \eqref{eq-regression-inter} is
\begin{align*}
    \langle B_{X|Z} \tilde{\kappa}_Z(\cdot,z), \kappa_X(x,\cdot) \rangle_{\mathcal{H}_X}
    = \langle B_{X|Z} \tilde{\kappa}_Z(\cdot,z), \kappa_X(\cdot,x) \rangle_{\mathcal{H}_X} 
    = [B_{X|Z} \tilde{\kappa}_Z(\cdot,z)](x).
\end{align*}
Equating the two sides of \eqref{eq-regression-inter} gives the desired result \eqref{eq-regression-form}. It remains to check that $f=\tilde{\kappa}_Z(\cdot,z) \in \ker(\Sigma_{ZZ})^\perp$. For any $h \in \ker(\Sigma_{ZZ})$, we have $\Sigma_{ZZ} h = 0$, so
\begin{align*}
    0 = \langle h, \Sigma_{ZZ} h \rangle_{\mathcal{H}_Z} = \mathrm{var}[h(Z)],
\end{align*}
which implies that $h$ is a constant function almost surely. Therefore, 
\begin{align*}
    \langle \tilde{\kappa}_Z(\cdot,z), h \rangle_{\mathcal{H}_Z} = h(z) - E[h(Z)] = 0
\end{align*}
almost surely. This implies that $\tilde{\kappa}_Z(\cdot,z) \in \ker(\Sigma_{ZZ})^\perp$.
\end{proof}

\subsection{Proof of Proposition \ref{prop-rep-cond-cov}}\label{prop-rep-cond-cov-proof}

\begin{proof}
We expand the right-hand side of \eqref{eq-rep-cond-cov-op} as follows.
\begin{align}\label{eq:4 terms}
\begin{split}
    &E\left[\left(\tilde{\kappa}_X(\cdot,X)-E\left[\tilde{\kappa}_X(\cdot,X)|Z\right]\right)\otimes\left(\tilde{\kappa}_Y(\cdot,Y)-E\left[\tilde{\kappa}_Y(\cdot,Y)|Z\right]\right)\right]\\
    =&E\left[\tilde{\kappa}_X(\cdot,X)\otimes\tilde{\kappa}_Y(\cdot,Y)\right]
    -E\left[E\left[\tilde{\kappa}_X(\cdot,X)|Z\right]\otimes\tilde{\kappa}_Y(\cdot,Y)\right]\\
    &-E\left[\tilde{\kappa}_X(\cdot,X)\otimes E\left[\tilde{\kappa}_Y(\cdot,Y)|Z\right]\right]
    +E\left[E\left[\tilde{\kappa}_X(\cdot,X)|Z\right]\otimes E\left[\tilde{\kappa}_Y(\cdot,Y)|Z\right]\right]
\end{split}
\end{align}
By \eqref{eq-def-cross-cov-op}, the first term on the right-hand side is $\Sigma \lo {XY}$. We now consider the other three terms separately. By Corollary \ref{cor-reg-form}, we have
\begin{align*}
E\left[\tilde{\kappa}_X(\cdot,X)|Z\right]=B_{X|Z}\tilde{\kappa}_Z(\cdot,Z)=\Sigma_{XZ}\Sigma_{ZZ}^{\dagger},\quad E\left[\tilde{\kappa}_Y(\cdot,Y)|Z\right]=B_{Y|Z}\tilde{\kappa}_Z(\cdot,Z)=\Sigma_{YZ}\Sigma_{ZZ}^{\dagger}.
\end{align*}
Hence the second term on the right-hand side of (\ref{eq:4 terms}) is 
\begin{align*}
\begin{split}
E\left[E\left[\tilde{\kappa}_X(\cdot,X)|Z\right]\otimes\tilde{\kappa}_Y(\cdot,Y)\right]=&E\left[B_{X|Z}\tilde{\kappa}_Z(\cdot,Z)\otimes\tilde{\kappa}_Y(\cdot,Y)\right]
=B_{X|Z}E\left[\tilde{\kappa}_Z(\cdot,Z)\otimes\tilde{\kappa}_Y(\cdot,Y)\right]\\
    =&\Sigma_{XZ}\Sigma_{ZZ}^{\dagger}\Sigma_{ZY}.
\end{split}
\end{align*}
By the law of total expectation, we have 
\begin{align*}
E\left[E\left[\tilde{\kappa}_X(\cdot,X)|Z\right]\otimes\tilde{\kappa}_Y(\cdot,Y)\right]
= & E\left[E\left[\tilde{\kappa}_X(\cdot,X)|Z\right]\otimes E\left[\tilde{\kappa}_Y(\cdot,Y)|Z\right]\right], \\
E\left[ \tilde{\kappa}_X(\cdot,X) \otimes E\left[\tilde{\kappa}_Y(\cdot,Y)|Z\right]\right] 
= & 
E\left[E\left[\tilde{\kappa}_X(\cdot,X)|Z\right]\otimes E\left[\tilde{\kappa}_Y(\cdot,Y)|Z\right]\right]. 
\end{align*}
Consequently, the four terms on the right-hand of (\ref{eq:4 terms}) are simply 
\begin{align*}
\Sigma_{XY}-\Sigma_{XZ}\Sigma_{ZZ}^{\dagger}\Sigma_{ZY}-\Sigma_{XZ}\Sigma_{ZZ}^{\dagger}\Sigma_{ZY}+\Sigma_{XZ}\Sigma_{ZZ}^{\dagger}\Sigma_{ZY} 
    =\Sigma_{XY|Z},
\end{align*}
as desired. 
\end{proof}

\subsection{Proof of Proposition \ref{proposition:equivalence}}\label{proposition:equivalence-proof}

\begin{proof}
{{\em 1.} {If $\mathrm{rank}(\Sigma_{ZZ}) = \infty$,} by \eqref{eq-eigen}, we have
\begin{align*}
\Sigma_{ZZ}^{-1-\beta} = \sum_{k=1}^\infty \gamma_k^{-1-\beta} (\phi_k \otimes \phi_k).
\end{align*}
By the Karhunen-Lo\'eve expansions of $\kappa_Z(\cdot,Z)$ and $\kappa_Y(\cdot,Y)$, 
\begin{align*}
\Sigma_{YZ}
= E[\tilde\kappa_Y(\cdot,Y) \otimes \tilde\kappa_Z(\cdot,Z)]
= \sum_{j=1}^\infty \sum_{k=1}^\infty \sqrt{\mu_j \gamma_k} E(\theta_j \xi_k) (\psi_j \otimes \phi_k).
\end{align*}
Hence, $\Sigma_{YZ} = S_{YZ} \Sigma_{ZZ}^{1+\beta}$ is equivalent to
\begin{align*}
S_{YZ} 
&= \Sigma_{YZ} \Sigma_{ZZ}^{-1-\beta}
= \left[\sum_{j=1}^\infty \sum_{k=1}^\infty \mu_j^{1/2} \gamma_k^{1/2} E(\theta_j \xi_k) (\psi_j \otimes \phi_k) \right]
\left[ \sum_{ l =1}^\infty \gamma_ l ^{-1-\beta} (\phi_ l  \otimes \phi_ l) \right]\\
&= \sum_{j=1}^\infty \sum_{k=1}^\infty \sum_{ l =1}^\infty \mu_j^{1/2} \gamma_k^{1/2} \gamma_ l ^{-1-\beta} E(\theta_j \xi_k) (\psi_j \otimes \phi_k)(\phi_ l  \otimes \phi_ l )\\
&= \sum_{j=1}^\infty \sum_{k=1}^\infty \mu_j^{1/2} \gamma_k^{-1/2-\beta} E(\theta_j \xi_k) (\psi_j \otimes \phi_k).
\end{align*}
Therefore, $S_{YZ}$ is bounded if and only if \eqref{eq-yz-beta} holds.}

{If $\mathrm{rank}(\Sigma_{ZZ}) = K < \infty$, then $\gamma_k=0$ for $k>K$. Let $\{\phi_{K+1},\phi_{K+2},\dots\}$ be a completion of $\{\phi_0,\phi_1,\dots,\phi_K\}$ as an orthonormal basis of $\mathcal{G}_Z$, where $\phi_0 = \mu_Z/\|\mu_Z\|_{\mathcal{G}_Z}$. Suppose that 
\begin{align*}
    S_{YZ} = \sum_{j=1}^\infty \sum_{k=0}^\infty c_{jk} (\psi_j \otimes \phi_k)
\end{align*}
is a bounded operator such that $\Sigma_{YZ} = S_{YZ} \Sigma_{ZZ}^{1+\beta}$. Then, we have
\begin{align*}
\sum_{j=1}^\infty \sum_{k=1}^K \mu_j^{1/2} \gamma_k^{1/2} E(\theta_j \xi_k) (\psi_j \otimes \phi_k)
&= \left[ \sum_{j=1}^\infty \sum_{k=0}^\infty c_{jk} (\psi_j \otimes \phi_k) \right] \left[ \sum_{k=1}^K \gamma_k^{1+\beta} (\phi_k \otimes \phi_k)\right]\\
&= \sum_{j=1}^\infty \sum_{k=0}^\infty \sum_{l=1}^K c_{jk} \gamma_l^{1+\beta} (\psi_j \otimes \phi_k)  (\phi_l \otimes \phi_l)\\
&= \sum_{j=1}^\infty \sum_{k=1}^K c_{jk} \gamma_k^{1+\beta} (\psi_j \otimes \phi_k),
\end{align*}
which indicates that 
\begin{align*}
c_{jk} = \mu_j^{1/2} \gamma_k^{-1/2-\beta} E(\theta_j \xi_k), \qquad \text{for all}\ j=1,2,\dots, \quad k=1,2,\dots,K.
\end{align*}
Thus, the boundedness of $S_{YZ}$ implies $\sup_{j\in \mathbb{N}, k \in \mathbb{N}_0} |c_{jk}| < \infty$ ($\mathbb N \lo 0$ being the set of nonnegative integers), which further implies \eqref{eq-yz-beta}. Conversely, if \eqref{eq-yz-beta} holds, we can set 
\begin{align*}
S_{YZ} = \sum_{j=1}^\infty \sum_{k=1}^K \mu_j^{1/2} \gamma_k^{-1/2-\beta} E(\theta_j \xi_k) (\psi_j \otimes \phi_k),
\end{align*}
which is clearly bounded due to \eqref{eq-yz-beta}, so that
\begin{align*}
S_{YZ} \Sigma_{ZZ}^{1+\beta} 
& = \left[ \sum_{j=1}^\infty \sum_{k=1}^K \mu_j^{1/2} \gamma_k^{-1/2-\beta} E(\theta_j \xi_k) (\psi_j \otimes \phi_k) \right] \left[ \sum_{k=1}^K \gamma_k^{1+\beta} (\phi_k \otimes \phi_k) \right] \\
& = \sum_{j=1}^\infty \sum_{k=1}^K \sum_{l=1}^K \mu_j^{1/2} \gamma_k^{-1/2-\beta} \gamma_l^{1+\beta} E(\theta_j \xi_k) (\psi_j \otimes \phi_k) (\phi_l \otimes \phi_l) \\
& = \sum_{j=1}^\infty \sum_{k=1}^K \mu_j^{1/2} \gamma_k^{1/2} E(\theta_j \xi_k) (\psi_j \otimes \phi_k)  
= \Sigma_{YZ}.
\end{align*}
}

{{\em 2.} {If $\mathrm{rank}(\Sigma_{ZZ}) = \infty$,} by the Karhunen-Lo\'eve expansions of $\kappa_Z(\cdot,Z)$ and $\kappa_X(\cdot,X)$, 
\begin{align*}
\Lambda_{\Ddot{X}Z}
&= E[\tilde\kappa_X(\cdot,X) \kappa_Z(\cdot,Z) \otimes \tilde\kappa_Z(\cdot,Z)]\\
&= E[\tilde\kappa_X(\cdot,X) \tilde\kappa_Z(\cdot,Z) \otimes \tilde\kappa_Z(\cdot,Z)]+ E[\tilde\kappa_X(\cdot,X) \mu_Z \otimes \tilde\kappa_Z(\cdot,Z)]\\
&= \sum_{j=1}^\infty \sum_{k=1}^\infty \sum_{ l =1}^\infty \sqrt{\eta_j \gamma_k \gamma_ l } E(\omega_j \xi_k \xi_ l ) [(\chi_j \phi_k) \otimes \phi_ l ]
+\sum_{j=1}^\infty \sum_{k=1}^\infty \sqrt{\eta_j \gamma_k} E(\omega_j \xi_k) [(\chi_j \mu_Z) \otimes \phi_k].
\end{align*}
Hence, $\Lambda_{\Ddot{X}Z} = S_{\Ddot{X}Z} \Sigma_{ZZ}^{1+\beta}$ is equivalent to
\begin{align*}
S_{\Ddot{X}Z} 
&= \Lambda_{\Ddot{X}Z} \Sigma_{ZZ}^{-1-\beta}\\
&= \left[\sum_{j=1}^\infty \sum_{k=1}^\infty \sum_{ l =1}^\infty \eta_j^{1/2} \gamma_k^{1/2} \gamma_ l ^{1/2} E(\omega_j \xi_k \xi_ l ) [( \chi_j \phi_k) \otimes \phi_ l]  \right]
\left[ \sum_{r=1}^\infty \gamma_r^{-1-\beta} (\phi_r \otimes \phi_r)\right]\\
&\quad + \left[\sum_{j=1}^\infty \sum_{k=1}^\infty \eta_j^{1/2} \gamma_k^{1/2} E(\omega_j \xi_k) [(\chi_j \mu_Z) \otimes \phi_k] \right]
\left[ \sum_{ l =1}^\infty \gamma_ l ^{-1-\beta} (\phi_ l  \otimes \phi_ l) \right]\\
&= \sum_{j=1}^\infty \sum_{k=1}^\infty \sum_{ l =1}^\infty \sum_{r=1}^\infty \eta_j^{1/2} \gamma_k^{1/2} \gamma_ l ^{1/2} \gamma_r^{-1-\beta} E(\omega_j \xi_k \xi_ l ) [(\chi_j \phi_k) \otimes \phi_ l ] (\phi_r \otimes \phi_r)\\
&\quad + \sum_{j=1}^\infty \sum_{k=1}^\infty \sum_{ l =1}^\infty \eta_j^{1/2} \gamma_k^{1/2} \gamma_ l ^{-1-\beta} E(\omega_j \xi_k) [(\chi_j \mu_Z) \otimes \phi_k] (\phi_ l  \otimes \phi_ l )\\
&= \sum_{j=1}^\infty \sum_{k=1}^\infty \sum_{ l =1}^\infty  \eta_j^{1/2} \gamma_k^{1/2} \gamma_ l ^{-1/2-\beta} E(\omega_j \xi_k \xi_ l ) [(\chi_j \phi_k) \otimes \phi_ l ]\\
&\quad + \sum_{j=1}^\infty \sum_{k=1}^\infty \eta_j^{1/2} \gamma_k^{-1/2-\beta} E(\omega_j \xi_k) [(\chi_j \mu_Z) \otimes \phi_k].
\end{align*}
Therefore, $S_{\Ddot{X}Z}$ is bounded if and only if \eqref{eq-xddotz-beta} holds.}

{If $\mathrm{rank}(\Sigma_{ZZ}) = K < \infty$, then $\gamma_k=0$ for $k>K$. Similarly, let $\{\phi_{K+1},\phi_{K+2},\dots\}$ be a completion of $\{\phi_0,\phi_1,\dots,\phi_K\}$ as an orthonormal basis of $\mathcal{G}_Z$, where $\phi_0 = \mu_Z/\|\mu_Z\|_{\mathcal{G}_Z}$. Suppose that 
\begin{align*}
    S_{\Ddot{X}Z} = \sum_{j=1}^\infty \sum_{k=0}^\infty \sum_{l=0}^\infty c_{jkl} [(\chi_j \phi_k) \otimes \phi_l]
\end{align*}
is a bounded operator such that $\Lambda_{\Ddot{X}Z} = S_{\Ddot{X}Z} \Sigma_{ZZ}^{1+\beta}$. Then, we have
\begin{align*}
&\sum_{j=1}^\infty \sum_{k=1}^K \sum_{ l =1}^K \eta_j^{1/2} \gamma_k^{1/2} \gamma_l^{1/2}  E(\omega_j \xi_k \xi_ l ) [(\chi_j \phi_k) \otimes \phi_ l ]
+\sum_{j=1}^\infty \sum_{k=1}^K \eta_j^{1/2} \gamma_k^{1/2} E(\omega_j \xi_k) [(\chi_j \mu_Z) \otimes \phi_k]\\
&= \left[ \sum_{j=1}^\infty \sum_{k=0}^\infty \sum_{l=0}^\infty c_{jkl} [(\chi_j \phi_k) \otimes \phi_l]\right] \left[ \sum_{k=1}^K \gamma_k^{1+\beta} (\phi_k \otimes \phi_k) \right] \\
&= \sum_{j=1}^\infty \sum_{k=0}^\infty \sum_{l=0}^\infty \sum_{r=1}^K c_{jkl} \gamma_r^{1+\beta} [(\chi_j \phi_k) \otimes \phi_l]  (\phi_r \otimes \phi_r) \\
&= \sum_{j=1}^\infty \sum_{k=0}^\infty \sum_{l=1}^K c_{jkl} \gamma_l^{1+\beta} [(\chi_j \phi_k) \otimes \phi_l] \\
&= \sum_{j=1}^\infty \sum_{k=1}^\infty \sum_{l=1}^K c_{jkl} \gamma_l^{1+\beta} [(\chi_j \phi_k) \otimes \phi_l]
+ \sum_{j=1}^\infty \sum_{k=1}^K c_{j0k} \|\mu_Z\|_{\mathcal G_Z}^{-1} \gamma_l^{1+\beta} [(\chi_j \mu_Z) \otimes \phi_k]
\end{align*}
which indicates that
\begin{align*}
& c_{jkl} = \eta_j^{1/2} \gamma_k^{1/2} \gamma_l^{-1/2-\beta}  E(\omega_j \xi_k \xi_ l ), \qquad \text{for all} \ j=1,2,\dots, \quad k,l=1,2,\dots,K, \quad \text{and}\\
& c_{j0k} = \eta_j^{1/2} \gamma_k^{-1/2-\beta} \|\mu_Z\|_{\mathcal G_Z} E(\omega_j \xi_k), \qquad \text{for all} \ j=1,2,\dots, \quad k=1,2,\dots,K.
\end{align*}
Thus, the boundedness of $S_{\Ddot{X}Z}$ implies $\sup_{j\in \mathbb{N}, k,l \in \mathbb{N}_0} |c_{jkl}| < \infty$, which further implies \eqref{eq-xddotz-beta}. Conversely, if \eqref{eq-xddotz-beta} holds, we can set 
\begin{align*}
S_{\Ddot{X}Z} 
&= \sum_{j=1}^\infty \sum_{k=1}^K \sum_{l=1}^K  \eta_j^{1/2} \gamma_k^{1/2} \gamma_ l ^{-1/2-\beta} E(\omega_j \xi_k \xi_ l ) [(\chi_j \phi_k) \otimes \phi_ l ] \\
& \quad + \sum_{j=1}^\infty \sum_{k=1}^K \eta_j^{1/2} \gamma_k^{-1/2-\beta} E(\omega_j \xi_k) [(\chi_j \mu_Z) \otimes \phi_k],
\end{align*}
which is clearly bounded due to \eqref{eq-xddotz-beta}, so that
\begin{align*}
S_{\Ddot{X}Z} \Sigma_{ZZ}^{1+\beta} 
& = \left[ \sum_{j=1}^\infty \sum_{k=1}^K \sum_{l=1}^K  \eta_j^{1/2} \gamma_k^{1/2} \gamma_ l ^{-1/2-\beta} E(\omega_j \xi_k \xi_ l ) [(\chi_j \phi_k) \otimes \phi_ l ]  \right] \left[ \sum_{k=1}^K \gamma_k^{1+\beta} (\phi_k \otimes \phi_k) \right] \\
& \quad + \left[ \sum_{j=1}^\infty \sum_{k=1}^K \eta_j^{1/2} \gamma_k^{-1/2-\beta} E(\omega_j \xi_k) [(\chi_j \mu_Z) \otimes \phi_k]  \right] \left[ \sum_{k=1}^K \gamma_k^{1+\beta} (\phi_k \otimes \phi_k) \right] \\
& = \sum_{j=1}^\infty \sum_{k=1}^K \sum_{l=1}^K  \sum_{r=1}^K \eta_j^{1/2} \gamma_k^{1/2} \gamma_ l ^{-1/2-\beta} \gamma_r^{1+\beta} E(\omega_j \xi_k \xi_ l ) [(\chi_j \phi_k) \otimes \phi_ l ]    (\phi_r \otimes \phi_r)  \\
& \quad + \sum_{j=1}^\infty \sum_{k=1}^K \sum_{l=1}^K \eta_j^{1/2} \gamma_k^{-1/2-\beta} \gamma_l^{1+\beta} E(\omega_j \xi_k) [(\chi_j \mu_Z) \otimes \phi_k]   (\phi_l \otimes \phi_l) \\
& = \sum_{j=1}^\infty \sum_{k=1}^K \sum_{l=1}^K  \eta_j^{1/2} \gamma_k^{1/2} \gamma_ l ^{1/2} E(\omega_j \xi_k \xi_ l ) [(\chi_j \phi_k) \otimes \phi_ l ]  \\
& \quad + \sum_{j=1}^\infty \sum_{k=1}^K\eta_j^{1/2} \gamma_k^{1/2}  E(\omega_j \xi_k) [(\chi_j \mu_Z) \otimes \phi_k]   \\
& = \Lambda_{\Ddot{X}Z}.
\end{align*}}
\end{proof}

\subsection{Proof of Lemma \ref{lem-alpha}}\label{lem-alpha-proof}

\begin{proof}
{Under Assumption \ref{ass-alpha}, there exists a constant $C_\alpha$ such that $\gamma_j \le C_\alpha j^{-\alpha}$. Let
$J_\epsilon=\left\lceil (C_\alpha/\epsilon_n)^{1/\alpha}\right\rceil$.
Then
\begin{align}\label{eq:two terms}
\sum_{j=1}^{\infty}
\frac{\gamma_j}{(\gamma_j+\epsilon_n)^2}
=
\sum_{j\le J_\epsilon}
\frac{\gamma_j}{(\gamma_j+\epsilon_n)^2}
+
\sum_{j>J_\epsilon}
\frac{\gamma_j}{(\gamma_j+\epsilon_n)^2}.
\end{align}
For the first term on the right-hand side of (\ref{eq:two terms}), we have
\begin{align*}
\sum_{j\le J_\epsilon}
\frac{\gamma_j}{(\gamma_j+\epsilon_n)^2}
\le
\sum_{j\le J_\epsilon}\frac{1}{\epsilon_n}
=
\frac{J_\epsilon}{\epsilon_n}
=
O\left(\epsilon_n^{-1-1/\alpha}\right)
=
O\left(\epsilon_n^{-(\alpha+1)/\alpha}\right).
\end{align*}
For the second term on the right-hand side of (\ref{eq:two terms}), since
\begin{align*}
\sum_{j>J_\epsilon}j^{-\alpha}
\le
\int_{J_\epsilon}^\infty x^{-\alpha} dx 
=
\frac{J_\epsilon^{1-\alpha}}{\alpha-1}
=
O\left( J_\epsilon^{1-\alpha}\right),
\end{align*}
we have
\begin{align*}
\sum_{j>J_\epsilon}
\frac{\gamma_j}{(\gamma_j+\epsilon_n)^2}
\le
\epsilon_n^{-2}
\sum_{j>J_\epsilon}\gamma_j
\le
C_\alpha\epsilon_n^{-2}
\sum_{j>J_\epsilon}j^{-\alpha}
=
O\left(\epsilon_n^{-2}J_\epsilon^{1-\alpha}\right)
=
O\left(\epsilon_n^{-1-1/\alpha}\right)
=
O\left(\epsilon_n^{-(\alpha+1)/\alpha}\right).
\end{align*}
Combining the two terms above gives the {first} rate.}

{For the second summation, we apply the same decomposition
\begin{align}\label{eq:two terms-2}
\sum_{j=1}^{\infty}
\frac{\gamma_j^2}{(\gamma_j+\epsilon_n)^2}
=
\sum_{j\le J_\epsilon}
\frac{\gamma_j^2}{(\gamma_j+\epsilon_n)^2}
+
\sum_{j>J_\epsilon}
\frac{\gamma_j^2}{(\gamma_j+\epsilon_n)^2}.
\end{align}
For the first term on the right-hand side of (\ref{eq:two terms-2}), we have
\begin{align*}
\sum_{j\le J_\epsilon}
\frac{\gamma_j^2}{(\gamma_j+\epsilon_n)^2}
\le
\sum_{j\le J_\epsilon}1
=
J_\epsilon
=
O\left(\epsilon_n^{-1/\alpha}\right).
\end{align*}
For the second term on the right-hand side of (\ref{eq:two terms-2}), since
\begin{align*}
\sum_{j>J_\epsilon}j^{-2\alpha}
\le
\int_{J_\epsilon}^\infty x^{-2\alpha} dx 
=
\frac{J_\epsilon^{1-2\alpha}}{2\alpha-1}
=
O\left( J_\epsilon^{1-2\alpha}\right),
\end{align*}
we have
\begin{align*}
\sum_{j>J_\epsilon}
\frac{\gamma_j^2}{(\gamma_j+\epsilon_n)^2}
\le
\epsilon_n^{-2}
\sum_{j>J_\epsilon}\gamma_j^2
\le
C_\alpha^2\epsilon_n^{-2}
\sum_{j>J_\epsilon}j^{-2\alpha}
=
O\left(\epsilon_n^{-2}J_\epsilon^{1-2\alpha}\right)
=
O\left(\epsilon_n^{-1/\alpha}\right).
\end{align*}
Combining the two terms above gives the second rate.}
\end{proof}

\subsection{Proof of Proposition \ref{prop-brownian}}\label{prop-brownian-proof}

\begin{proof}
{
For a Brownian bridge $Z$ on $[0,1]$, the covariance function is
$c(s,t) = \min(s,t) - st$ for all $s,t \in [0,1]$, and
its Karhunen-Lo\`eve expansion can be written as
\begin{align}\label{eq:kl-expansion-bb}
Z(t)
=
\sum_{m=1}^{\infty}
\sqrt{\rho_m}Z_m e_m(t),
\end{align}
where
\begin{align*}
\rho_m=\frac{1}{\pi^2m^2},
\quad
e_m(t)=\sqrt{2}\sin(m\pi t),
\quad m=1,2,\dots,
\end{align*}
and $Z_1,Z_2,\dots$ are i.i.d. standard normal random variables. See, for example, \cite{deheuvels2006karhunen,chigansky2020eigen} for detailed discussions of the Karhunen-Lo\`eve expansions of the Brownian bridge. 
The rest of the proof will be based on the map of $Z$ to its Karhunen-Lo\`eve coefficient sequence $\tilde Z = (Z_1,Z_2,\dots)$.} 

{\paragraph{Step 1: Construction of space of $\tilde Z$ and its relationship to $\mathcal H_Z$.} 
We first define the space where the Karhunen-Lo\`eve coefficient sequence $\tilde{Z}$ resides.
Note that $\{e_m: m=1,2,\dots\}$ forms a complete orthonormal basis of $L^2[0,1]$ (see, for example, Section 5.4 of \cite{vretblad2003fourier}). 
Let $\mathcal H_Z = L^2[0,1]$ and define another Hilbert space
\begin{align*}
\tilde{\mathcal H}_Z = \left\{ \tilde{z} = (z_1,z_2,\dots) \in \mathbb{R}^\infty: \sum_{m=1}^\infty \rho_m z_m^2 < \infty \right\},
\end{align*}
where the inner product in $\tilde{\mathcal H}_Z$ is defined as
\begin{align*}
\langle \tilde{z}, \tilde{z}' \rangle_{\tilde{\mathcal H}_Z} = \sum_{m=1}^\infty \rho_m z_m z_m', \quad \text{for all}\ \tilde{z}=(z_1,z_2,\dots), \ \tilde{z}'=(z_1',z_2',\dots) \in \tilde{\mathcal H}_Z.
\end{align*}
Define $T: \mathcal H_Z \to \tilde{\mathcal H}_Z$ by
$T z = (T \lo 1 z, T \lo 2 z, \ldots )$, where $T \lo m : \ca H \lo Z \to \real$ is the mapping 
\begin{align*}
  T_m z = \frac{\langle z, e_m \rangle_{\mathcal H_Z}}{\sqrt{\rho_m}}.
\end{align*}
Clearly, $T$ is linear and bijective, and 
\begin{align*}
\langle z, z' \rangle _{\mathcal H_Z} 
=
\sum_{m=1}^{\infty}
\rho_m (T_mz)(T_mz')
= \langle Tz, T z' \rangle _{\tilde{\mathcal H}_Z}.
\end{align*}
Therefore, $\mathcal H_Z$ and $\tilde{\mathcal H}_Z$ are isomorphic, and $T$ is an isomorphism. }

{\paragraph{Step 2:  Constructing RKHS's  on $\ca H \lo Z$ and  $\tilde H_Z$ and developing their relation.}
Define another positive-definite kernel function $\kappa: \tilde H_Z \times \tilde H_Z \to \mathbb{R}$ by
\begin{align*}
\kappa(\tilde z, \tilde z') = \exp \{ -\eta \|\tilde z-\tilde z'\|_{\tilde{\mathcal H}_Z}^2\}, \quad \text{for all} \ \tilde z, \tilde z' \in \tilde{\mathcal H}_Z.
\end{align*}
Then, we have
\begin{align*}
\kappa_Z(z,z') 
= \exp \{ -\eta \|z-z'\|_{\mathcal H_Z}^2\}
= \exp \{ -\eta \|Tz-Tz'\|_{\tilde{\mathcal H}_Z}^2\}
= \kappa(Tz,Tz').
\end{align*}
Let $\ca G \lo Z$ be the RKHS of real-valued functions on $\ca H \lo Z$ generated by $\ka_Z (z, z')$ and, similarly, let $\tilde{\mathcal G}_Z$ be the RKHS   on $\tilde{H}_Z$ generated  $\kappa (\tilde z, \tilde z')$. Let $U:\tilde{\mathcal G}_Z \to \mathcal G_Z$ be the map defined by $U (\tilde f)= \tilde f \circ T$.
Clearly, $U$ is linear and bijective. Furthermore, for all $z \in \mathcal H_Z$, we have
\begin{align*}
    U \kappa(\cdot,Tz) = \kappa (T\cdot, Tz) = \kappa_Z(\cdot, z).
\end{align*}
Therefore, for all $z,z' \in \mathcal H_Z$, using the reproducing properties of $\mathcal G_Z$ and $\tilde{\mathcal G}_Z$, we have
\begin{align*}
\langle U \kappa(\cdot,Tz), U \kappa(\cdot,Tz') \rangle_{\mathcal G_Z}
&= \langle \kappa_Z(\cdot,z), \kappa_Z(\cdot,z') \rangle_{\mathcal G_Z}
= \kappa_Z(z,z')\\
&= \kappa(Tz,Tz')
= \langle  \kappa(\cdot,Tz),  \kappa(\cdot,Tz') \rangle_{\tilde{\mathcal G}_Z}.
\end{align*}
Based on the definitions of $\mathcal G_Z$ and $\tilde{\mathcal G}_Z$, we know that for all $\tilde f,\tilde f' \in \tilde{\mathcal G}_Z$, we have
\begin{align*}
\langle U \tilde f, U \tilde f' \rangle_{\mathcal G_Z}
= \langle  \tilde f,  \tilde f' \rangle_{\tilde{\mathcal G}_Z}.
\end{align*}
Therefore, $\mathcal G_Z$ and $\tilde{\mathcal G}_Z$ are isomorphic, and $U$ is an isomorphism. }

{\paragraph{Step 3: Second moment operators in two RKHS's.}
Let $\tilde{M}_{ZZ}:\tilde{\mathcal G}_Z \to \tilde{\mathcal G}_Z$ and $M_{ZZ}:\mathcal G_Z \to \mathcal G_Z$ be the second moment operators 
\begin{align*}
\tilde{M}_{ZZ} = E_{\tilde{Z} \sim \mu}[\kappa(\cdot,\tilde{Z}) \otimes \kappa(\cdot, \tilde{Z})],\qquad
M_{ZZ} = E_{Z \sim P_Z}[\kappa_Z(\cdot,Z) \otimes \kappa_Z(\cdot, Z)],
\end{align*}
where $\mu = P \circ \tilde Z\inv$ is the distribution of a sequence of i.i.d. standard normal variables. Note that $Z \sim P_Z$ (i.e., $Z$ is a Brownian bridge) if and only if $TZ \sim \mu$ \citep{deheuvels2006karhunen,chigansky2020eigen}. }

{By construction, we know that, for all $f \in \mathcal{G}_Z$ and $\tilde f \in \tilde{\mathcal{G}}_Z$, we have
\begin{align*}
(M_{ZZ} f)(\cdot) 
= E_{Z\sim P_Z}[\kappa_Z(\cdot,Z) f(Z)],\qquad
(\tilde M_{ZZ} \tilde f) (\cdot) 
= E_{\tilde Z\sim \mu}[\kappa(\cdot,\tilde Z) \tilde f(\tilde Z)].
\end{align*}
Thus, for all $\tilde f \in \tilde{\mathcal G}_Z$, for all $z \in \mathcal H_Z$, we have
\begin{align*}
(M_{ZZ}U\tilde f)(z)
&= E_{Z\sim P_Z}[\kappa_Z(z,Z) (U\tilde f)(Z)]
= E_{Z\sim P_Z}[\kappa(Tz,TZ) \tilde f(TZ)]\\
&= E_{\tilde Z\sim \mu}[\kappa(Tz,\tilde Z) \tilde f(\tilde Z)]
= (\tilde M_{ZZ} \tilde f)(Tz) 
= (U \tilde M_{ZZ} \tilde f)(z).
\end{align*}
Therefore, we have $\tilde M_{ZZ} = U^{-1} M_{ZZ} U$, and since $U$ is a Hilbert space homomorphism, by Theorem 8.3 of \cite{conway2019course}, $\tilde M_{ZZ}$ and $M_{ZZ}$ have the same eigenvalues. 
Hence, to find the properties of eigenvalues of $M_{ZZ}$, it suffices to study the eigenvalues of $\tilde M_{ZZ}$. }

{\paragraph{Step 4: Mercer decomposition of $\kappa(\cdot,\cdot)$ under $L^2(\mu \times \mu)$.}
Since, for all $\tilde f \in \tilde{\mathcal G}_Z$, we have
\begin{align*}
\tilde M_{ZZ} \tilde f (\cdot) = \int \kappa(\cdot, \tilde{z}') \tilde f(\tilde{z}') d \mu(\tilde{z}'),
\end{align*}
we can view $\tilde M_{ZZ}$ as the integral operator on $\tilde{\mathcal G}_Z$ with $\kappa(\cdot, \cdot)$ as its kernel function. By Mercer's Theorem \citep{steinwart2012mercers}, to find the eigenvalues and eigenfunctions of $\tilde M_{ZZ}$, it suffices to find the Mercer decomposition of $\kappa(\cdot, \cdot)$.}

{Clearly, for all $\tilde{z} = (z_1,z_2,\dots) \in \tilde{\mathcal H}_Z$ and $\tilde{z}' = (z_1',z_2',\dots) \in \tilde{\mathcal H}_Z$, we have
\begin{align*}
\kappa(\tilde z, \tilde z') 
= \exp \{ -\eta \|\tilde z-\tilde z'\|_{\tilde{\mathcal H}_Z}^2\}
= \exp \left\{ -\eta \sum_{m=1}^\infty \rho_m (z_m - z_m')^2\right\}
= \prod_{m=1}^{\infty}
\kappa_m (z_m, z_m')
\end{align*}
where
\begin{align*}
\kappa_m(z_m,z_m') = \exp\{-\eta\rho_m(z_m-z_m')^2\}, \quad m=1,2,\dots.
\end{align*}
Since $\sum_{m=1}^{\infty}\rho_m<\infty$, the product above is well-defined
almost surely $\mu \times \mu$.
Let $\mathcal H_m$ be the RKHS with reproducing kernel $\kappa_m$.
Then, equation (2.4) of \cite{karvonen2019gaussian}
gives the Mercer decomposition of the Gaussian kernel as
\begin{align}\label{eq:kappam-decomp}
\kappa_m(z_m,z_m')=\sum_{j=1}^\infty \lambda_{m,j} \psi_{m,j}(z_m) \psi_{m,j}(z_m') 
\quad \text{in}\quad L^2(\Phi \times \Phi).
\end{align}
where $\Phi$ is the standard normal distributions and the eigenvalues have the form
\begin{align*}
\lambda_{m,j}=a_m b_m^j,
\qquad j=0,1,\ldots,
\end{align*}
with,
\begin{align}\label{eq:am-bm}
a_m=
\left(
\frac{1}{2(1+8\eta\rho_m)^{1/2}-1+2\eta\rho_m}
\right)^{1/2},
\qquad
b_m=
\frac{2\eta\rho_m}{2(1+8\eta\rho_m)^{1/2}-1+2\eta\rho_m}, 
\end{align}
The forms of the eigenfunctions $\psi_{m,j}$'s are omitted, and can be written in terms of the Hermite polynomials according to equation (2.5) of \cite{karvonen2019gaussian}.}

{Next, for any $M \ge 1$, let  
\begin{align}\label{eq:kappam-prod}
\kappa^{(M)}(\tilde{z},\tilde{z}')=\prod_{m=1}^{M}\kappa_m(z_m,z_m'). 
\end{align}
By \eqref{eq:kappam-decomp} and \eqref{eq:kappam-prod},
 for each $M$, we have
\begin{align*}
\kappa^{(M)}(\tilde{z},\tilde{z}')
=
\sum_{\nu_1,\ldots,\nu_M\ge0}
\left(\prod_{m=1}^{M}\lambda_{m,\nu_m}\right)
\prod_{m=1}^{M}
\psi_{m,\nu_m}(z_m)\psi_{m,\nu_m}(z_m'),
\end{align*}
with convergence in $L^2(\mu_M\times\mu_M)$ where $\mu_M= \Phi \times \dots \times \Phi$ is the $m$-fold product measure. 
Moreover, since $0\le \kappa^{(M)}\le1$ and
$\kappa^{(M)}(\tilde{z},\tilde{z}')\to \kappa(\tilde{z},\tilde{z}')$ almost surely $\mu \times \mu$, using dominated convergence theorem, we have
\begin{align*}
\|\kappa^{(M)}-\kappa\|_{L^2(\mu\otimes\mu)}\to0.
\end{align*}
Consequently, the Mercer's decomposition of $\kappa(\tilde{z},\tilde{z}')$ is
\begin{align}\label{eq:kappa-decomp}
\kappa(\tilde{z},\tilde{z}')
=
\sum_{\nu\in\mathcal I}
\lambda_\nu \hi *
 \Psi_\nu \hi * (\tilde{z}) \Psi_\nu \hi *(\tilde{z}')
\quad\text{in }L^2(\mu\times\mu),
\end{align}
where $\ca I$ is the collection of all sequences,  whose entries are nonnegative integers, and with at most  finitely many nonzero entries, 
and
\begin{align*}
\lambda_\nu \hi * =\prod_{m=1}^{\infty}\lambda_{m,\nu_m}, \qquad
\Psi_\nu \hi * (\tilde{z})=\prod_{m=1}^{\infty}\psi_{m,\nu_m}(z_m).
\end{align*}
Since $\{\psi_{m,j}: j=1,2,\dots\}$ is an orthonormal basis of $L^2(\Phi)$, we know that $\{\Psi_\nu \hi *: \nu \in \mathcal I\}$ is an orthonormal basis of $L^2(\mu)$. Therefore, $\lambda_\nu$ for $\nu \in \mathcal I$ are all the eigenvalues of $\tilde{M}_{ZZ}$. }

{\paragraph{Step 5: Order of magnitude of the eigenvalues of $\tilde M_{ZZ}$.}
We first analyze the order of magnitude of the quantities $a \lo m$ and $b \lo m$ in \eqref{eq:am-bm}.
By
\begin{align*}
(1+8\eta\rho_m)^{1/2} \le 1 + 4\eta\rho_m,
\end{align*}
we have
\begin{align*}
a_m\ge
\left(
\frac{1}{1+10\eta\rho_m}
\right)^{1/2}.
\end{align*}
By
\begin{align*}
    \sum_{m=1}^{\infty}\rho_m = \frac{1}{\pi^2} \sum_{m=1}^{\infty} \frac{1}{m^2}<\infty,
\end{align*} 
we have
\begin{align*}
\sum_{m=1}^{\infty}|\log a_m|
\le \frac{1}{2} \sum_{m=1}^{\infty} \log (1+10\eta\rho_m)
\le 5 \eta \sum_{m=1}^{\infty} \rho_m <\infty.
\end{align*}
Hence,
\begin{align*}
0<\prod_{m=1}^{\infty}a_m<\infty.
\end{align*}
Also, it is obvious that $2(1+8\eta\rho_m)^{1/2}-1+2\eta\rho_m \ge 1$, so
\begin{align*}
b_m \le 2 \eta \rho_m.
\end{align*}
Applying this, and invoking $\eta \le \pi^2/8$, we have
\begin{align}\label{eq-lambda-m-j}
\lambda_{m,j} \le a_m (2\eta \rho_m)^j 
= a_m (2\eta/\pi^2)^j m^{-2j}
\le a_m (m+1)^{-2j},
\end{align}
and consequently, for $\nu \in \mathcal I$,  
\begin{align*}
\lambda_\nu \hi * \le C \prod_{m=1}^{\infty}(m+1)^{-2\nu_m}
\end{align*}
where $C = \prod \lo {m=1} \hi \infty a \lo m$. }

{Fix an $s>1/2$. Then
\begin{align*}
\sum_{\nu \in \mathcal I}(\lambda_\nu \hi *)^s
\le
C^s
\sum_{\nu \in \mathcal I}
\prod_{m=1}^{\infty}(m+1)^{-2s\nu_m}.
\end{align*}
Note that, if $\nu \in \mathbb{N}_0^\infty \setminus \mathcal I$, then for any $m_0>0$, there exists $r>m_0$ such that $\nu_r \ge 1$,  which implies that the product $\prod_{m=1}^\infty (m+1)^{-2s\nu_m} \le (r+1)^{-2s \nu_r} \le m_0^{-2s}$. By taking   $\lim \lo {m \lo 0 \to \infty}$ on the right, we see that $\prod_{m=1}^\infty (m+1)^{-2s\nu_m}=0$. Therefore,
\begin{align*}
\sum_{\nu \in \mathcal I}
\prod_{m=1}^{\infty}(m+1)^{-2s\nu_m}
&=
\sum_{\nu \in \mathbb{N}_0^\infty}
\prod_{m=1}^{\infty}(m+1)^{-2s\nu_m}\\
&=
\prod_{m=1}^{\infty}
\left\{
\sum_{r=0}^{\infty}(m+1)^{-2sr}
\right\}
=
\prod_{m=1}^{\infty}
\frac{1}{1-(m+1)^{-2s}}.
\end{align*}
The product on the right-hand side is finite because, when $s>1/2$, we have $\sum_{m=1}^{\infty}(m+1)^{-2s}<\infty$ and $0<(m+1)^{-2sr}<1$ for all $m$ and $r$, so by Corollary 42.5 of \cite{agarwal2011introduction}, $\prod_{m=1}^{\infty}[1-(m+1)^{-2s}]>0$. Thus
\begin{align}\label{eq-lambda-nu-star-summation}
\sum_{\nu \in \mathcal I}(\lambda_\nu \hi *)^s<\infty.
\end{align}
Let $\lambda_1\ge\lambda_2\ge\cdots$ be the decreasing rearrangement of
$\{\lambda_\nu \hi * : \nu  \in \mathcal I\}$.  Then
\begin{align*}
j\lambda_j^s
\le
\sum_{ l =1}^{j}\lambda_ l ^s
\le
\sum_{ l =1}^{\infty}\lambda_ l ^s
<\infty,
\end{align*}
so
\begin{align*}
\lambda_j\le C_{1/s} j^{-1/s},
\qquad \text{where} \
C_{1/s} = \left( \sum_{ l =1}^{\infty}\lambda_ l ^s \right)^{1/s}.
\end{align*}
Since $s>1/2$ is arbitrary, this gives
\begin{align*}
\lambda_j \le C_\alpha j^{-\alpha}
\qquad
\text{for every }1<\alpha<2,
\qquad j=1,2,\dots.
\end{align*}}

{\paragraph{Step 6: Relationship between $\gamma_j$ and $\lambda_j$.}
By Corollary 12.3 of \cite{schmudgen2012unbounded}, since $\Sigma_{ZZ} \preceq M_{ZZ}$ in the operator sense (i.e., $\langle f, \Sigma_{ZZ}f \rangle_{\mathcal H_Z} \le \langle f, M_{ZZ}f \rangle_{\mathcal H_Z}$ for all $f \in \mathcal H_Z$), for all $j=1,2,\dots$, we have $\gamma_j \le \lambda_j$, where $\gamma_j$, $j=1,2,\dots$, are eigenvalues of $\Sigma_{ZZ}$. Thus, we have
\begin{align*}
\gamma_j \le \lambda_j \le C_\alpha j^{-\alpha}
\qquad
\text{for every }1<\alpha<2,
\qquad j=1,2,\dots, 
\end{align*}
ad desired.  }

{
\paragraph{General case.}
Regarding the general case where $\rho_m \preceq m^{-r}$ for some $r>1$, we only need to make some changes in Step 5. Specifically, the exponent $2$ should be replaced by $r$ starting from \eqref{eq-lambda-m-j}. 
For example, \eqref{eq-lambda-m-j} becomes
\begin{align*}
\lambda_{m,j} \le a_m (2\eta \rho_m)^j 
= a_m (2\eta/\pi^2)^j m^{-rj}
\le a_m (m+1)^{-rj}.
\end{align*}
In the subsequent analysis, the condition $s>1/2$ can be replaced by $s>1/r$, and the essential convergence result \eqref{eq-lambda-nu-star-summation} still holds. The rest of the argument is unchanged, which gives $\lambda_j \preceq j^{-1/s}$, and for any $1<\alpha<r$, choosing $1/r<s<1/\alpha$ gives $\lambda_j \preceq j^{-\alpha}$.
}
\end{proof}

\subsection{Proof of Lemma \ref{lem-R-rate}}\label{lem-R-rate-proof}

\begin{proof}
{
By definition \eqref{eq-r-def}, we have
\begin{align*}
R 
= \mu_X\hat{\Sigma}_{ZZ}(\hat{\Sigma}_{ZZ}+\epsilon_nI)^{-1} - \mu_X
= \epsilon_n \mu_X (\hat{\Sigma}_{ZZ}+\epsilon_nI)^{-1}.
\end{align*}
Therefore, 
\begin{align*}
\|R\|_{\mathrm{OP}} 
& = \epsilon_n \| \mu_X (\hat{\Sigma}_{ZZ}+\epsilon_nI)^{-1}\|_{\mathrm{OP}} 
\le \epsilon_n \|\mu_X\|_{\mathcal G_X} \|(\hat{\Sigma}_{ZZ}+\epsilon_nI)^{-1}\|_{\mathrm{OP}} \\
& \le \epsilon_n \|\mu_X\|_{\mathcal G_X} \epsilon_n^{-1} 
= \|\mu_X\|_{\mathcal G_X},
\end{align*}
which proves assertion 1.  }

{
To prove assertion 2, we decompose $R$ into  $R_1+R_2$, where
\begin{align}\label{eq:r-r1-r2-decomp}
R_1 = \epsilon_n \mu_X [(\hat{\Sigma}_{ZZ}+\epsilon_nI)^{-1} - (\Sigma_{ZZ}+\epsilon_nI)^{-1}], \qquad
R_2 = \epsilon_n \mu_X(\Sigma_{ZZ}+\epsilon_nI)^{-1}.
\end{align}
For any $\theta>0$, we have
\begin{align*}
R_1 \Sigma_{ZZ}^\theta 
&= \epsilon_n \mu_X [(\hat{\Sigma}_{ZZ}+\epsilon_nI)^{-1} - (\Sigma_{ZZ}+\epsilon_nI)^{-1}] \Sigma_{ZZ}^\theta \\
&= \epsilon_n \mu_X (\hat{\Sigma}_{ZZ}+\epsilon_nI)^{-1} (\Sigma_{ZZ} - \hat{\Sigma}_{ZZ}) (\Sigma_{ZZ}+\epsilon_nI)^{-1} \Sigma_{ZZ}^\theta.
\end{align*}
Thus,
\begin{align}\label{eq:r1-sigma-theta-op}
\|R_1 \Sigma_{ZZ}^\theta\|_{\mathrm{OP}}
&\le \epsilon_n \|\mu_X (\hat{\Sigma}_{ZZ}+\epsilon_nI)^{-1}\|_{\mathrm{OP}} \|\Sigma_{ZZ} - \hat{\Sigma}_{ZZ}\|_{\mathrm{OP}} \|(\Sigma_{ZZ}+\epsilon_nI)^{-1} \Sigma_{ZZ}^\theta\|_{\mathrm{OP}}.
\end{align}
In the right-hand side above, $\|\hat{\Sigma}_{ZZ}-\Sigma_{ZZ}\|_{\mathrm{OP}} = O_P(n^{-1/2})$ by Hilbert space central limit theorem, 
\begin{align*}
\|\mu_X(\hat{\Sigma}_{ZZ}+\epsilon_nI)^{-1}\|_{\mathrm{OP}} \le \epsilon_n^{-1}\|\mu_X I\|_{\mathrm{OP}} = \epsilon_n^{-1}\|\mu_X\|_{\mathcal H_X}
= O(\epsilon_n^{-1}), 
\end{align*} 
and 
\begin{align}\label{eq-r1-theta-3}
\|(\Sigma_{ZZ}+\epsilon_nI)^{-1}\Sigma_{ZZ}^\theta\|_{\mathrm{OP}}
\le \|(\Sigma_{ZZ}+\epsilon_nI)^{-1+\theta}\|_{\mathrm{OP}}
= O(\epsilon_n^{\theta\wedge 1 -1}).
\end{align}
Therefore, 
\begin{align*}
\|R_1 \Sigma_{ZZ}^\theta\|_{\mathrm{OP}}
= \epsilon_n  O(\epsilon_n^{-1}) O_P(n^{-1/2}) O(\epsilon_n^{\theta\wedge 1 -1})
= O_P(n^{-1/2}\epsilon_n^{\theta\wedge 1 -1}).
\end{align*}
Also, by \eqref{eq-r1-theta-3}, we have
\begin{align*}
\|R_2 \Sigma_{ZZ}^\theta\|_{\mathrm{OP}}
= \epsilon_n  \|\mu_X\|_{\mathcal H_X} \|(\Sigma_{ZZ}+\epsilon_nI)^{-1}\Sigma_{ZZ}^\theta\|_{\mathrm{OP}}
= \epsilon_n O_P(\epsilon_n^{\theta\wedge 1 -1})
= O_P(\epsilon_n^{\theta\wedge 1}).
\end{align*}
Hence, 
\begin{align*}
\|R \Sigma_{ZZ}^\theta\|_{\mathrm{OP}}
\le \|R_1 \Sigma_{ZZ}^\theta\|_{\mathrm{OP}} + \|R_2 \Sigma_{ZZ}^\theta\|_{\mathrm{OP}} 
= O_P(n^{-1/2}\epsilon_n^{\theta\wedge 1 -1} + \epsilon_n^{\theta\wedge 1}),
\end{align*}
which gives the desired result.}

{To prove assertion 3, we use the same decomposition $R=R_1+R_2$ as in \eqref{eq:r-r1-r2-decomp}. Similar to \eqref{eq:r1-sigma-theta-op}, we have
\begin{align*}
\|R_1 \Sigma_{ZZ}\|_{\mathrm{HS}}
&\le \epsilon_n \|\mu_X (\hat{\Sigma}_{ZZ}+\epsilon_nI)^{-1}\|_{\mathrm{OP}} \|\Sigma_{ZZ} - \hat{\Sigma}_{ZZ}\|_{\mathrm{HS}} \|(\Sigma_{ZZ}+\epsilon_nI)^{-1} \Sigma_{ZZ}\|_{\mathrm{OP}}\\
&=\epsilon_n O(\epsilon_n^{-1}) O_P(n^{-1/2}) O(1) = O_P(n^{-1/2})
\end{align*}
since $\|\hat{\Sigma}_{ZZ}-\Sigma_{ZZ}\|_{\mathrm{HS}} = O_P(n^{-1/2})$ by Hilbert space central limit theorem. Different from assertion 2, regarding $R_2\Sigma_{ZZ}$, we have
\begin{align*}
\|R_2\Sigma_{ZZ}\|_{\mathrm{HS}}
= \epsilon_n  \|\mu_X\|_{\mathcal H_X} \|(\Sigma_{ZZ}+\epsilon_nI)^{-1}\Sigma_{ZZ}\|_{\mathrm{HS}}
= \epsilon_n O_P(\epsilon_n^{-1/(2\alpha)})
= O_P(\epsilon_n^{1-1/(2\alpha)}),
\end{align*}
where the second equality holds because, under Assumption \ref{ass-alpha}, by Lemma \ref{lem-alpha}, 
\begin{align*}
\|(\Sigma_{ZZ}+\epsilon_nI)^{-1}\Sigma_{ZZ}\|_{\mathrm{HS}}^2
= \sum_{j=1}^\infty \left(\frac{\gamma_j}{\gamma_j+\epsilon_n}\right)^2
=O(\epsilon_n^{-1/\alpha}).
\end{align*}
Hence, 
\begin{align*}
\|R \Sigma_{ZZ}\|_{\mathrm{HS}}
\le \|R_1 \Sigma_{ZZ}\|_{\mathrm{HS}} + \|R_2 \Sigma_{ZZ}\|_{\mathrm{HS}} 
= O_P(n^{-1/2} + \epsilon_n^{1-1/(2\alpha)}),
\end{align*}
which gives the desired result.}
\end{proof}

\subsection{Proof of Lemma \ref{lemma-regul-conv}}

\begin{proof}
{
We first decompose $\hat{B}_{\Ddot{X}|Z}-B_{\Ddot{X}|Z}-R$ as follows:
\begin{align*}
&\hat{B}_{\Ddot{X}|Z}-B_{\Ddot{X}|Z}-R\\
&= \hat{\Sigma}_{\Ddot{X}Z}(\hat{\Sigma}_{ZZ}+\epsilon_nI)^{-1}-\Sigma_{\Ddot{X}Z}\Sigma_{ZZ}^{\dagger} - \mu_X\hat{\Sigma}_{ZZ}(\hat{\Sigma}_{ZZ}+\epsilon_nI)^{-1} + \mu_X\\
&= \hat{\Sigma}_{\Ddot{X}Z}(\hat{\Sigma}_{ZZ}+\epsilon_nI)^{-1}-\Sigma_{\Ddot{X}Z}\Sigma_{ZZ}^{\dagger}\hat{\Sigma}_{ZZ}(\hat{\Sigma}_{ZZ}+\epsilon_nI)^{-1}\\
&\quad +\Sigma_{\Ddot{X}Z}\Sigma_{ZZ}^{\dagger}\hat{\Sigma}_{ZZ}(\hat{\Sigma}_{ZZ}+\epsilon_nI)^{-1}- \mu_X\hat{\Sigma}_{ZZ}(\hat{\Sigma}_{ZZ}+\epsilon_nI)^{-1} - \Sigma_{\Ddot{X}Z}\Sigma_{ZZ}^{\dagger}
+\mu_X\\
&= \hat{\Sigma}_{\Ddot{X}Z}(\hat{\Sigma}_{ZZ}+\epsilon_nI)^{-1}-\Sigma_{\Ddot{X}Z}\Sigma_{ZZ}^{\dagger}\hat{\Sigma}_{ZZ}(\hat{\Sigma}_{ZZ}+\epsilon_nI)^{-1}\\
& \quad + \Lambda_{\Ddot{X}Z}\Sigma_{ZZ}^{\dagger}\hat{\Sigma}_{ZZ}(\hat{\Sigma}_{ZZ}+\epsilon_nI)^{-1} - \Lambda_{\Ddot{X}Z}\Sigma_{ZZ}^{\dagger}\\
&= \left[ \Lambda_{\Ddot{X}Z} (\Sigma_{ZZ}+\epsilon_nI)^{-1} - \Lambda_{\Ddot{X}Z} \Sigma_{ZZ}^{\dagger}\right]
+ \left[\Lambda_{\Ddot{X}Z}\Sigma_{ZZ}^{\dagger}\hat{\Sigma}_{ZZ}(\hat{\Sigma}_{ZZ}+\epsilon_nI)^{-1} - \Lambda_{\Ddot{X}Z} (\Sigma_{ZZ}+\epsilon_nI)^{-1} \right] \\
&\quad  + \left[ (\hat{\Sigma}_{\Ddot{X}Z} -\Sigma_{\Ddot{X}Z}\Sigma_{ZZ}^{\dagger}\hat{\Sigma}_{ZZ} -\tilde{\Sigma}_{\Ddot{X}Z|Z})(\hat{\Sigma}_{ZZ}+\epsilon_nI)^{-1} \right] \\
&\quad
+ \left[ \tilde{\Sigma}_{\Ddot{X}Z|Z}[(\hat{\Sigma}_{ZZ}+\epsilon_nI)^{-1} -  (\Sigma_{ZZ}+\epsilon_nI)^{-1}] \right]  
+ \left[ \tilde{\Sigma}_{\Ddot{X}Z|Z}  (\Sigma_{ZZ}+\epsilon_nI)^{-1} \right],
\end{align*}
where
$\tilde{\Sigma}_{\Ddot{X}Z|Z} = E_n[\tilde{\kappa}_{\Ddot{X}|Z}(\cdot,\Ddot{X}|Z) \otimes \tilde{\kappa}_Z(\cdot,Z)]$}
and $\tilde{\kappa}_{\Ddot{X}|Z}$ is defined by \eqref{eq-k-tilde-xz}.
{By the triangular inequality, 
\begin{align}\label{eq-decomp-bhat-b}
\begin{split}
&\|\hat{B}_{\Ddot{X}|Z}-B_{\Ddot{X}|Z}-R\|_{\mathrm{OP}}\\
&\le\|\Lambda_{\Ddot{X}Z} (\Sigma_{ZZ}+\epsilon_nI)^{-1} - \Lambda_{\Ddot{X}Z} \Sigma_{ZZ}^{\dagger}\|_{\mathrm{OP}}\\
& \quad+ \|\Lambda_{\Ddot{X}Z}\Sigma_{ZZ}^{\dagger}\hat{\Sigma}_{ZZ}(\hat{\Sigma}_{ZZ}+\epsilon_nI)^{-1} - \Lambda_{\Ddot{X}Z} (\Sigma_{ZZ}+\epsilon_nI)^{-1}\|_{\mathrm{OP}} \\
& \quad + \| (\hat{\Sigma}_{\Ddot{X}Z} -\Sigma_{\Ddot{X}Z}\Sigma_{ZZ}^{\dagger}\hat{\Sigma}_{ZZ} -\tilde{\Sigma}_{\Ddot{X}Z|Z})(\hat{\Sigma}_{ZZ}+\epsilon_nI)^{-1} \|_{\mathrm{OP}} \\
& \quad + \| \tilde{\Sigma}_{\Ddot{X}Z|Z}[(\hat{\Sigma}_{ZZ}+\epsilon_nI)^{-1} -  (\Sigma_{ZZ}+\epsilon_nI)^{-1}] \|_{\mathrm{OP}} 
      + \| \tilde{\Sigma}_{\Ddot{X}Z|Z}  (\Sigma_{ZZ}+\epsilon_nI)^{-1} \|_{\mathrm{OP}}.
\end{split}
\end{align}}

We next analyze the five terms in \eqref{eq-decomp-bhat-b} one by one. For the second term of \eqref{eq-decomp-bhat-b}, since
\begin{align*}
    &{\Lambda_{\Ddot{X}Z}}\Sigma_{ZZ}^{\dagger}\hat{\Sigma}_{ZZ}(\hat{\Sigma}_{ZZ}+\epsilon_nI)^{-1} - {\Lambda_{\Ddot{X}Z}} (\Sigma_{ZZ}+\epsilon_nI)^{-1}\\
    =&{\Lambda_{\Ddot{X}Z}}\Sigma_{ZZ}^{\dagger}\hat{\Sigma}_{ZZ}(\hat{\Sigma}_{ZZ}+\epsilon_nI)^{-1} - {\Lambda_{\Ddot{X}Z}}\Sigma_{ZZ}^{\dagger}\Sigma_{ZZ}(\Sigma_{ZZ}+\epsilon_nI)^{-1}\\
    =&{\Lambda_{\Ddot{X}Z}}\Sigma_{ZZ}^{\dagger}(\Sigma_{ZZ}+\epsilon_nI)^{-1}\left[(\Sigma_{ZZ}+\epsilon_nI)\hat{\Sigma}_{ZZ} - \Sigma_{ZZ}(\hat{\Sigma}_{ZZ}+\epsilon_nI)\right](\hat{\Sigma}_{ZZ}+\epsilon_nI)^{-1}\\
    =&\epsilon_n{\Lambda_{\Ddot{X}Z}}\Sigma_{ZZ}^{\dagger}(\Sigma_{ZZ}+\epsilon_nI)^{-1}(\hat{\Sigma}_{ZZ}-\Sigma_{ZZ})(\hat{\Sigma}_{ZZ}+\epsilon_nI)^{-1}, 
\end{align*}
we have
\begin{align*}
    &\|{\Lambda_{\Ddot{X}Z}}\Sigma_{ZZ}^{\dagger}\hat{\Sigma}_{ZZ}(\hat{\Sigma}_{ZZ}+\epsilon_nI)^{-1} - {\Lambda_{\Ddot{X}Z}} (\Sigma_{ZZ}+\epsilon_nI)^{-1}\|_{\mathrm{OP}}\\
    \le&\epsilon_n\|{\Lambda_{\Ddot{X}Z}}\Sigma_{ZZ}^{\dagger}(\Sigma_{ZZ}+\epsilon_nI)^{-1}\|_{\mathrm{OP}}\|\hat{\Sigma}_{ZZ}-\Sigma_{ZZ}\|_{\mathrm{OP}}\|(\hat{\Sigma}_{ZZ}+\epsilon_nI)^{-1}\|_{\mathrm{OP}}.
\end{align*}
{By Hilbert-space central limit theorem, $\|\hat{\Sigma}_{ZZ}-\Sigma_{ZZ}\|_{\mathrm{OP}} = O_P(n^{-1/2})$. Also,   
\begin{align*}
    \|(\hat{\Sigma}_{ZZ}+\epsilon_nI)^{-1}\|_{\mathrm{OP}} \le \epsilon_n^{-1}\|I\|_{\mathrm{OP}} = \epsilon_n^{-1}.
\end{align*} 
Regarding  $\|{\Lambda_{\Ddot{X}Z}}\Sigma_{ZZ}^{\dagger}(\Sigma_{ZZ}+\epsilon_nI)^{-1}\|_{\mathrm{OP}}$, by ${\Lambda_{\Ddot{X}Z}}=S_{\Ddot{X}Z}\Sigma_{ZZ}^{1+\beta}$ in Assumption \ref{ass-beta},} we have
\begin{align}\label{eq-term2-med}
\begin{split}
    \|{\Lambda_{\Ddot{X}Z}}\Sigma_{ZZ}^{\dagger}(\Sigma_{ZZ}+\epsilon_nI)^{-1}\|_{\mathrm{OP}}
    =& \|S_{\Ddot{X}Z}\Sigma_{ZZ}^{1+\beta}\Sigma_{ZZ}^{\dagger}(\Sigma_{ZZ}+\epsilon_nI)^{-1}\|_{\mathrm{OP}}\\
    \le& \|S_{\Ddot{X}Z}\|_{\mathrm{OP}} \|\Sigma_{ZZ}^{\beta}(\Sigma_{ZZ}+\epsilon_nI)^{-1}\|_{\mathrm{OP}}\\
    \le& \|S_{\Ddot{X}Z}\|_{\mathrm{OP}} \|(\Sigma_{ZZ}+\epsilon_nI)^{-1+\beta}\|_{\mathrm{OP}}
    = O(\epsilon_n^{\beta\wedge 1 -1}).
\end{split}
\end{align}
{Hence  the second term in \eqref{eq-decomp-bhat-b} has the order of magnitude}
\begin{align}\label{eq-term2-rate}
\begin{split}
    \|{\Lambda_{\Ddot{X}Z}}\Sigma_{ZZ}^{\dagger}\hat{\Sigma}_{ZZ}(\hat{\Sigma}_{ZZ}+\epsilon_nI)^{-1} - {\Lambda_{\Ddot{X}Z}} (\Sigma_{ZZ}+\epsilon_nI)^{-1}\|_{\mathrm{OP}}
    =&\epsilon_n O(\epsilon_n^{\beta\wedge 1 -1}) O_P(n^{-1/2}) O_P(\epsilon_n^{-1})\\
    =&O_P(n^{-1/2} \epsilon_n^{\beta\wedge 1 -1}).
\end{split}
\end{align}

For the first term in \eqref{eq-decomp-bhat-b}, we have
\begin{align*}
    {\Lambda_{\Ddot{X}Z}} (\Sigma_{ZZ}+\epsilon_nI)^{-1} - {\Lambda_{\Ddot{X}Z}} \Sigma_{ZZ}^{\dagger}
    =& {\Lambda_{\Ddot{X}Z}}  \Sigma_{ZZ}^{\dagger}[\Sigma_{ZZ} - (\Sigma_{ZZ}+\epsilon_nI)](\Sigma_{ZZ}+\epsilon_nI)^{-1}\\
    =& -\epsilon_n{\Lambda_{\Ddot{X}Z}} \Sigma_{ZZ}^{\dagger} (\Sigma_{ZZ}+\epsilon_nI)^{-1}. 
\end{align*}
{Hence, by \eqref{eq-term2-med} and Assumption \ref{ass-beta},} we have
\begin{align}\label{eq-term1-rate}
    \|{\Lambda_{\Ddot{X}Z}} (\Sigma_{ZZ}+\epsilon_nI)^{-1} - {\Lambda_{\Ddot{X}Z}} \Sigma_{ZZ}^{\dagger}\|_{\mathrm{OP}}
    = O_P(\epsilon_n^{\beta\wedge 1}).
\end{align}

For the third term of \eqref{eq-decomp-bhat-b}, since
\begin{align*}
    & \hat{\Sigma}_{\Ddot{X}Z} -\Sigma_{\Ddot{X}Z}\Sigma_{ZZ}^{\dagger}\hat{\Sigma}_{ZZ} -\tilde{\Sigma}_{\Ddot{X}Z|Z} \\
    =& E_n[\hat{\kappa}_{\Ddot{X}}(\cdot,\Ddot{X}) \otimes \hat{\kappa}_Z(\cdot,Z)] - B_{\Ddot{X}|Z} E_n[\hat{\kappa}_Z(\cdot,Z) \otimes \hat{\kappa}_Z(\cdot,Z)] - E_n[\tilde{\kappa}_{\Ddot{X}|Z}(\cdot,\Ddot{X}|Z) \otimes \tilde{\kappa}_Z(\cdot,Z)] \\
    =& E_n[\tilde{\kappa}_{\Ddot{X}}(\cdot,\Ddot{X}) \otimes \hat{\kappa}_Z(\cdot,Z)] - B_{\Ddot{X}|Z} E_n[\tilde{\kappa}_Z(\cdot,Z) \otimes \hat{\kappa}_Z(\cdot,Z)] - E_n[\tilde{\kappa}_{\Ddot{X}|Z}(\cdot,\Ddot{X}|Z) \otimes \tilde{\kappa}_Z(\cdot,Z)] \\
    =& E_n\{[\tilde{\kappa}_{\Ddot{X}}(\cdot,\Ddot{X}) - B_{\Ddot{X}|Z} \tilde{\kappa}_Z(\cdot,Z)] \otimes \hat{\kappa}_Z(\cdot,Z)\} - E_n[\tilde{\kappa}_{\Ddot{X}|Z}(\cdot,\Ddot{X}|Z) \otimes \tilde{\kappa}_Z(\cdot,Z)] \\
    =& E_n[\tilde{\kappa}_{\Ddot{X}|Z}(\cdot,\Ddot{X}|Z) \otimes \hat{\kappa}_Z(\cdot,Z)] - E_n[\tilde{\kappa}_{\Ddot{X}|Z}(\cdot,\Ddot{X}|Z) \otimes \tilde{\kappa}_Z(\cdot,Z)] \\
    =& E_n\{\tilde{\kappa}_{\Ddot{X}|Z}(\cdot,\Ddot{X}|Z) \otimes [\hat{\kappa}_Z(\cdot,Z) - \tilde{\kappa}_Z(\cdot,Z)] \} \\
    =& - E_n(\tilde{\kappa}_{\Ddot{X}|Z}(\cdot,\Ddot{X}|Z) \otimes \{E_n[\kappa_Z(\cdot,Z)] - E[\kappa_Z(\cdot,Z)] \}) \\
    =& - E_n[\tilde{\kappa}_{\Ddot{X}|Z}(\cdot,\Ddot{X}|Z)] \otimes \{E_n[\kappa_Z(\cdot,Z)] - E[\kappa_Z(\cdot,Z)] \}.
\end{align*}
Note that $E[\tilde{\kappa}_{\Ddot{X}|Z}(\cdot,\Ddot{X}|Z)]=0$. By Hilbert space central limit theorem, we have 
\begin{align*}
    \|E_n[\tilde{\kappa}_{\Ddot{X}|Z}(\cdot,\Ddot{X}|Z)]\|_{\mathcal{G}_{\Ddot{X}}} = O_P(n^{-1/2}), \quad \|E_n[\kappa_Z(\cdot,Z)] - E[\kappa_Z(\cdot,Z)]\|_{\mathcal{G}_Z} = O_P(n^{-1/2}).
\end{align*}
Thus, 
\begin{align*}
    &\|\hat{\Sigma}_{\Ddot{X}Z} -\Sigma_{\Ddot{X}Z}\Sigma_{ZZ}^{\dagger}\hat{\Sigma}_{ZZ} -\tilde{\Sigma}_{\Ddot{X}Z|Z}\|_{\mathrm{OP}} 
    \le \|\hat{\Sigma}_{\Ddot{X}Z} -\Sigma_{\Ddot{X}Z}\Sigma_{ZZ}^{\dagger}\hat{\Sigma}_{ZZ} -\tilde{\Sigma}_{\Ddot{X}Z|Z}\|_{\mathrm{HS}} \\
    &= \|E_n[\tilde{\kappa}_{\Ddot{X}|Z}(\cdot,\Ddot{X}|Z)]\|_{\mathcal{G}_{\Ddot{X}}}  \|E_n[\kappa_Z(\cdot,Z)] - E[\kappa_Z(\cdot,Z)]\|_{\mathcal{G}_Z} 
    = O_P(n^{-1/2}) O_P(n^{-1/2}) = O_P(n^{-1}),
\end{align*}
which implies that
\begin{align}\label{eq-term3-rate}
\begin{split}
    &\|(\hat{\Sigma}_{\Ddot{X}Z} -\Sigma_{\Ddot{X}Z}\Sigma_{ZZ}^{\dagger}\hat{\Sigma}_{ZZ} -\tilde{\Sigma}_{\Ddot{X}Z|Z})(\hat{\Sigma}_{ZZ}+\epsilon_nI)^{-1}\|_{\mathrm{OP}}\\
    &\le \|\hat{\Sigma}_{\Ddot{X}Z} -\Sigma_{\Ddot{X}Z}\Sigma_{ZZ}^{\dagger}\hat{\Sigma}_{ZZ} -\tilde{\Sigma}_{\Ddot{X}Z|Z}\|_{\mathrm{OP}} \|(\hat{\Sigma}_{ZZ}+\epsilon_nI)^{-1}\|_{\mathrm{OP}}
    = O_P(n^{-1}) O_P(\epsilon_n^{-1}) = O_P(n^{-1}\epsilon_n^{-1}).
\end{split}
\end{align}

We now consider the fifth term of \eqref{eq-decomp-bhat-b}, $\| \tilde{\Sigma}_{\Ddot{X}Z|Z}(\Sigma_{ZZ}+\epsilon_nI)^{-1} \|_{\mathrm{OP}}$. Note that 
this term is bounded by $\| \tilde{\Sigma}_{\Ddot{X}Z|Z}(\Sigma_{ZZ}+\epsilon_nI)^{-1} \|_{\mathrm{HS}}$, and we calculate the expectation of its square as
\begin{align}\label{eq-ehssq}
\begin{split}
     E \| \tilde{\Sigma}_{\Ddot{X}Z|Z}(\Sigma_{ZZ}+\epsilon_nI)^{-1} \|_{\mathrm{HS}}^2 
    =& E \| E_n[\tilde{\kappa}_{\Ddot{X}|Z}(\cdot,\Ddot{X}|Z) \otimes \tilde{\kappa}_Z(\cdot,Z)] (\Sigma_{ZZ}+\epsilon_nI)^{-1} \|_{\mathrm{HS}}^2 \\
    =& n^{-1} E \| [\tilde{\kappa}_{\Ddot{X}|Z}(\cdot,\Ddot{X}|Z) \otimes \tilde{\kappa}_Z(\cdot,Z)] (\Sigma_{ZZ}+\epsilon_nI)^{-1} \|_{\mathrm{HS}}^2 \\
    =& n^{-1} E [ \| \tilde{\kappa}_{\Ddot{X}|Z}(\cdot,\Ddot{X}|Z) \|_{\mathcal{G}_{\Ddot{X}}}^2 \|(\Sigma_{ZZ}+\epsilon_nI)^{-1} \tilde{\kappa}_Z(\cdot,Z)  \|_{\mathcal{G}_Z}^2],
\end{split}
\end{align}
where the second equality holds because $(X_i,Z_i)$ are i.i.d. samples for $i=1,\dots,n$, as well as 
\begin{align*}
    E[\tilde{\kappa}_{\Ddot{X}|Z}(\cdot,\Ddot{X}|Z) \otimes \tilde{\kappa}_Z(\cdot,Z)] 
    = E\{ E[\tilde{\kappa}_{\Ddot{X}|Z}(\cdot,\Ddot{X}|Z)] \otimes \tilde{\kappa}_Z(\cdot,Z)\} = 0.
\end{align*}
Note that Assumption \ref{ass-moment2} implies that, for some uniform $C>0$, we have
\begin{align*}
    \|\tilde{\kappa}_{\Ddot{X}|Z}(\cdot,\Ddot{X}|Z) \|_{\mathcal{G}_{\Ddot{X}}}
    &= \|\kappa_{\Ddot{X}}(\cdot,\Ddot{X}) - E[\kappa_{\Ddot{X}}(\cdot,\Ddot{X})|Z] \|_{\mathcal{G}_{\Ddot{X}}}\\
    &\le 2 \|\kappa_{\Ddot{X}}(\cdot,\Ddot{X})\|_{\mathcal{G}_{\Ddot{X}}} + 2 \| E[\kappa_{\Ddot{X}}(\cdot,\Ddot{X})|Z] \|_{\mathcal{G}_{\Ddot{X}}} \le 4C.
\end{align*}
It remains to analyze   $\|(\Sigma_{ZZ}+\epsilon_nI)^{-1} \tilde{\kappa}_Z(\cdot,Z)  \|_{\mathcal{G}_Z}^2$. By the Karhunen-Lo\'eve expansion \eqref{eq-kl-expansion}, we have
\begin{align*}
    (\Sigma_{ZZ}+\epsilon_nI)^{-1} \tilde{\kappa}_Z(\cdot,Z) = \sum_{k=1}^\infty (\Sigma_{ZZ}+\epsilon_nI)^{-1}  {\gamma_k^{1/2}}\xi_k \phi_k = \sum_{k=1}^\infty (\gamma_k+\epsilon_n)^{-1} {\gamma_k^{1/2}}\xi_k \phi_k,
\end{align*}
we have
\begin{align*}
    \|(\Sigma_{ZZ}+\epsilon_nI)^{-1} \tilde{\kappa}_Z(\cdot,Z)  \|_{\mathcal{G}_Z}^2 = \sum_{k=1}^\infty (\gamma_k+\epsilon_n)^{-2} {\gamma_k}\xi_k^2.
\end{align*}
Note that $E(\xi_k^2) = {1}$. Plugging the results back into \eqref{eq-ehssq}, by Fubini's theorem, we have
\begin{align*}
     E \| \tilde{\Sigma}_{\Ddot{X}Z|Z}(\Sigma_{ZZ}+\epsilon_nI)^{-1} \|_{\mathrm{HS}}^2 
    =& n^{-1} 16C^2 E\left[\sum_{k=1}^\infty (\gamma_k+\epsilon_n)^{-2} {\gamma_k}\xi_k^2\right] \\
    =& n^{-1} 16C^2 \sum_{k=1}^\infty (\gamma_k+\epsilon_n)^{-2} \gamma_k 
    = O(n^{-1} \epsilon_n^{-(\alpha+1)/\alpha}),
\end{align*}
where the last equality holds because $\sum_{k=1}^\infty (\gamma_k+\epsilon_n)^{-2} \gamma_k = O(\epsilon_n^{-(\alpha+1)/\alpha})$ under Assumption \ref{ass-alpha} according to {Lemma \ref{lem-alpha}}. By Chebyshev's inequality, we have
\begin{align}\label{eq-term5-rate}
    \| \tilde{\Sigma}_{\Ddot{X}Z|Z}(\Sigma_{ZZ}+\epsilon_nI)^{-1} \|_{\mathrm{OP}} \le \| \tilde{\Sigma}_{\Ddot{X}Z|Z}(\Sigma_{ZZ}+\epsilon_nI)^{-1} \|_{\mathrm{HS}} = O_P(n^{-1/2} \epsilon_n^{-(\alpha+1)/(2\alpha)}).
\end{align}

Regarding the fourth term of \eqref{eq-decomp-bhat-b}, we have
\begin{align*}
    &\| \tilde{\Sigma}_{\Ddot{X}Z|Z}[(\hat{\Sigma}_{ZZ}+\epsilon_nI)^{-1} -  (\Sigma_{ZZ}+\epsilon_nI)^{-1}] \|_{\mathrm{OP}} \\
    &= \| \tilde{\Sigma}_{\Ddot{X}Z|Z}(\Sigma_{ZZ}+\epsilon_nI)^{-1} ( \Sigma_{ZZ}-  \hat{\Sigma}_{ZZ}) (\hat{\Sigma}_{ZZ}+\epsilon_nI)^{-1}\|_{\mathrm{OP}} \\
    &\le \| \tilde{\Sigma}_{\Ddot{X}Z|Z}(\Sigma_{ZZ}+\epsilon_nI)^{-1} \|_{\mathrm{OP}} \| \Sigma_{ZZ}-  \hat{\Sigma}_{ZZ}\|_{\mathrm{OP}} \|(\hat{\Sigma}_{ZZ}+\epsilon_nI)^{-1}\|_{\mathrm{OP}}.
\end{align*}
Note that the first part $\| \tilde{\Sigma}_{\Ddot{X}Z|Z}(\Sigma_{ZZ}+\epsilon_nI)^{-1} \|_{\mathrm{OP}}$ is exactly the fifth term, and its rate is given by \eqref{eq-term5-rate}. Same as the arguments in the second term of \eqref{eq-decomp-bhat-b}, we have $\|\hat{\Sigma}_{ZZ}-\Sigma_{ZZ}\|_{\mathrm{OP}} = O_P(n^{-1/2})$ and $\|(\hat{\Sigma}_{ZZ}+\epsilon_nI)^{-1}\|_{\mathrm{OP}} =O_P( \epsilon_n^{-1})$. Therefore, the fourth term has the rate
\begin{align}\label{eq-term4-rate}
\begin{split}
    \| \tilde{\Sigma}_{\Ddot{X}Z|Z}[(\hat{\Sigma}_{ZZ}+\epsilon_nI)^{-1} -  (\Sigma_{ZZ}+\epsilon_nI)^{-1}] \|_{\mathrm{OP}} 
    =& O_P(n^{-1/2} \epsilon_n^{-(\alpha+1)/(2\alpha)}) O_P(n^{-1/2}) O_P( \epsilon_n^{-1})\\
    =& O_P(n^{-1} \epsilon_n^{-(3\alpha+1)/(2\alpha)}).
\end{split}
\end{align}

Substituting \eqref{eq-term1-rate}, \eqref{eq-term2-rate}, \eqref{eq-term3-rate}, \eqref{eq-term4-rate}, \eqref{eq-term5-rate} back into \eqref{eq-decomp-bhat-b}, we have the desired result {for $\|\hat{B}_{\Ddot{X}|Z} - B_{\Ddot{X}|Z} - R \|_{\mathrm{OP}}$}.

{
For the rate of $\|\hat{B}_{Y|Z}-B_{Y|Z}\|_{\mathrm{OP}}$, using a similar triangle inequality argument, \eqref{eq-decomp-bhat-b} becomes
\begin{align*}
\begin{split}
    &\|\hat{B}_{Y|Z}-B_{Y|Z}\|_{\mathrm{OP}}
    =\|\hat{\Sigma}_{YZ}(\hat{\Sigma}_{ZZ}+\epsilon_nI)^{-1}-\Sigma_{YZ}\Sigma_{ZZ}^{\dagger}\|_{\mathrm{OP}}\\
    &\le\|\Sigma_{YZ} (\Sigma_{ZZ}+\epsilon_nI)^{-1} - \Sigma_{YZ} \Sigma_{ZZ}^{\dagger}\|_{\mathrm{OP}}\\
    & \quad + \|[\Sigma_{YZ}\Sigma_{ZZ}^{\dagger}\hat{\Sigma}_{ZZ}(\hat{\Sigma}_{ZZ}+\epsilon_nI)^{-1} - \Sigma_{YZ} (\Sigma_{ZZ}+\epsilon_nI)^{-1}\|_{\mathrm{OP}} \\
    & \quad + \| (\hat{\Sigma}_{YZ} -\Sigma_{YZ}\Sigma_{ZZ}^{\dagger}\hat{\Sigma}_{ZZ} -\tilde{\Sigma}_{YZ|Z})(\hat{\Sigma}_{ZZ}+\epsilon_nI)^{-1} \|_{\mathrm{OP}}\\ 
    & \quad + \| \tilde{\Sigma}_{YZ|Z}[(\hat{\Sigma}_{ZZ}+\epsilon_nI)^{-1} -  (\Sigma_{ZZ}+\epsilon_nI)^{-1}] \|_{\mathrm{OP}} \\
    & \quad + \| \tilde{\Sigma}_{YZ|Z}  (\Sigma_{ZZ}+\epsilon_nI)^{-1} \|_{\mathrm{OP}},
\end{split}
\end{align*}
where
\begin{align*}
    \tilde{\Sigma}_{YZ|Z} = E_n[\tilde{\kappa}_{Y|Z}(\cdot,Y|Z) \otimes \tilde{\kappa}_Z(\cdot,Z)],
\end{align*}
and $\tilde{\kappa}_{Y|Z}$ is defined by \eqref{eq-k-tilde-xz}. The remaining arguments are same as in $\|\hat{B}_{\Ddot{X}|Z} - B_{\Ddot{X}|Z} - R \|_{\mathrm{OP}}$, with $\Ddot{X}$ replaced by $Y$ and $\Lambda_{\Ddot{X}Z}$ replaced by $\Sigma_{YZ}$.
}
\end{proof}

\subsection{Proof of Theorem \ref{thm-clt-sigmahat}}\label{thm-clt-sigmahat-proof}

\def\hii#1{^{(#1)}}

\begin{proof}
We first define two intermediate linear operators:
\begin{align*}
    \tilde{\Sigma}_{\Ddot{X}Y|Z}\hii 1= & E_n\left\{[\tilde{\kappa}_{\Ddot{X}}(\cdot,\Ddot{X})-B_{\Ddot{X}|Z}\tilde{\kappa}_Z(\cdot,Z)]\otimes[\tilde{\kappa}_{Y}(\cdot,Y)-B_{Y|Z}\tilde{\kappa}_Z(\cdot,Z)]\right\},  \\
    \tilde{\Sigma}_{\Ddot{X}Y|Z}\hii 2= & E_n\left\{[\tilde{\kappa}_{\Ddot{X}}(\cdot,\Ddot{X})-\hat{B}_{\Ddot{X}|Z}\tilde{\kappa}_Z(\cdot,Z)]\otimes[\tilde{\kappa}_{Y}(\cdot,Y)-\hat{B}_{Y|Z}\tilde{\kappa}_Z(\cdot,Z)]\right\}.
\end{align*}
Using these   operators we make the following decomposition: 
\begin{align*}
    \hat  {\Sigma}_{\Ddot{X}Y|Z} -  {\Sigma}_{\Ddot{X}Y|Z} =     \hat  {\Sigma}_{\Ddot{X}Y|Z} -     \tilde  {\Sigma}_{\Ddot{X}Y|Z} \hii 2 +   \tilde  {\Sigma}_{\Ddot{X}Y|Z} \hii 2 -   \tilde  {\Sigma}_{\Ddot{X}Y|Z} \hii 1 +   \tilde  {\Sigma}_{\Ddot{X}Y|Z} \hii 1 -  {\Sigma}_{\Ddot{X}Y|Z} 
\end{align*}
By the central limit theorem for Hilbert-space-valued random elements,
\begin{align}
    \sqrt{n}(\tilde{\Sigma}_{\Ddot{X}Y|Z} \hii 1-\Sigma_{\Ddot{X}Y|Z})\xrightarrow{\mathcal{D}}N(0,\Gamma_{\Ddot{X}Y|Z}).\label{eq-res-ktilde-k}
\end{align}
So we need to show that 
\begin{align}\label{eq:hat Sigma ddot}
&        \hat  {\Sigma}_{\Ddot{X}Y|Z} -     \tilde  {\Sigma}_{\Ddot{X}Y|Z} \hii 2  = o \lo p ( n \hi {-1/2}), \quad 
  \tilde  {\Sigma}_{\Ddot{X}Y|Z} \hii 2 -   \tilde  {\Sigma}_{\Ddot{X}Y|Z} \hii 1 = o \lo p ( n \hi {-1/2}). 
\end{align}

To show the first equation in (\ref{eq:hat Sigma ddot}), we begin with the decomposition
\begin{align}
\begin{split}
&\sqrt{n}(\hat{\Sigma}_{\Ddot{X}Y|Z}-\tilde{\Sigma}_{\Ddot{X}Y|Z} \hii 2)\\
&=\sqrt{n}\left(E_n\left\{[\hat{\kappa}_{\Ddot{X}}(\cdot,\Ddot{X})-\hat{B}_{\Ddot{X}|Z}\hat{\kappa}_Z(\cdot,Z)]\otimes[\hat{\kappa}_{Y}(\cdot,Y)-\hat{B}_{Y|Z}\hat{\kappa}_Z(\cdot,Z)]\right\}\right.\\
&\quad\quad\quad \left.-E_n\left\{[\tilde{\kappa}_{\Ddot{X}}(\cdot,\Ddot{X})-\hat{B}_{\Ddot{X}|Z}\tilde{\kappa}_Z(\cdot,Z)]\otimes[\tilde{\kappa}_{Y}(\cdot,Y)-\hat{B}_{Y|Z}\tilde{\kappa}_Z(\cdot,Z)]\right\}\right)\\
&=\sqrt{n}\big\{E_n[\hat{\kappa}_{\Ddot{X}}(\cdot,\Ddot{X})\otimes\hat{\kappa}_{Y}(\cdot,Y)-\tilde{\kappa}_{\Ddot{X}}(\cdot,\Ddot{X})\otimes\tilde{\kappa}_{Y}(\cdot,Y)]\\
&\quad\quad\quad -\hat{B}_{\Ddot{X}|Z}E_n[\hat{\kappa}_Z(\cdot,Z)\otimes\hat{\kappa}_{Y}(\cdot,Y)-\tilde{\kappa}_Z(\cdot,Z)\otimes\tilde{\kappa}_{Y}(\cdot,Y)]\\
&\quad\quad\quad-E_n[\hat{\kappa}_{\Ddot{X}}(\cdot,\Ddot{X})\otimes\hat{\kappa}_Z(\cdot,Z)-\tilde{\kappa}_{\Ddot{X}}(\cdot,\Ddot{X})\otimes\tilde{\kappa}_Z(\cdot,Z)]\hat{B}_{Y|Z}^*\\
&\quad\quad\quad+\hat{B}_{\Ddot{X}|Z}E_n[\hat{\kappa}_Z(\cdot,Z)\otimes\hat{\kappa}_Z(\cdot,Z)-\tilde{\kappa}_Z(\cdot,Z)\otimes\tilde{\kappa}_Z(\cdot,Z)]\hat{B}_{Y|Z}^*\big\}.
\end{split}\label{eq-decomp-khat-kbar}
\end{align}
We now consider each term on the right-hand side of \eqref{eq-decomp-khat-kbar} separately. By construction, the first term is 
\begin{align*}
& \sqrt{n}E_n[\hat{\kappa}_{\Ddot{X}}(\cdot,\Ddot{X})\otimes\hat{\kappa}_{Y}(\cdot,Y)-\tilde{\kappa}_{\Ddot{X}}(\cdot,\Ddot{X})\otimes\tilde{\kappa}_{Y}(\cdot,Y)]\\
& = \frac{1}{\sqrt{n}}\sum_{i=1}^n\left\{[\kappa_{\Ddot{X}}(\cdot,\Ddot{X}_i)-\hat{\mu}_{\Ddot{X}}]\otimes[\kappa_{Y}(\cdot,Y_i)-\hat{\mu}_{Y}]-[\kappa_{\Ddot{X}}(\cdot,\Ddot{X}_i)-\mu_{\Ddot{X}}]\otimes[\kappa_{Y}(\cdot,Y_i)-\mu_{Y}]\right\}. 
\end{align*}
The right-hand side can be decomposed into eight terms, among which the two terms involving $\ka \lo {\ddot X} ( \cdot, \ddot X \lo i) \otimes \ka \lo {Y} ( \cdot, Y \lo i) $ and $-\ka \lo {\ddot X} ( \cdot, \ddot X \lo i) \otimes \ka \lo {Y} ( \cdot, Y \lo i) $ cancel out, leaving  the following six terms: 
\begin{align*}
&\frac{1}{\sqrt{n}}\sum_{i=1}^n\big[-\kappa_{\Ddot{X}}(\cdot,\Ddot{X}_i)\otimes\hat{\mu}_{Y}-\hat{\mu}_{\Ddot{X}}\otimes\kappa_{Y}(\cdot,Y_i)+\hat{\mu}_{\Ddot{X}}\otimes\hat{\mu}_{Y}
\\
&\qquad\qquad\ +\kappa_{\Ddot{X}}(\cdot,\Ddot{X}_i)\otimes\mu_{Y}+\mu_{\Ddot{X}}\otimes\kappa_{Y}(\cdot,Y_i)-\mu_{\Ddot{X}}\otimes\mu_{Y}\big]. 
\end{align*}
Next, applying the relations $n \inv \sum \lo {i=1} \hi n \ka \lo {\ddot X} ( \cdot, \ddot X \lo i ) = \hat \mu \lo {\ddot X}$ and $n \inv \sum \lo {i=1} \hi n \ka \lo {Y} ( \cdot, Y \lo i ) = \hat \mu \lo {Y}$, we further reduce the right-hand side above to 
\begin{align*}
\begin{split}
&\sqrt{n}(-\hat{\mu}_{\Ddot{X}}\otimes\hat{\mu}_{Y}-\hat{\mu}_{\Ddot{X}}\otimes\hat{\mu}_{Y}+\hat{\mu}_{\Ddot{X}}\otimes\hat{\mu}_{Y}+\hat{\mu}_{\Ddot{X}}\otimes\mu_{Y}+\mu_{\Ddot{X}}\otimes\hat{\mu}_{Y}-\mu_{\Ddot{X}}\otimes\mu_{Y})\\
=&\sqrt{n}(-\hat{\mu}_{\Ddot{X}}\otimes\hat{\mu}_{Y}+\hat{\mu}_{\Ddot{X}}\otimes\mu_{Y}+\mu_{\Ddot{X}}\otimes\hat{\mu}_{Y}-\mu_{\Ddot{X}}\otimes\mu_{Y})\\
=&-\sqrt{n}(\hat{\mu}_{\Ddot{X}}-\mu_{\Ddot{X}})\otimes\left(\hat{\mu}_{Y}-\mu_{Y}\right).
\end{split}
\end{align*}
By Chebychev's inequality, we have $\hat \mu \lo {\ddot X } - \mu \lo {\ddot X} = O \lo P ( n \hi {-1/2})$ and $\hat \mu \lo {Y } - \mu \lo {Y} = O \lo P ( n \hi {-1/2})$. Hence the right-hand side is of the order $O \lo P (n \hi {-1/2})$. Thus we have proved 
\begin{align*}
& \sqrt{n}E_n[\hat{\kappa}_{\Ddot{X}}(\cdot,\Ddot{X})\otimes\hat{\kappa}_{Y}(\cdot,Y)-\tilde{\kappa}_{\Ddot{X}}(\cdot,\Ddot{X})\otimes\tilde{\kappa}_{Y}(\cdot,Y)]
= O \lo P ( n \hi {-1/2}). 
\end{align*}
By the similar argument, we can show that 
\begin{align*}
&\sqrt n  E_n[\hat{\kappa}_Z(\cdot,Z)\otimes\hat{\kappa}_{Y}(\cdot,Y)-\tilde{\kappa}_Z(\cdot,Z)\otimes\tilde{\kappa}_{Y} (\cdot,Y)]= O \lo P ( n \hi {-1/2}) \\
&\sqrt n E_n[\hat{\kappa}_{\Ddot{X}}(\cdot,\Ddot{X})\otimes\hat{\kappa}_Z(\cdot,Z)-\tilde{\kappa}_{\Ddot{X}}(\cdot,\Ddot{X})\otimes\tilde{\kappa}_Z(\cdot,Z)]= O \lo P ( n \hi {-1/2})  \\
&\sqrt n  E_n[\hat{\kappa}_Z(\cdot,Z)\otimes\hat{\kappa}_Z(\cdot,Z)-\tilde{\kappa}_Z(\cdot,Z)\otimes\tilde{\kappa}_Z(\cdot,Z)]= O \lo P ( n \hi {-1/2}).
\end{align*}
{Note that \eqref{eq-r-def} and Assumption \ref{ass-beta} imply that 
\begin{align*}
B_{\Ddot{X}|Z}= \Sigma_{\Ddot{X}Z}\Sigma_{ZZ}^\dagger = \Lambda_{\Ddot{X}Z}\Sigma_{ZZ}^\dagger + \mu_X \Sigma_{ZZ}\Sigma_{ZZ}^\dagger = S_{\Ddot{X}Z} \Sigma_{ZZ}^\beta + \mu_XI
\end{align*}
is bounded and $B_{Y|Z} = \Sigma_{YZ}\Sigma_{ZZ}^\dagger = S_{YZ}\Sigma_{ZZ}^\beta$ is also bounded.
By Corollary \ref{cor-opt-eps} and 
Lemma \ref{lemma-regul-conv}}, 
 we have $\hat B \lo {\ddot X|Z} = O \lo P (1)$ and $\hat B \lo {Y|Z} = O \lo P (1)$. Hence 
\begin{align*}
& \sqrt{n}(\hat{\Sigma}_{\Ddot{X}Y|Z}-\tilde{\Sigma}_{\Ddot{X}Y|Z} \hii 2)\\
& =O_P(n^{-1/2})+O_P(1)O_P(n^{-1/2})+O_P(n^{-1/2})O_P(1)+O_P(1)O_P(n^{-1/2})O_P(1)\\
& =O_P(n^{-1/2})=o_P(1).
\end{align*}
This proves the first equation in (\ref{eq:hat Sigma ddot}).

To show  the second equation in (\ref{eq:hat Sigma ddot}), we begin by rewriting $\sqrt{n}(\tilde{\Sigma}_{\Ddot{X}Y|Z} \hii 2-\tilde{\Sigma}_{\Ddot{X}Y|Z} \hii 1)$ as 
\begin{align}\label{eq:sqrt n (}
\begin{split}
&\sqrt{n}\Big(E_n\left\{[\tilde{\kappa}_{\Ddot{X}}(\cdot,\Ddot{X})-\hat{B}_{\Ddot{X}|Z}\tilde{\kappa}_Z(\cdot,Z)]\otimes[\tilde{\kappa}_{Y}(\cdot,Y)-\hat{B}_{Y|Z}\tilde{\kappa}_Z(\cdot,Z)]\right\}\\
&\qquad -E_n\left\{[\tilde{\kappa}_{\Ddot{X}}(\cdot,\Ddot{X})-B_{\Ddot{X}|Z}\tilde{\kappa}_Z(\cdot,Z)]\otimes[\tilde{\kappa}_{Y}(\cdot,Y)-B_{Y|Z}\tilde{\kappa}_Z(\cdot,Z)]\right\}\Big).
\end{split}
\end{align}
Define random elements in $\epsilon \lo {\ddot X} \in \ca G \lo {\ddot X}$ and $\epsilon \lo Y \in \ca G \lo {Y}$ as follows
\begin{align*}
    \epsilon_{\Ddot{X}}  = \tilde{\kappa}_{\Ddot{X}}(\cdot,\Ddot{X})-B_{\Ddot{X}|Z}\tilde{\kappa}_Z(\cdot,Z), \quad \epsilon_{Y} = \tilde{\kappa}_{Y}(\cdot,Y)-B_{Y|Z}\tilde{\kappa}_Z(\cdot,Z).
\end{align*}
Then, 
\begin{align*}
&  \tilde{\kappa}_{\Ddot{X}}(\cdot,\Ddot{X})-\hat{B}_{\Ddot{X}|Z}\tilde{\kappa}_Z(\cdot,Z)  =    \epsilon_{\Ddot{X}} -(\hat{B}_{\Ddot{X}|Z} - {B}_{\Ddot{X}|Z}) \tilde{\kappa}_Z(\cdot,Z)  \\
& 
 \tilde{\kappa}_{Y}(\cdot,Y)-\hat{B}_{Y|Z}\tilde{\kappa}_Z(\cdot,Z) = \epsilon_{Y}  -(\hat{B}_{Y|Z}-{B}_{Y|Z})\tilde{\kappa}_Z(\cdot,Z) 
\end{align*}
Substituting these into (\ref{eq:sqrt n (}), we have 
\begin{align} 
\begin{split}
&\sqrt{n}(B_{\Ddot{X}|Z}-\hat{B}_{\Ddot{X}|Z})E_n[\tilde{\kappa}_Z(\cdot,Z)\otimes\epsilon_{Y}]+\sqrt{n}E_n[\epsilon_{\Ddot{X}}\otimes\tilde{\kappa}_Z(\cdot,Z)](B_{Y|Z}-\hat{B}_{Y|Z})^*\\
&\qquad +\sqrt{n}(B_{\Ddot{X}|Z}-\hat{B}_{\Ddot{X}|Z})E_n[\tilde{\kappa}_Z(\cdot,Z)\otimes\tilde{\kappa}_Z(\cdot,Z)](B_{Y|Z}-\hat{B}_{Y|Z})^*. \label{eq:sqrt n ( En}
\end{split}
\end{align} 

{
We now consider the three terms in \eqref{eq:sqrt n ( En} one by one. 
For the first term of \eqref{eq:sqrt n ( En}, take $\eta$ to be a number such that 
\begin{align*}
\max \left\{ 0, \frac{\alpha-1}{2\alpha} - (\beta \wedge 1) \right\} < \eta < \frac{\alpha-1}{2\alpha}.
\end{align*} 
Then,
\begin{align*}
&\sqrt{n}(B_{\Ddot{X}|Z}-\hat{B}_{\Ddot{X}|Z})E_n[\tilde{\kappa}_Z(\cdot,Z)\otimes\epsilon_Y]
= (B_{\Ddot{X}|Z}-\hat{B}_{\Ddot{X}|Z})\Sigma_{ZZ}^\eta \sqrt{n}E_n\left[\Sigma_{ZZ}^{-\eta}\tilde{\kappa}_Z(\cdot,Z)\otimes\epsilon_Y\right]\\
&= - (\hat{B}_{\Ddot{X}|Z}-B_{\Ddot{X}|Z}-R)\Sigma_{ZZ}^\eta \sqrt{n}E_n\left[\Sigma_{ZZ}^{-\eta}\tilde{\kappa}_Z(\cdot,Z)\otimes\epsilon_Y\right] - R\Sigma_{ZZ}^\eta \sqrt{n}E_n\left[\Sigma_{ZZ}^{-\eta}\tilde{\kappa}_Z(\cdot,Z)\otimes\epsilon_Y\right]
\end{align*}}
By Corollary \ref{cor-reg-form},  we have  
\begin{align*}
    E( \epsilon_Y |Z ) = E[\tilde{\kappa}_Y(\cdot,Y)|Z]-B_{Y|Z}\tilde{\kappa}_Z(\cdot,Z) = 0. 
\end{align*}
{Hence 
\begin{align*}
E\left[\Sigma_{ZZ}^{-\eta}\tilde{\kappa}_Z(\cdot,Z)\otimes\epsilon_Y \right]
=E\left[\Sigma_{ZZ}^{-\eta}\tilde{\kappa}_Z(\cdot,Z)\otimes E(\epsilon_Y|Z) \right]=0. 
\end{align*}
Since
\begin{align*}
\Sigma_{ZZ}^{-\eta} \tilde{\kappa}_Z(\cdot,Z)
= \left( \sum_{j=1}^\infty \gamma_j^{-\eta} \phi_j \otimes \phi_j\right) \left(\sum_{k=1}^\infty \xi_k \phi_k\right)
= \sum_{j=1}^\infty \gamma_j^{-\eta} \xi_j \phi_j,
\end{align*}
where $\xi_j$ are uncorrelated with mean zero and variance $\gamma_j$, we have, by Assumption \ref{ass-alpha},
\begin{align*}
E \| \Sigma_{ZZ}^{-\eta} \tilde{\kappa}_Z(\cdot,Z) \|_{\mathcal G_Z}^2 
= \sum_{j=1}^\infty \gamma_j^{-2\eta} E(\xi_j^2)
= \sum_{j=1}^\infty \gamma_j^{1-2\eta}
\preceq \sum_{j=1}^\infty j^{-\alpha(1-2\eta)}
< \infty,
\end{align*}
where the sum is finite because $\alpha(1-2\eta)>1$.
Moreover, Assumptions \ref{ass-moment2} and \ref{ass-beta} guarantee that $E\|\epsilon_Y\|_{\mathcal{G}_Y}^2<\infty$. 
By the central limit theorem for i.i.d. random elements in Hilbert spaces, we have
\begin{align*}
    \sqrt{n} E_n\left[\Sigma_{ZZ}^{-\eta}\tilde{\kappa}_Z(\cdot,Z)\otimes\epsilon_Y(\cdot)\right] \xrightarrow{\mathcal{D}}N(0,\Xi_{ZY}), 
\end{align*}
where
$ \Xi_{ZY}=E\left\{[\Sigma_{ZZ}^{-\eta}\tilde{\kappa}_Z(\cdot,Z)\otimes\epsilon_Y(\cdot)]\otimes[\Sigma_{ZZ}^{-\eta}\tilde{\kappa}_Z(\cdot,Z)\otimes\epsilon_Y(\cdot)]\right\}$. 
Hence
\begin{align*}
E_n\left[\Sigma_{ZZ}^{-\eta}\tilde{\kappa}_Z(\cdot,Z)\otimes\epsilon_Y(\cdot)\right]=O_P(n^{-1/2}).
\end{align*}
By Corollary \ref{cor-opt-eps}, the conditions given for $\alpha$ and $\beta$ guarantee that $\hat{B}_{\Ddot{X}|Z}-B_{\Ddot{X}|Z}-R = o_P(1)$.
Also, we know that $\Sigma_{ZZ}^\eta$ is bounded. Thus, 
\begin{align*}
(\hat{B}_{\Ddot{X}|Z}-B_{\Ddot{X}|Z}-R)\Sigma_{ZZ}^\eta \sqrt{n}E_n\left[\Sigma_{ZZ}^{-\eta}\tilde{\kappa}_Z(\cdot,Z)\otimes\epsilon_Y\right] = o_P(1) \sqrt{n} O_P(n^{-1/2}) = o_P(1).
\end{align*}
}

{
Taking the $\theta$ in \eqref{eq-r-sigma-theta} to be $\eta$, we have
\begin{align}\label{eq:2 OP}
\|R \Sigma_{ZZ}^\eta\|_{\mathrm{OP}}
= O_P(n^{-1/2}\epsilon_n^{\eta\wedge 1 -1}) 
+O_P( \epsilon_n^{\eta\wedge 1}).
\end{align}
Since $\epsilon_n = o(1)$ and $\eta>0$, the second term  on the right is $o \lo P (1)$. Substituting $\epsilon_n \asymp n^{-\frac{\alpha}{2\alpha(\beta\wedge 1)+\alpha+1}}$ into the first term on the right, it becomes $n \hi c$ where 
\begin{align*}
c = -\frac{1}{2} - \frac{\alpha}{2\alpha(\beta\wedge 1)+\alpha+1} (\eta-1)
< -\frac{1}{2} - \frac{\alpha}{2\alpha(\beta\wedge 1)+\alpha+1} \left\{\frac{\alpha-1}{2\alpha} - (\beta \wedge 1)-1\right\} = 0.
\end{align*}
Hence the first term on the right-hand side of (\ref{eq:2 OP}) is also $o \lo P(1)$, implying
\begin{align}\label{eq-r-sigma-zz-eta}
R\Sigma_{ZZ}^\eta \sqrt{n}E_n\left[\Sigma_{ZZ}^{-\eta}\tilde{\kappa}_Z(\cdot,Z)\otimes\epsilon_Y\right]=o_P(1) O_P(1) = o_P(1).
\end{align}
Thus we have proved that the first term in (\ref{eq:sqrt n ( En}) is of the order $o \lo P(1)$; that is, 
\begin{align}\label{eq-term1-small}
\sqrt{n}(B_{\Ddot{X}|Z}-\hat{B}_{\Ddot{X}|Z})E_n[\tilde{\kappa}_Z(\cdot,Z)\otimes\epsilon_Y]
= o_P(1).
\end{align}
}

{For the second term of \eqref{eq:sqrt n ( En}, similarly, by Corollary \ref{cor-reg-form},  we have  
\begin{align*}
    E( \epsilon_{\Ddot{X}} |Z ) = E[\tilde{\kappa}_{\Ddot{X}}(\cdot,\Ddot{X})|Z]-B_{\Ddot{X}|Z}\tilde{\kappa}_Z(\cdot,Z) = 0. 
\end{align*}
Hence
\begin{align*}
E\left[\epsilon_{\Ddot{X}} \otimes \tilde{\kappa}_Z(\cdot,Z) \right]
=E\left[ E(\epsilon_{\Ddot{X}} |Z) \otimes \tilde{\kappa}_Z(\cdot,Z) \right]=0. 
\end{align*}
Moreover, Assumptions \ref{ass-moment2} and \ref{ass-beta} guarantee that $E\|\epsilon_{\Ddot{X}}\|_{\mathcal{H}_{\Ddot{X}}}^2<\infty$ and $E\|\tilde{\kappa}_Z(\cdot,Z)\|_{\mathcal H_Z}^2 < \infty$. 
By the central limit theorem for i.i.d. random elements in Hilbert spaces, we have
\begin{align*}
    \sqrt{n} E_n\left[\epsilon_{\Ddot{X}}(\cdot) \otimes \tilde{\kappa}_Z(\cdot,Z)\right] \xrightarrow{\mathcal{D}}N(0,\Xi_{\Ddot{X}Z}), 
\end{align*}
where
$ \Xi_{\Ddot{X}Z}=E\left\{[\epsilon_{\Ddot{X}}(\cdot)\otimes\tilde{\kappa}_Z(\cdot,Z)]\otimes[\epsilon_{\Ddot{X}}(\cdot)\otimes\tilde{\kappa}_Z(\cdot,Z)]\right\}$.} Hence
\begin{align*}
E_n\left[\epsilon_{\Ddot{X}}(\cdot)\otimes\tilde{\kappa}_Z(\cdot,Z)\right]=O_P(n^{-1/2}).
\end{align*}
Furthermore, by Corollary \ref{cor-opt-eps}, the conditions given for $\alpha$ and $\beta$ guarantee that 
{$\hat{B}_{Y|Z}-B_{Y|Z} = o_P(1)$.
Hence the second term of \eqref{eq:sqrt n ( En} becomes
\begin{align}\label{eq-term2-small}
\sqrt{n}E_n\left[\epsilon_{\Ddot{X}}\otimes\tilde{\kappa}_Z(\cdot,Z)\right](B_{Y|Z}-\hat{B}_{Y|Z})^*
= \sqrt{n} O_P(n^{-1/2}) o_P(1) = o_P(1).
\end{align}}

{We now consider the last term of \eqref{eq:sqrt n ( En}. Note that
\begin{align}\label{eq:term3-decomp}
\begin{split}
&\sqrt{n}(B_{\Ddot{X}|Z}-\hat{B}_{\Ddot{X}|Z})E_n[\tilde{\kappa}_Z(\cdot,Z)\otimes\tilde{\kappa}_Z(\cdot,Z)](B_{Y|Z}-\hat{B}_{Y|Z})^*\\
&= (B_{\Ddot{X}|Z}-\hat{B}_{\Ddot{X}|Z})\sqrt{n}\left\{ E_n[\tilde{\kappa}_Z(\cdot,Z)\otimes\tilde{\kappa}_Z(\cdot,Z)] - \Sigma_{ZZ} \right\} (B_{Y|Z}-\hat{B}_{Y|Z})^*\\
& \quad - \sqrt{n} (\hat{B}_{\Ddot{X}|Z} - B_{\Ddot{X}|Z} -R ) \Sigma_{ZZ} (B_{Y|Z}-\hat{B}_{Y|Z})^*
- \sqrt{n} R \Sigma_{ZZ} (B_{Y|Z}-\hat{B}_{Y|Z})^*.
\end{split}
\end{align}
For the first term in \eqref{eq:term3-decomp}, since $\hat{B}_{\Ddot{X}|Z} - B_{\Ddot{X}|Z} - R = o_P(1)$ by Corollary \ref{cor-opt-eps}  and $R$ is bounded by Lemma \ref{lem-R-rate}, we have  $\hat{B}_{\Ddot{X}|Z} - B_{\Ddot{X}|Z} = O_p(1)$. Furthermore, by the central limit theorem in Hilbert spaces, we have,  under Assumption \ref{ass-moment2},
\begin{align*}
\sqrt{n}\left\{ E_n[\tilde{\kappa}_Z(\cdot,Z)\otimes\tilde{\kappa}_Z(\cdot,Z)] - \Sigma_{ZZ} \right\}\xrightarrow{\mathcal{D}}N(0,\Xi_{ZZ}), 
\end{align*}
where
$ \Xi_{ZZ}=E\left\{[\tilde{\kappa}_Z(\cdot,Z)\otimes\tilde{\kappa}_Z(\cdot,Z)]\otimes[\tilde{\kappa}_Z(\cdot,Z)\otimes\tilde{\kappa}_Z(\cdot,Z)]\right\}$. Hence
\begin{align*}
E_n[\tilde{\kappa}_Z(\cdot,Z)\otimes\tilde{\kappa}_Z(\cdot,Z)] - \Sigma_{ZZ} = O_P(n^{-1/2}).
\end{align*}
Also, by Corollary \ref{cor-opt-eps}, the conditions given for $\alpha$ and $\beta$ guarantee that 
$\hat{B}_{Y|Z}-B_{Y|Z} = o_P(1)$. 
Therefore, the first term in \eqref{eq:term3-decomp} is of the order
\begin{align*}
&(B_{\Ddot{X}|Z}-\hat{B}_{\Ddot{X}|Z})\sqrt{n}\left\{ E_n[\tilde{\kappa}_Z(\cdot,Z)\otimes\tilde{\kappa}_Z(\cdot,Z)] - \Sigma_{ZZ} \right\} (B_{Y|Z}-\hat{B}_{Y|Z})^*\\
&= O_P(1) \sqrt{n} O_P(n^{-1/2}) o_P(1) = o_P(1).
\end{align*}
For the second term in \eqref{eq:term3-decomp},  since the conditions given for $\alpha$ and $\beta$ guarantee that 
\begin{align*}
\hat{B}_{\Ddot{X}|Z}-B_{\Ddot{X}|Z} - R = o_P(n^{-1/4}),\qquad
\hat{B}_{Y|Z}-B_{Y|Z} = o_P(n^{-1/4}),
\end{align*}
we have
\begin{align}\label{eq-app-bhat-quarter}
\sqrt{n} (\hat{B}_{\Ddot{X}|Z} - B_{\Ddot{X}|Z} -R ) \Sigma_{ZZ} (B_{Y|Z}-\hat{B}_{Y|Z})^* 
= \sqrt{n} o_P(n^{-1/4}) O(1) o_p(n^{-1/4}) = o_P(1).
\end{align}
We now consider the third term in \eqref{eq:term3-decomp}. 
{By \eqref{eq-r-sigma-hs}, we know that
\begin{align*}
\|R \Sigma_{ZZ}\|_{\mathrm{HS}}
= O_P(n^{-1/2}) + O_P(\epsilon_n^{1-1/(2\alpha)}).
\end{align*}}
Also, by Corollary \ref{cor-opt-eps}, the conditions given for $\alpha$ and $\beta$ guarantee that 
\begin{align*}
\hat{B}_{Y|Z}-B_{Y|Z} = O_P(n^{-\frac{\alpha(\beta\wedge 1)}{2\alpha(\beta\wedge 1)+\alpha+1}}) = o_P(1), \quad \text{and} \quad 
\epsilon_n \asymp n^{-\frac{\alpha}{2\alpha(\beta\wedge 1)+\alpha+1}}.
\end{align*}
Thus, 
\begin{align}\label{eq-r-sigma-zz-bhat}
\sqrt{n} R \Sigma_{ZZ} (\hat{B}_{Y|Z}-B_{Y|Z}) 
= \sqrt{n} O_P(n^{-1/2}) o_P(1) + \sqrt{n} O_P(n^{-\frac{\alpha(\beta\wedge 1)}{2\alpha(\beta\wedge 1)+\alpha+1}}) {O(n^{-\frac{\alpha-1/2}{2\alpha(\beta\wedge 1)+\alpha+1}})}.
\end{align}
The first term on the right is obviously $o \lo P(1)$. The second term is of the form $n \hi c$ where 
\begin{align*}
c &= \frac{1}{2}-\frac{\alpha(\beta\wedge 1)+\alpha{-1/2}}{2\alpha(\beta\wedge 1)+\alpha+1}
= \frac{2\alpha(\beta\wedge 1)+\alpha+1 - 2\alpha(\beta\wedge 1) - 2\alpha{+1}}{2[2\alpha(\beta\wedge 1)+\alpha+1]}\\
&= \frac{{2-\alpha}}{2[2\alpha(\beta\wedge 1)+\alpha+1]}<0
\end{align*}
since {$\alpha>2$}. Thus all  three parts of   the last term in  \eqref{eq:sqrt n ( En} are of the order $o \lo P(1)$; that is, 
\begin{align}\label{eq-term3-small}
\sqrt{n}(B_{\Ddot{X}|Z}-\hat{B}_{\Ddot{X}|Z})E_n[\tilde{\kappa}_Z(\cdot,Z)\otimes\tilde{\kappa}_Z(\cdot,Z)](B_{Y|Z}-\hat{B}_{Y|Z})^*
= o_P(1).
\end{align}}
Substituting \eqref{eq-term1-small}, \eqref{eq-term2-small}, and \eqref{eq-term3-small} into \eqref{eq:sqrt n ( En}, we have{
\begin{align}
\sqrt{n}(\tilde{\Sigma}_{\Ddot{X}Y|Z} \hii 2-\tilde{\Sigma}_{\Ddot{X}Y|Z} \hii 1)
=o_P(1).
\label{eq-res-kbar-ktilde}
\end{align}}
This proves the second equation in (\ref{eq:hat Sigma ddot}), completing the proof of the theorem.
\end{proof}

\subsection{Proof of Theorem \ref{thm:local power}}\label{thm:local power-proof}

\begin{proof}
{Following the argument in the proof of Theorem 3 of \cite{tang2024nonparametric}, we have the expansion
\begin{align*}
n\|\hat{\Sigma} \lo {\Ddot X Y | Z} \| \lo {\mathrm{HS}}  \hi 2 
=  n\|\hat{\Sigma} \lo {\Ddot X Y | Z} - \Sigma \lo {\Ddot X Y | Z}\| \lo {\mathrm{HS}}  \hi 2 + 2n \langle \hat{\Sigma} \lo {\Ddot X Y | Z} - \Sigma \lo {\Ddot X Y | Z}, \Sigma \lo {\Ddot X Y | Z} \rangle \lo {\mathrm{HS}} + n\|\Sigma \lo {\Ddot X Y | Z}\| \lo {\mathrm{HS}}  \hi 2.
\end{align*}
By Theorem \ref{thm-clt-sigmahat},  
$\sqrt{n}(\hat{\Sigma} \lo {\Ddot X Y | Z} - \Sigma \lo {\Ddot X Y | Z}) \xrightarrow{\mathcal{D}} G$ where $G \sim N(0,\Gamma \lo {\Ddot X Y | Z})$. By the continuous mapping theorem,  
\begin{align*}
n\|\hat{\Sigma} \lo {\Ddot X Y | Z} \| \lo {\mathrm{HS}}  \hi 2 
 \xrightarrow{\mathcal{D}} \| G \| \lo {\mathrm{HS}}  \hi 2 + 2 \sqrt{n} \langle G, \Sigma \lo {\Ddot X Y | Z} \rangle \lo {\mathrm{HS}} + n\|\Sigma \lo {\Ddot X Y | Z}\| \lo {\mathrm{HS}}  \hi 2.
\end{align*}
 Thus, under $H \lo 1 \hi {(n)}$,
\begin{align}\label{eq:local alternative}
n\|\breve{\Sigma} \lo {\Ddot X Y | Z} \| \lo {\mathrm{HS}}  \hi 2  \xrightarrow{\mathcal{D}}   \| G \| \lo {\mathrm{HS}}  \hi 2 + 2 \langle G, \Sigma \lo 1 \rangle \lo {\mathrm{HS}}+ c \hi 2.
\end{align}
Since $G \sim N(0,\Gamma \lo {\Ddot X Y | Z})$ ane $(\lambda \lo 1, v \lo 1), (\lambda \lo 2, v \lo 2), \dots$ are eigenvalue-eigenfunction pairs of $\Gamma \lo {\Ddot X Y | Z}$, we see that $\langle G, v \lo j \rangle \lo {\mathrm{HS}} \sim N(0, \lambda \lo j)$ for $j=1,2,\dots$. Therefore, the Karhunen-Lo\'eve expansion for $G$ is
\begin{align*}
G = \sum \lo {j=1} \hi {\infty} \sqrt{\lambda \lo j} Z \lo j v \lo j,
\end{align*}
where $Z \lo 1, Z \lo 2, \ldots$ are independent standard normal random variables. See Example 11.4.1 of \cite{kokoszka2017introduction} or Section 7.3 of \cite{hsing2015theoretical} for details.
Also, since $\Sigma \lo 1 = \sum \lo {j=1} \hi \infty \sigma \lo j  v \lo j$, we have $\sum \lo {j=1} \hi {\infty} \sigma \lo j \hi 2= \|\Sigma \lo 1\| \lo {\mathrm{HS}} \hi 2  = c \hi 2$. In addition, since  
\begin{align*}
\| G \| \lo {\mathrm{HS}} \hi 2 =  \sum \lo {j=1} \hi {\infty} \lambda \lo j Z \lo j \hi 2, \quad\langle G, \Sigma \lo 1 \rangle \lo {\mathrm{HS}} = \sum \lo {j=1} \hi {\infty} \sqrt{\lambda \lo j } Z \lo j \sigma \lo j, 
\end{align*}
we can rewrite the right-hand side of (\ref{eq:local alternative}) as 
\begin{align*}
\| G \| \lo {\mathrm{HS}}  \hi 2 + 2 \langle G, \Sigma \lo 1 \rangle \lo {\mathrm{HS}}+ c \hi 2 
&=  \sum \lo {j=1} \hi {\infty} \lambda \lo j Z \lo j \hi 2 + 2 \sum \lo {j=1} \hi {\infty} \sqrt{\lambda \lo j } Z \lo j \sigma \lo j + \sum \lo {j=1} \hi {\infty} \sigma \lo j \hi 2 \\
&=  \sum \lo {j=1} \hi {\infty} \lambda \lo j ( Z \lo j + \sigma \lo j / \sqrt{\lambda \lo j} ) \hi 2 
=:  \sum \lo {j=1} \hi {\infty} \lambda \lo j \tilde Z \lo j \hi 2,
\end{align*}
where $\tilde Z \lo 1, \tilde Z \lo 2, \ldots$ are independent $N(\sigma \lo j  / \sqrt{\lambda \lo j}, 1)$, as desired. The local power result follows from the definition of convergence in distribution.}
\end{proof}

\clearpage

\section{Additional Plots for Section \ref{sec-application}}\label{sec-app-plot}

\begin{figure}[!htb] 
\centering 
\includegraphics[width=\linewidth]{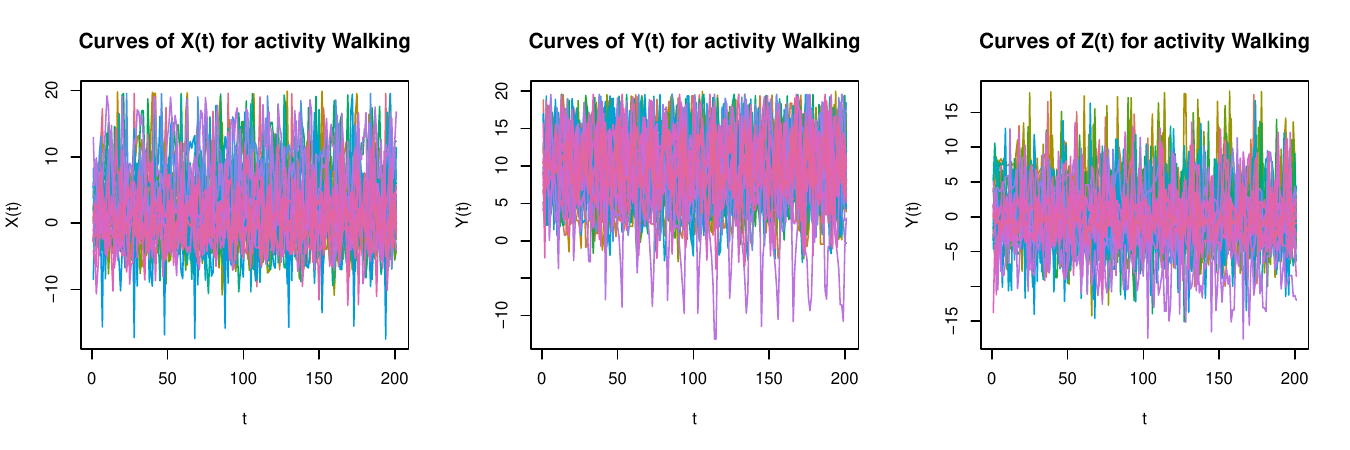}
\caption{Curves of $X(t)$, $Y(t)$ and $Z(t)$ for activity Walking with $m=200$.}
\label{fig-curve-1}
\end{figure}

\begin{figure}[!htb] 
\centering 
\includegraphics[width=\linewidth]{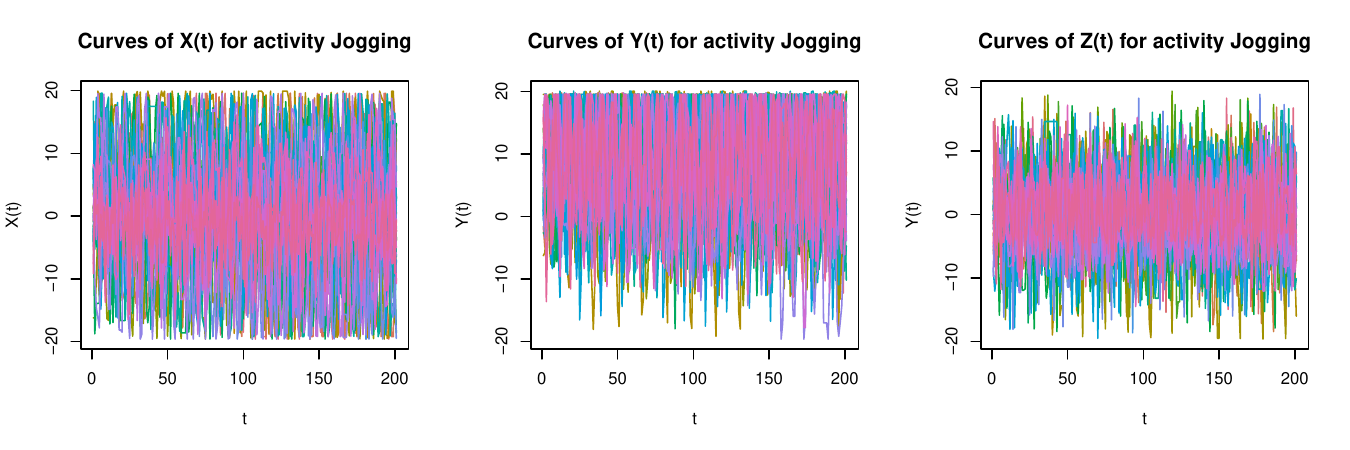}
\caption{Curves of $X(t)$, $Y(t)$ and $Z(t)$ for activity Jogging with $m=200$.}
\label{fig-curve-2}
\end{figure}

\begin{figure}[!htb] 
\centering 
\includegraphics[width=\linewidth]{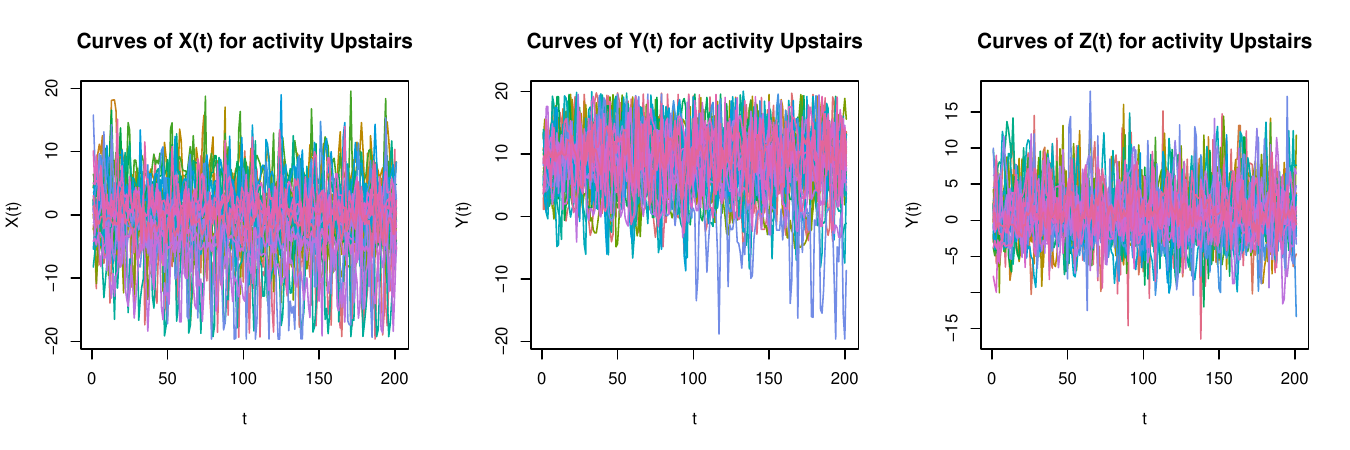}
\caption{Curves of $X(t)$, $Y(t)$ and $Z(t)$ for activity Upstairs with $m=200$.}
\label{fig-curve-3}
\end{figure}

\begin{figure}[!htb] 
\centering 
\includegraphics[width=\linewidth]{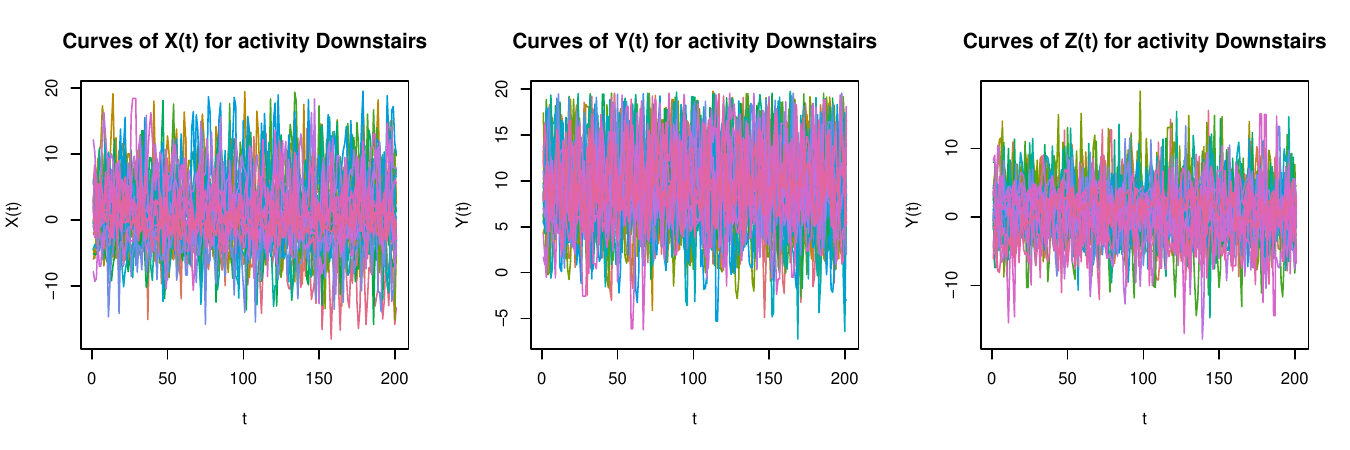}
\caption{Curves of $X(t)$, $Y(t)$ and $Z(t)$ for activity Downstairs with $m=200$.}
\label{fig-curve-4}
\end{figure}

\begin{figure}[!htb] 
\centering 
\includegraphics[width=\linewidth]{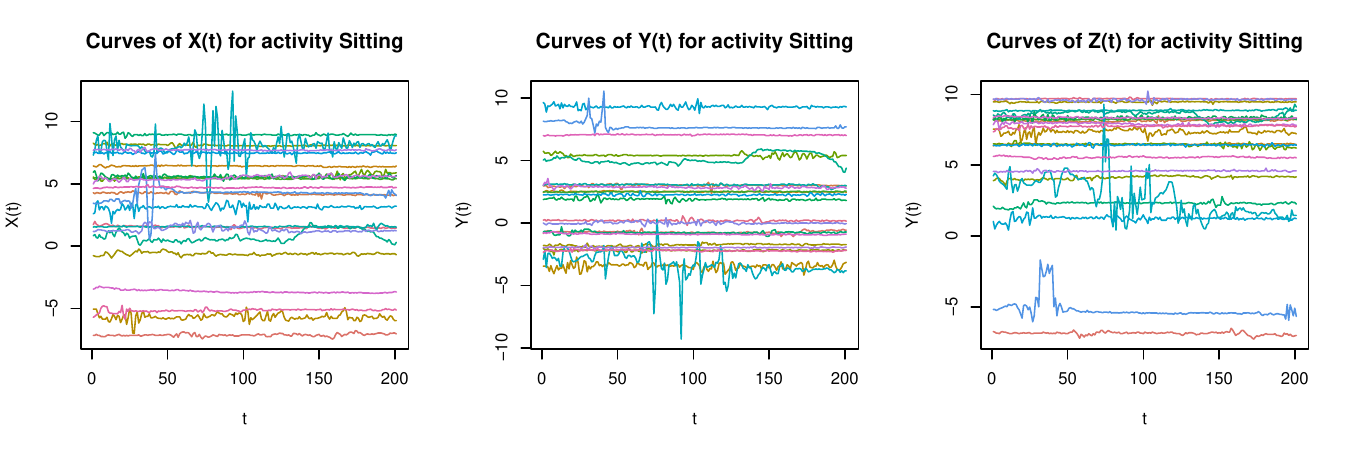}
\caption{Curves of $X(t)$, $Y(t)$ and $Z(t)$ for activity Sitting with $m=200$.}
\label{fig-curve-5}
\end{figure}

\begin{figure}[!htb] 
\centering 
\includegraphics[width=\linewidth]{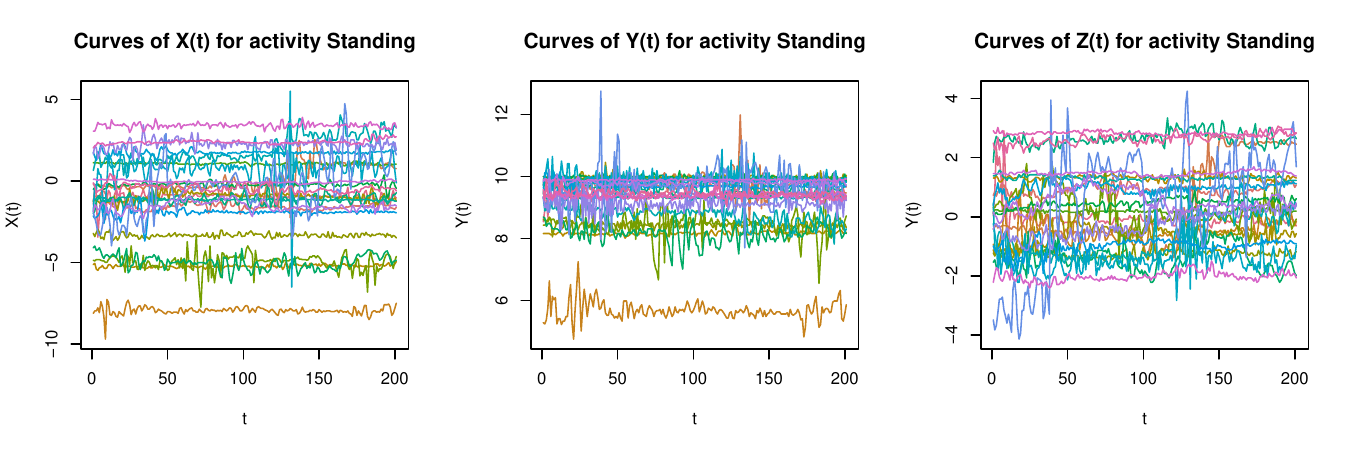}
\caption{Curves of $X(t)$, $Y(t)$ and $Z(t)$ for activity Standing with $m=200$.}
\label{fig-curve-6}
\end{figure}

\begin{figure}[!h] 
\centering 
\includegraphics[width=\linewidth]{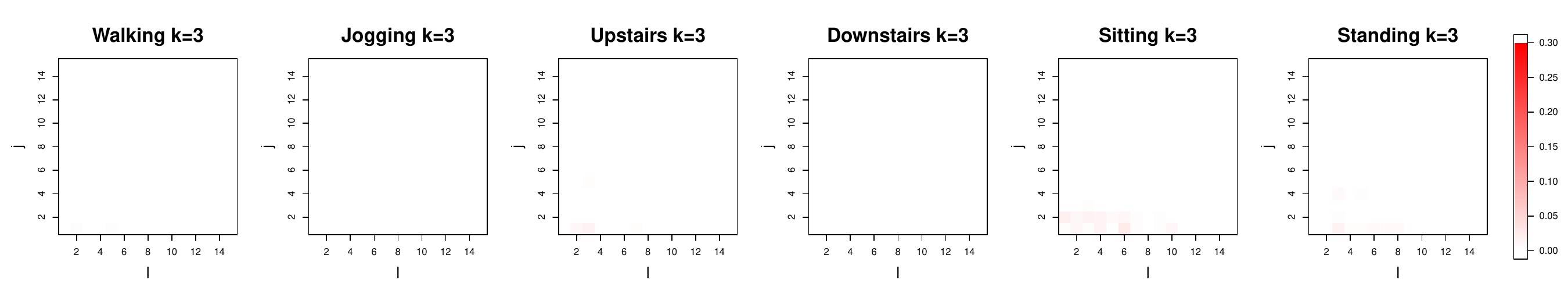}
\includegraphics[width=\linewidth]{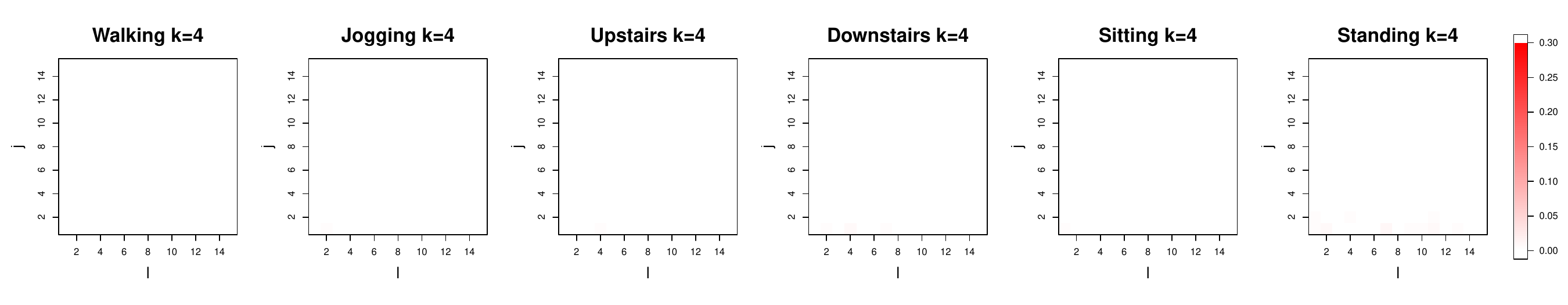}
\includegraphics[width=\linewidth]{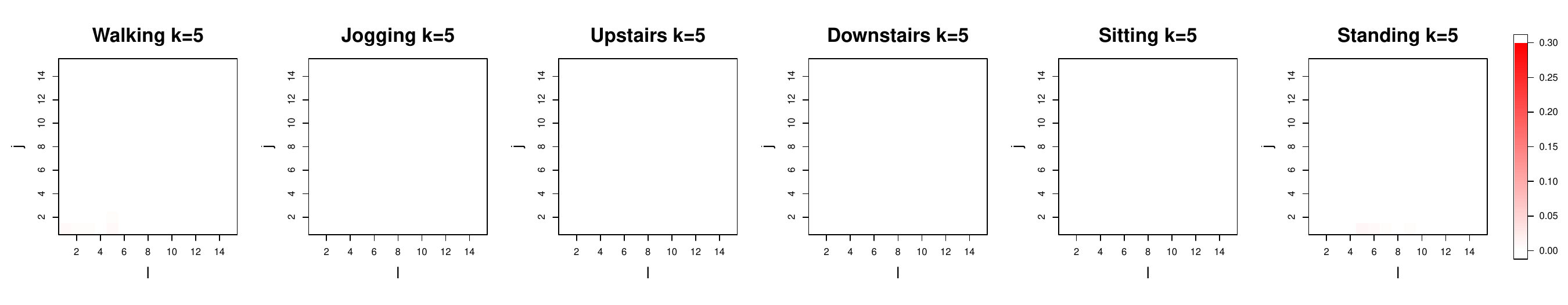}
\caption{Heatmaps of the quantities $s_{jkl}^{(3)}$ for the six activities, $j,l=1,2,\dots,15$, {$k=3,4,5$}.}
\label{fig-corr-xz-2} 
\end{figure}

\begin{table}[!h]
    \centering
\begin{tabular}{rcccccc}
\hline
 $k$ & Walking & Jogging & Upstairs & Downstairs & Sitting & Standing\\
\hline
1 & 0.0154 & 0.0042 & 0.0112 & 0.0096 & 0.2636 & 0.1413\\
2 & 0.0113 & 0.0038 & 0.0158 & 0.0080 & 0.1371 & 0.0790\\
3 & 0.0044 & 0.0027 & 0.0172 & 0.0029 & 0.0258 & 0.0174\\
4 & 0.0020 & 0.0030 & 0.0041 & 0.0075 & 0.0039 & 0.0096\\
5 & 0.0074 & 0.0025 & 0.0025 & 0.0027 & 0.0019 & 0.0108\\
6 & 0.0017 & 0.0014 & 0.0063 & 0.0026 & 0.0031 & 0.0038\\
7 & 0.0018 & 0.0016 & 0.0045 & 0.0027 & 0.0002 & 0.0010\\
8 & 0.0011 & 0.0013 & 0.0022 & 0.0012 & 0.0000 & 0.0015\\
9 & 0.0011 & 0.0016 & 0.0028 & 0.0009 & 0.0000 & 0.0013\\
10 & 0.0016 & 0.0012 & 0.0017 & 0.0013 & 0.0000 & 0.0005\\
11 & 0.0012 & 0.0020 & 0.0025 & 0.0017 & 0.0000 & 0.0007\\
12 & 0.0016 & 0.0014 & 0.0013 & 0.0010 & 0.0000 & 0.0002\\
13 & 0.0007 & 0.0012 & 0.0014 & 0.0013 & 0.0000 & 0.0002\\
14 & 0.0008 & 0.0006 & 0.0016 & 0.0015 & 0.0000 & 0.0002\\
15 & 0.0009 & 0.0012 & 0.0014 & 0.0015 & 0.0000 & 0.0000\\
\hline
\end{tabular}
    \caption{The quantities $\max_{j,l=1,2,\dots,15}s_{jkl}^{(3)}$ for the six activities, $k=1,\dots,15$.}
    \label{tab:maximum-corr-xz}
\end{table}

\clearpage

\section{Application to WDI Dataset}
We now apply our test to several indicators from the World Development Indicators
(WDI) database \citep{WDI} to investigate whether pairs of development
indicators are conditionally independent given a country's GDP.\footnote{The dataset
is available at \url{https://datacatalog.worldbank.org/search/dataset/0037712/World-Development-Indicators},
version 123 (metadata last updated on October 9, 2025). We downloaded the CSV
archive and used the file \texttt{WDICSV.csv}.}
In other words, we examine whether the relationship between two socioeconomic
factors can be explained solely by the level of economic development.
The conditioning variable is \texttt{NY.GDP.MKTP.CD}, which represents GDP
(current US\$). The candidate factors considered in our analysis are listed in
Table~\ref{tab:factors}.

\begin{table}[!htb]
    \centering
\begin{tabular}{lll}
\hline
 & Variable & Explanation \\
\hline
1 & \texttt{AG.LND.AGRI.ZS} & Agricultural land (\% of land area) \\
2 & \texttt{NV.AGR.TOTL.ZS} & Agriculture, forestry, and fishing, value added (\% of GDP) \\
3 & \texttt{SP.POP.GROW} & Population growth (annual \%) \\
4 & \texttt{FP.CPI.TOTL.ZG} & Inflation, consumer prices (annual \%) \\
5 & \texttt{FR.INR.RINR} & Real interest rate (\%) \\
6 & \texttt{SL.UEM.TOTL.NE.ZS} & Unemployment, total (\% of total labor force) (national \\
& & estimate) \\
7 & \texttt{SP.DYN.LE00.IN} & Life expectancy at birth, total (years)\\
\hline
\end{tabular}
    \caption{Factors we consider in the application study.}
    \label{tab:factors}
\end{table}

We use data from 2000 to 2023, so that $m=23$,  and treat the observations
as a balanced design. For each pair of factors together with GDP, we
remove all countries with missing values. The resulting sample sizes for
all tests are reported in Table~\ref{tab:sample-size}. 
For each pair among the seven selected factors, we conduct our CCCO-based test for conditional independence given GDP at the significance level
$\alpha=0.05$. The resulting conditional dependency structure among the
factors is shown in Figure~\ref{fig-app-plot}, where an edge between two
factors indicates that they are not conditionally independent given GDP.
All corresponding $p$-values are reported in
Table~\ref{tab:pval-ccco}.

\begin{table}[!htb]
    \centering
\begin{tabular}{l|rrrrrrr}
\hline
 & 1 & 2 & 3 & 4 & 5 & 6 & 7\\
\hline
1 &  & 163 & 188 & 58 & 70 & 57 & 189\\
2 & 163 &  & 166 & 53 & 65 & 56 & 167\\
3 & 188 & 166 &  & 59 & 70 & 58 & 194\\
4 & 58 & 53 & 59 &  & 19 & 41 & 59\\
5 & 70 & 65 & 70 & 19 &  & 19 & 71\\
6 & 57 & 56 & 58 & 41 & 19 &  & 58\\
7 & 189 & 167 & 194 & 59 & 71 & 58 & \\
\hline
\end{tabular}
    \captionof{table}{Sample sizes for different tests. The row and column indices 1--7 correspond to the factors in Table \ref{tab:factors}.}
    \label{tab:sample-size}
\end{table}

\begin{figure}[!htb]
\centering
\begin{subfigure}{0.48\textwidth}
\centering
\includegraphics[width=\linewidth]{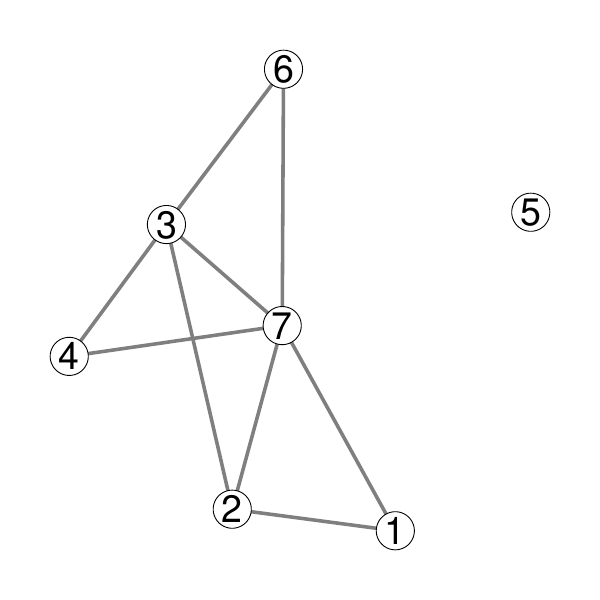}
\caption{Results using weighted sum of $\chi^2$ distributions.}
\label{fig-app-plot-raw} 
\end{subfigure}
\hfill
\begin{subfigure}{0.48\textwidth}
\centering
\includegraphics[width=\linewidth]{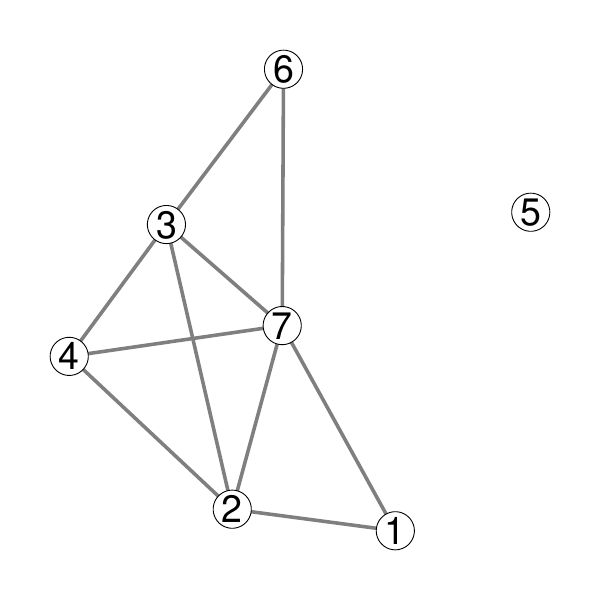}
\caption{Results with Welch-Satterthwaite approximation.}
\label{fig-app-plot-sw} 
\end{subfigure}
\caption{Graph of conditional dependency relationships given GDP among the 7 factors in Table \ref{tab:factors}.}
\label{fig-app-plot} 
\end{figure}

\begin{table}[!htb]
    \centering
\begin{tabular}{l|rrrrrrr}
\hline
& 1 & 2 & 3 & 4 & 5 & 6 & 7\\
\hline
1&  & 0.0013 & 0.1893 & 0.2925 & 0.1553 & 0.6596 & 0.0036\\
2& 0.0013 &  & 0.0000 & 0.1029 & 0.2221 & 0.3105 & 0.0000\\
3& 0.1893 & 0.0000 &  & 0.0469 & 0.1373 & 0.0132 & 0.0000\\
4& 0.2925 & 0.1029 & 0.0469 &  & 0.2475 & 0.1898 & 0.0428\\
5& 0.1553 & 0.2221 & 0.1373 & 0.2475 &  & 0.3454 & 0.2258\\
6& 0.6596 & 0.3105 & 0.0132 & 0.1898 & 0.3454 &  & 0.0242\\
7& 0.0036 & 0.0000 & 0.0000 & 0.0428 & 0.2258 & 0.0242 & \\
\hline
\end{tabular}
    \caption{P-values for CCCO test using weighted sum of $\chi^2$ distributions. The row and column indices 1--7 correspond to the factors in Table \ref{tab:factors}.}
    \label{tab:pval-ccco-raw}
\end{table}

\begin{table}[!htb]
    \centering
\begin{tabular}{l|rrrrrrr}
\hline
& 1 & 2 & 3 & 4 & 5 & 6 & 7\\
\hline
1&  & 0.0000 & 0.1484 & 0.2795 & 0.0962 & 0.7570 & 0.0000\\
2& 0.0000 &  & 0.0000 & 0.0479 & 0.1740 & 0.3002 & 0.0000\\
3& 0.1484 & 0.0000 &  & 0.0097 & 0.0733 & 0.0006 & 0.0000\\
4& 0.2795 & 0.0479 & 0.0097 &  & 0.2148 & 0.1392 & 0.0084\\
5& 0.0962 & 0.1740 & 0.0733 & 0.2148 &  & 0.3512 & 0.1786\\
6& 0.7570 & 0.3002 & 0.0006 & 0.1392 & 0.3512 &  & 0.0021\\
7& 0.0000 & 0.0000 & 0.0000 & 0.0084 & 0.1786 & 0.0021 & \\
\hline
\end{tabular}
    \caption{P-values for CCCO test with Welch-Satterthwaite approximation. The row and column indices 1--7 correspond to the factors in Table \ref{tab:factors}.}
    \label{tab:pval-ccco}
\end{table}

As shown in Figure~\ref{fig-app-plot}, life expectancy at birth (Factor~7)
is conditionally dependent on each of the other factors except the real
interest rate (Factor~5) given GDP. This suggests that the relationships
between life expectancy and these factors cannot be explained solely by
GDP. In contrast, the real interest rate (Factor~5) appears to be
conditionally independent of all other factors given GDP, indicating
that its associations with the remaining variables are largely mediated
through GDP. 
{Note that, comparing Figures~\ref{fig-app-plot-raw} and \ref{fig-app-plot-sw}, the dependence results are only different between Factors~2 and 4, but the corresponding p-value with Welch-Satterthwaite approximation is 0.0479, which is very close to $\alpha = 0.05$. }
We should mention  that the sample sizes associated with Factors~4, 5, and 6
are relatively small. Therefore, the conclusions should be interpreted
with caution. In particular, the power of the asymptotic test may be
limited when the sample size is small, and the presence of missing data
may introduce additional bias. Further investigation may therefore be
warranted to better understand the relationships among these variables.

\section*{Acknowledgments}
We thank two referees for their insightful comments and suggestions, which helped us greatly in improving this work. We thank Wenxi Tan {and Haoning Chen} for pointing out {two errors} in the proof in {earlier versions} of the manuscript. Bing Li's research is partly supported by NIH 1 R01 GM152812-01. 
We thank the University of Kentucky Center for Computational Sciences and Information Technology Services Research Computing for their support and use of the Morgan Compute Cluster and associated research computing resources.

\bibliographystyle{asa}
\bibliography{biblio}

\end{document}